\DeclareMathOperator\dom{dom}
\DeclareMathOperator\Tr{Tr}
\DeclareMathOperator\lin{lin}
\newcommand*\D{\mathop{}\!\textnormal{d}}
\newcommand*\E{\mathop{}\!\textnormal{e}}
\newcommand*\I{\mathop{}\!\textnormal{i}}
\numberwithin{equation}{section}
\newtheorem{theorem}{Theorem}[section]
\newtheorem*{theorem*}{Theorem}
\newtheorem{lemma}[theorem]{Lemma}
\newtheorem{proposition}[theorem]{Proposition}
\newtheorem*{proposition*}{Proposition}
\newtheorem{corollary}[theorem]{Corollary}
\theoremstyle{plain}
\theoremstyle{definition}
\newtheorem{definition}[theorem]{Definition}
\newtheorem{remark}[theorem]{Remark}
\newtheoremstyle{example}
  {.3\baselineskip}
  {.3\baselineskip}
  {\normalsize}  
  {0pt}       
  {\bfseries} 
  {.}         
  {5pt plus 1pt minus 1pt} 
  {}          
\theoremstyle{example}
\newtheorem{example}[theorem]{Example}
\newtheorem*{assumption*}{\assumptionnumber}
\providecommand{\assumptionnumber}{}
\newenvironment{assumption}[1]
 {%
  \renewcommand{\assumptionnumber}{Assumption $\mathfrak{#1}$}%
  \begin{assumption*}%
  \protected@edef\@currentlabel{$\mathfrak{#1}$}%
 }
 {%
  \end{assumption*}
 }
\newlist{theoremenum}{enumerate}{1}
\setlist[theoremenum]{label=\roman*), ref=\textup{\thetheorem~\roman*)}}
\newlist{propenum}{enumerate}{1}
\setlist[propenum]{label=\roman*), ref=\textup{\thetheorem~\roman*)}}
\newlist{defenum}{enumerate}{1}
\setlist[defenum]{label=\roman*), ref=\textup{\thedefinition~\roman*)}}
\newlist{lemenum}{enumerate}{1}
\setlist[lemenum]{label=\roman*), ref=\textup{\thelemma~\roman*)}}
\newlist{remenum}{enumerate}{1}
\setlist[remenum]{label=\roman*), ref=\textup{\theremark~\roman*)}}
\def\e{\operatorname{e}} 
\def\eps{\varepsilon}
\renewcommand{\epsilon}{\eps}
\renewcommand{\MR}{\mathbb{R}}
\newcommand{\MN}{\mathbb{N}}
\newcommand{\MP}{\mathbb{P}}
\newcommand{\MS}{\mathbb{S}}
\newcommand{\MF}{\mathbb{F}}
\newcommand{\R}{\MR}
\newcommand{\cF}{\mathcal{F}}
\newcommand{\cA}{\mathcal{A}}
\newcommand{\cB}{\mathcal{B}}
\newcommand{\cD}{\mathcal{D}}
\newcommand{\cE}{\mathcal{E}}
\newcommand{\cH}{\mathcal{H}}
\newcommand{\cL}{\mathcal{L}}
\newcommand{\cR}{\mathcal{R}}
\newcommand{\cV}{\mathcal{V}}
\newcommand{\cG}{\mathcal{G}}
\newcommand{\cT}{\mathcal{T}}
\newcommand{\sP}{\mathsf{P}}
\newcommand{\df}{\coloneqq}
\newcommand{\one}{\mathbf{1}}
\newcommand{\interior}[1]{({\kern0pt#1})^{\textnormal{o}}}
\newcommand{\set}[1]{\left\{ #1\right\}}
\newcommand{\norm}[1]{\|#1\|}
\newcommand{\EX}[1]{\mathbb{E}\left[#1\right]}
\newcommand{\EXspec}[2]{\mathbb{E}_{#1}\left[#2\right]}
\newcommand{\cHplus}{\cH^{+}}
\newcommand{\cHpluso}{\cHplus\setminus \{0\}}
\newcommand{\MRplus}{\MR^{+}}
\newcommand{\dm}{m(\D\xi)}
\newcommand{\dmu}{\mu(\D\xi)}
\newcommand{\dmuk}{\mu^{(k)}(\D\xi)}
\newcommand{\be}{\mathbf{e}}
\newcommand{\bP}{\mathbf{P}}
\begin{document}

\title[Heat-modulated affine stochastic volatility models]{Heat-modulated
  affine stochastic volatility models for forward curve dynamics} 
\address{Korteweg-de Vries Institute for Mathematics and Institute for
  Informatics at the University of Amsterdam.}

\maketitle{}

\vspace{-5mm}

\begin{center}
  \begin{tabular}{c}
    \textsc{Sven Karbach} \\
   \small[\texttt{\MakeLowercase{sven@karbach.org}}] \\
  \end{tabular}
\end{center}

\begin{abstract}
  We present a function-valued stochastic volatility model designed to capture
  the continuous-time evolution of forward curves in fixed-income or commodity
  markets. The dynamics of the (logarithmic) forward curves are defined by a
  Heath-Jarrow-Morton-Musiela stochastic partial differential equation
  modulated by an instantaneous volatility process that describes the
  second-order moment structure of forwards with different time-to-maturity. We
  propose to model the operator-valued instantaneous covariance by an affine
  process on the cone of positive Hilbert--Schmidt operators with drift given
  by the Lyapunov operator of the Laplacian. The resulting Hilbert-valued
  stochastic covariance model admits an affine transform formula and we derive the associated Feynman--Kac martingale identity.
  Furthermore, we analyze a numerically feasible spectral Galerkin approximation
  of the associated operator-valued generalized Riccati equations and derive
  finite-rank error bounds of the moment-generating function of the model.
  A pricing robustness statement for exponential flow-forward functionals is
  formulated and numerical
  illustrations are carried out in the case of a heat-modulated version of the BNS volatility model.\\

  \noindent \textbf{Keywords:} stochastic volatility, stochastic covariance
  models, forward curve dynamics, affine processes, Galerkin approximation,
  finite-rank approximation.
\end{abstract}



\vspace{-1mm}

\section{Introduction}\label{sec:introduction}

Since the work of Black and Scholes~\cite{BS73}, stochastic volatility has become
a central tool for explaining option prices and managing derivative risk. The
constant-volatility benchmark is still useful as a quotation model, but observed
volatility is time-varying and stochastic; this has motivated models such as
Heston~\cite{Heston1993}, SABR~\cite{HKLW02}, Barndorff-Nielsen--Shephard
models~\cite{BNS02}, local volatility models~\cite{Dupire1994}, and more
recently rough volatility models~\cite{GJR18}. For portfolios and multi-asset derivatives, the corresponding object is a
stochastic covariance process. Classical finite-dimensional approaches include
positive-semidefinite Ornstein--Uhlenbeck models~\cite{BNS13, MKPS12, BK23},
Wishart-type diffusion models~\cite{DFGT07, GS10}, and affine jump-diffusions on
positive-semidefinite matrices~\cite{LT08, CFMT11}. This paper develops an
infinite-rank affine stochastic covariance framework for forward curve dynamics,
viewed as the functional analogue of these finite-dimensional affine
covariance models.\par{}

Function-valued models play a crucial role in finance, especially in modeling
the dynamics of forward rates and forward price curves in fixed-income and
commodity markets. The key idea is to directly model the family of forward
prices for all maturity dates simultaneously, rather than using a parametric
model for the underlying spot asset and then deriving the forward prices through
a spot-forward relation. This direct modeling approach was initially applied in
interest rate markets by Heath, Jarrow, and Morton~\cite{HJM92}, hence it is
referred to as the HJM approach. Since then, this methodology has been extended
to various other markets, including commodities~\cite{BBK08} and stock
options~\cite{KP15}. In commodity markets, the HJM modeling approach is particularly advantageous in
energy and weather related markets where establishing a valid relation between
spot and forward prices is challenging. Indeed, in these markets the
conventional buy-and-hold hedging strategy for commodity forwards may be
impractical due to limitations such as non- or limited storability (as in the
case of electricity and gas, see e.g., \cite{BBK08}) or the underlying assets
being simply non-tradable (as for example wind speed or temperature indexes, see
e.g., \cite{BSB12}). However, futures, forwards, and other derivatives written
on day-ahead power prices or index-based underlyings are sometimes liquidly
traded. For instance, power futures from day-ahead to, in some areas, 3-year
delivery periods are liquid, and renewable energy forward production curves can
be tracked with reasonable high frequency given a flow of weather
predictions. In these markets, the focus therefore shifts to directly modeling
the forward dynamics following the HJM modeling paradigm. The limitations in trading power or weather index-based underlyings not only
pose challenges in spot-forward modeling, but also contribute to high and
complex maturity-specific volatilities in the forward dynamics. These
complexities arise from factors such as intermittency, forward-looking
information, spatial and network risk, political decisions, and weather
forecasts, see also the discussion in~\cite[Section 1.3]{BenKru23}. The general
objective of this paper is to contribute to the mathematical formulation of a
framework to model these complex volatility structures within continuous-time
functional models, recently put forward in a number of papers~\cite{BRS18,
  BS18,BS24, CKK22b, FK24}.\par{}

One key motivation of this line of research is to enhance the understanding of
the mathematics of options written on forward curves. Similar to the progression
from the ordinary Black-Scholes model to stochastic volatility models in the univariate
setting, we advocate for stochastic volatility models for functional forward
curve dynamics that: (1) can jointly capture the stylized facts of forward curve
dynamics and their implied or realized volatilities, and (2) are tractable
enough to efficiently price and hedge European-style options written on the
forwards.

\subsection{The Heath-Jarrow-Morton-Musiela Modeling Approach}
To present our main contributions and review the relevant literature, we begin
with a brief introduction to the Heath-Jarrow-Morton-Musiela (HJMM) type approach
we use for modeling forward curve dynamics in fixed-income or commodity
markets. For any time $t\geq 0$ and maturity date $T\geq t$, we denote by $F(t,T)$ the
forward price at time $t$ maturing at time $T$. In other words, $F(t,T)$ is the
at time $t$ agreed-upon price for the purchase or sale of the underlying asset
at the future date $T$. By introducing the \emph{Musiela parametrization}
$x \equiv T-t$, we can express forward prices as functions of
the \emph{time-to-maturity}. Specifically, we define $f_t(x) \equiv F(t,t+x)$ in
case of an arithmetic model, or $f_{t}(x)\equiv \ln(F(t,t+x))$ in case of a
geometric model, where $x,t\geq 0$. The mapping $x \mapsto f_{t}(x)$ is then
referred to as the (logarithmic) \emph{forward price curve} at time $t\geq 0$.

Following the Heath-Jarrow-Morton-Musiela modeling framework~\cite{CT06} the
dynamics of the risk-neutral (logarithmic) forward prices can be described by a
stochastic partial differential equation of hyperbolic type given as:
\begin{align}\label{eq:HJMM}
  \begin{cases}
    \D f_{t}(x)&=\big(\frac{\partial}{\partial x}f_{t}(x)+g_{t}(x)\big)\D
  t+\sum_{i=1}^{d}\sigma_{t}^{(i)}(x)\D W_{t}^{(i)},\quad \forall t,x>0,\\
    ~~~f_{0}(x)&=F(0,x), \quad\forall x>0,    
  \end{cases}
\end{align}
where $\{(W^{(i)}_{t})_{t\geq 0}\colon i=1,\ldots,d\}$ is a family of
$d \in\mathbb{N} \cup\set{+\infty}$ independent real-valued Brownian
motions. The drift term $(g_{t}(x))_{x,t\geq 0}$ in the HJMM
equation~\eqref{eq:HJMM} either vanishes in the arithmetic case, or adheres to a
certain HJM no-arbitrage condition in the geometric case, see e.g.~\cite[Section
2.4]{CT06}, ensuring that the forward price processes $(F(t,T))_{0\leq t\leq T}$
are martingales for all maturities $T$. 

We refer to the diffusion coefficients
$\set{(\sigma_{t}^{(i)}(x))_{x,t\geq 0}\colon i=1,\ldots,d}$ in
equation~\eqref{eq:HJMM} as the \textit{instantaneous volatilities} of the
forward curve dynamics. The family of instantaneous
volatilities characterizes the covariance structure between forwards with
different time-to-maturity. The number $d\in\MN\cup\set{+\infty}$
represents the \emph{rank of the noise}, corresponding to the number of
independent risk factors driving the forward curve dynamics. If $d$ is finite we
refer to~\eqref{eq:HJMM} as a \emph{finite-rank HJM} model; otherwise it is called
an \emph{infinite-rank HJM} model.\par{}

\subsection{Stochastic Covariance Models for Forward Price Curve
  Dynamics}\label{sec:intro-stoch-covariance-models} 

Within the general framework of the HJMM model~\eqref{eq:HJMM}, we retain the
flexibility to model the instantaneous volatilities
$\sigma_{t}(x)$ and decide on the rank $d$. Similar to the univariate case, we can
distinguish between three subclasses of models:  

\begin{enumerate}
    \item In the \emph{Gauss-Markov HJM} class, $\sigma_{t}(x)$ is chosen as
      a deterministic function of $t$ and $x$, independent of the
      forward prices $(f_{t}(x))_{t, x \geq 0}$. This is the forward curve
      equivalent of a (time-dependent) Black-Scholes model. 
    
    \item In the \emph{Markovian HJM} class, we assume that $\sigma(\omega,t
      ,f_{t}(x)) = \sigma(t, f_{t}(x))$, meaning the randomness arises solely
      through the dependence on the forward price itself. This approach is
      analogous to the local volatility models in the univariate case. 
     
    \item In a \emph{stochastic volatility model}, we assume that
      $(\sigma_{t}(x))_{t, x \geq 0}$ is both stochastic and time-dependent,
      governed by a separate stochastic differential equation. 
\end{enumerate}

The modeling approach we adopt in this paper is the third type: a
function-valued stochastic volatility model. Due to its infinite-dimensional
nature, it is more appropriately termed a \emph{stochastic covariance
  model}. More precisely, we refer to the joint process
$(f_{t}, X_{t})_{t \geq 0}$ as a stochastic covariance model for
forward curve dynamics, where $(f_{t})_{t \geq 0}$ is given by~\eqref{eq:HJMM}
and is modulated by a volatility operator $(\sigma_t)_{t\ge0}$ satisfying
$X_t=\sigma_t\sigma_t^*$ in the $H$-valued formulation. These types of
stochastic covariance models have been recently studied
in~\cite{BS18, BRS18, CKK22b, FK24, Kar22, BE24}.

\subsubsection{Non-Parametric Modeling of Forward Prices and Volatility}

In such function-valued stochastic covariance models, we interpret the HJMM
equation~\eqref{eq:HJMM} as a stochastic differential equation formulated within
a suitable space of forward curves (see Section~\ref{sec:forward-curve-dynamics}
below for details). This approach is non-parametric, as it leverages the entire
initial forward curve, represented by $x \mapsto f_{0}(x)$, along with a
\emph{instantaneous covariance process} $(X_{t})_{t \geq 0}$, as
primary model inputs. In this function-valued framework, $X_t$, for
$t \geq 0$, represents the covariance of $f_{t}$ across all time-to-maturities
and is defined as a positive, self-adjoint Hilbert--Schmidt operator acting on a
specific space of forward curves. In kernel settings it can be identified with a
covariance kernel on the time-to-maturity domain, see~\eqref{eq:kernel-representation}
below. This non-parametric stochastic modeling
has the advantage of capturing the continuous-time dynamics of the forward
curves while at the same time fitting well the realized and implied covariance
structure with only few hyperparameters. The initial forward curve
$x \mapsto f_{0}(x)$ can be extracted from market data using a smoothing method,
as detailed in~\cite[Section 1.6.3]{CT06}. Meanwhile, the volatility component
can be calibrated to option price data~\cite{BDL21}, estimated from realized
term structure data~\cite{BSV22a, BSV22b,
  schroers2024robustfunctionaldataanalysis} or fitted to both simultaneously.

\subsubsection{The Rank and Structure of Forward Curve Volatility: Empirics and Theory}
Empirical studies have shown that energy forwards exhibit idiosyncratic risk
across different maturities, characterized by high-dimensional noise and
distinctive stochastic correlation structures~\cite{Fre08,AKW10}. For instance,
a principal component analysis (PCA) of the \emph{NordPool} power futures market
conducted in~\cite{BBK08} revealed that more than ten risk factors are necessary
to explain 95\% of the variance. Similar results have been reported in other
studies, such as~\cite{Fre08} and~\cite{KO05}. These findings in energy markets contrast with the interest rate markets, where PCA indicates that only a small
number of risk factors is required to explain the variance~\cite{CT06}.
However, \cite{CrumpGospodinov2022} demonstrate that the
empirically observed low-dimensional factor structures in bond returns can largely be explained by
the strong local correlation of prices across nearby maturities, rather than by a genuinely low-dimensional data-generating process. As a consequence, conclusions in favor of intrinsically
low-rank models for interest rate markets must be treated with caution.
Moreover, empirical evidence in~\cite{schroers2024dynamicallyconsistentanalysisrealized} supports the
presence of stochastic volatility effects and points to the relevance of high-dimensional HJM-type
models for capturing the dynamics of forward curves and their volatility. Taken together,
these findings suggest that relying solely on PCA-based dimension reduction may obscure important
sources of risk.
In particular, ad-hoc truncation based on explained variance can lead to misspecified dynamics and
suboptimal modeling choices for pricing, hedging, and risk management in forward markets, due to the following theoretical and practical
considerations:
\begin{itemize}
\item \emph{Maturity-Specific Risk}: As observed in~\cite{CT06}, classical finite-rank HJM
  models are complete, allowing for perfect hedging of interest rate
  derivatives. This redundancy even allows to choose tenors of the hedging
  instruments that are independent of the derivative’s expiry
  date, leading to unrealistic hedging strategies. This issue can only be
  adequately addressed in the case of \emph{maturity-specific risk},
  where the volatility is modeled with infinite rank. 
\item \emph{Higher-Order Moments}: As argued by~\cite{Con05}, discarding
  risk factors based on a PCA in an a priori fashion can introduce biases when
  computing quantities that are nonlinear functions of the forward curve. This
  is particularly relevant for pricing and hedging forward curve derivatives
  with nonlinear payoffs, or for computing quantiles and
  value-at-risk. Therefore, for applications in pricing, hedging, and risk
  management, it is advisable to first model with maturity-specific
  risk, before analyzing finite-rank approximations and their
  associated errors relative to the sound full-rank model.
\item \emph{Covariance representation}:
Even for a moderate number of risk factors $d \in \mathbb{N}$, there are $d \times (d-1)/2$
stochastic correlations to be computed, which poses a high-dimensional problem, especially in energy
markets with up to ten identified risk factors. In the limit as $d\to\infty$, the instantaneous covariance structure admits a
natural functional representation in terms of a $L^{2}$-kernel, see
equation~\eqref{eq:kernel-representation} below. From both a modeling and
calibration perspective, it is therefore more practical to work with
(semi-)parametric families of covariance kernels and to analyze their
finite-rank approximations. Such an approach provides greater interpretability
and flexibility than directly specifying high-dimensional stochastic covariance
matrices, while retaining sufficient expressive power to capture
maturity-specific dependence structures observed in the data. 
\end{itemize}
This modeling philosophy aligns with the perspective articulated
in~\cite{Con05}, who advocate modeling term structure dynamics
``without imposing ad hoc restrictions on the number of factors driving
different rates or the form of the volatility functions. More importantly, the
shapes and variances of principal components are obtained as a result of the
model rather than an input.''.

\subsection{Main Contributions and Related Literature}

This paper's primary contribution is the development of a novel stochastic
covariance model, $(f_{t}, X_{t})_{t \geq 0}$, for forward curve
dynamics in fixed-income or commodity markets. The forward curve process,
$(f_{t})_{t \geq 0}$, is modeled using an infinite-rank HJMM
framework~\eqref{eq:HJMM}, while the covariance process,
$(X_{t})_{t \geq 0}$, is represented as an affine process taking values
	in the cone of positive Hilbert--Schmidt operators. Under some extra
	integrability assumptions, the covariance process is additionally shown to be
	trace class, which is the regularity level needed for the cylindrical-noise
	interpretation of the HJMM dynamics. The resulting framework is analytically
	tractable and the accompanying Feynman--Kac martingale
	identity is established directly (Proposition~\ref{prop:affine-fk}), and it
	remains compatible with a separate geometric HJM drift-correction layer when
	such a layer is imposed.

\subsubsection{Affine Transform Formula and Finite-Rank Approximations}

The proposed stochastic covariance model $(f_{t}, X_t)_{t \geq 0}$ is
analytically tractable through an \emph{affine transform formula}. In the
driftless $H$-valued formulation considered in the main theorem, and using the
affine Feynman--Kac martingale identity established in
Proposition~\ref{prop:affine-fk}, the joint Fourier--Laplace transform of
$(f_{t}, X_t)_{t \geq 0}$ takes the following form:
\begin{align}\label{eq:affine-transform-intro}
  \mathbb{E}\left[\E^{\langle f_{t}, u_{1}\rangle_{H} - \langle X_{t}, u_{2}\rangle}\right]
  = \E^{-\Phi(t,u) + \langle f_{0}, \psi_{1}(t,u)\rangle_{H} - \langle X_{0}, \psi_{2}(t,u)\rangle},
\end{align}
where $\Phi(\cdot, u_{1}, u_{2})$, $\psi_1(\cdot, u_{1}, u_{2})$, and
$\psi_2(\cdot, u_{1}, u_{2})$ are the unique solutions to an associated system
of \emph{generalized Riccati equations}. Here, $u_{1}\in\I H$ is an imaginary
Fourier loading for the forward curve and $u_{2}\in\cHplus$ is a positive
Hilbert--Schmidt Laplace loading. Real or more general complex loadings are used
only under the separate analytic-continuation assumptions in the pricing layer.
The affine
transform formula~\eqref{eq:affine-transform-intro} conveniently expresses the joint
Fourier--Laplace transform of the model $(f_{t}, X_t)_{t \geq 0}$ as an
exponential function of the initial values $(f_{0}, X_0)$ and
quasi-explicitly up to the solution of a system of associated deterministic
differential equations.\par{}

Below, in Theorem~\ref{thm:main-convergence}, we establish
the existence of a certain class of \emph{irregular affine processes} on the
cone of positive self-adjoint Hilbert--Schmidt operators, where the drift is governed
by the Lyapunov operator of some unbounded coercive operator, as e.g. the Laplacian,
under the compact-containment condition
Assumption~\ref{assump:cV-compact-containment}. Under additional parameter
assumptions we also obtain trace-class regularity. We show in
Theorem~\ref{thm:heat-affine-model} that this operator-valued process is
well-suited for modeling the instantaneous covariance process of forward curve
dynamics and that, under the $H$-valued stochastic-integrability condition for
the forward equation (the affine Feynman--Kac martingale identity for
the limiting covariance process is no longer assumed: it is established in
Proposition~\ref{prop:affine-fk}), the joint model, henceforth called the
\emph{heat-modulated affine model}, satisfies the affine transform
formula~\eqref{eq:affine-transform-intro}. The latter martingale condition is
automatic for the finite-dimensional Galerkin systems and is verified explicitly
in the Ornstein--Uhlenbeck/BNS specialization considered below.\par

The heat-modulated affine model $(f_{t}, X_t)_{t \geq 0}$ is also numerically
tractable, as it allows for finite-rank approximations
$(f_{t}^{d}, X_t^d)_{t \geq 0}$, where
$(X_t^d)_{t \geq 0}$ denotes an affine process on positive operators
of rank $d$. Moreover, these finite-rank approximations satisfy a similar
affine transform formula to~\eqref{eq:affine-transform-intro}, with
$\Phi(\cdot, u_{1}, u_{2})$, $\psi_1(\cdot, u_{1}, u_{2})$, and
$\psi_2(\cdot, u_{1}, u_{2})$ replaced by their respective spectral Galerkin approximations
of rank $d$. This result is detailed in
Theorem~\ref{thm:finite-dim-approx-ScoV}, where we also provide explicit
convergence rates in the canonical weighted finite-horizon setting. Furthermore,
we examine a conditional robustness result for Fourier prices of exponential
curve functionals with respect to these finite-rank approximations in
Proposition~\ref{prop:robustness}; this is not an unconditional pricing theorem,
but a conditional layer based on external complex-transform and
dominated-stability hypotheses. General robustness results for approximations of the
instantaneous covariance in Hilbert space-valued stochastic covariance models were
discussed in~\cite{BE24}. The proofs of our main results on the existence of the positive
operator-valued affine processes are adapted from the work
in~\cite{karbach2023finiterank}. In contrast to~\cite{karbach2023finiterank},
where the focus is on affine processes with bounded drift operators, we here
study a class of \emph{irregular affine processes}, characterized by the
presence of an unbounded drift operator given by the Lyapunov operator of the
Laplacian. While such processes are irregular in the terminology of affine
theory, they exhibit a regularizing effect on the associated forward curve
dynamics, as shown and discussed in
Proposition~\ref{prop:cV-refinement} (under
Assumption~\ref{assump:cV-compact-containment}) and Remark~\ref{rem:regularity}
below. Affine stochastic volatility models in Hilbert spaces were initially
introduced in~\cite{CKK22b}, with the subclass of Hilbert-valued
Barndorff--Nielsen--Shephard models presented in~\cite{BRS18, BS24}. The affine class has been shown to be highly flexible,
particularly in the volatility component, allowing for L\'evy noise as well as
state-dependent and drifted jump dynamics of infinite variation. Finite-dimensional approximations
of affine processes were explored in~\cite{karbach2023finiterank, STY20}, and the long-term behavior of affine stochastic
covariance models in Hilbert spaces was investigated in~\cite{FK24}. A Hilbert
space-valued extension of the Heston stochastic volatility model was introduced
in~\cite{BS18}, and \emph{a heat-modulated version} was examined
in~\cite{BLDP22}. This article can be considered as an analogous heat-modulation
extension of the affine class in~\cite{CKK22b}. 
\par{}

\subsubsection{Regularization by Heat Modulation}\label{sec:heat-modulation}

The main idea behind heat modulation in stochastic covariance models for
forward curve dynamics is to use the instantaneous covariance process
$(X_t)_{t\ge0}$ to \emph{regularize the noise} in the HJMM
equation~\eqref{eq:HJMM}. This is achieved by augmenting the drift of the
instantaneous covariance process with the Lyapunov operator associated with the
Laplacian. More precisely, the Lyapunov operator of the Laplacian $\Delta$, acting on a suitable
forward curve space, is defined as
$$
  L_{\Delta}(\cdot)\coloneqq \Delta(\cdot)+(\cdot)\Delta^{*}.
$$
The resulting class of models is referred to as
\emph{heat-modulated stochastic covariance models}, reflecting the fact that the Laplacian generates
the heat semigroup $(T(t))_{t\geq 0}$.
The use of the Laplacian as a regularizing operator in forward curve modeling has already been
proposed by Cont~\cite{Con05}. There, a second-order
(parabolic) differential operator in the maturity direction is introduced
directly into the forward rate dynamics. Based on both empirical evidence and
statistical fitting considerations, Cont argues that such a parabolic structure
leads to several desirable modeling properties, including smoothness in
maturity, mean reversion of deformations, a parsimonious representation of
principal components, and an improved empirical fit to yield curve data. However, the resulting model is not arbitrage-free and therefore cannot be used directly for the pricing and hedging of forward curve derivatives.\newline{}
In contrast, the present approach incorporates the Laplacian indirectly through
the dynamics of the instantaneous covariance process. This preserves the
regularizing effects emphasized in~\cite{Con05} while remaining compatible with
the HJM modeling paradigm: in the driftless arithmetic specification used in our
main affine-transform theorem the forward dynamics carry no HJM drift
correction, while in a geometric specification the no-arbitrage drift correction
has to be imposed separately as a compatible extension layer. Combined with
finite-rank approximability, this gives tractable driftless benchmark dynamics
and a route to geometric pricing once the additional HJM drift and complex
transform assumptions are imposed.

This shift in perspective has several important consequences. First, following~\cite{Con05,Dou14}, the noise driving the forward curve dynamics is most naturally modeled
by infinite-rank cylindrical Brownian noise. In contrast to a
$\mathcal{Q}$-Brownian motion with a fixed covariance operator $\mathcal{Q}$,
the cylindrical formulation ensures that all information about the covariance
structure is encoded solely in the operator $X_t$. This avoids
confounding the covariance of the noise with that of the driving Brownian
motion, which is empirically indistinguishable in the data. In particular, the
instantaneous volatilities $\sigma_t^{(i)}$ in~\eqref{eq:HJMM} can be interpreted
as principal components of $\sigma_t\,\mathrm{d}W_t$ for each $t\ge0$, and can
be estimated effectively from data; see~\cite[Section~1.7]{CT06}. Moreover, the
cylindrical setting eliminates the need for restrictive commutativity
assumptions between instantaneous covariance operators and a background
covariance $\mathcal{Q}$, as required, for example, in
\cite[Assumption~2.11]{CKK22b} or~\cite[Proposition~3.2]{BRS18}. For these
reasons, the cylindrical framework is both more natural and more convenient for
modeling covariance structures in forward curve spaces.

From a mathematical perspective, however, the use of cylindrical noise raises
substantial analytical challenges. For the HJMM equation to be well posed as an
SDE on a Hilbert space, the instantaneous volatility must be Hilbert--Schmidt
valued, or equivalently, the instantaneous covariance operators must be of
trace class; see~\cite{GM11}. Since the space of trace-class operators is only a
Banach space, establishing affine dynamics directly on the trace-class cone is
technically demanding. A key contribution of this paper is therefore to work on
a Gelfand triple of Hilbert--Schmidt operators and to exploit two
\emph{distinct} regularizing effects of the Laplacian-driven drift.
It is worth separating them clearly, since they have different sources.
First, the Lyapunov semigroup associated with the Laplacian has a genuine
\emph{smoothing} effect: for every fixed $t>0$ it maps $\cH$ into the smaller
space $\cV=\cL_2(V^*,H)\cap\cL_2(H,V)$ at the deterministic level
(Lemma~\ref{lem:Lyapunov-semigroup-smoothing}). Passing this regularity to
the stochastic process $X_t$ requires additional $\cV$ jump-moment
hypotheses on $(m,\mu)$ that absorb the mode-by-mode bracket (Proposition
\ref{prop:cV-refinement}); the conclusion is that under those hypotheses,
$X_t\in\cV\cap\cHplus$ $\MP_x$-a.s.\ at every fixed $t>0$. The mechanism is illustrated
numerically in Figure~\ref{fig:heat-smoothing} below: high modes of $X_t$ are
damped at rate $2\omega_k$, so the covariance kernel relaxes towards the
smoother stationary profile of the L\'evy subordinator. Second, under additional
trace first-moment hypotheses on the jump characteristics $(m,\mu)$, the
covariance process is trace-class
(Theorem~\ref{thm:main-convergence}(a)); here the Lyapunov term contributes only
through its dissipativity (it does not increase the trace), and the finiteness is
supplied by the trace-moment hypotheses rather than by the smoothing itself.
We emphasize that the two are independent: $\cV$-regularity of $X_t$ does not
imply trace class, since $\sum_n\|X_te_n\|_V^2<\infty$ does not control
$\sum_n\langle X_te_n,e_n\rangle_H$. Trace-class regularity is the level
compatible with cylindrical noise; together with an integrated trace moment it
also gives a direct sufficient condition for the $H$-valued stochastic
convolution used in the joint transform theorem (see
Remark~\ref{rem:H-valued-trace-sufficient}).

Finally, from a modeling standpoint, it is unrealistic to assume that the noise
itself, cylindrical or otherwise, takes values in the same space of smooth forward
curves. Forward curves observed in practice are constructed by smoothing prices
or rates across maturities; see~\cite[Section~1.2]{Fil01}. The noise component,
by contrast, is latent and should not be assumed to possess the same degree of
regularity. Consequently, the instantaneous volatility operators must
\emph{regularize} the rough maturity-wise noise. Beyond the
operator-level regularization discussed above, the heat semigroup induces
a regularizing effect at the function space level, smoothing the noise.
We caution, however, that pathwise regularity of the forward curve
$f_t$ in the regular space $V$ is governed by the volatility, not the covariance:
it requires $X_t^{1/2}\in\cL_2(H,V)$, which is a \emph{separate} and strictly
stronger hypothesis than the $\cV$-regularity of $X_t$
(see Remark~\ref{rem:square-root-regularity}(iii)--(iv)). The
$H$-valued formulation used in Theorem~\ref{thm:heat-affine-model} avoids this
gap by working in the pivot space. Related
regularization effects have recently been studied in~\cite{BLDP22} for
infinite-dimensional Heston-type stochastic volatility models.

In summary, while~\cite{Con05} introduces parabolic regularization directly into
the forward curve dynamics, we obtain heat-based regularization by introducing
the Laplacian at the level of the stochastic covariance component. This retains
the structural benefits of smoothing in maturity and decay of higher-order modes
while preserving an affine structure in the covariance dynamics. From a financial
modeling perspective, this distinction is useful because the regularization acts
on the cylindrical-noise covariance rather than by changing the hyperbolic
Musiela transport term itself; geometric no-arbitrage pricing can then be added
as a separate HJM drift layer when the corresponding drift and complex-transform
assumptions are available.

\section{Stochastic Volatility Modulated HJMM and Affine
  Processes}\label{sec:stoch-covar-model}

In this section, we introduce the heat-modulated affine stochastic volatility
model. A stochastic volatility model comprises two key components: the price
process and the instantaneous (co)variance process. We begin by addressing these
components separately. In Section~\ref{sec:forward-curve-dynamics}, we examine a
generic stochastic volatility-modulated HJMM model to describe the
spatio-temporal dynamics of logarithmic forward curves in fixed-income or
commodity markets. Subsequently, in
Section~\ref{sec:admissible-irregular-affine-processes}, we present a particular
class of operator-valued affine processes designed to model the instantaneous
covariance process associated with the forward price curve dynamics.

\subsection{The Dynamics of Forward Curves}\label{sec:forward-curve-dynamics}

To interpret the HJMM stochastic partial differential equation~\eqref{eq:HJMM},
we view it as a stochastic differential equation taking values in some Hilbert
spaces of viable forward curves. In the following two paragraphs, we recall
the definition and a few essential properties of a class of forward curve spaces
introduced in~\cite[Chapter 4 \& 5]{Fil01} and~\cite{Con05, Dou14}, and subsequently
give sufficient conditions for the existence of mild solutions to the HJMM equation.

\subsubsection{The State Spaces}\label{sec:the-state-space}

Let $[0,\Theta_{\mathrm{max}}]$ denote the interval of all time-to-maturities, where
$\Theta_{\mathrm{max}}\leq \infty$ is the maximal time-to-maturity of all available
forward contracts in a market. In an idealized setting, one may assume
$\Theta_{\mathrm{max}}=\infty$. Empirical and economic considerations suggest that
any forward curve $[0,\Theta_{\mathrm{max}}]\ni x\mapsto f_t(x)$ exhibits a certain
degree of smoothness with respect to time-to-maturity, admits a limit at the long
end of the curve (often referred to as the \emph{long rate} or
\emph{long-term price}), and displays a characteristic flattening for large
maturities; see, for example,~\cite{Con05,Dou14} and
\cite[Section~1.2]{Fil01}.

These features are naturally captured by the following class of weighted Sobolev
spaces. Let $\beta>0$ and define $H_\beta(0,\Theta_{\mathrm{max}})$ as the space of
all absolutely continuous functions
$f:[0,\Theta_{\mathrm{max}})\to\mathbb R$ such that
$$
\|f\|_\beta
:= \left(|f(0)|^2 + \int_0^{\Theta_{\mathrm{max}}}
\mathrm e^{\beta x}\,|f'(x)|^2\,\mathrm dx\right)^{1/2}
<\infty,
$$
where $f'$ denotes the weak derivative. We equip $H_\beta(0,\Theta_{\mathrm{max}})$
with the inner product
$$
\langle f,g\rangle_\beta
:= f(0)g(0) + \int_0^{\Theta_{\mathrm{max}}}
\mathrm e^{\beta x} f'(x)g'(x)\,\mathrm dx .
$$
When $\Theta_{\mathrm{max}}<\infty$, functions in
$H_\beta(0,\Theta_{\mathrm{max}})$ are understood through their continuous
representatives on $[0,\Theta_{\mathrm{max}}]$. For the case
$\Theta_{\mathrm{max}}=\infty$, this space was proposed in~\cite{Fil01}
as a state space for the HJMM equation and further discussed in the context of
commodity markets in~\cite{BK14}. The exponential weight penalizes irregularities of the forward curve at large
maturities, while the derivative term enforces asymptotic flatness. The explicit
inclusion of the value $f(0)$ fixes the short-end level of the forward curve.
If $\Theta_{\mathrm{max}}<\infty$, the space
$H_\beta(0,\Theta_{\mathrm{max}})$ admits the natural decomposition
$$
H_\beta(0,\Theta_{\mathrm{max}})
\cong V_{0,\beta}\oplus\mathbb R,
\qquad
V_{0,\beta}:=\{u\in H^1(0,\Theta_{\mathrm{max}},\mathrm e^{\beta x}\mathrm dx):u(0)=0\},
$$
	via the isomorphism $f\mapsto (f-f(0),f(0))$. On an infinite horizon, the same short-end anchored space
	$V_{0,\beta}$ is still well defined, but it should not be confused with the
	long-end level decomposition. Indeed, the exponential weight implies the
	existence of
$f(\infty)\df\lim_{x\to\infty}f(x)$, and a long-end decomposition uses instead
\[
  V_{\infty,\beta}:=\{u\in H_\beta(0,\infty):\lim_{x\to\infty}u(x)=0\},
  \qquad
	f\longmapsto (f-f(\infty),f(\infty)).
	\]
	This is a different anchoring convention from $V_{0,\beta}\oplus\MR$. In the
	finite-interval Laplacian example and in the numerical implementation below we
	use the short-end anchored space $V_{0,\beta}\oplus\MR$ on
	$(0,\Theta_{\max})$. Here,
$H^1(0,\Theta_{\mathrm{max}},\mathrm e^{\beta x}\mathrm dx)$ denotes the weighted
Sobolev space of (equivalence classes of) functions $u$ with
$u,u'\in L^2(0,\Theta_{\mathrm{max}},\mathrm e^{\beta x}\mathrm dx)$.
The anchored Sobolev component $V_{0,\beta}$ controls the shape and regularity of
the curve, while the one–dimensional component determines its overall
short-end level in the finite-interval convention. The anchored Sobolev space
$V_{0,\beta}$ enforces a homogeneous boundary condition at the short end of the
maturity axis while leaving the long end free. Since the trace operator at $0$ is continuous on
$H^1(0,\Theta_{\mathrm{max}},\mathrm e^{\beta x}\D x)$ for both finite and infinite horizons, this
definition is well posed for all $\Theta_{\mathrm{max}}\in(0,\infty]$.
If $\Theta_{\mathrm{max}}=\infty$, the exponential weighting guarantees that the
limit $\lim_{x\to\infty} f(x)$ exists and is finite, providing a natural
interpretation as the long rate. In this case, we write $H_\beta$ for
$H_\beta(0,\infty)$. Moreover, $(H_\beta,\langle\cdot,\cdot\rangle_\beta)$ is a
separable Hilbert space and the long-end decomposition gives the compact
embedding
\[
H_\beta\ni f\longmapsto (f-f(\infty),f(\infty))\in
L^2(\mathbb R_+,\mathrm e^{\gamma x}\mathrm dx)\oplus\mathbb R,
\qquad 0<\gamma<\beta,
\]
see~\cite[Theorem~6]{Tap13}.
If $\Theta_{\mathrm{max}}<\infty$, the weight $\mathrm e^{\beta x}$ is bounded
above and below on $(0,\Theta_{\mathrm{max}})$, hence the weighted and unweighted
Sobolev norms are equivalent. In particular, compactness follows from Rellich’s
theorem~\cite{Rel30}: the embedding $V_{0,\beta}\hookrightarrow L^2(0,\Theta_{\mathrm{max}})$ is
compact, and therefore the embedding
\[
H_\beta(0,\Theta_{\mathrm{max}})\ni f \longmapsto (f-f(0),f(0))\in
L^2(0,\Theta_{\mathrm{max}},\mathrm e^{\beta x}\mathrm dx)\oplus\mathbb R
\]
is compact as well.
The Hilbert space $V=H_{\beta}(0,\Theta_{\mathrm{max}})$ has further
convenient properties: For every finite $x \in [0,\Theta_{\mathrm{max}})$, the point evaluation map
$\delta_{x} \colon V \to \mathbb{R}$ is a continuous linear functional
on $V$, see e.g., \cite[Theorem 2.1]{FTT10}. If
$\Theta_{\mathrm{max}}<\infty$, the endpoint trace
$f\mapsto f(\Theta_{\mathrm{max}})$ is continuous as well; if
$\Theta_{\mathrm{max}}=\infty$, the long-end functional
$f\mapsto f(\infty)$ is continuous on $H_\beta$. Thus the corresponding
evaluation functionals can be identified with elements of $V$ through the
Riesz representation theorem. Moreover, the left-shift semigroup
$(S(t))_{t \geq 0}$ is given on the infinite horizon by
$S(t) f(x) = f(x + t)$ and on a finite horizon by the flat-extension convention
$S(t) f(x) = f(x + t)$ for $0\leq x\leq \Theta_{\mathrm{max}}-t$ and
$S(t)f(x)=f(\Theta_{\mathrm{max}})$ for
$x\geq \Theta_{\mathrm{max}}-t$. This semigroup is strongly continuous on $V$,
and its infinitesimal generator is the corresponding realization of the spatial
differentiation operator $\frac{\partial}{\partial x}$ with domain determined by
the chosen shift convention; see~\cite[Section 2.3]{farkas2015isem}.

\subsubsection{The Stochastic Volatility Modulated HJMM
  Equation}\label{sec:HJMM-equation} 

Throughout, we denote by $(V, \langle \cdot, \cdot \rangle_V)$ and
$(H, \langle \cdot, \cdot \rangle_H)$ two separable Hilbert spaces, such that
$V \hookrightarrow H \hookrightarrow V^*$, where $V^*$ is the adjoint space with
respect to $\langle \cdot, \cdot \rangle_H$. As a classical benchmark on bounded domains, one may
consider the Sobolev pair $V = H_0^1(0,\Theta_{\mathrm{max}})$ embedded in the $L^{2}$-space $H = L^2(0,\Theta_{\mathrm{max}})$.
In the forward-curve setting considered in this work, we interpret $V$ as a space
of \emph{regular} forward curves and $H$ as a space of \emph{less regular} forward
curves, and we employ the weighted spaces introduced above with the following
identification convention:
$$
V = H_\beta(0,\Theta_{\mathrm{max}}),
\qquad
H = L^2((0,\Theta_{\mathrm{max}}),\mathrm e^{\gamma x}\mathrm dx)\oplus\mathbb R,
\quad \text{for }0<\gamma<\beta.
$$
If $\Theta_{\mathrm{max}}<\infty$, the embedding $V\hookrightarrow H$ is
understood through the short-end decomposition
$f\mapsto (f-f(0),f(0))$. If $\Theta_{\mathrm{max}}=\infty$, it is understood
through the long-end decomposition
$f\mapsto (f-f(\infty),f(\infty))$, so that the first component is in
$L^2(\mathbb R_+,\mathrm e^{\gamma x}\D x)$.

Both spaces $H$ and $V$ can serve as state spaces for the HJMM equation, depending on
modeling preferences and the desired smoothness of forward curves in the
respective markets. Here, to interpret the HJMM equation~\eqref{eq:HJMM}, we treat it
as a stochastic differential equation (SDE) in the space of \emph{smooth forward
curves} $V$:
\begin{align}\label{eq:HJMM-SDE}
\begin{cases}
    \D f_t = \big(\mathcal{A} f_t + g_t\big) \D t + \sigma_t \D W_t, \\
    f_0 = F(0, \cdot) \in V,
\end{cases}  
\end{align}
where $\mathcal{A} \coloneqq \frac{\partial}{\partial x}$, $(g_t)_{t \geq 0}$
denotes some $V$-valued drift, $(W_t)_{t \geq 0}$ is a cylindrical Brownian
motion on $H$, and $(\sigma_t)_{t \geq 0}$ represents the \emph{instantaneous
  volatility process} of $(f_t)_{t \geq 0}$. The initial value
$x \mapsto f_0(x) = F(0, x)$ is referred to as the \emph{initial forward
  curve}.
In a geometric HJM specification one imposes the HJM drift condition
(see e.g.~\cite[Section~6.3]{CT06}), so that the resulting forward curve
dynamics are arbitrage-free; under such a specification the drift
$(g_t)_{t\ge0}$ is determined by the volatility $(\sigma_t)_{t\ge0}$ provided
the resulting expression is well-defined in the chosen forward-curve state
space. In the arithmetic specification this reduces to $g_t\equiv 0$.\par{}

Note that the noise $(W_t)_{t \geq 0}$ is not modeled as a $V$-valued Brownian
motion, and its values are generally less regular in terms of time-to-maturity
compared to the forward curves $x \mapsto f_t(x) \in V$ at any given time
$t \geq 0$. This is consistent with the observation that forward curves are
typically constructed using smoothing techniques from market data, whereas the
noise is not subject to such smoothing. In order for $(f_t)_{t \geq 0}$ to
remain in the space $V$, the instantaneous volatility process
$(\sigma_t)_{t \geq 0}$ must necessarily map from $H$ into $V$.

Throughout this paper, $(X_t)_{t\ge0}$ denotes the
\emph{instantaneous covariance operator} (taking values in $\cHplus$) and
$(\sigma_t)_{t\ge0}$ denotes the associated \emph{volatility operator} that
modulates the cylindrical noise in~\eqref{eq:HJMM-SDE} below. Let
$J:V\hookrightarrow H$ denote the continuous embedding. In the $V$-valued
formulation, where $\sigma_t:H\to V$, the corresponding covariance operator on
the pivot space $H$ is
\[
  X_t=(J\sigma_t)(J\sigma_t)^*.
\]
In the $H$-valued formulation this reduces to $X_t=\sigma_t\sigma_t^*$.
Conversely, a self-adjoint covariance specification $X_t=X_t^*$ admits the
bounded square root $X_t^{1/2}$ on $H$. This square root is an admissible
$H$-valued volatility for cylindrical noise only when
$X_t^{1/2}\in\cL_2(H,H)$, equivalently when $X_t$ is trace-class. We emphasize
that $\sigma_t$ is the volatility object on which the stochastic-integrability
condition of Assumption~\ref{assump:integration} (or its $H$-valued analogue)
must be imposed.
The volatility operator $\sigma_t$ admits the Hilbert--Schmidt expansion
\begin{align}\label{eq:volatility-mapping}
  \sigma_t f = \sum_{i=1}^{\infty} \langle v_i, f \rangle_H \sigma^{(i)}_t, \quad \text{for } f \in H,
\end{align}
where $(v_i)_{i \in \mathbb{N}}$ is an orthonormal basis (ONB) of $H$ and
$\sigma^{(i)}_t\df\sigma_t v_i$. The components
$(\sigma^{(i)}_t)_{i \in \mathbb{N}} \subseteq V$ are the
\emph{instantaneous volatilities} as in~\eqref{eq:HJMM}. If a finite-factor
specification is imposed, only finitely many of these components are non-zero.
In the general case the series in~\eqref{eq:volatility-mapping} converges in
$V$ whenever $\sigma_t\in\cL_2(H,V)$, as required in
Assumption~\ref{assump:integration} below. At this level of generality, we do not impose a specific structure on the instantaneous volatility process $(\sigma_t)_{t \geq 0}$.
However, to ensure the well-posedness of equation~\eqref{eq:HJMM-SDE}, appropriate integrability
conditions on $(\sigma_t)_{t \geq 0}$ are required. Following
\cite[Section~3.10]{GM11}, a mild solution to~\eqref{eq:HJMM-SDE} is given by the
variation-of-constants formula
\begin{align}\label{eq:solution-HJMM}
  f_t
  =
  S(t) f_0
  +
  \int_0^t S(t-s) g_s \,\D s
  +
  \int_0^t S(t-s) \sigma_s \,\D W_s,
  \qquad t \geq 0,
\end{align}
where $(S(t))_{t \geq 0}$ denotes the left-shift semigroup generated by
$\mathcal{A}$. The stochastic integral in~\eqref{eq:solution-HJMM} is well defined provided that
$(S(t-s)\sigma_s)_{s \leq t}$ is progressively measurable and satisfies
\begin{align}\label{eq:integral-condition}
\mathbb{E}\left[
  \int_0^t
  \norm{S(t-s)\sigma_s}_{\mathcal{L}_2(H,V)}^2
  \D s
\right]
<
\infty,
\qquad \forall t \geq 0.
\end{align}

Here, $\mathcal{L}_{2}(H,V)$ denotes the space of
Hilbert-Schmidt operators from $H$ to $V$ equipped with the
inner product $\langle \cdot, \cdot
\rangle_{\mathcal{L}_{2}(H,V)}$, given by  
\begin{align}\label{eq:inner-product-HS}
  \langle A, B \rangle_{\mathcal{L}_2(H,V)} \coloneqq
  \sum_{n=1}^{\infty} \langle A e_n, B e_n
  \rangle_{V}, \quad  A, B \in \mathcal{L}_{2}(H,V).
\end{align}
The space
$(\mathcal{L}_{2}(H,V),\langle \cdot, \cdot \rangle_{\mathcal{L}_{2}(H,V)})$ is
a separable Hilbert space, where we denote the \emph{adjoint} of $A\in\cL(H,V)$
by $A^{*}$, and the definition of the inner-product~\eqref{eq:inner-product-HS}
is independent of the choice of the basis $(e_{i})_{i\in\MN}\subseteq H$.

Moreover, on the maturity component any positive self-adjoint Hilbert-Schmidt
operator admits a kernel representation with respect to the underlying reference
measure. More precisely, let
\[
  H_0=L^2((0,\Theta_{\mathrm{max}}),w(x)\D x),
\]
where $w(x)=\mathrm e^{\gamma x}$ in the weighted setting and $w\equiv 1$ in the
unweighted setting. Then, for each
$t\ge 0$, the maturity--maturity block of $X_t$ is represented by a symmetric
kernel
$k_t\in L^{2}\big((0,\Theta_{\mathrm{max}})^{2},w(x)w(y)\D x\D y\big)$ such that
\begin{align}\label{eq:kernel-representation}
  (X_tf)(x)
  = \int_{0}^{\Theta_{\mathrm{max}}} k_t(x,y)\,f(y)\,w(y)\,\D y,
  \qquad f\in H_0.
  \end{align}
If the full pivot space is $H=H_0\oplus\mathbb R$, the remaining covariance
terms involving the one-dimensional level component are finite-dimensional
blocks.
The kernel $k_t(x,y)$ admits a natural interpretation as the instantaneous
covariance between forward prices with time-to-maturity $x$ and $y$.
This identification is classical in the theory of Hilbert-Schmidt operators and
provides a flexible, nonparametric description of maturity-wise dependence.
Extensions of this kernel representation to (weighted) Sobolev spaces, including
$H_0^1(0,\Theta_{\mathrm{max}})$ and $H_{\beta}$, are given in
\cite[Section~4]{BK14}. Since the shift semigroup $(S(t))_{t\geq 0}$ is strongly continuous on $V$, the process $(S(t-s)\sigma_s)_{s \leq t}$ is progressively measurable whenever the process
$(\sigma_{t})_{t\geq 0}$ is. A convenient sufficient way to guarantee
condition~\eqref{eq:integral-condition} is to require
$(\sigma_{t})_{t \geq 0}$ to take values in $\mathcal{L}_{2}(H,V)$, since
$(S(t))_{t\geq 0}$ is a semigroup of merely bounded operators on $V$ and
$\norm{S(t-s)\sigma_{s}}_{\mathcal{L}_{2}(H,V)}\leq
\norm{S(t-s)}_{\cL(V)}\norm{\sigma_{s}}_{\mathcal{L}_{2}(H,V)}$. For the
$V$-valued formulation driven by a cylindrical Brownian motion on $H$, we impose
the following sufficient integrability assumption:

\begin{assumption}{A}\label{assump:integration}
  The instantaneous volatility process $(\sigma_{t})_{t\geq 0}$ is progressively
  measurable with respect to some filtration $(\cF_{t})_{t\geq 0}$ to which
  $(W_{t})_{t\geq 0}$ is adapted, and satisfies $\sigma_{t}\in
  \cL_{2}(H,V)$ for all $t\geq 0$ and~\eqref{eq:integral-condition}.
\end{assumption}

\begin{remark}
  If the modeling of the forward curves takes place in the space $H$ rather than
  $V$, with $(W_t)_{t \geq 0}$ still being a cylindrical Brownian motion on $H$,
  we can replace $V$ with $H$ in Assumption~\ref{assump:integration} and still
  obtain an assumption leading to viable instantaneous volatility
  processes, provided that the shift is realized as a strongly continuous
  semigroup on the chosen pivot space $H$. On finite maturity intervals this
  requires an $L^2$-well-defined boundary convention, such as the
  level-preserving stopped shift used in
  Section~\ref{sec:abstr-affine-stoch}; the flat endpoint extension on $V$
  cannot be transferred verbatim to $L^2$-equivalence classes. Moreover, note that a cylindrical Brownian motion on $H$ is not an
  $H$-valued stochastic process, but can be identified with a Brownian motion on
  some larger space $G$, see~\cite{PZ07}. This implies that, even in this case,
  the noise remains less regular in time-to-maturity than the forward curves. 
\end{remark}

\subsection{The affine instantaneous covariance
  process}\label{sec:admissible-irregular-affine-processes}

In this section, we introduce a class of operator-valued stochastic processes
that we propose as the instantaneous covariance process
$(X_{t})_{t \geq 0}$ underlying~\eqref{eq:HJMM-SDE}. The associated
volatility operator $\sigma_t$, which is the object that has to satisfy
Assumption~\ref{assump:integration} (or its $H$-valued analogue), is not
determined by the affine construction alone. When the covariance process is used
to drive~\eqref{eq:HJMM-SDE}, the volatility must be obtained from $X_t$ through
a factorization $X_t=(J\sigma_t)(J\sigma_t)^*$ in the $V$-valued formulation, or
$X_t=\sigma_t\sigma_t^*$ in the $H$-valued formulation, with the required
Hilbert--Schmidt regularity imposed separately. The canonical choice
$\sigma_t=X_t^{1/2}$ on the self-adjoint $H$-valued specification is treated
separately in Remark~\ref{rem:square-root-regularity} below. The affine
construction is for the covariance $X_t$, not directly for the volatility
$\sigma_t$.

We write $\cH$ for all symmetric operators in $\cL_{2}(H,H)$, $\langle\cdot,\cdot\rangle\df \langle \cdot,\cdot
\rangle_{_{\mathcal{L}_2(H,H)}}$ and $\norm{\cdot}=\langle\cdot,\cdot\rangle^{1/2}$. Let $\cH^{+}$ be the set of all
positive operators in $\cH$, i.e.,
$$\cH^{+} \df \{ A \in \cH \colon \langle A v, v \rangle_{H} \geq 0 \text{ for
  all } v \in H \}.$$ Note that $\cH^{+}$ is closed and a convex cone in $\cH$,
meaning that $\cH^{+} + \cH^{+} \subseteq \cH^{+}$,
$\lambda \cH^{+} \subseteq \cH^{+}$ for all $\lambda \geq 0$,
$ \cH^{+}\cap (-\cH^{+}) = \{ 0 \}$. The cone $\cH^{+}$ induces a partial
ordering $\leq_{\cH^{+}}$ on $\cH$, defined by $x \leq_{\cH^{+}} y$ whenever
$y - x \in \cH^{+}$. Lastly, define $$\cV\df \cL_{2}(V^{*},H)\cap\cL_{2}(H,V)$$
with $V^{*}$ the adjoint of $V$ with respect to
$\langle \cdot,\cdot\rangle_{H}$.
We adopt the convention that $\cV$ denotes the full intersection (no
symmetry constraint is built in). Whenever covariance operators are meant,
we write $\cV\cap\cH$ or $\cV\cap\cHplus$ explicitly. Adjoint regularity
statements in Remark~\ref{rem:lyapunov-regularity} below use the adjoint of the
$V^*\to H$ realization of an element of $\cV$, and are imposed only on the
symmetric covariance subspace $\cV\cap\cH$ unless stated otherwise.

\begin{definition}\label{def:admissible-irregular}
  Let  $\chi \colon \cH \to \cH$ be given by
\(\chi(\xi) = \xi \one_{\norm{\xi} \leq 1}(\xi)\). We call the tuple
$(b,\mathbf{B},m,\mu)$ an \textit{admissible parameter set} (with respect to
$\chi$) if the following holds:  
  \begin{defenum}
    \item\label{item:m-2moment} a measure
    $m\colon\cB(\cHpluso)\to [0,\infty]$ such that
    \begin{enumerate}
    \item[(a)] $\int_{\cHpluso} \| \xi \|^2 \,\dm < \infty$ and
    \item[(b)] $\int_{\cHpluso}|\langle\chi(\xi),h\rangle|\,\dm<\infty$ for all
      $h\in\cH$   
   and there exists an element $I_{m}\in \cH$ such that $\langle
   I_{m},h\rangle=\int_{\cHpluso}\langle \chi(\xi),h\rangle\, m(\D\xi)$ for
   every $h\in\cH$;  
    \end{enumerate}   
  \item \label{item:drift} a vector $b\in\cH$ such that
    \begin{align*}
      \langle b, v\rangle - \int_{\cHpluso} \langle
      \chi(\xi), v\rangle \,m(\D\xi) \geq 0\, \quad\text{for all}\;v\in\cHplus;
    \end{align*}    
  \item \label{item:affine-kernel} a $\cHplus$-valued measure $\mu \colon
    \mathcal{B}(\cHpluso) \rightarrow \cHplus$ such that
    $\mu(\cHpluso)\in\cHplus\subset\cH$ (in particular the total vector mass is
    finite in $\cH$) and the kernel
    $M(x,\D\xi)$, for every $x\in\cHplus$ defined on $\mathcal{B}(\cHpluso)$ by
    \begin{align}\label{eq:affine-kernel-M}
      M(x,\D\xi)\df \frac{\langle x, \mu(\D\xi)\rangle }{\norm{\xi}^{2}},
    \end{align}
    satisfies
    \begin{align}\label{eq:affine-kernel-quasi-mono}
      \int_{\cHplus\setminus \{0\}} \langle \chi(\xi), u\rangle\,M(x,\D\xi)< \infty,  
    \end{align}
    for all $u,x\in \cHplus$ such that $\langle u,x \rangle = 0$;
\item\label{item:linear-operator-unbounded}
$\mathbf{B}\in\cL(\cV,\cV^{*})$ is of the form
\begin{align*}
  \mathbf{B}(x)=Bx+xB^{*}+\Gamma(x),\quad x\in \cV,
\end{align*}
where $\cH$ is identified with a subspace of $\cV^*$ through the continuous
embedding $\cV\hookrightarrow\cH$, i.e. $h\in\cH$ acts on $y\in\cV$ by
$y\mapsto\langle h,y\rangle_{\cH}$. In particular, the term
$\Gamma(x)\in\cH$ is interpreted as an element of $\cV^*$ in the above display.
The components are:
\begin{enumerate}
\item[(a)]
$B$ is a \emph{self-adjoint, non-positive} linear operator on $H$
with spectral domain
\[
  \dom(B)=\Big\{h\in H\colon\sum_{i\in\MN}\lambda_i^{2}\,|\langle
  h,e_i\rangle_H|^{2}<\infty\Big\},
\]
where $(e_{i})_{i\in\MN}$ is a complete orthonormal basis of $H$ consisting
of eigenvectors of $B$ satisfying
\begin{align*}
  B e_{i}=-\lambda_{i} e_{i},
\end{align*}
for a sequence $(\lambda_{i})_{i\in\MN}\subseteq\MRplus$ such that
\begin{align*}
  0\le \lambda_{1}\le \lambda_{2}\le \cdots \le \lambda_{n}\to\infty
  \quad\text{as } n\to\infty .
\end{align*}
We further assume that $B,B^{*}\in\cL(V,V^{*})$, where $B^{*}=B$ by
self-adjointness on $H$ extended to the form sense
$\langle Bu,v\rangle_{V^*,V}=\langle u,Bv\rangle_{V^*,V}$ for $u,v\in V$.
The self-adjoint $C_0$-semigroup
$T(t)=\E^{tB}$ on $H$ is defined spectrally by
$T(t)e_i=\E^{-\lambda_i t}e_i$ and extended by linearity and closure;
because all $\lambda_i\ge0$, it is a contraction semigroup. The Lyapunov
semigroup $\cT(t)x\df T(t)xT^{*}(t)=T(t)xT(t)$ on $\cH$ is the
corresponding self-adjoint contraction semigroup induced by conjugation;
its generator $(L,\dom(L))$ is given spectrally by~\eqref{eq:Lyapunov} on
the basis $\{\be_{i,j}\}_{i\le j}$ of $\cH$.

\item[(b)]
$\Gamma\colon \cH\to \cH$ is a bounded linear operator such that
$\Gamma(\cHplus)\subseteq\cHplus$ and
\begin{align*}
  \left\langle \Gamma(x), u \right\rangle
  -
  \int_{\cHpluso}
  \langle \chi(\xi),u\rangle
  \frac{\langle \mu(\D\xi), x \rangle}{\| \xi\|^2 }
  \ge 0,
\end{align*}
for all $x,u \in \cHplus$ such that $\langle u,x\rangle=0$.

\item[(c)]
Moreover, we assume that the operator $B$ satisfies the following coercivity
assumption: there exist $\alpha_{0},\alpha_{1}>0$ and $\lambda\ge 0$ such that
\begin{align*}
  \alpha_{0}\norm{u}^{2}_{V}
  \le
  -2\langle B u,u\rangle_{H}+\lambda\norm{u}_{H}^{2}
  \le
  \alpha_{1}\norm{u}_{V}^{2},
  \qquad u\in V,
\end{align*}
where $\langle Bu,u\rangle_H$ is to be read as the duality
pairing $\langle Bu,u\rangle_{V^*,V}$.
In addition, the forward-curve regularity space $V$ is assumed to be compatible
with the spectral scale of $B$: the eigenvectors $(e_i)_{i\in\MN}$ belong to
$V$, their linear span is dense in $V$, and the norm
\[
  \|u\|_{V,B}^{2}
  \df
  \sum_{i\in\MN}(1+\lambda_i)|\langle u,e_i\rangle_H|^2
\]
is equivalent to $\|u\|_V^2$ on $V$. Equivalently, $V$ is the form domain of
$I-B$ up to equivalent norms. Since $\lambda_i\to\infty$, this spectral
compatibility also implies the compact embedding $V\hookrightarrow H$.
\end{enumerate}
\end{defenum}
\end{definition}

\begin{assumption}{C}\label{assump:cV-compact-containment}
  In addition to admissibility in Definition~\ref{def:admissible-irregular},
  the parameters satisfy the following compact-containment hypotheses:
  \begin{align*}
    b\in\cV\cap\cH,\qquad
    \Gamma(\cV\cap\cH)\subseteq\cV\cap\cH
    \quad\text{and}\quad
    \Gamma:\cV\cap\cH\to\cV\cap\cH
    \text{ is bounded},
  \end{align*}
  \begin{align*}
    \int_{\cHpluso}\|\xi\|_{\cV}^{2}\,m(\D\xi)<\infty,
  \end{align*}
  and there exists a constant $C_{\cV}<\infty$ such that, for every
  $y\in\cV\cap\cHplus$,
  \begin{align}\label{eq:cV-mu-compact-containment}
    \int_{\cHpluso}\|\xi\|_{\cV}^{2}\,
      \frac{\langle y,\mu(\D\xi)\rangle}{\|\xi\|^{2}}
    \le C_{\cV}\bigl(1+\|y\|_{\cV\cap\cH}^{2}\bigr).
  \end{align}
\end{assumption}

Assumption~\ref{assump:cV-compact-containment} is used only for stochastic
compact containment of the Galerkin laws. It is separate from admissibility:
admissibility gives the finite-rank affine processes and Riccati equations,
whereas Assumption~\ref{assump:cV-compact-containment} supplies the uniform
$\cV$-moment estimates needed to extract an infinite-rank weak limit.

\begin{remark}[Roles of (a) and (c)]\label{rem:B-spectral}
The self-adjoint, non-positive realization of $B$ and the spectral
construction of $T(t)$ and $\cT(t)$ are part of the definition above; this
remark only collects two interpretive points used downstream. First, since
$Be_i=-\lambda_i e_i$ with $\lambda_i\ge0$, the operator $B$ is dissipative
on $H$ already (the shift $\lambda\ge0$ in~(c) is not needed for
dissipativity). The shift in~(c), together with the spectral compatibility
assumption, turns the form
$[u,u]_B\df -\langle Bu,u\rangle_{V^*,V}+\lambda\|u\|_H^2$
into one comparable to $\|u\|_V^2$ and identifies $V$ with the Hilbert scale
generated by $B$. The downstream consequence is a clean split: mere
dissipativity ($\lambda$ irrelevant) drives the trace bound of
Lemma~\ref{lem:trace-bound-finite-rank}, while the full $V$-coercivity
and Hilbert-scale compatibility drive the $\cV$-smoothing of
Lemma~\ref{lem:Lyapunov-semigroup-smoothing}.
Second, in Example~\ref{ex:Laplacian_Forward} the combined $H$-spectrum is
indexed with the level component as the first mode: $e_1=(0,1)$ and
$\lambda_1=0$, with shape modes relabelled as $e_{n+1}=(q_n,0)$, where
$(q_n)_{n\ge1}$ denotes the strictly positive shape-spectrum eigenbasis;
thus $0=\lambda_1<\lambda_2\le\lambda_3\le\cdots$, in agreement with
Definition~\ref{def:admissible-irregular}(a).
\end{remark}

\begin{remark}[On the second-moment condition]\label{rem:second-moment}
Condition~\ref{item:m-2moment}(a) imposes the \emph{full} second moment
$\int_{\cHpluso}\|\xi\|^2\,m(\D\xi)<\infty$, which is stronger than the truncated
condition $\int_{\cHpluso}(\|\xi\|^2\wedge1)\,m(\D\xi)<\infty$ used for existence
in the regular affine theory of~\cite{CKK22a}. The strengthening is genuinely
needed here, but only for the \emph{uniform-in-$d$} second-moment and
Burkholder--Davis--Gundy estimates underlying tightness
(Lemma~\ref{lem:irregular-tightness-square-bound}); the existence of each
finite-rank process and the mild Riccati solution use only the truncated
condition.
\end{remark}

\begin{example}[The Laplacian on forward curve spaces]\label{ex:Laplacian_Forward}
The canonical example of an operator $B$ satisfying
\cref{item:linear-operator-unbounded} in the context of forward curve modeling
is the Laplacian.\newline{}
Throughout this example we work on a \emph{finite} maturity interval, i.e.
let $0<\Theta_{\mathrm{max}}<\infty$ and $\beta>0$. As before, we decouple the forward curve space into
a shape component and a level component by setting
$$
V_{0,\beta}=\{u\in H^1(0,\Theta_{\mathrm{max}},\mathrm e^{\beta x}\mathrm dx):u(0)=0\},
\qquad
H_0=L^2\bigl(0,\Theta_{\mathrm{max}},\mathrm e^{\beta x}\mathrm dx\bigr).
$$
Every forward curve $f\in H_{\beta}(0,\Theta_{\mathrm{max}})$ admits the unique
decomposition
$$
f(x)=u(x)+r,\qquad r=f(0),\quad u=f-f(0)\in V_{0,\beta} .
$$
We define the state space $V$ and the pivot space $H$ by
$$
V:=V_{0,\beta}\oplus\mathbb R,
\qquad
H:=H_0\oplus\mathbb R .
$$
	Next, define the bilinear form $a:V_{0,\beta}\times V_{0,\beta}\to\mathbb R$ by
$$
a(u,v)
:=\int_0^{\Theta_{\mathrm{max}}} u'(x)v'(x)\,\mathrm e^{\beta x}\,\mathrm dx,
$$
and let $A$ be the self-adjoint operator on $H_0$ associated with $a$.
Integration by parts in the form identifies $A$ with the weighted
Sturm--Liouville operator
\[
  Au=-\mathrm e^{-\beta x}\bigl(\mathrm e^{\beta x}u'\bigr)'=-u''-\beta u',
\]
with the Dirichlet condition $u(0)=0$ at the left endpoint (inherited from
$V_{0,\beta}$) and the natural (weighted Neumann) condition
$\mathrm e^{\beta\Theta_{\mathrm{max}}}u'(\Theta_{\mathrm{max}})=0$ at the
right endpoint, the latter arising from the form domain through integration
by parts.
The operator $A$ is positive, self-adjoint, and has compact resolvent.
We denote by $\Delta:=-A$ the corresponding negative generator, henceforth referred
to as the Laplacian. We now define the full operator $B$ on $H$ by
$$
B:=\Delta\oplus 0,
\qquad
	\dom(B):=\dom(\Delta)\oplus\mathbb R .
	$$
	Thus $B(u,r)=(\Delta u,0)$, reflecting that the shape evolves while the level is preserved.
	By the spectral theorem for self-adjoint operators with compact resolvent, there exists
an orthonormal basis $(q_n)_{n\in\mathbb N}$ of $H_0$ consisting of eigenvectors of $\Delta$
satisfying
$$
\Delta q_n=-\eta_n q_n,
$$
where $0 < \eta_1 \le \eta_2 \le \dots$ is the sequence of eigenvalues of $A$
accumulating at infinity.
The associated weighted Sturm--Liouville problem has the standard
one-dimensional Weyl asymptotics
\[
  \eta_n\sim (n\pi/\Theta_{\mathrm{max}})^{2},
  \qquad n\to\infty .
\]
We do not use an explicit closed form for the eigenfunctions below. This avoids
the special-case distinction between oscillatory and possible hyperbolic lowest
modes and keeps only the spectral properties needed for the Galerkin analysis.
The Weyl law above describes the $n\to\infty$ (oscillatory) branch only: the
substitution $u=\mathrm e^{-\beta x/2}v$ turns the eigenvalue problem into
$-v''=(\eta-\beta^2/4)v$, so finitely many low modes with
$\eta<\beta^2/4$ may sit on a non-oscillatory branch. These low modes are
resolved only loosely by the Galerkin basis; what enters the convergence plots
below are the \emph{discrete Galerkin eigenvalues}
$\lambda_{d}^{\mathrm{tens}}$ (variational upper bounds on the true spectrum),
not the asymptotic values, and the two agree at the small ranks $d$ shown.
Consequently,
$$
\{(0,1)\}\cup\{(q_n,0):n\in\mathbb N\}
$$
is an orthonormal basis of $H$ consisting of eigenvectors of $B$. Relabel the
combined basis by
\[
  e_1=(0,1),\qquad e_{n+1}=(q_n,0),\qquad
  \lambda_1=0,\qquad \lambda_{n+1}=\eta_n,\quad n\in\mathbb N .
	\]
	Then $B e_i=-\lambda_i e_i$ and
	$0=\lambda_1<\lambda_2\le\lambda_3\le\cdots$.\newline{}
	Since $V_{0,\beta}$ enforces the boundary condition $u(0)=0$, the Poincar\'e inequality implies
that there exist constants $\alpha_0,\alpha_1>0$ such that
$$
\alpha_0\|u\|_{V_{0,\beta}}^2
\le
a(u,u)
\le
\alpha_1\|u\|_{V_{0,\beta}}^2,
\qquad u\in V_{0,\beta} .
$$
Hence the form is coercive on the shape component. For the operator $\Delta$ this
yields
$$
-2\langle \Delta u, u \rangle_{H_0}
\ge c_0 \|u\|_{V_{0,\beta}}^2,
\qquad u\in V_{0,\beta} ,
$$
for some $c_0>0$.
Introducing the shift parameter $\lambda>0$ in
\cref{item:linear-operator-unbounded}(c), we obtain for any $f=(u,r)\in V$,
$$
-2\langle Bf,f\rangle_H+\lambda\|f\|_H^2
=
-2\langle \Delta u,u\rangle_{H_0}
+\lambda(\|u\|_{H_0}^2+|r|^2)
\geq
c_0\|u\|_{V_{0,\beta}}^2+\lambda|r|^2 .
$$
Choosing $\lambda$ sufficiently large yields
$$
-2\langle Bf,f\rangle_H+\lambda\|f\|_H^2
\geq c\,\|f\|_V^2,
$$
for some $c>0$, which is the lower bound in \cref{item:linear-operator-unbounded}(c).
For the upper bound, we use the bilinear form estimate $a(u,u)\le \alpha_1\|u\|_{V_{0,\beta}}^2$ and the continuous embedding $V\hookrightarrow H$ (which gives $\|f\|_H^2\le C_{\mathrm{emb}}^2\|f\|_V^2$ for some constant $C_{\mathrm{emb}}>0$):
$$
-2\langle Bf,f\rangle_H+\lambda\|f\|_H^2
=
2a(u,u)+\lambda\|f\|_H^2
\le
2\alpha_1\|u\|_{V_{0,\beta}}^2+\lambda C_{\mathrm{emb}}^2\|f\|_V^2
\le
(2\alpha_1+\lambda C_{\mathrm{emb}}^2)\|f\|_V^2,
$$
which establishes the upper bound $\alpha_1' \coloneqq 2\alpha_1+\lambda C_{\mathrm{emb}}^2$.
The same spectral representation verifies the Hilbert-scale compatibility in
Definition~\ref{def:admissible-irregular}(iv)(c). Indeed, the form domain of
$I-B$ is $V_{0,\beta}\oplus\MR$, the eigenvectors
$\{(0,1)\}\cup\{(q_n,0):n\in\MN\}$ are dense in this space, and the spectral
	norm
\[
  |r|^2+\sum_{n\ge1}(1+\eta_n)|\langle u,q_n\rangle_{H_0}|^2
\]
is equivalent to the product norm on $V_{0,\beta}\oplus\MR$ by the spectral
	theorem for the closed coercive form $a$.
	Finally, we note that the operator $B$ generates the strongly continuous semigroup
$$
T(t)=\E^{tB}= T_\Delta(t)\oplus I_{\mathbb R}, \qquad t\geq 0,
$$
where $(T_\Delta(t))_{t\geq 0}$ is the heat semigroup of $\Delta$ on $H_0$.
Consistently with
Definition~\ref{def:admissible-irregular}(iv)(a), $T(t)=\E^{tB}$ denotes
throughout the \emph{full} semigroup on $H=H_0\oplus\MR$; its level mode
$e_1=(0,1)$ is undamped.
\end{example}

\medskip{}

Now, let $(e_{i})_{i\in\MN}$ be the sequence of eigenvectors of $B$ as defined in part (a)
of~\cref{item:linear-operator-unbounded}. For every $i\in\MN$, define
$\be_{i,i}\df e_{i}\otimes e_{i}$, and for distinct $i$ and $j$ set
$\be_{i,j}\df \frac{1}{\sqrt{2}}(e_{i}\otimes e_{j}+e_{j}\otimes e_{i})$.
Then $\norm{\be_{i,j}}=1$, $\be_{i,j}=\be_{j,i}$, and the family
$\set{\be_{i,j}}_{i\leq j\in\MN}$ is an orthonormal basis of $\cH$.
Set $\lambda_{i,j}\df\lambda_i+\lambda_j$. In accordance with
Remark~\ref{rem:B-spectral}, we define the Lyapunov operator
$L:\dom(L)\subseteq\cH\to\cH$ spectrally by
\begin{equation}\label{eq:Lyapunov}
  Lx=\sum_{i\le j}-\lambda_{i,j}\langle x,\be_{i,j}\rangle\,\be_{i,j},
  \qquad
  \dom(L)=\left\{x\in\cH:
  \sum_{i\le j}\lambda_{i,j}^{2}|\langle x,\be_{i,j}\rangle|^{2}<\infty
  \right\}.
\end{equation}
On the algebraic span of the rank-one tensors this agrees with the formal
Lyapunov expression $Bx+xB^*$, and in particular
\begin{equation}\label{eq:Lyapunov-eigenvalues}
  L(\be_{i,j})=-\lambda_{i,j}\,\be_{i,j}, \qquad i\le j.
\end{equation}
Moreover, $\lambda_{i,j}=\lambda_{j,i}$, $\lambda_{i,j}\ge 0$, and
$\lambda_{i,j}\to\infty$ as $i\to\infty$ or $j\to\infty$. The operator $L$ is
the linear part of $\mathbf{B}$. We denote by $(\cT(t))_{t\ge 0}$ the strongly
continuous semigroup on $\cH$ generated by $(L,\dom(L))$; equivalently,
$\cT(t)\be_{i,j}=\E^{-\lambda_{i,j}t}\be_{i,j}$.

    \begin{remark}\label{rem:lyapunov-regularity}
      \begin{enumerate}
          \item[i)] Assume the coercivity estimate from~\cref{item:linear-operator-unbounded}(c), i.e.,
there exist $\alpha_0,\alpha_1>0$ and $\lambda\geq 0$ such that for all $v\in V$,
\[
  \alpha_0\|v\|_V^2 \le -2\langle Bv,v\rangle_H + \lambda\|v\|_H^2 \le \alpha_1\|v\|_V^2.
\]
Let $\mathcal{V} \coloneqq \mathcal{L}_2(H,V)\cap \mathcal{L}_2(V^*,H)$ be equipped with the norm
$\|u\|_{\mathcal{V}}^2 \coloneqq \|u\|_{\mathcal{L}_2(H,V)}^2+\|u\|_{\mathcal{L}_2(V^*,H)}^2$.
Fix an ONB $(f_n)_{n\in\mathbb{N}}$ of $H$. For
$u\in\mathcal V\cap\mathcal H$, let
$u^\dagger:H\to V$ denote the Hilbert adjoint of the
$V^*\to H$ realization of $u$, i.e.
\[
  \langle u\eta,h\rangle_H=\langle \eta,u^\dagger h\rangle_{V^*,V},
  \qquad \eta\in V^*,\ h\in H .
\]
Then $u f_n\in V$ and $u^\dagger f_n\in V$, and
\[
  \|u\|_{\mathcal V}^{2}
  =
  \sum_{n=1}^{\infty}\|u f_n\|_V^2
  +
  \sum_{n=1}^{\infty}\|u^\dagger f_n\|_V^2 .
\]
On the finite-rank spectral core of $L$ (and then by closure on
$\dom(L)\cap\mathcal V\cap\mathcal H$), the Lyapunov form identity reads
\begin{align*}
  \langle Lu,u\rangle_{\mathcal{H}}
  =
  \sum_{n=1}^{\infty}\langle B(u f_n),u f_n\rangle_{V^*,V}
  +
  \sum_{n=1}^{\infty}
  \langle B(u^\dagger f_n),u^\dagger f_n\rangle_{V^*,V}.
\end{align*}
Applying the coercivity estimate to $v=u f_n$ and $v=u^\dagger f_n$ and summing over $n$
yields
$$
  \alpha_0\|u\|_{\mathcal{V}}^2
  \le -2\langle Lu,u\rangle_{\mathcal{H}} + 2\lambda\|u\|_{\mathcal{H}}^2
  \le \alpha_1\|u\|_{\mathcal{V}}^2,
$$
which establishes that $L$ satisfies the corresponding coercivity bound on the tensor space.
	        \item[ii)] Let $\mathcal L_1(H)$ and $\mathcal L_2(H)$ denote,
respectively, the Banach space of trace-class operators on $H$ (with norm
$\|\cdot\|_{\mathcal L_1(H)}$) and the Hilbert space of Hilbert--Schmidt
operators on $H$ (with norm $\|\cdot\|_{\mathcal L_2(H)}$).
In the setting of Example~\ref{ex:Laplacian_Forward}, the positive shape
eigenvalues $(\eta_n)_{n\in\mathbb{N}}$ of the weighted Sturm--Liouville operator
$A$ on the bounded interval $(0,\Theta_{\mathrm{max}})$ (with the mixed
Dirichlet/weighted-Neumann boundary conditions)
satisfy the asymptotic
$\eta_n\simeq c\,n^{2}$ for some $c>0$. In particular,
$\sum_{n\ge 1}\mathrm{e}^{-2\eta_n t}<\infty$ for every $t>0$.
Incorporating the one-dimensional contribution of the constant level
component, this yields
$$
\|T(t)\|_{\mathcal{L}_2(H)}^2 = 1 + \sum_{n\ge1}\mathrm{e}^{-2\eta_n t} < \infty,\qquad t>0,
$$
so the semigroup $(T(t))_{t>0}$ generated by $B$ is Hilbert--Schmidt
(and hence compact) for every $t>0$.
Consequently, the associated Lyapunov semigroup
$$
\mathcal{T}(t)X := T(t)XT^{*}(t),\qquad t\geq 0,
$$
maps bounded operators into trace-class operators for every $t>0$, satisfying the estimate
$$
\|\mathcal T(t)X\|_{\mathcal L_1(H)} \leq \|T(t)\|_{\mathcal L_2(H)}^{2}\,\|X\|_{\mathcal L(H)}.
$$
Indeed, since the Hilbert--Schmidt operators form a two-sided ideal
in $\mathcal L(H)$, we have that $T(t)X$ and $XT^{*}(t)$ are in $\mathcal L_{2}(H)$ with
$$
  \|T(t)X\|_{\mathcal L_2(H)}\leq \|T(t)\|_{\mathcal L_2(H)}\|X\|_{\mathcal L(H)},
  \qquad
  \|XT^{*}(t)\|_{\mathcal L_2(H)}\leq\|X\|_{\mathcal L(H)}\|T^{*}(t)\|_{\mathcal L_2(H)}.
$$
Applying the Schatten--H\"older inequality
$\|AB\|_{\mathcal L_1(H)}\leq\|A\|_{\mathcal L_2(H)}\|B\|_{\mathcal L_2(H)}$
with $A=T(t)$ and $B=XT^{*}(t)$ therefore yields
\[
  \|\mathcal T(t)X\|_{\mathcal L_1(H)}
  \leq\|T(t)\|_{\mathcal L_2(H)}\|XT^{*}(t)\|_{\mathcal L_2(H)}
  \leq\|T(t)\|_{\mathcal L_2(H)}^{2}\|X\|_{\mathcal L(H)}.
\]
This regularization is the mechanism used below to obtain trace-class covariance
regularity under the additional trace and moment hypotheses in
Theorem~\ref{thm:main-convergence}. Such trace-class regularity is precisely the
condition needed for cylindrical noise to produce a well-defined $H$-valued
stochastic convolution in the forward curve dynamics.
Moreover, in contrast to~\cite{karbach2023finiterank}, where only bounded drift operators were
covered, the present paper allows for genuinely unbounded generators $B$, such as the Laplacian.
	Such operators arise naturally in forward curve modelling when maturity-wise diffusion and
	smoothing effects are incorporated.

    \end{enumerate}
    \end{remark}

    Let $(b,\mathbf{B},m,\mu)$ be an admissible parameter set as in
    Definition~\ref{def:admissible-irregular}. Next, define the functions
    $F \colon \cHplus \to \MR$ by 
    \begin{align}\label{eq:F}
      F(u)= \langle b, u \rangle - \int_{\cHpluso} \big( \E^{-\langle \xi, u
      \rangle} - 1 + \langle \chi(\xi), u \rangle \big) \, \dm, 
    \end{align}
    and $\hat{R} \colon \cHplus \to \cH$ by
    \begin{align}\label{eq:R-hat}
      \hat{R}(u)= \Gamma^{*}(u) - \int_{\cHpluso} \big( \E^{-\langle \xi, u \rangle}
      - 1 + \langle \chi(\xi), u \rangle \big) \frac{\dmu}{\norm{\xi}^{2}}. 
    \end{align}
    The functions $F$ and $\hat{R}$ are well-defined and locally Lipschitz
    continuous on $\cHplus$, see~\cite[Remark 3.4]{CKK22a}. The functions $F$
    and $\hat{R}$, together with the Lyapunov operator $L$
    from~\eqref{eq:Lyapunov} and an initial value $(0,u)$ with $u \in \cHplus$
    give rise to the following pair of (formal) differential equations,
    which we refer to as the \emph{generalized mild Riccati equations}:
\begin{subequations}
  \begin{empheq}[left=\empheqlbrace]{align}
    \frac{\partial }{\partial t}\phi(t,u) &=
    F(\psi(t,u)), \label{eq:Riccati-phi-mild} &\phi(0,u)=0,\\
    \frac{\partial }{\partial t}\psi(t,u) &= L(\psi(t,u)) +
    \hat{R}(\psi(t,u)), &\psi(0,u)=u. \label{eq:Riccati-psi-mild}
  \end{empheq}
\end{subequations}
The differential display
\eqref{eq:Riccati-phi-mild}--\eqref{eq:Riccati-psi-mild} is formal: since $L$
is unbounded on $\cH$, the right-hand side of~\eqref{eq:Riccati-psi-mild} is
not defined for arbitrary $u\in\cHplus$. We use the variation-of-constants
formula
\begin{align}
  \psi(t,u) &= \cT(t)u + \int_0^t \cT(t-s)\hat R(\psi(s,u))\,\D s,\\
  \phi(t,u) &= \int_0^t F(\psi(s,u))\,\D s,
\end{align}
as the actual solution concept for the limiting Riccati system; $\psi(\cdot,u)$
is required to be continuous from $[0,T]$ to $\cHplus$ and $\phi(\cdot,u)$ to
be continuous from $[0,T]$ to $\MR$ (the integral representation gives
$\phi(\cdot,u)\in C([0,T];\MR)$, and $C^1$ regularity of $\phi$ follows only
when continuity of $s\mapsto F(\psi(s,u))$ is established). The Galerkin
Riccati equations~\eqref{eq:mild-Riccati-Galerkin-phi}--\eqref{eq:mild-Riccati-Galerkin-psi}
below remain in classical (strong) finite-dimensional form because $L$ is
bounded on the finite-dimensional subspace $\cH_d$.

To solve the generalized mild Riccati equations
\eqref{eq:Riccati-phi-mild}–\eqref{eq:Riccati-psi-mild}, we introduce a spectral
Galerkin-type approximation. For every $d \in \mathbb{N}$, let $H_d$ denote the
$d$-dimensional subspace of $H$ spanned by the first $d$ eigenvectors, i.e.,
$$H_d \coloneqq \operatorname{span}\{e_i \colon i = 1, \dots, d\},$$ and we define the
orthogonal projection of $H$ onto $H_d$, with respect to the inner product
$\langle \cdot, \cdot \rangle_H$, by $\sP_d$. In the finite-horizon Laplacian
example this convention includes the level mode $e_1=(0,1)$ and the first
$d-1$ shape modes.

For each $d \in \mathbb{N}$, let $\cH_d$ represent the finite-dimensional subspace of $\cH$ spanned by the family $\{\be_{i,j} \colon 1 \leq i \leq j \leq d\}$, i.e.,
\begin{align*}
\cH_d \coloneqq \operatorname{span}\{\be_{i,j} \colon 1 \leq i \leq j \leq d\},  
\end{align*}
and let $\bP_d$ denote the orthogonal projection of $\cH$ onto $\cH_d$, with
respect to the inner product $\langle \cdot, \cdot \rangle$. For every
$d \in \mathbb{N}$ and $u \in \cH$, we have $\bP_d(u) = \sP_d u
\sP_d$. Moreover, every operator in $\cH_d$ is self-adjoint and of rank at most
$d$. We define $\bP_d^\perp(u) \coloneqq u - \bP_d(u)$ and note that
$\lim_{d \to \infty} \|\bP_d^\perp x\| = 0$ for all $x \in \cH$.

Additionally, we observe that $\cH_d = \{u \sP_d \colon u \in \cL(H_d), u = u^*\}$, and for the cone of all positive self-adjoint operators in $\cH_d$, denoted by $\cH^+_d$, we have:
\begin{align*}
\cH^+_d \coloneqq \{u \sP_d \colon u \in \cL(H_d), u = u^*, \, \langle u h, h\rangle_H \geq 0 \,\, \forall h \in H_d\}.  
\end{align*}

Importantly, for all $d \in \mathbb{N}$, we have the nested inclusion
$\cH^+_d \subseteq \cH^+_{d+1} \subseteq \cHplus$. For more details on the
subspace of finite-rank operators in the space of all Hilbert-Schmidt operators,
see \cite{Ros91}. Following \cite{Goe84}, we refer to a sequence
$(\cH_d, \bP_d)_{d \in \mathbb{N}}$, as defined above, as a \emph{projection
  scheme} in $\cH$ (with respect to the basis $\{\be_{i,j}\}_{i \leq j \in
  \mathbb{N}}$). 

\begin{definition}
We define the
$d^{\mathrm{th}}$-Galerkin approximation of~\eqref{eq:Riccati-phi-mild}-\eqref{eq:Riccati-psi-mild} as the unique solutions
$\phi_{d}(\cdot,\bP_{d}(u))\in C^{1}(\MRplus,\MR)$ and
$\psi_{d}(\cdot,\bP_{d}(u))\in C^{1}(\MRplus,\cHplus_{d})$ to 
  \begin{subequations}
    \begin{empheq}[left=\empheqlbrace]{align}
      \,\frac{\partial}{\partial
        t}\phi_{d}(t,\bP_{d}(u))&=F_{d}(\psi_{d}(t,\bP_{d}(u))), &\phi_{d}(0,\bP_{d}(u))=0, \label{eq:mild-Riccati-Galerkin-phi} \\
      \,\frac{\partial}{\partial
        t}\psi_{d}(t,\bP_{d}(u))&=R_{d}(\psi_{d}(t,\bP_{d}(u))), &\psi_{d}(0,\bP_{d}(u))=\bP_{d}(u),\label{eq:mild-Riccati-Galerkin-psi}
    \end{empheq} 
  \end{subequations}
  where $F_{d}\colon\cHplus_{d}\to \MR$ is defined by $F_{d}(u_{d})\df
  F(u_{d})$ and $R_{d}\colon\cHplus_{d}\to \cH_{d}$ is defined for every $u_{d}\in
  \cHplus_{d}$ as 
\begin{align}\label{eq:R-Galerkin-irregular}
  R_{d}(u_{d})= L(u_{d})+\hat{R}_{d}(u_{d})=Bu_{d}+u_{d}B^{*}+\hat{R}_{d}(u_{d}),
\end{align}
with $\hat{R}_{d}(u_{d})\df\bP_{d}(\hat{R}(u_{d}))$ for
$u_{d}\in \cHplus_{d}$. 
\end{definition}

\begin{proposition}\label{prop:existence-mild-Riccati}
Let $(b,\mathbf{B},m,\mu)$ be an admissible parameter set as defined in
Definition~\ref{def:admissible-irregular}. Then, for every $d\in\MN$, $T>0$
and $u\in\cHplus$, there exists a unique solution
$\big(\phi_{d}(\cdot,\bP_{d}(u)),\psi_{d}(\cdot,\bP_{d}(u))\big)$ to the system
of equations~\eqref{eq:mild-Riccati-Galerkin-phi}-\eqref{eq:mild-Riccati-Galerkin-psi}
such that $\phi_{d}(\cdot,\bP_{d}(u))\in C^{1}([0,T],\MR)$ and
$\psi_{d}(\cdot,\bP_{d}(u))\in C^{1}([0,T],\cHplus_{d})$. Moreover,
the Galerkin approximations $\phi_{d}(\cdot,\bP_{d}(u))$ and $\psi_{d}(\cdot,\bP_{d}(u))$ converge
uniformly on $[0,T]$ to $\phi(\cdot,u)\in C^{1}([0,T],\MR)$
and $\psi(\cdot,u)\in C([0,T],\cHplus)$, the unique mild solution of the
limiting Riccati system in the variation-of-constants sense. More precisely,
the pair satisfies
\begin{align}
  \psi(t,u)&=\cT(t)u + \int_{0}^{t} \cT(t-s)\hat R(\psi(s,u))\D s,\quad t\in [0,T],
  \label{eq:mild-solution-variation}\\
  \phi(t,u)&=\int_0^t F(\psi(s,u))\,\D s,\quad t\in[0,T].
  \label{eq:mild-solution-phi}
\end{align}
Finally,
\begin{align}\label{eq:convergence-mild-Galerkin}
\sup_{t\in[0,T]}\big(|\phi_{d}(t,\bP_{d}(u))-\phi(t,u)|+\norm{\psi_{d}(t,\bP_{d}(u))-\psi(t,u)}\big)\longrightarrow 0,
  \qquad d\to\infty.
\end{align}
\end{proposition}

The well-posedness of the generalized (mild) Riccati equations is closely
related to the existence of an associated class of \emph{affine processes},
which we first define for the Galerkin approximations and subsequently show that
the affine processes associated with the approximations converge to the desired
irregular affine process.

For each $d \in \mathbb{N}$, $F_{d}(u_{d})\df
  F(u_{d})$ and $R_d$ defined as in
\eqref{eq:R-Galerkin-irregular}, we define the set
$\cD^d \coloneqq \{\E^{-\langle \cdot, u \rangle} \colon u \in \cHplus_d\}
\subseteq C(\cHplus, \mathbb{R})$, and the operator
$\cG^d \colon \cD^d \to C(\cHplus, \mathbb{R})$ as:
(The operator $\cG^d$ is well-defined because $\cH_d$ is invariant
under the Lyapunov operator $L$: by~\eqref{eq:Lyapunov-eigenvalues},
$L(\be_{i,j})=-\lambda_{i,j}\be_{i,j}\in\cH_d$ for $1\le i\le j\le d$, and
$\hat R_d=\bP_d\circ\hat R$ maps into $\cH_d$ by construction; hence
$R_d(u_d)=L(u_d)+\hat R_d(u_d)\in\cH_d$ for every $u_d\in\cH^+_d$.)
\begin{align}\label{eq:G-d-operator}
  \cG^d \E^{-\langle \,\cdot\,, u \rangle}(x) \coloneqq \left(-F_d(u) - \langle x, R_d(u) \rangle\right) \E^{-\langle x, u \rangle}, \quad x \in \cHplus.
\end{align}

The following proposition asserts that, for every $d \in \mathbb{N}$, the
$d^{\mathrm{th}}$ Galerkin Riccati solution gives rise to an affine Markov
process $X^d$ with values in $\cHplus_d$, which solves the
martingale problem for $\cG^d$ with $X^d_0 = \bP_d(x)$ on a suitable stochastic
basis. This is the \emph{irregular} version of~\cite[Proposition
3.3]{karbach2023finiterank}.

\begin{proposition}\label{prop:embedding-affine-main}
  Assume that the conditions of Proposition~\ref{prop:existence-mild-Riccati}
  hold. Then for every $d\in\MN$, the following holds:
  \begin{propenum}
  \item[i)]\label{item:embedding-affine-main-1} There exists a unique Markov
    process $(X^{d},(\MP_{x}^{d})_{x\in\cHplus})$, realized on the space
    $D(\MRplus,\cHplus)$ of all c\`adl\`ag paths and where $\MP_{x}^{d}$
    denotes the law of $X^{d}$ given $X_{0}^{d}=\bP_{d}(x)$, such that for all
    $x\in\cHplus$ we have $\MP_{x}^{d}(\set{X^{d}_{t}\in\cHplus_{d}\colon\,
      t\geq 0})=1$ and the following affine transform formula holds true:    
    \begin{align}\label{eq:affine-Galerkin}
      \hspace{10mm} \EXspec{\mathbb{P}^{d}_{x}}{\E^{-\langle X^{d}_{t},
      \bP_{d}(u)\rangle}}=\E^{-\phi_{d}(t,\bP_{d}(u))-\langle
      \bP_{d}(x),\psi_{d}(t,\bP_{d}(u))\rangle},\,\, t\geq 0,\,u\in\cHplus,  
    \end{align}
    for $\big(\phi_{d}(\cdot,\bP_{d}(u)),\psi_{d}(\cdot,\bP_{d}(u))\big)$ the
    unique solution
    of~\eqref{eq:mild-Riccati-Galerkin-phi}-\eqref{eq:mild-Riccati-Galerkin-psi}. 
  \item[ii)]\label{item:embedding-affine-main-2} For every $x\in\cHplus$
    and every $u\in\cHplus$ the process 
    \begin{align}\label{eq:Gd-martingale-problem}
      \hspace{10mm}\Big(\E^{-\langle X_{t}^{d},\bP_{d}(u)\rangle}-\E^{-\langle
      \bP_{d}(x),\bP_{d}(u)\rangle}-\int_{0}^{t}(\cG^{d}\E^{-\langle
      \,\cdot\,,\bP_{d}(u)\rangle})(X_{s}^{d})\,\D s\Big)_{t\geq 0},
    \end{align}
    is a real-valued martingale with respect to the stochastic basis
    \[
      (\Omega,\bar{\cF}^{d},\bar{\MF}^{d},\MP_{x}^{d}).
    \]
    Here
    $\Omega=D(\MRplus,\cHplus)$, and
    $\bar{\MF}^{d}=(\bar{\cF}^{d}_{t})_{t\geq0}$ denotes the augmentation of
    the natural filtration of $X^{d}$ with respect to $\MP_{x}^{d}$ from i).
  \end{propenum}
\end{proposition}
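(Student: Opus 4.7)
The plan is to exploit the fact that, although the drift $\mathbf{B}$ is unbounded on $\cH$ (reflecting the Laplacian of Example~\ref{ex:Laplacian}), its restriction to the finite-dimensional invariant subspace $\cH_d$ is automatically bounded: since $L(\be_{i,j}) = -\lambda_{i,j}\be_{i,j}$ and only finitely many pairs $(i,j)$ with $i\leq j\leq d$ are involved, both $L|_{\cH_d}$ and hence the map $R_d$ from~\eqref{eq:R-Galerkin-irregular} are bounded from $\cH_d^+$ into $\cH_d$. Consequently, the construction of $X^d$ reduces to producing a \emph{regular} affine process on the finite-dimensional cone $\cH_d^+$, which is isomorphic to the cone of positive semi-definite self-adjoint operators on $H_d$. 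The entire irregularity of the problem is absorbed by the projection.

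\textbf{Step 1: Admissibility on $\cH_d^+$.} First I would verify that the projected parameters $(b_d, \mathbf{B}_d, m, \bP_d(\mu))$, with $b_d := \bP_d(b)$ and $\mathbf{B}_d := \bP_d \circ \mathbf{B}|_{\cH_d}$, constitute an admissible parameter set for a finite-dimensional affine process on $\cH_d^+$ in the sense of~\cite[Proposition 3.3]{karbach2023finiterank}. Because $\cH_d^+ \subseteq \cH^+$, the drift inequality in~\ref{item:drift}, the quasi-monotonicity condition~\eqref{eq:affine-kernel-quasi-mono}, and the positivity inequality in part~(b) of~\ref{item:linear-operator-unbounded} are all inherited by restricting the test vectors $v,u$ to $\cH_d^+$; the moment condition~\ref{item:m-2moment} is insensitive to the projection. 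A small bookkeeping point is to observe that the mild Riccati ODE system~\eqref{eq:mild-Riccati-Galerkin-phi}-\eqref{eq:mild-Riccati-Galerkin-psi} coincides exactly with the classical Riccati system associated to $(b_d, \mathbf{B}_d, m, \bP_d(\mu))$ via $F_d = F$ and $R_d = L + \bP_d \circ \hat{R}$.

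\textbf{Step 2: Existence and the affine transform formula.} Given admissibility on the finite-dimensional cone, I would apply the regular-case existence theorem for affine processes on $\cH_d^+$ (the finite-dimensional backbone of~\cite[Proposition 3.3]{karbach2023finiterank}, built on the matrix-valued theory of~\cite{CFMT11}), yielding a unique Markov family $(X^d,(\mathbb{P}^d_y)_{y\in\cH_d^+})$ realized on $D(\mathbb{R}_+, \cH_d^+)$ and satisfying~\eqref{eq:affine-Galerkin} with the Riccati pair $(\phi_d,\psi_d)$. For arbitrary $x \in \cH^+$ we simply set $\mathbb{P}^d_x := \mathbb{P}^d_{\bP_d(x)}$, and embed $D(\mathbb{R}_+, \cH_d^+)$ canonically into $D(\mathbb{R}_+, \cH^+)$; the concentration property $\mathbb{P}^d_x(\{X^d_t \in \cH_d^+ : t\geq 0\}) = 1$ is then immediate from the invariance of $\cH_d^+$ under the regular-case dynamics.

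\textbf{Step 3: The martingale problem.} For part~(ii), I would differentiate the affine transform formula~\eqref{eq:affine-Galerkin} at $t=0$, use the Riccati equations and the definition of $\cG^d$ in~\eqref{eq:G-d-operator} to obtain
\begin{align*}
\left.\tfrac{\D}{\D t}\right|_{t=0}\!\!\EXspec{\mathbb{P}^d_x}{\E^{-\langle X_t^d,\bP_d(u)\rangle}} = \bigl(-F_d(\bP_d(u)) - \langle \bP_d(x), R_d(\bP_d(u))\rangle\bigr)\E^{-\langle \bP_d(x),\bP_d(u)\rangle},
\end{align*}
which is precisely $(\cG^d\E^{-\langle \cdot,\bP_d(u)\rangle})(\bP_d(x))$. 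Combining this identity with the Markov property of $X^d$ under $(\mathbb{P}^d_x, \bar{\mathbb{F}}^d)$ and Dynkin's formula yields~\eqref{eq:Gd-martingale-problem} in the standard way. The main obstacle, if any, is a careful verification that the projected parameter set preserves quasi-monotonicity with respect to the \emph{finite-dimensional} cone $\cH_d^+$ and that the resulting Riccati system is the correct one; both points are essentially a translation of~\cite{karbach2023finiterank} into the present irregular framework, where the unboundedness of $\mathbf{B}$ is made harmless by the Galerkin projection.
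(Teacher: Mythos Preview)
Your strategy is essentially the paper's: both reduce to the regular (bounded-drift) case on the finite-dimensional cone $\cH_d^+$ by exploiting that $L|_{\cH_d}$ is bounded, and then invoke the existing finite-dimensional theory (the paper goes explicitly through the matrix isometry $i_d\colon\cH_d\to\MS_d$ and \cite{CFMT11,May12}, whereas you appeal directly to \cite[Proposition~3.3]{karbach2023finiterank}, which is the same thing one level up).

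The only point where you are too casual is the ``small bookkeeping'': the projected parameter set is not simply $(\bP_d(b),\,\bP_d\!\circ\!\mathbf{B}|_{\cH_d},\,m,\,\bP_d(\mu))$. The measures $m,\mu$ live on $\cHpluso$, not on $\cH_d^+\setminus\{0\}$, so to match $F_d$ and $R_d$ with a genuine finite-dimensional admissible set one must take the push-forwards $m_d=(\bP_d)_*m|_{E_d^0}$ and $\mu_d=\bP_d((\bP_d)_*\mu|_{E_d^0})$, and then the change of truncation from $\chi$ on $\cHpluso$ to $\chi$ on $\cH_d^+\setminus\{0\}$ forces correction terms both in the constant drift (an extra $\int_{\bP_d(E_d)}\xi\,m_d(\D\xi)$) and in the linear drift. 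This is exactly the content of the paper's Definitions~\ref{def:admissible-Galerkin}--\ref{def:admissible-matrix} and is what makes the identity $F_d=F|_{\cH_d^+}$, $R_d=L+\bP_d\hat{R}$ agree with the L\'evy--Khintchine form required by \cite{CFMT11}. Once this is set up correctly, the admissibility check for $\tilde{B}_d$ is indeed immediate from boundedness of $B|_{H_d}$ and the inherited quasi-monotonicity, as you say.
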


Let $\cL_{1}(H)$ denote the space of all self-adjoint trace-class operators on $H$. Next, we show that any
\textit{admissible parameter set} $(b, \mathbf{B}, m, \mu)$ satisfying
Assumption~\ref{assump:cV-compact-containment} gives rise to an
affine process with values in $\cHplus$. Trace-class regularity is an
additional refinement under the hypotheses recorded in
part~\cref{item:main-convergence-1} of Theorem~\ref{thm:main-convergence}
below. The theorem also states that the sequence of finite-rank operator-valued
processes $(X^{d})_{d\in\MN}$ converges weakly to this limiting process.

\begin{theorem}\label{thm:main-convergence}
  Let $(b,\mathbf{B},m,\mu)$ be an admissible parameter set as defined in
  Definition~\ref{def:admissible-irregular}, and assume
  Assumption~\ref{assump:cV-compact-containment}. Then, the following holds: 
  \begin{theoremenum}
  \item\label{item:main-convergence-1} There exists a unique affine process
    $(X,(\MP_{x})_{x\in\cHplus})$ on $\cHplus$, with paths in
    $D(\MRplus,\cHplus)$ and where $\MP_{x}$ denotes the law of $X$ given
    $X_{0}=x$, such that for every $x\in\cHplus$ we have  
    \begin{align}\label{eq:affine-formula-main}
      \EXspec{\MP_{x}}{\E^{-\langle X_{t}, u\rangle}}=\E^{-\phi(t,u)-\langle
      x,\psi(t,u)\rangle},\quad t\geq 0,\, u\in\cHplus,  
    \end{align}
    for $(\phi(\cdot,u),\psi(\cdot,u))$ the unique mild solution
    of~\eqref{eq:Riccati-phi-mild}-\eqref{eq:Riccati-psi-mild}. Furthermore, the
    following trace-class refinement holds under additional hypotheses on the
    parameter set:
    \begin{itemize}
    \item[(a)] If
      $x\in\cL_{1}(H)\cap\cHplus$, $m(\D\xi)$ and $\mu(\D\xi)$ are supported on
      $\cL_{1}(H)\setminus\set{0}$, $\Gamma(\cL_{1}(H))\subseteq\cL_{1}(H)$,
      $b,\mu(A)\in\cL_{1}(H)$ for all
      $A\in\cB(\cHplus\setminus\set{0})$, and the trace first-moment bounds
      $$
        \int_{\cHpluso}\Tr(\xi)\,m(\D\xi)<\infty,\qquad
        \int_{\cHpluso}\Tr(\xi)\,
          \frac{\langle y,\mu(\D\xi)\rangle}{\|\xi\|^2}
        \le C_\mu \Tr(y),
      $$
      hold for some $C_\mu<\infty$ and all $y\in\cL_1(H)\cap\cHplus$, then for
      every fixed $t\geq 0$,
      $X_{t}\in\cHplus\cap\cL_{1}(H)$ $\MP_x$-a.s.
    \end{itemize}
  \item\label{item:main-convergence-2} Let $(X^{d})_{d\in\MN}$
    be the sequence of finite-rank operator-valued affine processes in
    Proposition~\ref{prop:embedding-affine-main}. Then
    the sequence $(X^{d})_{d\in\MN}$ converges weakly to $X$ on $D(\MRplus,\cHplus)$
    equipped with the Skorohod topology, i.e. it holds
    \begin{align*}
      \EXspec{\MP_{x}^{d}}{f(X^{d})}\to \EXspec{\MP_{x}}{f(X)},\quad\text{as
      }d\to\infty, 
    \end{align*}
    for all bounded continuous $f\colon D(\MRplus,\cHplus)\to\MR$.
  \end{theoremenum}
\end{theorem}

\begin{proposition}\label{prop:cV-refinement}
  Assume the hypotheses of Theorem~\ref{thm:main-convergence}. Then for every
  initial datum $x\in\cHplus$ and every fixed $t>0$, the limiting affine
  process satisfies
  $X_t\in\cV\cap\cHplus$ $\MP_x$-a.s. If moreover $x\in\cV\cap\cHplus$, then
  $X_t\in\cV\cap\cHplus$ $\MP_x$-a.s.\ for every fixed $t\geq 0$.
\end{proposition}
\begin{proof}
  Fix $t>0$ and choose $\delta\in(0,t]$. By
  Lemma~\ref{lem:irregular-smoothing-cV},
  $\sup_d\E_{\MP_x^d}[\|X_t^d\|_{\cV}^{2}]<\infty$. The embedding
  $\cV\hookrightarrow\cH$ is continuous and compact; hence
  $\|\cdot\|_{\cV}$ is lower semicontinuous as an extended function on
  $\cH$. By Theorem~\ref{thm:main-convergence}(ii) and
  Lemma~\ref{lem:no-fixed-time-jumps}, the continuous-mapping theorem gives
  $X_t^d\Rightarrow X_t$ for every deterministic $t\ge0$. Portmanteau's theorem yields
  $$
    \mathbb{E}_{\MP_x}[\|X_t\|_{\cV}^{2}]
    \le \liminf_{d\to\infty}\mathbb{E}_{\MP_x^d}[\|X_t^d\|_{\cV}^{2}]<\infty.
  $$
  Thus $X_t\in\cV\cap\cHplus$ $\MP_x$-a.s. If $x\in\cV\cap\cHplus$, the same
  estimate is available with $\delta=0$, because
  $\sup_d\|\bP_d x\|_{\cV}\le \|x\|_{\cV}$, giving the assertion at $t=0$ as
  well.
\end{proof}

\begin{remark}\label{rem:regularity}
  The affine processes $(X, (\MP_x)_{x \in \cHplus})$ on $\cHplus$ from
  Theorem~\ref{thm:main-convergence} are \emph{irregular}, in the sense that the derivative
  $\frac{\partial}{\partial t}\psi(t,u)\lvert_{t=0}$, cf.~\cite[Definition 2.2]{CFMT11}, is not continuous at
  $u=0$, due to the unboundedness of $L$ on $\cH$. For general regularity
  and path properties of affine processes on (finite-dimensional) state spaces,
  including the role of such regularity assumptions, we refer to~\cite{CT13}. The coercivity assumption is the
  ingredient that allows the Lyapunov estimates underlying the
  $\cV$-smoothing of Proposition~\ref{prop:cV-refinement} above. If in part (a) of
  \cref{item:linear-operator-unbounded} the condition $B\in\cL(H)$ holds, then the
admissible parameter set is also admissible in the regular sense of \cite[Definition 2.3]{CKK22a}
and existence would follow from~\cite{CKK22a}, but Assumption~\ref{assump:integration} would not be satisfied in general.

\end{remark}

\section{Finite-Rank Approximations of Heat-Modulated Affine SV
  Models}\label{sec:the-affine-model}

In this section, we examine the joint stochastic covariance model
$(f_{t}, X_{t})_{t \geq 0}$, where $(f_{t})_{t \geq 0}$ represents the
mild solution to the HJMM equation~\eqref{eq:HJMM-SDE} modulated by a stochastic
volatility process $(\sigma_{t})_{t \geq 0}$ satisfying
$X_t=\sigma_t\sigma_t^*$ in the $H$-valued formulation. The covariance process
$(X_t)_{t\ge0}$ is an irregular affine process on $\cHplus$, associated with the
admissible parameter set $(b,\mathbf B,m,\mu)$, whose existence is guaranteed by
Theorem~\ref{thm:main-convergence}. The main result of this section is that the
joint model $(f_t,X_t)_{t\ge0}$ admits an affine transform formula once the
$H$-valued stochastic convolution is well defined (the accompanying
Feynman--Kac martingale identity for the affine covariance process is provided
by Proposition~\ref{prop:affine-fk}), see
Theorem~\ref{thm:heat-affine-model} below, and that the associated affine
transform is stable under finite-rank approximations of the covariance process
via a spectral Galerkin approximation of the generalized mild Riccati equations
(see Theorem~\ref{thm:finite-dim-approx-ScoV} below). We conclude this section
with an example of heat-modulated affine stochastic volatility models, as well
as an application of the theory to the pricing of options written on
\emph{power flow forwards}.

	Throughout this section it is useful to keep in mind that the
	results below are organized as three independent layers, each of which adds an assumption to the previous one. Layer~(i) is the \emph{driftless $H$-valued
affine transform} of Theorem~\ref{thm:heat-affine-model}: it gives the joint
Fourier--Laplace transform of $(f_t,X_t)$ at imaginary $u_1\in\I H$, under
Assumption~\ref{assump:H-valued-joint} (stochastic integrability of the
square-root volatility); the accompanying limiting Feynman--Kac martingale
identity for the affine covariance process is \emph{not} an extra hypothesis but
is established in Proposition~\ref{prop:affine-fk}. Layer~(ii)
is the \emph{HJM no-arbitrage drift correction} for the geometric
specification: this is not part of the driftless transform theorem and has to
be imposed separately when forward prices are required to be martingales (see
the discussion before Theorem~\ref{thm:heat-affine-model}). Layer~(iii) is the
\emph{analytic-continuation/Carr--Madan pricing layer} used in
Proposition~\ref{prop:robustness}: this is invoked through the
complex-transform result of~\cite[Theorem~2.3]{HeKarbachKhedher2025} and is
subject to its own admissibility hypotheses. The reader interested only in
	finite-rank approximation rates can stop at layer~(i) plus
	Theorem~\ref{thm:finite-dim-approx-ScoV}; the reader interested in
	geometric-payoff option pricing needs layers~(ii) and~(iii) in addition.

 \subsection{The Heat-Modulated Affine Stochastic Covariance
   Model}\label{sec:abstr-affine-stoch}

 As before, we consider the Hilbert space triple
 $V\hookrightarrow H \hookrightarrow V^{*}$ compact, with $V$ the state space of regular
 forward curves. This triple of forward curve spaces induces, on the
 symmetric covariance subspace $\cV\cap\cH\subseteq\cV$, the triple of
 Hilbert--Schmidt operators
 $\cV\cap\cH\hookrightarrow\cH\hookrightarrow(\cV\cap\cH)^{*}$. Throughout
 this section we work consistently with $\cV\cap\cH$ on the covariance side; the
 symbol $\cV$ alone retains its meaning from Section~\ref{sec:admissible-irregular-affine-processes}
 (full intersection $\cL_2(V^*,H)\cap\cL_2(H,V)$, no symmetry built in).
 Let $(b,\mathbf{B},m,\mu)$ be an admissible parameter set as in
 Definition~\ref{def:admissible-irregular} satisfying
 Assumption~\ref{assump:cV-compact-containment}, and denote by
 $(X_{t})_{t\geq 0}$ the associated irregular affine process on $\cHplus$ given by
 Theorem~\ref{thm:main-convergence}. Moreover, denote the filtered
 probability space of $X$ by
 $(\Omega^{(1)},\cF^{(1)},(\mathcal{F}_t^{(1)})_{t\geq 0}, \mathbb{P}^{(1)})$
 and fix an initial value $\mathbb{P}^{(1)}(X_0=x)=1$ for $x\in\cHplus$. We let
 $(\Omega^{(2)},\cF^{(2)}, (\mathcal{F}_t^{(2)})_{t\geq 0}, \mathbb{P}^{(2)})$
 be another filtered probability space satisfying the usual conditions and that
 carries a cylindrical Brownian motion $W\colon [0,\infty)\times H \rightarrow
 L^{2}(\Omega^{(2)},\cF^{(2)},\mathbb{P}^{(2)})$, in the following denoted by $(W_{t})_{t\geq
   0}$. We assume that
 $\mathbb{P}^{(2)}(f_{0}=F(0,\cdot))=1$ for some initial forward curve
 $F(0,\cdot)\in H$ and define the following joint filtered probability space:
\begin{align*}
(\Omega, \cF, \MF, \MP)\df(\Omega^{(1)}\times\Omega^{(2)}, (\cF^{(1)}\otimes \cF^{(2)}),
(\cF^{(1)}_{t}\otimes \cF_{t}^{(2)})_{t\geq 0}, \MP^{(1)}\otimes \MP^{(2)}).
\end{align*}
We denote the expectation with respect to $\MP$ by $\mathbb{E}$ and from now on
consider $(X_{t})_{t\geq 0}$ and the cylindrical Brownian motion $(W_{t})_{t\geq
0}$ to be defined on $(\Omega,\cF,\MF,\MP)$.\par

Throughout this subsection we work with the driftless specification
$g_t\equiv0$ in~\eqref{eq:HJMM-SDE}. The covariance process remains the affine
process from Theorem~\ref{thm:main-convergence}; the additional assumption
below only concerns the well-posedness of the $H$-valued stochastic
convolution.

For the $H$-valued model we fix, once and for all, a strongly
continuous semigroup $(S(t))_{t\ge0}$ on $H$ with generator $\cA$. On the
infinite-horizon space this is the usual Musiela shift on the maturity
component. On the finite-horizon space
$H=H_0\oplus\MR$, $H_0=L^2(0,\Theta_{\max},w(x)\D x)$, used below, we take the
level-preserving stopped shift
\[
  S(t)(u,r)=(u_t,r),
  \qquad
  u_t(x)=u(x+t)\mathbf 1_{\{x+t<\Theta_{\max}\}},
  \qquad (u,r)\in H_0\oplus\MR .
\]
Equivalently, the full curve value is shifted on the available maturity window
and set equal to the level component once the shifted maturity leaves the
window. In financial terms, a forward maturity that drifts past the
horizon $\Theta_{\max}$ is set equal to the finite-horizon level component~$r$;
on the finite interval this level is the chosen scalar anchor of the
decomposition, not a long-end limit. This convention is well defined on
$L^2$-equivalence classes and avoids using the endpoint value
$u(\Theta_{\max})$, which is not defined in the pivot space. Its Hilbert-space adjoint semigroup is denoted by $(S^*(t))_{t\ge0}$.
Both $(S(t))_{t\ge0}$ and $(S^*(t))_{t\ge0}$ are strongly continuous
contractions on $H=L^2(0,\Theta_{\max},w(x)\D x)\oplus\MR$ (the multiplier
$\mathbf 1_{\{x+t<\Theta_{\max}\}}$ is bounded by $1$ in absolute value, so
$\|S(t)\|_{\cL(H)}\le 1$ and $\|S^*(t)\|_{\cL(H)}\le 1$ by duality). On the
regular space $V=V_{0,\beta}\oplus\MR$ the adjoint semigroup additionally
preserves the boundary condition $u(0)=0$ on the shape component (the
truncated shift extends the source curve by the level value, which does not
change the value at $0$ of the shifted curve), and satisfies the growth bound
$\|S^*(s)\|_{\cL(V)}\le M_1\E^{\omega s}$ used in
Theorem~\ref{thm:finite-dim-approx-ScoV}(ii); both properties are established in
Lemma~\ref{lem:adjoint-shift-V} below, in fact with $M_1=1$ and $\omega=0$.

\begin{lemma}[Regularity of the adjoint stopped shift on the regular space]
\label{lem:adjoint-shift-V}
Assume $0<\Theta_{\max}<\infty$, $H=H_0\oplus\MR$ with
$H_0=L^2(0,\Theta_{\max},\E^{\beta x}\D x)$, $V=V_{0,\beta}\oplus\MR$ and let the 
$H$-adjoint of the level-preserving stopped shift $S(t)(u,r)=(u_t,r)$,
$u_t(x)=u(x+t)\mathbf 1_{\{x+t<\Theta_{\max}\}}$ be given by
$$
  S^*(t)(v,\rho)
  =\bigl(\,\E^{-\beta t}\,v(\cdot-t)\,\mathbf 1_{\{\,\cdot\,>t\}}\,,\ \rho\,\bigr),
  \qquad (v,\rho)\in H_0\oplus\MR .
$$
Then $(S^*(t))_{t\geq 0}$ leaves $V$ invariant, i.e. $S^*(t)(V)\subseteq V$ for all $t\geq 0$, and is a contraction on $V$:
$$
  \|S^*(t)(v,\rho)\|_V^2
  \le \E^{-\beta t}\|v\|_{V_{0,\beta}}^2+|\rho|^2
  \le \|(v,\rho)\|_V^2,\qquad t\ge0 .
$$
In particular the growth bound $\|S^*(s)\|_{\cL(V)}\le M_1\E^{\omega s}$ used in
Theorem~\ref{thm:finite-dim-approx-ScoV}(ii) holds with $M_1=1$ and $\omega=0$,
and $(S^*(t))_{t\ge0}$ is a strongly continuous contraction semigroup on $V$.
\end{lemma}

\begin{proof}
For $u,v\in H_0$ the substitution $y=x+t$ gives
\[
  \langle S(t)u,v\rangle_{H_0}
  =\int_0^{\Theta_{\max}}u(x+t)\mathbf 1_{\{x+t<\Theta_{\max}\}}\,v(x)\,\E^{\beta x}\D x
  =\int_t^{\Theta_{\max}}u(y)\,v(y-t)\,\E^{\beta(y-t)}\D y ,
\]
so $(S^*(t)v)(y)=\E^{-\beta t}v(y-t)\mathbf 1_{\{y>t\}}$; the level component is
unchanged because $S(t)$ acts as the identity on $\MR$. Let $v\in V_{0,\beta}$
and $w=S^*(t)v$. Then $w\equiv0$ on $(0,t)$, while on $(t,\Theta_{\max})$ one has
$w=\E^{-\beta t}v(\cdot-t)\in H^1$; at the front $y=t$ the right limit is
$\E^{-\beta t}v(0)=0$ by the anchoring $v(0)=0$, so $w$ is continuous across $t$
and therefore $w\in H^1(0,\Theta_{\max},\E^{\beta x}\D x)$ with $w(0)=0$, i.e.\
$w\in V_{0,\beta}$. This is precisely where the short-end condition built into
$V_{0,\beta}$ is used: it makes the zero-extension of the shifted curve across
the front an $H^1$ function. Differentiating,
$w'(y)=\E^{-\beta t}v'(y-t)\mathbf 1_{\{y>t\}}$ a.e., and with $z=y-t$,
\[
  \int_0^{\Theta_{\max}}\E^{\beta y}|w'(y)|^2\D y
  =\E^{-2\beta t}\int_t^{\Theta_{\max}}\E^{\beta y}|v'(y-t)|^2\D y
  =\E^{-\beta t}\int_0^{\Theta_{\max}-t}\E^{\beta z}|v'(z)|^2\D z .
\]
The identical computation with $v'$ replaced by $v$ gives
$\int\E^{\beta y}|w|^2\D y=\E^{-\beta t}\int_0^{\Theta_{\max}-t}\E^{\beta z}|v|^2\D z$.
Hence, for either the weighted $H^1$ norm or its seminorm on $V_{0,\beta}$,
$\|w\|_{V_{0,\beta}}^2\le\E^{-\beta t}\|v\|_{V_{0,\beta}}^2$, so the contraction
is independent of the choice of equivalent norm. Adding the isometric level
component yields the displayed bound. Strong continuity on $V$ follows from
strong continuity of translation on $H^1(0,\Theta_{\max},\E^{\beta x}\D x)$ and
$\E^{-\beta t}\to1$, together with the uniform bound $\|S^*(t)\|_{\cL(V)}\le1$
and density of smooth functions vanishing near the endpoints.
\end{proof}

\subsubsection{Joint Affine Transform for the Driftless Model}

We define a new set of generalized mild Riccati equations associated to the joint
process $(f_{t},X_{t})_{t\geq 0}$. For this, define $\hat{\cR}\colon \I H \times
\cHplus \to \cH$ as  
\begin{align}
 \hat{\cR}(h,u)&= \Gamma^{*}(u)-\tfrac{1}{2}h\otimes
    h-\int_{\cHpluso}\big(\E^{-\langle
    \xi,u\rangle}-1+\langle \chi(\xi), u\rangle\big)\frac{\mu(\D \xi)}{\norm{\xi}^{2}},  \label{eq:R-intro}
\end{align}
where for two elements $x$ and $y$ in $H$ we define the operator
$x\otimes y\in \cL(H)$ by $x\otimes y(h)=\langle x,h \rangle_{H} y$ for every
$h\in H$ and write $x^{\otimes 2}\df x\otimes x$.
For complex arguments $v=\I h\in\I H$ with $h\in H$ we extend the
notation by complex bilinearity and identify the resulting object with the
real symmetric Hilbert--Schmidt operator
$$v^{\otimes 2}\df (\I h)\otimes(\I h)\equiv -h^{\otimes 2}\in\cH,$$
i.e.\ a real element of $\cH$ with the opposite sign of $h^{\otimes 2}$. Under
this convention $-\tfrac12 v^{\otimes 2}=\tfrac12 h^{\otimes 2}\in\cHplus$,
which is the sign needed for the forcing term $-\tfrac12\psi_1\otimes\psi_1$
in~\eqref{eq:R-intro} to be order-preserving on $\cHplus$ when
$\psi_1\in\I H$.\par{}

Let $T>0$, $u_{1}\in \I H$ and $u_{2}\in\cHplus$. We define the new system of
\textit{generalized joint mild Riccati equations} as follows:
\begin{subequations}
\begin{empheq}[left=\empheqlbrace]{align}
   \,\frac{\partial}{\partial t}\Phi(t,u)&=F(\psi_{2}(t,u)),  &\quad\Phi(0,u)=0,\label{eq:Riccati-phi-psi-1-1}\\
    \,\psi_{1}(t,u)&=u_{1}+\cA^{*}\!\left(\int_{0}^{t}\psi_{1}(s,u)\D
                   s\right),  &\quad\psi_{1}(0,u)=u_{1}, \label{eq:Riccati-phi-psi-1-2}\\
    \,\frac{\partial }{\partial t}\psi_{2}(t,u)&=L(\psi_{2}(t,u))+\hat{\cR}(\psi_{1}(t,u),
    \psi_{2}(t,u)),  & \quad
                                                  \psi_{2}(0,u)=u_{2}.\label{eq:Riccati-phi-psi-1-3}
       \end{empheq}
\end{subequations}
The differential displays
\eqref{eq:Riccati-phi-psi-1-1} and~\eqref{eq:Riccati-phi-psi-1-3} are formal:
$L$ is unbounded on $\cH$, so the right-hand side
of~\eqref{eq:Riccati-phi-psi-1-3} is not defined for arbitrary
$\psi_2(t,u)\in\cHplus$. The actual solution concept is the
\emph{variation-of-constants} formulation
$$
  \psi_{1}(t,u)=S^*(t)u_1,\qquad
  \psi_{2}(t,u)=\cT(t)u_2+\int_0^t\cT(t-s)\hat{\cR}(\psi_1(s,u),\psi_{2}(s,u))\,\D s,
$$
together with $\Phi(t,u)=\int_0^t F(\psi_2(s,u))\,\D s$. We use the term
``mild solution'' below to refer to a triple
$(\Phi(\cdot,u),\psi_1(\cdot,u),\psi_2(\cdot,u))$ that satisfies these
variation-of-constants identities; the differential displays are retained
only for notational convenience.

For every initial value $u=(u_{1},u_{2})\in \I H\times \cHplus$, set
$$
  \Psi(\cdot,u)\df(\psi_{1}(\cdot,u),\psi_{2}(\cdot,u)).
$$
We say that $(\Phi(\cdot, u),\Psi(\cdot, u))$ is a \emph{mild solution} from
$[0,T]$ to $\R\times \I H\times\cH$
to~\eqref{eq:Riccati-phi-psi-1-1}-\eqref{eq:Riccati-phi-psi-1-3} if
$\Phi(\cdot,u)\in C([0,T];\MR)$, $\psi_{1}(\cdot,u)\in C([0,T];\I H)$,
$\psi_{2}(\cdot,u)\in C([0,T];\cHplus)$ and the map
$(\Phi(\cdot,u),\Psi(\cdot,u))$ satisfies the
variation-of-constants identities just displayed.
($C^1$ regularity for $\Phi$ is recovered only once continuity of
$s\mapsto F(\psi_2(s,u))$ is established.)

Here $\cA^{*}\colon\dom(\cA^{*})\subseteq H\to H$ denotes the
$H$-adjoint of the generator $\cA$ of the $H$-level shift semigroup fixed
above; its Hilbert-space adjoint semigroup is $(S^{*}(t))_{t\geq 0}$.
Before invoking the joint Riccati equations in the statement of
Theorem~\ref{thm:heat-affine-model} below, we first establish their
well-posedness.

\begin{lemma}[Well-posedness of the joint mild
Riccati equations]\label{lem:joint-Riccati-wellposed}
  Let $(b,\mathbf{B},m,\mu)$ be an admissible parameter set in the
  sense of Definition~\ref{def:admissible-irregular} and let $T>0$. For every
  initial datum $u=(u_{1},u_{2})\in \I H\times \cHplus$ there exists a unique
  mild solution
  $$
    \bigl(\Phi(\cdot,u),\psi_{1}(\cdot,u),\psi_{2}(\cdot,u)\bigr)
    \in C^{1}([0,T];\MR)\times C([0,T];\I H)\times C([0,T];\cHplus)
  $$
  of~\eqref{eq:Riccati-phi-psi-1-1}-\eqref{eq:Riccati-phi-psi-1-3}, which is
  given by
  \begin{align}
    \psi_{1}(t,u) &= S^{*}(t)u_{1},\label{eq:psi1-explicit}\\
    \psi_{2}(t,u) &= \cT(t)u_{2}
      + \int_{0}^{t}\cT(t-s)\,\hat\cR\bigl(\psi_{1}(s,u),\psi_{2}(s,u)\bigr)\D
      s,\label{eq:psi2-explicit}\\
    \Phi(t,u) &= \int_{0}^{t}F\bigl(\psi_{2}(s,u)\bigr)\D s.
  \end{align}
\end{lemma}

\begin{proof}
  Equation~\eqref{eq:Riccati-phi-psi-1-2} is the linear
  integral form
  $\psi_{1}(t,u)=u_{1}+\cA^{*}\!\bigl(\int_{0}^{t}\psi_{1}(s,u)\D s\bigr)$,
  i.e.\ the mild form of $\partial_{t}\psi_{1}=\cA^{*}\psi_{1}$ with
  $\psi_{1}(0,u)=u_{1}$. By the Hille--Yosida theorem (see
  e.g.~\cite[Chapter 1]{Paz83}), $\cA^{*}$ generates the strongly continuous
  semigroup $(S^{*}(t))_{t\geq 0}$ on $H$, so
  $\psi_{1}(t,u)=S^{*}(t)u_{1}$ is the unique continuous mild solution on
  $[0,T]$; the map $t\mapsto S^{*}(t)u_{1}$ takes values in $\I H$ whenever
  $u_{1}\in\I H$ because $S^{*}(t)$ preserves the real subspace $H$.

  Given $\psi_{1}(\cdot,u)\in C([0,T];\I H)$ from the first step, define  $\hat\cR_{\psi_{1}}(t,u)\df
  \hat\cR(\psi_{1}(t,u),u)=\hat R(u)-\tfrac12\psi_{1}(t,u)^{\otimes 2}$, where
  $\hat R$ is defined in~\eqref{eq:R-hat}. By~\cite[Remark 3.4]{CKK22a} the
  map $u\mapsto \hat R(u)$ is locally Lipschitz on $\cHplus$, and
  $t\mapsto\psi_{1}(t,u)^{\otimes 2}$ is continuous from $[0,T]$ to $\cH$;
  consequently $\hat\cR_{\psi_{1}}$ is locally Lipschitz in $u$ uniformly on
  $[0,T]$ and continuous in $t$. Arguing exactly as in the proof
  of Proposition~\ref{prop:existence-mild-Riccati}
  (quasi-monotonicity of $\hat\cR_{\psi_{1}}$ with respect to $\cHplus$ is
  inherited from $\hat R$ because the additive forcing is independent of the
  $\psi_2$ variable and therefore cancels in differences. Separately, for
  $v=\psi_{1}(t,u)\in\I H$, writing $v=\I h$ with $h\in H$ gives
  $v^{\otimes2}=-h^{\otimes2}$ and hence
  $-\tfrac12v^{\otimes2}=\tfrac12h^{\otimes2}\in\cHplus$, which is the
  subtangential cone-invariance condition; see Lemma~\ref{lem:quasi-cont}), the
  time-dependent version of the local existence theorem of~\cite[Chapter
  8, Theorem 2.1 \& Remark 2.1]{Mar76} yields a unique continuous mild solution
  $\psi_{2}(\cdot,u)\in C([0,T];\cHplus)$ satisfying
  \eqref{eq:psi2-explicit}, which is globally defined on $[0,T]$ by the same
  moment bound used in the proof of
  Proposition~\ref{prop:existence-mild-Riccati}.

  With $\psi_{2}(\cdot,u)$ in hand, the scalar component $\Phi(\cdot,u)$ is
  obtained by direct integration:
  \[
    \Phi(t,u)=\int_{0}^{t}F(\psi_{2}(s,u))\,\D s.
  \]
  Since $F$ is continuous on $\cHplus$ and $\psi_{2}(\cdot,u)$ is continuous,
  we have $\Phi(\cdot,u)\in C^{1}([0,T];\MR)$. Uniqueness of the triple
  follows from the uniqueness established at each of the three steps.
\end{proof}

\begin{assumption}{B}\label{assump:H-valued-joint}
  The square-root volatility process $\sigma_t\df X_t^{1/2}$ is progressively
  measurable with respect to $\MF$ and satisfies the $H$-valued analogue of
  Assumption~\ref{assump:integration}, namely
  \begin{align}\label{eq:integral-condition-H}
    \mathbb{E}\left[
      \int_0^t \norm{S(t-s)\sigma_s}_{\mathcal{L}_2(H,H)}^2 \D s
    \right]
    <
    \infty,
    \qquad t\geq 0.
  \end{align}
  In this case we define the associated $H$-valued heat-modulated stochastic
  covariance model by
  \begin{align}\label{eq:HJMM-SDE-H}
    f_t
    =
    S(t)f_0
    +
    \int_0^t S(t-s)\sigma_s\,\D W_s,
    \qquad t\geq 0.
  \end{align}
  
\end{assumption}

\begin{remark}[Trace-class sufficient condition for Assumption~\ref{assump:H-valued-joint}]
\label{rem:H-valued-trace-sufficient}
If the shift semigroup $(S(t))_{t\ge0}$ is bounded on $H$ on compact time
intervals and
\[
  \int_0^t\mathbb E[\Tr(X_s)]\,\D s<\infty,
  \qquad t\ge0,
\]
then Assumption~\ref{assump:H-valued-joint} holds for the square-root
volatility $\sigma_s=X_s^{1/2}$. Indeed,
\[
  \|S(t-s)X_s^{1/2}\|_{\mathcal L_2(H,H)}^2
  \le
  \|S(t-s)\|_{\mathcal L(H)}^2\,\Tr(X_s).
\]
Thus the trace-class refinement of Theorem~\ref{thm:main-convergence}, together
with the displayed integrated trace moment, gives a direct sufficient condition
for the $H$-valued stochastic convolution.
\end{remark}

\begin{proposition}[Affine Feynman--Kac martingale]\label{prop:affine-fk}
  Let $(b,\mathbf{B},m,\mu)$ be an admissible parameter set and let $X$ be the
  associated affine process on $\cHplus$ from
  Theorem~\ref{thm:main-convergence}; in particular assume
  Assumption~\ref{assump:cV-compact-containment}. For every $t>0$ and every
  $u=(u_1,u_2)\in\I H\times\cHplus$, with $(\Phi,\psi_1,\psi_2)$ the mild
  Riccati solution from Lemma~\ref{lem:joint-Riccati-wellposed}, the process
  \[
    M_s
    =
    \exp\!\left(
      \frac12\int_0^s
      \langle X_r,\psi_1(t-r,u)^{\otimes2}\rangle\,\D r
    \right)
    \exp\!\left(
      -\Phi(t-s,u)-\langle X_s,\psi_2(t-s,u)\rangle
    \right),
    \qquad 0\le s\le t,
  \]
  is a true martingale under $\MP_x$ for every $x\in\cHplus$.
\end{proposition}

This is the infinite-dimensional Feynman--Kac martingale identity underlying the
affine transform formula. It is
proved below for every admissible parameter set satisfying
Assumption~\ref{assump:cV-compact-containment}, and no trace-class
hypothesis is required for this. The proof is the joint analogue of
Proposition~\ref{prop:irregular-weak-limit-mp}: the identity is verified for the
finite-rank Galerkin processes, where the generator is bounded and the classical
affine Feynman--Kac calculation applies, and then passed to the limit. It is
given in Section~\ref{sec:proof-section-2}, in Lemma~\ref{lem:fk-finite-rank} and
the proof of Proposition~\ref{prop:affine-fk} immediately thereafter. The
restriction $u_1\in\I H$ is essential: it makes the Feynman--Kac potential
$\tfrac12\langle\cdot,\psi_1(t-r,u)^{\otimes2}\rangle$ non-positive on the cone,
which is what gives the uniform bound used to pass from a local to a true
martingale; for real $u_1$ the sign is lost and the unconditional statement is
not claimed.

\begin{theorem}\label{thm:heat-affine-model}
  Let $(X_t)_{t \geq 0}$ be the affine process associated with $(b,
  \mathbf{B}, m, \mu)$ as in Theorem~\ref{thm:main-convergence}; in particular
  assume Assumption~\ref{assump:cV-compact-containment}. Assume also
  Assumption~\ref{assump:H-valued-joint}, and let
  $(f_t)_{t\geq 0}$ be the \emph{driftless} $H$-valued stochastic covariance
	  model~\eqref{eq:HJMM-SDE-H} (i.e.\ no HJM drift correction is included; the
	  forward curve evolves purely under the shift semigroup and the stochastic
	  convolution of $\sigma_s\D W_s$). Furthermore, let
	  \[
	    u = (u_1, u_2) \in \I H \times \cHplus
	  \]
	  and let $(\Phi(\cdot, u),(\psi_1(\cdot, u),\psi_2(\cdot, u)))$ be the
	  unique mild solution from Lemma~\ref{lem:joint-Riccati-wellposed} of the
	  generalized joint mild Riccati equations
	  \eqref{eq:Riccati-phi-psi-1-1}--\eqref{eq:Riccati-phi-psi-1-3}. Then, for all
  $t \geq 0$, the following holds: 
  \begin{align}
    \label{eq:extended-affine-formula}
    \mathbb{E} \left[\E^{\langle f_t, u_1 \rangle_H - \langle X_{t}, u_2
    \rangle} \right] = \E^{-\Phi(t, u) + \langle f_0, \psi_1(t, u) \rangle_H -
    \langle x, \psi_2(t, u) \rangle}.
  \end{align}
\end{theorem}

\begin{remark}[Absence of leverage]\label{rem:no-leverage}
  The covariance process $X$ and the cylindrical noise $W$ are constructed on the
  two independent factors $\Omega^{(1)}$ and $\Omega^{(2)}$ of the product
  space; this independence is used in the conditional-Gaussian step
  \eqref{eq:conditional-gaussian-heat} of the proof. Consequently the model has
  \emph{no leverage}: the forward curve and its instantaneous covariance are
  uncorrelated. This is in contrast to the leverage extension of the
  Hilbert-space Barndorff--Nielsen--Shephard model in~\cite{BS24}. Introducing a
  genuine price--volatility correlation (e.g.\ a common driving noise as
  in~\cite{BS24,CKK22b}) would break the conditional Gaussianity and require a
  different argument; we do not pursue it here. For the same reason we describe
  the model as capturing maturity-specific risk rather than as reproducing
  leverage-driven volatility clustering.
\end{remark}

\begin{remark}
  Theorem~\ref{thm:heat-affine-model} is formulated at the natural
  $H\times\cHplus$ level under the explicit stochastic-integrability condition
  from Assumption~\ref{assump:H-valued-joint}. If, in addition, $f_{0}\in V$
  and $X_{t}^{1/2}\in\cV$ for all $t\geq 0$, $\mathbb{P}$-a.s.\ (see
  Remark~\ref{rem:square-root-regularity}(iv) for a sufficient condition), then
  the same model also satisfies Assumption~\ref{assump:integration} on
  $V\times\cV$.
\end{remark}

\begin{example}[The heat-modulated BNS stochastic volatility model]  
As an example of a heat-modulated affine stochastic volatility model, we introduce
the \emph{heat--modulated Barndorff--Nielsen and Shephard (BNS) stochastic
volatility model}. This is a heat-modulated version of the Hilbert
space-valued BNS model presented in~\cite{BRS18,BS24}. In the heat-modulated BNS stochastic volatility model we assume that the forward
curve process $(f_t)_{t \geq 0}$ is given by the driftless $H$-valued
specification~\eqref{eq:HJMM-SDE-H} on the state space 
$$
H = L^2(0,\Theta_{\mathrm{max}},\mathrm e^{\beta x}\mathrm dx)\oplus\mathbb R,
$$
for some $\Theta_{\mathrm{max}}>0$ and $\beta>0$.
Here we deliberately specialize the general forward-curve setting from
Section~\ref{sec:HJMM-equation} to the stronger weighted $L^2$ pivot space
$L^2(0,\Theta_{\mathrm{max}},\mathrm e^{\beta x}\mathrm dx)\oplus\MR$.
This choice is tailored to the diagonal BNS benchmark and the pricing
functional used below; the general existence theory above remains formulated on
the original pair
$V=H_\beta(0,\Theta_{\max})\hookrightarrow L^2((0,\Theta_{\max}),\mathrm e^{\gamma x}\D x)\oplus\MR\hookrightarrow V^*$
with $0<\gamma<\beta$ (i.e.\ the pivot space here is the weighted $L^2$ space,
not the Filipovic-type Sobolev space $H_\gamma$).
The dynamics are modulated by an
instantaneous volatility process $(\sigma_t)_{t\ge0}$, defined as the operator
square root of an Ornstein--Uhlenbeck process $(X_t)_{t\ge0}$ on $\cHplus$,
satisfying
\begin{align}
\begin{cases}
\mathrm dX_t = L_\Delta(X_t)\,\mathrm dt + \mathrm dL_t,\\
X_0 = x\in\cHplus,
\end{cases}
\end{align}
where $(L_t)_{t\ge0}$ is a $\cHplus$-valued Lévy subordinator and
$L_\Delta$ denotes the Lyapunov operator associated with the Laplacian, cf.
Example~\ref{ex:Laplacian_Forward}.
	We denote by $b_{\mathrm{phys}}\in\cHplus$ the physical drift of the
subordinator and by $m$ its Lévy measure on $\cHplus\setminus\{0\}$. Since the
affine admissibility convention in Definition~\ref{def:admissible-irregular}
uses the compensated truncation $\chi(\xi)=\xi\one_{\{\|\xi\|\le1\}}$, the
canonical affine drift parameter is
\[
  b_{\mathrm{adm}}
  \df
  b_{\mathrm{phys}}+I_m,
  \qquad
  I_m\df\int_{\cHplus\setminus\{0\}}\chi(\xi)\,m(\D\xi).
\]
Thus the Laplace exponent of $L$ can be written equivalently as
\begin{align}\label{eq:Laplace-Levysubordinator}
\varphi_L(u)
&= \langle b_{\mathrm{phys}},u\rangle
 + \int_{\cHplus\setminus\{0\}}
 \big(1-\mathrm e^{-\langle\xi,u\rangle}\big)\,m(\mathrm d\xi)
 \nonumber\\
&= \langle b_{\mathrm{adm}},u\rangle
- \int_{\cHplus\setminus\{0\}}
\big(\mathrm e^{-\langle\xi,u\rangle}-1+\langle\chi(\xi),u\rangle\big)
\,m(\mathrm d\xi),
\qquad u\in\cHplus.
\end{align}
	The first line is the physical subordinator exponent; the second is the
	canonical affine coefficient $F$ associated with the admissible drift
	$b_{\mathrm{adm}}$, see~\cite{Det77}.
	The semigroup $(\cT(t))_{t\geq 0}$ generated by $L_{\Delta}$ satisfies
$$
\mathcal T(t)u = T(t)uT^{*}(t),
$$
where $(T(t))_{t\geq 0}=(\E^{tB})_{t\geq 0}$ is the full semigroup on $H$ generated by $B=\Delta\oplus 0$ (cf.\ Example~\ref{ex:Laplacian_Forward}), and we note that $\cT(t)$ maps $\cHplus$ into itself for all $t\ge0$.
Moreover, the process $(X_t)_{t\ge0}$ is a stochastically continuous affine Markov
process with values in $\cHplus$, associated with the admissible parameter set
$(b_{\mathrm{adm}},L_\Delta,m,0)$. This is the parameter set used by the
affine theory; the physical drift $b_{\mathrm{phys}}$ is recovered from
$b_{\mathrm{adm}}-I_m$. This follows from the explicit Ornstein--Uhlenbeck
subordinator construction, equivalently from
Theorem~\ref{thm:main-convergence} specialized to $\Gamma=0$ and $\mu=0$ when
Assumption~\ref{assump:cV-compact-containment} is satisfied; it is
not a consequence of the joint transform formula, which is applied only after
the covariance process has been constructed. In this OU-subordinator
case the Feynman--Kac martingale identity of Proposition~\ref{prop:affine-fk}
can also be seen directly from the explicit mild representation
$X_t=\cT(t)x+\int_0^t\cT(t-s)\,\D L_s$ and the independent-increment structure
of $L$.
In particular, the characteristic function of the forward curve process
$(f_t)_{t\ge0}$ is given by
\begin{align}\label{eq:affine-transform-levy}
\EX{\exp\left(\mathrm i\langle f_t,u_1\rangle_H\right)}
&= \exp\left(\mathrm i\langle f_0,S^*(t)u_1\rangle_H\right) \nonumber\\
&\quad\times
\exp\left(
-\int_0^t
\varphi_L\left(
\tfrac12\int_0^s \mathcal T(s-\tau)
(S^*(\tau)u_1)^{\otimes2}\mathrm d\tau
\right)\mathrm ds
\right) \nonumber\\
&\quad\times
\exp\left(
-\tfrac12
\Big\langle x,
\int_0^t \mathcal T(t-\tau)(S^*(\tau)u_1)^{\otimes 2}\mathrm d\tau
\Big\rangle_{\mathcal H}
\right),
\end{align}
where the arguments of $\varphi_L$ belong to $\cHplus$.

\begin{remark}
Regarding Assumption~\ref{assump:integration} for the heat-modulated BNS model:
since Theorem~\ref{thm:heat-affine-model} is stated at the $H\times\cHplus$
level, the forward curve process $(f_t)_{t\ge0}$ is modeled in $H$ rather than
$V$. The relevant condition is therefore Assumption~\ref{assump:H-valued-joint},
and the general trace-class sufficient condition is recorded in
Remark~\ref{rem:H-valued-trace-sufficient}. In the present BNS specialization it
amounts to
\[
  \mathbb E\!\left[\int_0^t
    \|S(t-s)X_s^{1/2}\|_{\mathcal L_2(H,H)}^2\,\D s\right]<\infty.
\]
A sufficient trace-class criterion is $X_s\in\cL_1(H)$ for almost every $s$ and
\[
  \int_0^t \mathbb E[\Tr(X_s)]\,\D s<\infty,
\]
because then
\[
  \|S(t-s)X_s^{1/2}\|_{\mathcal L_2(H,H)}^2
  \le \|S(t-s)\|_{\cL(H)}^2\,\Tr(X_s).
\]
Under the strengthened trace-class hypotheses of
Theorem~\ref{thm:main-convergence}, together with the corresponding first-moment
integrability of $\Tr(X_s)$, the characteristic-function
formula~\eqref{eq:affine-transform-levy} is therefore rigorous without any
additional square-root regularity hypothesis.

\end{remark}

The formula~\eqref{eq:affine-transform-levy} follows from
Theorem~\ref{thm:heat-affine-model} by specializing to $\mu=0$ (no
state-dependent jumps), $\Gamma=0$, $u_2=0$ (evaluating the $X$-component
marginal out). A small notational point: Theorem~\ref{thm:heat-affine-model}
and Lemma~\ref{lem:joint-Riccati-wellposed} treat $u_1\in\I H$, so that
$\psi_2(t,u_1)=-\frac12\int_0^t\cT(t-s)(S^*(s)u_1)^{\otimes 2}\D s$ with a
negative sign. The display in~\eqref{eq:affine-transform-levy} uses instead a
\emph{real} argument $u_1\in H$ paired against the explicit imaginary unit on
the left-hand side, i.e.\ it is the application of the lemma at the imaginary
argument $\I u_1$. Using the sign identity $(\I h)^{\otimes 2}=-h^{\otimes 2}$
from the convention introduced above, the corresponding loadings simplify to
$\psi_1(t,\I u_1)=\I S^*(t)u_1$ and
$\psi_2(t,\I u_1)=+\frac{1}{2}\int_0^t\cT(t-s)(S^*(s)u_1)^{\otimes 2}\D s$
(\emph{positive} sign for $u_1\in H$), which is exactly the expression
appearing in~\eqref{eq:affine-transform-levy}. Note that for the arguments
of $\varphi_L$ to lie in $\cH^+$, one needs
$\psi_2(t,\I u_1)\in\cH^+$, which follows from positivity of
$(S^*(s)u_1)^{\otimes 2}$ and the positivity-preserving property of
$(\cT(t))_{t\geq 0}$ established in Lemma~\ref{lem:prop-semigroup}.
\end{example}

The following theorem is our main result on the approximation of affine function-valued stochastic
covariance models by affine finite-rank models. Moreover, it provides an explicit convergence rate,
which depends on the speed of convergence of the Galerkin projection. 

\begin{theorem}\label{thm:finite-dim-approx-ScoV}
  Let $(f,X)$ denote the stochastic covariance model from
  Theorem~\ref{thm:heat-affine-model} with $(f_{0},X_{0})=(f_0,x)\in
  H\times\cHplus$. For every $d\in\MN$ let $X^{d}$ denote
  the affine process on $\cHplus_{d}$ as in
  Proposition~\ref{prop:embedding-affine-main} and define $(f^{d},X^{d})$ as the
  stochastic covariance model obtained from~\eqref{eq:HJMM-SDE-H} by replacing
  $X$ with the finite-rank instantaneous covariance process $X^d$. Then the
  following holds: 
  \begin{theoremenum}
  \item\label{item:finite-dim-approx-ScoV-1} For every $f_{0}\in H$,
    $x\in\cHplus$ the stochastic covariance model $(f^{d},X^{d})$ with
    $(f^{d}_{0},X_{0}^{d})=(f_{0},\bP_{d}(x)) $ is well-defined and for every $u=(u_{1},u_{2})\in \I H\times\cHplus$ and $t\geq 0$ the
	    process satisfies
	    \begin{align}\label{eq:stochastic-covariance-affine-formula}
	     \EX{\E^{\langle f^{d}_{t}, u_{1}\rangle_{H}- \langle X^{d}_{t},
	      u_{2}\rangle}}=\E^{-\Phi_{d}(t,u)+\langle f_{0}, 
      \psi_{1}(t,u)\rangle_{H}-\langle \bP_{d}(x),
      \psi_{2,d}(t,u)\rangle}, 
    \end{align}
	    for $(\Phi_{d}(\cdot,u),(\psi_{1}(\cdot,u),\psi_{2,d}(\cdot,u)))$ the
	    unique Galerkin joint Riccati solution with
	    $\psi_{2,d}(0,u)=\bP_d u_2$, $\psi_1(t,u)=S^*(t)u_1$, and
	    with $F$ and $L(\cdot)+\hat{\cR}$ in
	    \eqref{eq:Riccati-phi-psi-1-1}--\eqref{eq:Riccati-phi-psi-1-3} replaced by
	    $F\circ\bP_{d}$ and
	    $\bP_{d}\circ L\circ \bP_{d}+\bP_{d}\circ \hat{\cR}\circ \bP_{d}$,
	    respectively. Since $X_t^d\in\cH_d^+$, the exponent only depends on the
	    projected covariance loading:
	    $\langle X_t^d,u_2\rangle=\langle X_t^d,\bP_d u_2\rangle$.
  \item\label{item:finite-dim-approx-ScoV-2}  When specialized to the canonical
    finite-horizon forward-curve setting of Example~\ref{ex:Laplacian_Forward},
    namely
    \[
      \Theta_{\max}<\infty,\qquad
      V=V_{0,\beta}(0,\Theta_{\max})\oplus\MR,\qquad
      H=L^2(0,\Theta_{\max},\mathrm e^{\beta x}\D x)\oplus\MR,
    \]
    and $B=\Delta\oplus 0$ with the boundary conditions stated there. Moreover, suppose the limiting
    $H$-valued stochastic covariance
    model $(f,X)$ from Theorem~\ref{thm:heat-affine-model} is well defined
	    under Assumptions~\ref{assump:cV-compact-containment} and
	    \ref{assump:H-valued-joint}. Fix $T>0$ and assume moreover that
	    \[
	      S^*(s)V\subseteq V,\qquad
	      \|S^*(s)\|_{\cL(V)}\le M_1\e^{\omega s},\qquad 0\le s\le T,
	    \]
	    and that rank-one tensors satisfy
	    $\|h^{\otimes2}\|_{\cV}\le C_\otimes\|h\|_V^2$ for $h\in V$.
	    If, moreover, the vector measure $\mu$ takes values in $\cV$ and has
	    finite $\cV$-total variation
	    $|\mu|_{\cV}(\cHpluso)<\infty$, and
	    $\Gamma^{*}(\cV\cap \cH)\subseteq\cV\cap \cH$, then for every
    $\tilde{u}\in V$ such that $\norm{\tilde{u}}_{V}\leq 1$ there exists a constant
    $\tilde{C}_{T}$ such that for all $d\in\MN$ we have
    \begin{align}\label{eq:stochastic-covariance-convergence-rate}
      \sup_{t\in [0,T]}
      \left|\EX{\E^{\I \langle f_{t}, \tilde{u}\rangle_{H}}}-\EX{\E^{\I\langle
      f^{d}_{t},\tilde{u}\rangle_{H}}}\right|\leq
      \tilde{C}_{T}\norm{\bP_{d}^{\perp}}_{\cL(\cV,\cH)}(1+\norm{x}).
    \end{align}
	  \end{theoremenum}
	\end{theorem}

\begin{remark}[Explicit convergence rate]\label{rem:explicit-rate}
This remark is a direct corollary of the specialized item~(ii) of
Theorem~\ref{thm:finite-dim-approx-ScoV}: it makes the projection-error rate
$\|\bP_d^\perp\|_{\cL(\cV,\cH)}$ explicit in the canonical setting of
Example~\ref{ex:Laplacian_Forward}, and is therefore not a separate
theorem but the substitution of the Sturm--Liouville eigenvalue asymptotics
into the rate already established in item~(ii).
Concretely, in the heat-modulated setting with covariance generator
$B=\Delta\oplus 0$ on $V=V_{0,\beta}\oplus\MR$, where $\Delta$ is the
weighted Sturm--Liouville operator of Example~\ref{ex:Laplacian_Forward}, the
projection error bound in~\eqref{eq:stochastic-covariance-convergence-rate}
becomes explicit. Let
\[
  \lambda_{d+1}^{\mathrm{tens}}
  \df
  \inf\{\lambda_i+\lambda_j:\max(i,j)>d\}
\]
be the first omitted Lyapunov eigenvalue in the tensor basis
$\{\be_{i,j}\}_{i\le j}$. The standard estimate
\[
\|\bP_d^\perp\|_{\cL(\cV,\cH)}
\leq \frac{C}{\sqrt{\lambda_{d+1}^{\mathrm{tens}}}}
\]
follows from the variational characterisation of eigenvalues and the Hilbert-scale
norm induced by the Lyapunov generator. In the present finite-horizon Laplacian
setting the positive shape eigenvalues satisfy
$\eta_n\sim (n\pi/\Theta_{\max})^{2}$ by
Example~\ref{ex:Laplacian_Forward}. With the one-based convention
$e_1=(0,1)$ and $e_{n+1}=(q_n,0)$, the first omitted Lyapunov eigenvalue
$\lambda_{d+1}^{\mathrm{tens}}$ is the first omitted combined spatial
eigenvalue $\lambda_{d+1}$ (equivalently the first omitted shape eigenvalue
under the chosen rank convention). Substituting into~\eqref{eq:stochastic-covariance-convergence-rate}
gives the explicit rate
\[
\sup_{t\in[0,T]}\left|\mathbb E\big[\E^{\I\langle f_t,\tilde u\rangle_H}\big]
-\mathbb E\big[\E^{\I\langle f_t^d,\tilde u\rangle_H}\big]\right|
\leq
\frac{\tilde C_T}{\sqrt{\lambda_{d+1}^{\mathrm{tens}}}}(1+\|x\|),
\]
which is the theoretical underpinning of the $1/\sqrt{\lambda_{d+1}^{\mathrm{tens}}}$
reference lines used in Figure~\ref{fig:convergence}.
\end{remark}

\begin{remark}[The structural hypotheses hold in the BNS specialization]\label{rem:struct-hyp-bns}
The four structural hypotheses of Theorem~\ref{thm:finite-dim-approx-ScoV}(ii)
are met in the diagonal heat-modulated BNS setting used in the numerics, so the
rate~\eqref{eq:stochastic-covariance-convergence-rate} genuinely applies there.
Indeed: (i) $S^*(s)V\subseteq V$ with the growth bound $\|S^*(s)\|_{\cL(V)}\le
M_1\E^{\omega s}$ holds by Lemma~\ref{lem:adjoint-shift-V}, in fact with
$M_1=1$ and $\omega=0$;
(ii) the rank-one bound
$\|h^{\otimes2}\|_{\cV}\le C_\otimes\|h\|_V^2$ follows from the continuous
embedding $V\hookrightarrow H$ and the standard Hilbert--Schmidt identity
$\|h^{\otimes2}\|_{\cL_2(H,V)}=\|h\|_H\|h\|_V$; (iii) $\mu=0$ gives
$|\mu|_{\cV}(\cHpluso)=0<\infty$; and (iv) $\Gamma=0$ gives
$\Gamma^{*}(\cV\cap\cH)=\{0\}\subseteq\cV\cap\cH$ trivially.
\end{remark}

\begin{remark}[Assumption~\ref{assump:cV-compact-containment} in the diagonal BNS benchmark]\label{rem:bns-assump-C}
The diagonal heat-modulated BNS parameters used in the numerical section also
satisfy Assumption~\ref{assump:cV-compact-containment} and the
$H$-valued stochastic-integrability condition of
Assumption~\ref{assump:H-valued-joint}. Indeed, in that benchmark the physical
subordinator drift is $b_{\mathrm{phys}}=0$, while $\Gamma=0$ and $\mu=0$.
Under the fixed truncation convention, the admissibility drift is
$b_{\mathrm{adm}}=I_m$. Since the reported jump amplitude satisfies
$\|d\|_{\cH}=0.5(\sum_{k\ge1}k^{-4})^{1/2}<1$, we have $\chi(d)=d$ and hence
$b_{\mathrm{adm}}=\beta_L d$. Therefore
$b_{\mathrm{adm}}-\int\chi(\xi)m(\D\xi)=b_{\mathrm{phys}}=0\in\cHplus$, so the
drift admissibility condition is satisfied. The estimate below gives
$d\in\cV$, hence $b_{\mathrm{adm}}\in\cV\cap\cH$ and the drift-regularity part
of Assumption~\ref{assump:cV-compact-containment} holds; the state-dependent
jump conditions are trivial because $\mu=0$.
The compound-Poisson subordinator has finite intensity $\beta_L$ and a single
diagonal jump amplitude $d=(d_k)_{k\ge1}$ with $d_k=0.5/k^2$. Since the
finite-horizon Laplacian eigenvalues satisfy $\lambda_k\simeq k^2$,
\[
  \|d\|_{\cV}^{2}
  \lesssim
  \sum_{k\ge1}(1+\lambda_k)d_k^2
  \lesssim
  \sum_{k\ge1}(1+k^2)k^{-4}<\infty.
\]
Thus $\int\|\xi\|_{\cV}^{2}m(\D\xi)=\beta_L\|d\|_{\cV}^{2}<\infty$. Moreover,
$\Tr(d)=\sum_k d_k<\infty$ and the initial covariance used in the figures has
$x_{0,k}=0.5d_k$, hence $\Tr(x_0)<\infty$. The explicit OU-subordinator
representation gives
\[
  \EX{\Tr(X_t)}
  \le \Tr(x_0)+\beta_L t\,\Tr(d),
  \qquad t\ge0,
\]
because the Lyapunov semigroup is trace-dissipative on positive operators.
Since the stopped-shift semigroup $S$ is a contraction on $H$, this implies
\[
  \EX{\int_0^T\|S(T-s)X_s^{1/2}\|_{\cL_2(H,H)}^2\,\D s}
  \le
  \int_0^T\EX{\Tr(X_s)}\,\D s<\infty,
\]
which is Assumption~\ref{assump:H-valued-joint}.
\end{remark}

\begin{remark}\label{rem:square-root-regularity}
  We make the following remarks on approximating the forward curve
  process $(f_{t})_{t\geq 0}$ by finite-dimensional models, and the modeling in $V$ versus $H$:
  \begin{enumerate}
  \item[i)] Note that since the embedding
    $H_{\beta} \subset\!\subset L^2(\mathbb{R}_+, \E^{\gamma x} \D x) \oplus
    \mathbb{R}$ is compact if $0<\gamma<\beta$, we can approximate the forward curve dynamics
    $(f_t)_{t \geq 0}$ by a finite-dimensional process $(U_d(f_t))_{t \geq 0}$
    in the space $L^2(\mathbb{R}_+, \E^{\gamma x} \D x) \oplus \mathbb{R}$,
    where $(U_d)_{d \in \mathbb{N}}$ is defined as the finite-rank operator
    \begin{align*}
    U_d v = \sum_{k=1}^{d} s_k \langle v, h_k^{(d)} \rangle_V e_k  
    \end{align*}
    for some $(s_k)_{k \in \mathbb{N}} \subseteq \mathbb{R}_+$ and
    $(h_k^{(d)})_{k \leq d} \subseteq \operatorname{dom}(\mathcal{A}^*)$ (see
    \cite{Tap13}). However, the resulting approximations $(f_t^d)_{t\ge0}$ are generally \emph{not}
affine processes, since finite-dimensional projections of affine processes do not
preserve the affine property in general. As a consequence, although
$(U_d(f_t))_{t\ge0}$ is finite-dimensional, it may fail to be tractable in the
sense usually associated with affine models.
More specifically, the principal source of tractability of affine dynamics, the reduction of Fourier-Laplace transforms and characteristic functions to
generalized Riccati equations, is typically lost after projection. In particular,
one cannot in general expect exponential--affine representations of quantities
of the form
$$
  \mathbb E\left[\exp\bigl(\langle u, U_d f_t\rangle\bigr)\right],
$$
nor closed or semi-closed pricing formulas obtained via Fourier methods, even if
such formulas are available for the infinite-dimensional model.
This loss of structure has several practical implications: 
(a) pricing and hedging of European-style claims must typically rely on simulation
or numerical PDE/integro-PDE methods in $\mathbb R^d$; 
(b) calibration becomes substantially more costly, since each evaluation of an
objective function requires repeated Monte Carlo simulations or numerical PDE
solves; and 
(c) preserving the no-arbitrage HJM drift condition under the reduced dynamics
requires additional care, as the projected dynamics need not inherit the same
structural drift restrictions as the original model.
 Instead, using
    Theorem~\ref{thm:finite-dim-approx-ScoV}, we propose approximating the
    forward curve dynamics by finite-rank affine stochastic covariance models
    $(f^d, X^d)_{d \in \mathbb{N}}$, which, for each $d \in \mathbb{N}$, are
    infinite-dimensional, yet affine and exhibit only finite-rank noise.
  
  \item[ii)] If there exists an orthonormal basis (ONB)
    $(e_n)_{n \in \mathbb{N}}$ of eigenvectors for the drift operator
    $\mathcal{A}$ in~\eqref{eq:HJMM-SDE}, we could also approximate the process
    $f^d$ by $H_d$-valued processes $(\tilde{f}^d)_{d \in \mathbb{N}}$ by
    following the same ideas as in this paper, namely approximating
    the unbounded operator $\mathcal{A}$ by $\mathcal{A}^d \coloneqq \sP_d
    \mathcal{A} \sP_d$ and $X$, as presented here, by $X^d$. In this
    case, an analog of Theorem~\ref{thm:finite-dim-approx-ScoV} holds, with
    $(f^d, X^d)$ replaced by $(\tilde{f}^d, X^d)$. However, in applications, we
    typically have $\mathcal{A} = \frac{\partial}{\partial x}$, which does not
    admit an ONB of eigenvectors in the space $H_{\beta}$. One could, motivated
    by~\cite{Con05}, consider the HJM model with
    $\mathcal{A}_1 \coloneqq \frac{\partial}{\partial x} + \Delta$. This
    operator possesses a sequence of eigenvectors that forms an ONB for spaces
    like $L^2(0, \Theta_{\mathrm{max}}, \E^{\beta x} \D x)$, allowing for
    finite-dimensional and finite-rank approximations. However, such a model is not arbitrage-free
    in general. We leave these considerations for future
    work.
    
  \item[iii)] If one wants to model the forward curve dynamics in the regular
    space $V$, then the volatility, not only the covariance, must satisfy the
    regularity required by Assumption~\ref{assump:integration}. In the
    square-root specification this means imposing
    $X_t^{1/2}\in\cL_2(H,V)$, and, if one wants the full covariance-regularity
    convention used for $\cV$, $X_t^{1/2}\in\cV$. This is an additional
    assumption. The fixed-time regularity $X_t\in\cV\cap\cHplus$ from
    Theorem~\ref{thm:main-convergence} does \emph{not} imply
    $X_t^{1/2}\in\cV$.

    The distinction is visible already in the diagonal Hilbert-scale
    case. Suppose that $H$ has an orthonormal basis $(e_n)$ and that
    \[
      \|h\|_V^2\simeq \sum_n \rho_n^2|\langle h,e_n\rangle_H|^2,
      \qquad \rho_n\to\infty .
    \]
    If $X_t e_n=\hat x_n(t)e_n$ with $\hat x_n(t)\ge0$, then
    $X_t\in\cL_2(H,V)$ requires
    \[
      \sum_n \hat x_n(t)^2\rho_n^2<\infty,
    \]
    whereas $X_t^{1/2}\in\cL_2(H,V)$ requires the strictly stronger condition
    \[
      \sum_n \hat x_n(t)\rho_n^2<\infty .
    \]
    Thus trace-class covariance regularity and $\cV$-regularity of $X_t$ do not
    by themselves control the square-root volatility in the regular curve space.

  \item[iv)] A clean sufficient condition for using the square-root
    specification in the regular space $V$ is to impose the square-root
    condition directly:
    \[
      X_t^{1/2}\in\cL_2(H,V)
      \quad\text{and}\quad
      \mathbb E\!\left[\int_0^T
      \|S(T-s)X_s^{1/2}\|_{\cL_2(H,V)}^2\,\D s\right]<\infty
    \]
    for every $T>0$. In the diagonal Hilbert-scale regime above, a sufficient
    pointwise condition is
    $\sum_n\hat x_n(t)\rho_n^2<\infty$. This condition may follow from a
    stronger model specification forcing $X_t$ to take values in
    $\cL_2(V^*,V)$, but it is not a consequence of the abstract admissible
    parameter set. Alternatively, one may formulate the forward equation in the
    pivot space $H$ and use Assumption~\ref{assump:H-valued-joint}; this is the
    route taken in Theorem~\ref{thm:heat-affine-model}.
  \end{enumerate}
\end{remark}

\subsection{Example: Robustness of Pricing Flow Forwards in Energy Markets}\label{sec:exampl-robustn-pric}
\label{ex:robustness-flow-forward}
We study Fourier prices for exponential functionals of the forward
curve, motivated by European Call options written on \emph{geometric power flow
forwards} in electricity and gas markets, cf.~\cite{BK15}.
The stochastic covariance model used here is the driftless
$H$-valued affine benchmark from Theorem~\ref{thm:heat-affine-model}. Thus the
formulas below are mathematically prices of exponential curve functionals under
that benchmark. If they are interpreted as no-arbitrage geometric forward-option
prices, this interpretation requires the additional geometric HJM pricing layer
and complex-transform assumptions stated in Proposition~\ref{prop:robustness};
it is not an automatic consequence of the driftless transform theorem alone.
Fix $t<\tau_1<\tau_2<\Theta_{\max}$. We model the log-forward curve in Musiela
parametrisation as an $H$-valued process $(f_t)_{t\geq 0}$, where
$$
V = V_{0,\beta}(0,\Theta_{\max})\oplus\MR,
\qquad
H = L^2(0,\Theta_{\max},e^{\beta x}\,dx)\oplus \MR.
$$
This is the same finite-horizon weighted pair used in
Theorem~\ref{thm:finite-dim-approx-ScoV}(ii) and in the numerical
implementation. On a bounded interval the exponential weight is equivalent to
the unweighted $L^2$ norm, and the compactness needed for the Galerkin
approximation follows from the Sobolev embedding discussed in
Section~\ref{sec:the-state-space}.
The instantaneous forward price is, formally, given by
$F(t,T)=\exp(f_t(T-t))$. Point evaluation $f\mapsto f(T-t)$ is well-defined
on the regular space $V=H_\beta$ (where it is a bounded linear functional)
but \emph{not} on the larger pivot space $H=L^2$. We therefore use this
pointwise expression only as motivation for the averaged functional below;
all subsequent statements are formulated in terms of $\ell_{t,\tau_1,\tau_2}$,
which has a bounded $H$-Riesz representer and is therefore well-defined on the
$H$-valued forward curve $f_t$.
We define the \emph{geometric flow forward} over $[\tau_1,\tau_2]$ by
\begin{equation}\label{eq:geo-flow-forward}
  F^{\mathrm{geo}}(t;\tau_1,\tau_2)
  :=\exp\Big(
      \frac{1}{\tau_2-\tau_1}
      \int_{\tau_1}^{\tau_2}\log F(t,T)\,dT
     \Big)
  =\exp\big(\ell_{t,\tau_1,\tau_2}(f_t)\big),
\end{equation}
where the averaging functional $\ell_{t,\tau_1,\tau_2}:V\to\R$ is defined by
\begin{equation}\label{eq:avg-functional}
  \ell_{t,\tau_1,\tau_2}(f)
  :=\frac{1}{\tau_2-\tau_1}\int_{\tau_1-t}^{\tau_2-t} f(z)\,dz,
  \qquad f\in V.
\end{equation}
Note that $\ell_{t,\tau_1,\tau_2}$ is a bounded linear
functional on $V$, hence $\ell_{t,\tau_1,\tau_2}\in V^*$.
Moreover, on the finite-horizon pivot space
$H=L^2(0,\Theta_{\max},e^{\beta x}\D x)\oplus\MR$, the same averaging operation
admits an explicit Riesz representer in $H$: writing a curve as $(u,r)$ with
actual value $u(x)+r$ and setting $a_\tau\df\tau_1-t$,
$b_\tau\df\tau_2-t$, the linear
functional
$$
  (u,r)\mapsto \frac{1}{b_\tau-a_\tau}\int_{a_\tau}^{b_\tau}\bigl(u(z)+r\bigr)\D z
$$
is represented in the standard product inner product as
$$
  \ell_H = \Bigl(\frac{e^{-\beta z}}{b_\tau-a_\tau}\mathbf 1_{[a_\tau,b_\tau]}(z),\;1\Bigr)\in H,
$$
so that the averaged functional is $\langle f,\ell_H\rangle_H$ for $f\in H$. Throughout the
proposition below we use this $H$-representer whenever the transform
machinery is applied to the $H$-valued process $f_t$; the notation
$\ell\in V^*$ is reserved for the regular-curve interpretation.\newline{}
A European Call payoff with strike $K>0$ and exercise time $t<\tau_{1}$ written
on this exponential functional is
$$
  \Pi_t := \big(F^{\mathrm{geo}}(t;\tau_1,\tau_2)-K\big)^+.
$$

\begin{proposition}[Robustness bound for forward-flow option prices]\label{prop:robustness}
Let $(f,X)$ be the driftless affine stochastic covariance model on
$L^2(0,\Theta_{\max},\mathrm{e}^{\beta x}\mathrm{d}x)\oplus \mathbb{R}$ and
let $(f^d,X^d)$ be its finite-rank approximation as in
Theorem~\ref{thm:finite-dim-approx-ScoV}, under
Assumptions~\ref{assump:cV-compact-containment} and
\ref{assump:H-valued-joint}. Fix
$t<\tau_1<\tau_2<\Theta_{\max}$ and $K>0$, and set
$\ell_H\in H$ equal to the Riesz representer of the averaging
functional displayed above, and let $\ell$ denote its restriction to $V$.
Define
$$
\pi_0 := \mathbb E\big[(\mathrm e^{\langle f_t,\ell_H\rangle_H}-K)^+\big],
\qquad
  \pi_0^d := \mathbb E\big[(\mathrm e^{\langle f_t^d,\ell_H\rangle_H}-K)^+\big].
$$
Fix $\nu>1$, set $s(\lambda):=\nu+\mathrm{i}\lambda$, and define
\[
g(\lambda)
:=\frac{K^{-(\nu-1+\mathrm{i}\lambda)}}{(\nu+\mathrm{i}\lambda)(\nu-1+\mathrm{i}\lambda)},
\qquad \lambda\in\mathbb{R}.
\]
Assume moreover that the Carr--Madan-type pricing assumptions of
\cite[Proposition~3.2 and Theorem~2.3]{HeKarbachKhedher2025} apply to the
functional represented by $\ell_H$. Finally, assume that on the pricing strip $\operatorname{Re}(z)=\nu$ there exist a
measurable function $M_t:\mathbb{R}\to[0,\infty)$ and functions
$r_d:\mathbb{R}\to[0,\infty)$ such that $|g|M_t\in L^1(\mathbb{R})$,
$\int_{\mathbb{R}}|g(\lambda)|r_d(\lambda)\,\mathrm{d}\lambda\to0$, and
$$
\Big|\mathbb E[\mathrm e^{s(\lambda)\langle f_t,\ell_H\rangle_H}]
-\mathbb E[\mathrm e^{s(\lambda)\langle f_t^d,\ell_H\rangle_H}]\Big|
\leq M_t(\lambda)\,
\|\bP_d^\perp\|_{\cL(\cV,\cH)}(1+\|X_0\|)
+r_d(\lambda)
$$
for all $\lambda\in\mathbb{R}$.
Then:
\begin{enumerate}
\item[(i)]
With the above $g$ and $s$, 
$$
\pi_0
=\frac{1}{2\pi}\int_{\mathbb{R}} g(\lambda)\,
\mathbb E\!\left[\mathrm e^{s(\lambda)\langle f_t,\ell_H\rangle_H}\right]\,\mathrm{d}\lambda,
$$
and, by the assumed complex affine transform formula,
$$
\pi_0
=\frac{1}{2\pi}\int_{\mathbb{R}} g(\lambda)\,
\exp\Big(
-\tilde\Phi(t,s(\lambda)\ell_H)
+\langle f_0,\tilde\psi_1(t,s(\lambda)\ell_H)\rangle_H
-\langle x,\tilde\psi_2(t,s(\lambda)\ell_H)\rangle_{\mathcal{H}}
\Big)\,\mathrm{d}\lambda,
$$
where $(\tilde\Phi,\tilde\psi_1,\tilde\psi_2)$ denotes the complex-valued mild solution
of the associated generalized Riccati equations.
\item[(ii)]
There exists a constant $C(t)>0$ such that for every $d\in\mathbb{N}$,
$$
|\pi_0-\pi_0^d|
\leq C(t)\,\|\bP_d^\perp\|_{\cL(\cV,\cH)}\,(1+\|X_0\|)
+\varepsilon_d,
$$
where $\varepsilon_d\to0$ as $d\to\infty$.
\end{enumerate}
\end{proposition}
\begin{proof}
\noindent\textbf{(i).}
By~\cite[Proposition~3.2]{HeKarbachKhedher2025} applied with the functional $u_\vartheta$
replaced by the averaging functional represented by $\ell_H$, the call price
$\pi_0 = \mathbb{E}[(\mathrm{e}^{\langle f_t,\ell_H\rangle_H}-K)^+]$ admits the Carr-Madan-type representation
$$
\pi_0
=\frac{1}{2\pi}\int_{\mathbb R}g(\lambda)\,
\mathbb{E}\!\left[\mathrm{e}^{s(\lambda)\langle f_t,\ell_H\rangle_H}\right]\mathrm{d}\lambda.
$$
The expression for $\pi_0$ in terms of
$(\tilde\Phi,\tilde\psi_1,\tilde\psi_2)$ follows from the assumed complex
affine transform formula with $u_1=s(\lambda)\ell_H$.

\noindent\textbf{(ii).}
Use the representation from part (i). With $s(\lambda)=\nu+\mathrm i\lambda$,
\[
\pi_0-\pi_0^d
=\frac1{2\pi}\int_{\mathbb R} g(\lambda)\Big(
\mathbb E[\mathrm e^{s(\lambda)\langle f_t,\ell_H\rangle_H}]
-\mathbb E[\mathrm e^{s(\lambda)\langle f_t^d,\ell_H\rangle_H}]
\Big)\,\mathrm d\lambda.
\]
Hence,
\[
|\pi_0-\pi_0^d|
\le \frac1{2\pi}\int_{\mathbb R} |g(\lambda)|\,
\Big|\mathbb E[\mathrm e^{s(\lambda)\langle f_t,\ell_H\rangle_H}]
-\mathbb E[\mathrm e^{s(\lambda)\langle f_t^d,\ell_H\rangle_H}]\Big|
\,\mathrm d\lambda.
\]
Using the dominated stability estimate assumed in the proposition, we obtain
\[
|\pi_0-\pi_0^d|
\le \frac{1}{2\pi}\Big(\int_{\mathbb R}|g(\lambda)|M_t(\lambda)\,\mathrm d\lambda\Big)\,
\|\bP_d^\perp\|_{\mathcal L(\mathcal V,\mathcal H)}\,(1+\|X_0\|)
+\varepsilon_d,
\]
where
\[
\varepsilon_d
:=\frac{1}{2\pi}\int_{\mathbb R}|g(\lambda)|r_d(\lambda)\,\mathrm d\lambda
\to0.
\]
This proves the claim with
$C(t)=(2\pi)^{-1}\int_{\mathbb R}|g(\lambda)|M_t(\lambda)\,\mathrm d\lambda$.
\end{proof}

\noindent
{The numerical illustrations below use $\Theta_{\max}=50$, reference rank $d_{\mathrm{ref}}=12$ in the convergence
benchmark, jump amplitudes $d_k=0.5/k^2$, compound-Poisson jump
intensity $\beta_L=1.5$ of the subordinator $L$ (so that, in the eigenbasis, $L$
has physical Laplace exponent
$\varphi_L(u)=\langle b_{\mathrm{phys}},u\rangle
+\beta_L(1-\E^{-\langle d,u\rangle})$ with $b_{\mathrm{phys}}=0$ in the
experiments; the admissibility drift is
$b_{\mathrm{adm}}=\beta_L d$ under the fixed truncation convention), initial covariance eigenvalues
$x_{0,k}=0.5\,d_k$, initial forward curve
$f_0(\theta)=0.03+0.01(1-\mathrm e^{-\theta})/\theta$, and Fourier settings
$n_{\mathrm{trun}}=40$ and $\Delta_{\lambda}=0.05$. In all numerical
experiments, the physical L\'evy drift is set to $b_{\mathrm{phys}}=0$ (i.e.\
the uncompensated subordinator has no deterministic drift beyond its jumps; the
canonical admissibility drift is supplied by the compensation term, and the
Lyapunov operator itself is always present). For consistency with the
plotting code, the figure captions below use $\alpha$ for the
short-end-anchored weight exponent of the pivot space
$L^2(0,\Theta_{\max},\mathrm e^{\beta x}\D x)$ introduced in
Example~\ref{ex:Laplacian_Forward}; the two symbols refer to the same
parameter, and $\alpha=\beta=0.1$ throughout the reported figures. The choice $d_{\mathrm{ref}}=12$ is
adequate for the reported parameter set as a finite-rank benchmark: the
discrete Galerkin eigenvalues satisfy
$\lambda_{12}^{\mathrm{tens}}\approx 60.16$, so the upper-bound rate
$1/\sqrt{\lambda_{12}^{\mathrm{tens}}}\approx 0.13$ is already small on the scale of the
displayed convergence plot. The last one or two rank errors should be read as
diagnostic finite-rank differences, since they are at or below the numerical
Fourier-truncation sensitivity reported below.
We stress the scope of these experiments: they confirm
\emph{qualitative} convergence within the finite-rank hierarchy, and the fitted
log-log slope $\approx 2.83>1$ shows that the theoretical
$1/\sqrt{\lambda_{d+1}^{\mathrm{tens}}}$ bound is conservative here rather than
sharp. They are run in the BNS specialization ($\mu=0$, $\Gamma=0$); a numerical
study of the genuinely state-dependent case ($\mu\neq0$), for which the affine
transform is now rigorous through Proposition~\ref{prop:affine-fk}, is left for
future work.}
\par
{We illustrate Proposition~\ref{prop:robustness} numerically using the
diagonal heat-modulated BNS specialization introduced above.
Figure~\ref{fig:convergence} shows the rank-$d$ prices $\pi_0^d$ converging to
the rank-$12$ \emph{finite-rank benchmark} $\pi_0^{\mathrm{ref}}$ (we emphasize
that the unknown infinite-dimensional price $\pi_0$ is not available
analytically; the figure does not claim to reproduce $\pi_0$, only that the
rank-$d$ prices converge \emph{within} the finite-rank hierarchy. The absolute
error $|\pi_0-\pi_0^d|$ is bounded by
$|\pi_0-\pi_0^{12}|+|\pi_0^{12}-\pi_0^d|$; the first term is theoretically
controlled under Proposition~\ref{prop:robustness} but is not observed
numerically because $\pi_0$ is unavailable.
The second term is the finite-rank benchmark error displayed in the figure.)
Panel~(b) compares the absolute error against the upper-bound rate
$\|\bP_d^\perp\|_{\mathcal{L}(\mathcal{V},\mathcal{H})}\lesssim 1/\sqrt{\lambda_{d+1}^{\mathrm{tens}}}$
from part~(ii), using the discrete Galerkin eigenvalues
$\lambda_{d+1}^{\mathrm{tens}}$ and a
proportionality constant calibrated from the computed
errors; the log-log regression slope of error against rate is reported in the
panel title (a slope of~$1.00$ would indicate the error tracks the bound
exactly; a slope greater than~$1.00$ indicates faster-than-bound convergence,
i.e.\ the upper bound is conservative).
Panel~(c) displays the convergence constant
$|\pi_0^d-\pi_0^{\mathrm{ref}}|\cdot\sqrt{\lambda_{d+1}^{\mathrm{tens}}}$. In the present
benchmark this quantity decreases with $d$, which is consistent with the fitted
slope $2.83>1$ in panel~(b) and indicates that the theoretical upper bound is
conservative rather than sharp here. The rank-$12$ benchmark value is
$\pi_0^{\mathrm{ref}}=0.05993553$, and the fitted proportionality constant in
the rate guide is $C=5.85\times10^{-4}$.
A quadrature and Fourier sensitivity check shows that varying the
Gauss-Legendre node count in $\{16,32,64\}$ changes the benchmark price by less
than $10^{-10}$, while varying the Fourier half-bandwidth in $\{20,40,60\}$
changes it by about $4.32\times 10^{-4}$ between $20$ and $40$ and by about
$2.75\times 10^{-6}$ between $40$ and $60$. We therefore use
$n_{\mathrm{trun}}=40$ throughout the reported figures.
Figure~\ref{fig:mc} compares Fourier inversion of the affine transform with a
conditional-Gaussian Monte Carlo estimator across four tenors
$\theta\in\{0.5,1,2,5\}$, with per-$N$ standard-error bars shown in
panel~(a).
Panel~(b) displays a per-tenor $C_\theta/\sqrt{N}$ reference line for each
$\theta$, where the constant $C_\theta$ is calibrated from the $N=1\,000$
error of that tenor; all four curves track their respective reference
lines across five sample sizes
$N\in\{1{,}000;\,5{,}000;\,20{,}000;\,100{,}000;\,200{,}000\}$, confirming
$1/\sqrt{N}$ Monte Carlo scaling for each tenor individually.
}
\medskip

\begin{figure}[ht]
  \centering
  \includegraphics[width=\textwidth]{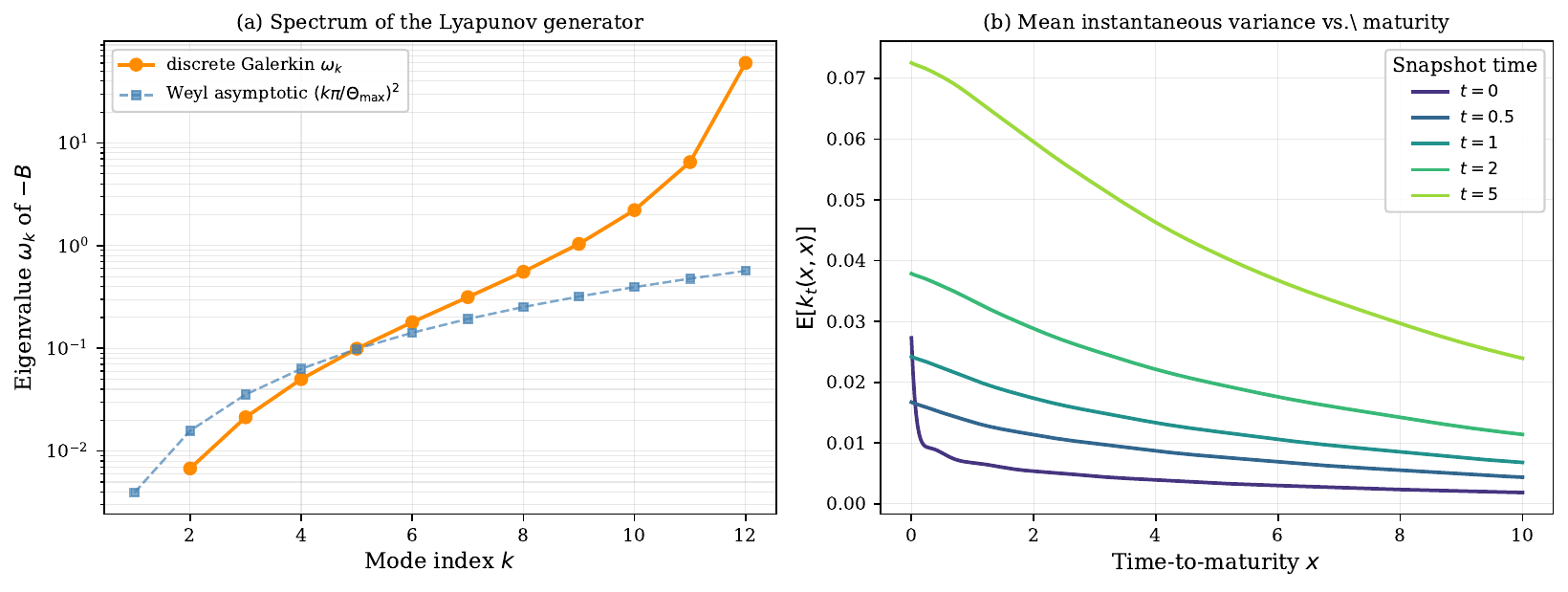}
  \caption{Illustration of the heat regularization mechanism in the
    rank-$12$ heat-modulated BNS specialization.
    \textit{Panel~(a):} discrete Galerkin eigenvalues $\omega_k$ of the
    operator $-B=\Delta\oplus 0$ in the weighted Laguerre basis (orange),
    together with the Weyl asymptotic $(k\pi/\Theta_{\max})^{2}$
    (steelblue, dashed). The two sequences agree for low $k$ where the
    Laguerre basis resolves the oscillatory branch, and diverge at high
    $k$, reflecting the well-known fact that a finite Laguerre basis
    overestimates high-frequency Sturm--Liouville eigenvalues; this
    over-estimate is a variational \emph{upper bound} on the true
    eigenvalues and is the source of the conservative finite-rank rate
    in Figure~\ref{fig:convergence}.
    \textit{Panel~(b):} mean instantaneous covariance kernel diagonal
    $\mathbb{E}[k_t(x,x)]$ as a function of time-to-maturity $x$, at five
    snapshot times $t\in\{0,0.5,1,2,5\}$. The initial profile is sharply
    peaked at the short end; under the Lyapunov decay each mode $k$ is
    damped at rate $2\omega_k$, with the higher modes (small spatial
    scales) damped fastest, so the kernel relaxes towards the smoother
    stationary profile dictated by the Lévy subordinator. The
    finite-horizon level mode $\omega_1=0$ is the only undamped mode and produces
    the linear growth of the integrated variance.
    Parameters: $\Theta_{\max}=50$, $\alpha=0.1$, $\beta_L=1.5$,
    $d_k=0.5/k^{2}$, $x_{0,k}=0.5\,d_k$.}
  \label{fig:heat-smoothing}
\end{figure}

\begin{figure}[ht]
  \centering
  \includegraphics[width=\textwidth]{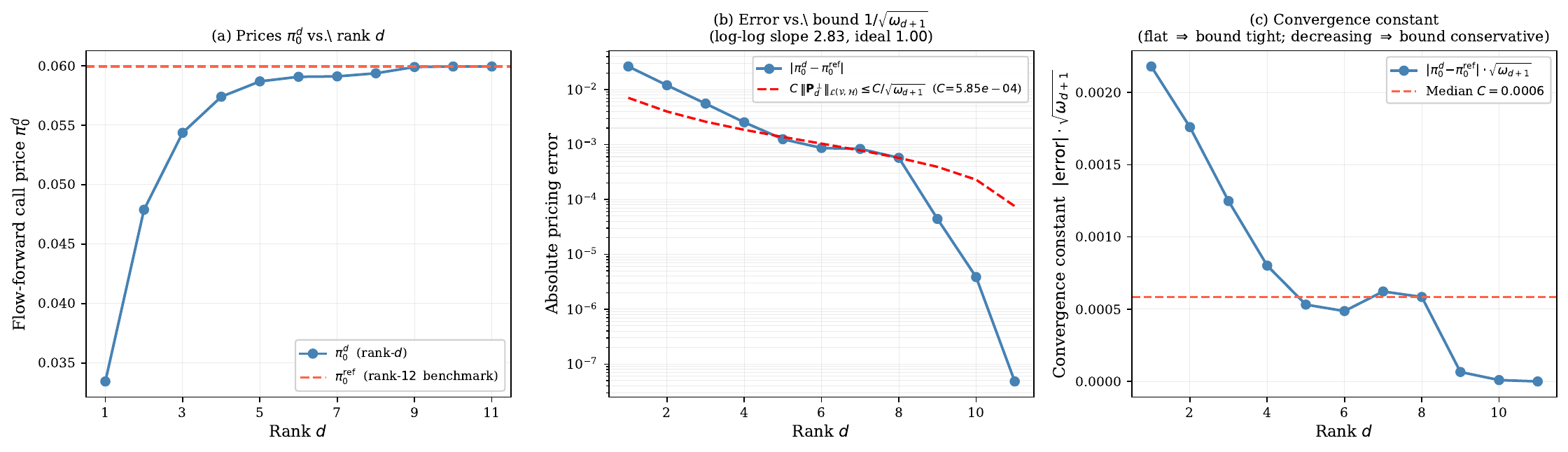}
  \caption{%
    \emph{Within-Galerkin} convergence diagnostic in the
    finite-rank hierarchy. The figure displays the rank-$d$ exponential-functional
    call price $\pi_0^d$ against the rank-$d_{\mathrm{ref}}=12$ benchmark
    $\pi_0^{\mathrm{ref}}$; it does \emph{not} test convergence to the
    infinite-dimensional price $\pi_0$, which is not available in closed
    form. The fitted log-log slope $\approx 2.83$ in panel~(b) is the slope
    of error against the theoretical bound rate
    $1/\sqrt{\lambda_{d+1}^{\mathrm{tens}}}$; a slope of $1$ would indicate
    the error tracks the bound exactly, a slope $>1$ that the bound is
    conservative and the actual within-Galerkin decay is faster.
    \textit{Panel~(a):} prices $\pi_0^d$ vs.\ rank $d$ together with
    the rank-$12$ benchmark price $\pi_0^{\mathrm{ref}}$ (dashed red).
    \textit{Panel~(b):} absolute error $|\pi_0^d-\pi_0^{\mathrm{ref}}|$
    (circles) and the fitted upper-bound rate
    $C\,\|\bP_d^\perp\|_{\mathcal{L}(\mathcal{V},\mathcal{H})}
    \leq C/\sqrt{\lambda_{d+1}^{\mathrm{tens}}}$ (dashed), where
    $\lambda_{d+1}^{\mathrm{tens}}$ is the corresponding discrete Galerkin
    eigenvalue and the proportionality constant $C$ is \emph{fitted from
    the computed finite-rank errors} (median ratio of error to rate over
    the noise-free range), \emph{not} an independently derived theorem
    constant. The dashed line is meant to illustrate the \emph{shape} and
    conservativeness of the bound, not as an independent validation of
    the constant.
    \textit{Panel~(c):} convergence constant
    $|\pi_0^d-\pi_0^{\mathrm{ref}}|\cdot\sqrt{\lambda_{d+1}^{\mathrm{tens}}}$;
    a decreasing profile indicates that, in this benchmark, the observed
    error decays faster than the upper-bound guide
    $1/\sqrt{\lambda_{d+1}^{\mathrm{tens}}}$. Noise-dominated points
    (error $<10^{-8}\,\pi_0^{\mathrm{ref}}$) are excluded from
    panels~(b) and~(c).
    Parameters: $\alpha=0.1$, $\beta_L=1.5$, $b_{\mathrm{phys}}=0$, $t=1$,
    $[\tau_1,\tau_2]=[2,5]$, $K=1$.%
  }
  \label{fig:convergence}
\end{figure}

\begin{figure}[ht]
  \centering
  \includegraphics[width=\textwidth]{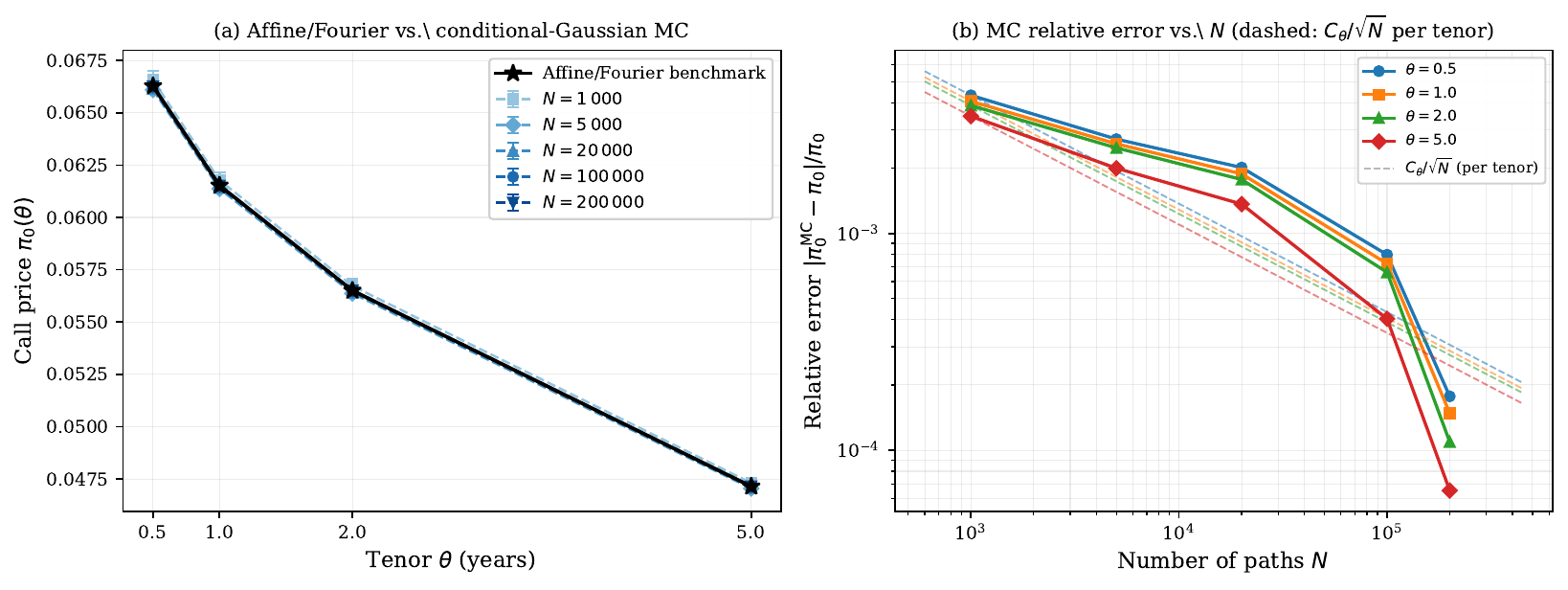}
  \caption{%
    Affine-transform / conditional-Gaussian Monte Carlo comparison
    for the diagonal heat-modulated BNS model. The displayed prices
    $\pi_0(\theta)$ are evaluated using the per-tenor evaluation loadings
    inside the rank-$n=6$ Galerkin discretisation; this is a finite-rank /
    diagonal diagnostic in which the Dirac-type evaluation at maturity
    $\theta$ is represented by the finite Laguerre basis. We do not claim
    that point evaluation $f\mapsto f(\theta)$ is a bounded linear functional
    on the abstract pivot space $H=L^2$; the diagnostic here is meaningful
    because, for the finite-rank model, point evaluation reduces to a bounded
    linear combination of the basis values.
    \textit{Panel~(a):} call prices $\pi_0(\theta)$ via Fourier inversion
    of the affine transform (stars) vs.\ conditional-Gaussian Monte Carlo
    for $N\in\{1{,}000;\,5{,}000;\,20{,}000;\,100{,}000;\,200{,}000\}$ paths, with
    $\pm1$ standard-error bars.
    \textit{Panel~(b):} relative error $|\pi_0^{\mathrm{MC}}-\pi_0|/\pi_0$
    vs.\ $N$ on a log-log scale, with a per-tenor $C_\theta/\sqrt{N}$
    reference line (dashed, same colour as the data); each curve is anchored
    at its own $N=1{,}000$ error so that $1/\sqrt{N}$ convergence can be
    assessed independently of the tenor's price level.
    Parameters: $\alpha=0.1$, $\beta_L=1.5$, $b_{\mathrm{phys}}=0$, $n=6$, $t=0.5$, $K=1$,
    tenors $\theta\in\{0.5,1,2,5\}$,
    $N\in\{1{,}000;\,5{,}000;\,20{,}000;\,100{,}000;\,200{,}000\}$.%
  }
  \label{fig:mc}
\end{figure}

\section{Proofs}\label{sec:proof-theor-eqrefthm}

In this section, we provide the proofs of our main results. The section is
structured as follows: In Section~\ref{sec:mild-solut-gener}, we present the proof of
Proposition~\ref{prop:existence-mild-Riccati}, which establishes the existence
and convergence of the spectral Galerkin approximation. In
Section~\ref{sec:finite-rank-convergence}, we then prove
Proposition~\ref{prop:embedding-affine-main}, concerning the existence of affine
finite-rank operator-valued processes, and Theorem~\ref{thm:main-convergence},
which demonstrates the convergence of the finite-rank affine process to the
class of irregular affine processes, with values in the cone of positive
self-adjoint Hilbert--Schmidt operators and, under stronger hypotheses, in the
trace class. In Section~\ref{sec:proof-section-2}, we
provide the proofs related to the joint model. Specifically, we prove
Theorem~\ref{thm:heat-affine-model}, which establishes the affine transform
formula for the $H$-valued heat-modulated stochastic covariance model under the
explicit stochastic-integrability assumption from
Assumption~\ref{assump:H-valued-joint}, and
Theorem~\ref{thm:finite-dim-approx-ScoV}, which concerns the finite-rank
approximation of the corresponding affine transforms.

Throughout the proof section we use the regular-case arguments from
\cite{karbach2023finiterank} as a template whenever the dynamics are
finite-dimensional or governed by bounded projected operators. The genuinely
new ingredients in the present irregular setting are the construction of the
mild Riccati solution for the unbounded Lyapunov operator $L$, the smoothing
estimate that yields compact containment through the space $\cV$, and the
trace-class regularization induced by the Lyapunov semigroup associated with
the Laplacian-type generator.

\subsection{Galerkin approximation of the generalized mild Riccati
  equations}\label{sec:mild-solut-gener}

Let $(b,\mathbf{B},m,\mu)$ be an admissible parameter set as in
Definition~\ref{def:admissible-irregular}. Recall the operator $L$ from
\cref{eq:Lyapunov} and the function $\hat{R}\colon\cHplus\to \cH$
from~\eqref{eq:R-hat}. We denote the right-hand side function of the mild
generalized Riccati equations~\eqref{eq:Riccati-psi-mild} by $R$, i.e. we define
it as
\begin{align*}
  R(u)\df L(u)+\hat{R}(u),\quad u\in\dom(L).
\end{align*}
In the spirit of~\cite{Mar76}, and going back to the classical study of
Riccati equations associated with unbounded operators in~\cite{Tem69}, we
interpret~\eqref{eq:Riccati-psi-mild} as semi-linear differential equation on
$\cHplus$, with unbounded linear operator $L$ and non-linear continuous
perturbation $\hat{R}$. Note that $\hat{R}$ is even locally Lipschitz continuous,
see~\cite[Remark 3.4]{CKK22a}. In the following lemma, we show that the
semigroup $(\mathcal{T}(t))_{t\geq 0}$, generated by $(L,\dom(L))$, is a
semigroup of \emph{positive operators} on $\cH$ with respect to the cone
$\cHplus$: 
\begin{lemma}\label{lem:prop-semigroup}
  The operator-semigroup $(\mathcal{T}(t))_{t\geq 0}$ is self-adjoint and
  strongly continuous. Moreover,
  $\mathcal{T}(t)(\cHplus)\subseteq \cHplus$ for all $t\geq 0$.
\end{lemma}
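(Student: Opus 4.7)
The plan is to give an explicit representation of $\cT(t)$ as operator conjugation by the semigroup generated by $B$ on $H$, and then read off the three properties from this representation. Since the assumption in~\cref{item:linear-operator-unbounded}(a) provides a complete ONB $(e_i)_{i\in\MN}$ of eigenvectors of $B$ with real non-positive eigenvalues, the (closed) operator $B$ is self-adjoint on $H$ and generates a strongly continuous self-adjoint contraction semigroup $(S(t))_{t\geq 0}$ on $H$ with $S(t)e_i=\E^{-\lambda_i t}e_i$ and $S(t)^{*}=S(t)$. I will then define a candidate operator $T(t)\colon\cH\to\cH$ by
\begin{align*}
T(t)(x)\df S(t)\,x\,S(t)^{*},\qquad x\in\cH,
\end{align*}
which is well-defined on $\cH=\cL_2(H,H)$ with $\norm{T(t)(x)}\leq \norm{S(t)}_{\cL(H)}^{2}\norm{x}\leq \norm{x}$, so $T(t)\in\cL(\cH)$ with $\norm{T(t)}_{\cL(\cH)}\leq 1$.

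Next, I would show $T(t)=\cT(t)$ via the ONB $\set{\be_{i,j}\colon i\leq j\in\MN}$ of $\cH$. A direct computation using the tensor-product action $S(t)(u\otimes v)S(t)^{*}=(S(t)u)\otimes(S(t)v)$ together with the definition of $\be_{i,j}$ gives
\begin{align*}
T(t)(\be_{i,j})=\E^{-(\lambda_{i}+\lambda_{j})t}\be_{i,j}=\E^{-\lambda_{i,j}t}\be_{i,j},
\end{align*}
which matches the spectral action of the semigroup generated by $L$ on this basis, in view of~\eqref{eq:Lyapunov-eigenvalues}. Since $(T(t))_{t\geq 0}$ is a $C_0$-semigroup on $\cH$ (semigroup property is clear from composition; strong continuity follows from the action on a dense subspace and uniform boundedness) whose infinitesimal generator coincides with $L$ on the dense core $\lin\set{\be_{i,j}}$, uniqueness of the semigroup associated with a generator yields $T(t)=\cT(t)$ on $\cH$ for every $t\geq 0$.

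From the explicit representation $\cT(t)(x)=S(t)xS(t)^{*}$, the three required properties follow almost immediately. Strong continuity has already been obtained in the identification step. For self-adjointness, for any $x,y\in\cH$, cyclicity of the trace and $S(t)^{*}=S(t)$ give
\begin{align*}
\langle \cT(t)(x),y\rangle=\Tr(y^{*}S(t)xS(t)^{*})=\Tr(S(t)^{*}y^{*}S(t)x)=\langle x,S(t)yS(t)^{*}\rangle=\langle x,\cT(t)(y)\rangle.
\end{align*}
For positivity, if $x\in\cHplus$ and $h\in H$, then
\begin{align*}
\langle\cT(t)(x)h,h\rangle_{H}=\langle S(t)xS(t)^{*}h,h\rangle_{H}=\langle x(S(t)^{*}h),S(t)^{*}h\rangle_{H}\geq 0,
\end{align*}
so $\cT(t)(x)\in\cHplus$ for every $t\geq 0$.

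The only delicate step is the identification $T(t)=\cT(t)$: one needs to argue that the diagonal action of $T(t)$ on $\lin\set{\be_{i,j}}$ extends uniquely to the semigroup generated by $L$, which requires that $\lin\set{\be_{i,j}}$ is a core for $(L,\dom(L))$. I would address this either by verifying the Hille--Yosida conditions directly from the spectral decomposition, or by invoking the standard fact that the span of eigenvectors of a self-adjoint (or normal) operator with spectrum in $(-\infty,0]$ is a core for the operator. Once this is settled, the rest of the argument reduces to bookkeeping.
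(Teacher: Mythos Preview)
Your argument is correct and more explicit than the paper's. The paper's proof is quite terse: it asserts that~\cref{item:linear-operator-unbounded} implies $L$ generates a strongly continuous self-adjoint semigroup, and then invokes \cite[Theorem 2.1]{Ros91} on the Lyapunov abstract Cauchy problem to obtain that the unique solution $\psi_u(t)=\cT(t)u$ stays in $\cHplus$ whenever $u\in\cHplus$. You instead derive the explicit conjugation formula $\cT(t)(x)=\E^{tB}x\,\E^{tB^{*}}$ by matching the spectral action on the basis $\set{\be_{i,j}}$, and then read off self-adjointness and positivity directly from this representation. Your route is more self-contained and in fact recovers a formula the paper later uses without proof in the heat-modulated BNS example (where it writes $\cT(t)u=\E^{t\Delta}u\,\E^{t\Delta}$). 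The only cost is the core argument you flag, which is standard here since $L$ is diagonal on the ONB $\set{\be_{i,j}}$ with eigenvalues bounded above; the paper avoids this by outsourcing to \cite{Ros91}. One cosmetic point: you reuse the symbol $S(t)$ for the semigroup generated by $B$, which in the paper already denotes the left-shift semigroup on $V$; renaming it (e.g.\ to $\E^{tB}$) would avoid a clash.
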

\begin{proof}
By Definition~\ref{def:admissible-irregular}(iv)(a), $B$ is
self-adjoint and non-positive on $H$ with spectral domain $\dom(B)$, and
$T(t)=\E^{tB}$ is the associated self-adjoint $C_0$-contraction semigroup;
in particular $T^{*}(t)=T(t)$ for every $t\ge 0$. For $t\geq 0$ the
operator $\cT(t):\cH\to\cH$ satisfies
\begin{equation}\label{eq:Lyapunov-semigroup}
  \cT(t)x = T(t)xT^{*}(t), \qquad x\in\cH,
\end{equation}
Since $T(t)\in\cL(H)$ and $x\in\cL_2(H)$, we have $\cT(t)x\in\cL_2(H)$ and
$$
  \|\cT(t)x\|_{\cH}=\|T(t)xT^{*}(t)\|_{\cH}
  \leq \|T(t)\|_{\cL(H)}^2\,\|x\|,
$$
so each $\cT(t)$ is bounded on $\cH$. Moreover, \eqref{eq:Lyapunov-semigroup}
immediately yields the semigroup property $\cT(t+s)=\cT(t)\cT(s)$ and $\cT(0)=I$.
Strong continuity follows from the strong continuity of $(T(t))_{t\geq 0}$: for any $x\in\cH$ and $t\geq 0$,
\begin{align*}
  \|\cT(t)x - x\|_{\cH}
  = \|T(t)xT^{*}(t) - x\|_{\cH}.
\end{align*}
Choose $\varepsilon>0$. Since finite-rank operators are dense in $\cH$, there
exists a finite-rank operator $x^{(N)}\in\cH$ such that
$\|x-x^{(N)}\|_{\cH}<\varepsilon$. Using that $(T(t))_{t\geq 0}$ is a
contraction semigroup,
\begin{align*}
  \|\cT(t)x-x\|_{\cH}
  &\leq \|\cT(t)(x-x^{(N)})\|_{\cH}
    + \|\cT(t)x^{(N)}-x^{(N)}\|_{\cH}
    + \|x^{(N)}-x\|_{\cH}\\
  &\leq 2\varepsilon + \|\cT(t)x^{(N)}-x^{(N)}\|_{\cH}.
\end{align*}
Write the finite-rank operator as
$x^{(N)}=\sum_{k=1}^{N} u_k\otimes v_k$, where
$(u\otimes v)h=\langle h,v\rangle_H u$. Then
\begin{align*}
  \cT(t)x^{(N)}-x^{(N)}
  = \sum_{k=1}^{N}\bigl(T(t)u_k\otimes T(t)v_k-u_k\otimes v_k\bigr),
\end{align*}
and each summand converges to zero in $\cH$ because
\begin{align*}
  \|T(t)u_k\otimes T(t)v_k-u_k\otimes v_k\|_{\cH}
  &\leq \|(T(t)u_k-u_k)\otimes T(t)v_k\|_{\cH}
    + \|u_k\otimes (T(t)v_k-v_k)\|_{\cH}\\
  &\leq \|T(t)u_k-u_k\|_H\,\|v_k\|_H
    + \|u_k\|_H\,\|T(t)v_k-v_k\|_H,
\end{align*}
which tends to zero as $t\to 0^+$ by strong continuity of $T(t)$ on
$H$. Since the sum is finite, $\|\cT(t)x^{(N)}-x^{(N)}\|_{\cH}\to 0$ as
$t\to 0^+$. Letting first $t\to 0^+$ and then $\varepsilon\downarrow 0$ proves
strong continuity of $(\cT(t))_{t\geq 0}$ on $\cH$.
Next, $\cT(t)$ is self-adjoint on $\cH$: for $x,y\in\cH$, using the
trace cyclicity $\mathrm{Tr}(ABC)=\mathrm{Tr}(BCA)$ together with $T^{*}(t)=T(t)$
from the first paragraph of this proof,
$$
  \langle \cT(t)x, y\rangle
  = \mathrm{Tr}\big((T(t)xT^{*}(t))\,y\big)
  = \mathrm{Tr}\big(x\,T^{*}(t)\,y\,T(t)\big)
  = \mathrm{Tr}\big(x\,(T(t)yT^{*}(t))\big)
  = \langle x, \cT(t)y\rangle.
$$
Finally, $\cT(t)$ is positive with respect to the cone $\cHplus$: if
$x\in\cHplus$, then for all $h\in H$, using $T^{*}(t)=T(t)$,
$$
  \langle \cT(t)x\,h, h\rangle_H
  = \langle T(t)\,x\,T^{*}(t)h,\, h\rangle_H
  = \langle x\,T^{*}(t)h,\, T^{*}(t)h\rangle_H
  = \langle x\,T(t)h, T(t)h\rangle_H \ge 0,
$$
since $x$ is positive self-adjoint. Hence $\cT(t)x\in\cHplus$ and therefore
$\cT(t)(\cHplus)\subseteq \cHplus$ for all $t\ge 0$.
It remains to identify the generator. This follows spectrally, avoiding any
formal differentiation of an unbounded product. If
$x=\sum_{i\le j}x_{i,j}\be_{i,j}$, then
\[
  \cT(t)x=\sum_{i\le j}\E^{-\lambda_{i,j}t}x_{i,j}\be_{i,j}.
\]
Hence the generator of $(\cT(t))_{t\ge0}$ is the spectral operator with domain
\[
  \left\{x\in\cH:
  \sum_{i\le j}\lambda_{i,j}^{2}|x_{i,j}|^{2}<\infty\right\},
\]
acting by
$x\mapsto-\sum_{i\le j}\lambda_{i,j}x_{i,j}\be_{i,j}$, which is precisely
$(L,\dom(L))$ from~\eqref{eq:Lyapunov}.
\end{proof}

\begin{lemma}\label{lem:Lyapunov-semigroup-smoothing}
  The Lyapunov semigroup $(\cT(t))_{t\geq 0}$ regularizes $\cH$ into
  $\cV$. More precisely, equipping
  \[
    \cV=\cL_2(V^{*},H)\cap\cL_2(H,V)
  \]
  with the natural Hilbert norm
  \[
    \|x\|_{\cV}^{2}
    \df
    \|x\|_{\cL_{2}(H,V)}^{2}
    +
    \|x\|_{\cL_{2}(V^{*},H)}^{2},
  \]
  there exists a constant $C_{\cV}>0$ such that
  \begin{align*}
    \|\cT(t)\|_{\cL(\cH,\cV)}
    \leq C_{\cV}(1+t^{-1/2}),\qquad t>0.
  \end{align*}
  In particular, for every $0<\delta\leq T<\infty$ we have
  \begin{align*}
    \sup_{t\in[\delta,T]}\|\cT(t)\|_{\cL(\cH,\cV)}<\infty,
  \end{align*}
  and the map $t\mapsto \cT(t)x$ is continuous from $(0,\infty)$ into
  $\cV$ for every $x\in\cH$.
\end{lemma}
\begin{proof}
  Let $(\be_{i,j})_{i\leq j}$ be the eigenbasis from
  \eqref{eq:Lyapunov-eigenvalues}. By the spectral Hilbert-scale compatibility
  in Definition~\ref{def:admissible-irregular}(iv)(c), every eigenvector $e_i$
  of $B$ belongs to $V$, and the norm equivalence gives
  \begin{align*}
    \|e_i\|_V^2\le C(1+\lambda_i).
  \end{align*}
  For
  $i<j$, the symmetrized rank-one operator
  $\be_{i,j}=\frac{1}{\sqrt 2}(e_i\otimes e_j+e_j\otimes e_i)$ therefore
  satisfies
  \begin{align*}
    \|\be_{i,j}\|_{\cL_2(H,V)}^{2}
    =\frac12\bigl(\|e_i\|_V^2+\|e_j\|_V^2\bigr)
    \leq C_{1}(1+\lambda_i+\lambda_j),
  \end{align*}
  The same spectral weight controls the $V^*\to H$ realization. Indeed, for
  $\eta\in V^*$ and the rank-one convention
  $(u\otimes v)\eta=\langle u,\eta\rangle_{V,V^*}v$,
  \[
    \|e_i\otimes e_j\|_{\cL(V^*,H)}
    \le \|e_i\|_V\|e_j\|_H=\|e_i\|_V .
  \]
  Equivalently, using the Hilbert adjoint characterization from
  Remark~\ref{rem:lyapunov-regularity}, the Hilbert--Schmidt norm of
  $e_i\otimes e_j:V^*\to H$ agrees with the
  $\cL_2(H,V)$ norm of its adjoint $e_j\otimes e_i:H\to V$. Hence
  \begin{align*}
    \|\be_{i,j}\|_{\cL_2(V^{*},H)}^{2}
    \le C_{2}\bigl(1+\lambda_i+\lambda_j\bigr).
  \end{align*}
  For the diagonal case $i=j$, the operator $\be_{i,i}=e_i\otimes e_i$
  is the unsymmetrised rank-one tensor and satisfies the same bounds:
  $\|\be_{i,i}\|_{\cL_2(H,V)}^{2}=\|e_i\|_V^{2}\le C(1+\lambda_i)
  \le C(1+\lambda_{i,i})$ (using $\lambda_{i,i}=2\lambda_i$),
  and, by the same adjoint argument,
  $\|\be_{i,i}\|_{\cL_2(V^{*},H)}^{2}\le \|e_i\|_V^2
  \le C(1+\lambda_{i,i})$.
  Thus, with $\lambda_{i,j}=\lambda_i+\lambda_j$, there exists a
  constant $C_{3}>0$ such that
  \begin{align}\label{eq:cV-weight-bound}
    \|\be_{i,j}\|_{\cV}^{2}\leq C_{3}(1+\lambda_{i,j}),
    \qquad i\leq j,
  \end{align}
  covering both the off-diagonal ($i<j$) and diagonal ($i=j$) cases.

  Here, and in the spectral estimates of
  Lemma~\ref{lem:irregular-smoothing-cV}, we use the equivalent Hilbert-scale
  norm on $\cV$ for which the eigenbasis $\{\be_{i,j}\}_{i\le j}$ is orthogonal.
  On the spectral core set
  \[
    \|x\|_{\cV,B}^{2}
    \df \sum_{i\le j}(1+\lambda_{i,j})|\langle x,\be_{i,j}\rangle|^{2}.
  \]
  Equivalently, this is the intersection norm built from the coercive
  form $[u,v]_B\df\langle -Bu,v\rangle_{V^*,V}+\lambda\langle u,v\rangle_H$ of
  Definition~\ref{def:admissible-irregular}(iv)(c). Coercivity yields constants
  $0<c\le C$ such that
  \[
    c\|x\|_{\cV}\le \|x\|_{\cV,B}\le C\|x\|_{\cV},
    \qquad x\in\cV\cap\cH.
  \]
  Moreover, $\bP_d$ is contractive in $\|\cdot\|_{\cV,B}$ and
  $\cT_d(t)\df\bP_d\cT(t)\bP_d$ is a contraction semigroup in this norm. The
  estimates below are taken in $\|\cdot\|_{\cV,B}$ and transferred back to the
  original $\cV$ norm by this equivalence; constants are absorbed into
  $C_{\cV}$ and $C_3$.

  Now write $x=\sum_{i\leq j}x_{i,j}\be_{i,j}\in\cH$. Since
  $\cT(t)\be_{i,j}=e^{-\lambda_{i,j}t}\be_{i,j}$, we obtain
  \begin{align*}
    \|\cT(t)x\|_{\cV}^{2}
    \leq
    C_{3}\sum_{i\leq j}(1+\lambda_{i,j})e^{-2\lambda_{i,j}t}|x_{i,j}|^{2}
    \leq
    C_{3}\sup_{r\geq 0}(1+r)e^{-2rt}\,\|x\|_{\cH}^{2}.
  \end{align*}
  Since $\sup_{r\geq 0}(1+r)e^{-2rt}\leq C(1+t^{-1})$, this yields
  \begin{align*}
    \|\cT(t)x\|_{\cV}\leq C_{\cV}(1+t^{-1/2})\|x\|_{\cH},\qquad t>0.
  \end{align*}
The continuity of $t\mapsto \cT(t)x$ as a $\cV$-valued map on
  $(0,\infty)$ follows from the dominated convergence theorem applied to the
  coefficient expansion above.
\end{proof}

As before, let $(e_{i})_{i\in\MN}$ be the orthonormal basis of $H$ consisting of
the eigenvectors of $B$ and for every $d\in\MN$, denote by $H_{d}$ the
$d$-dimensional subspace of $H$ spanned by the first $d$ basis vectors,
i.e. $H_{d}\df \lin\set{e_{i}\colon i=1,\ldots,d}$ and let
$\cH_{d}\df \lin\set{\be_{i,j}\colon\, 1\leq i\leq j\leq d\,}$.  To prove the
existence and convergence of the $d^{\mathrm{th}}$-Galerkin approximations
$\phi_{d}(\cdot,\bP_{d}(u))$ and $\psi_{d}(\cdot,\bP_{d}(u))$, we collect some
important properties of the operators $L$ and $B$, and the function $\hat{R}$ in
the following lemma:

\begin{lemma}\label{lem:irregular-properties}
  For every $d\in\MN$ the following holds: 
  \begin{lemenum}
  \item $\cH_{d}\subseteq \dom(L)$ for all $d\in\MN$.
  \item $B\rvert_{H_{d}}$ is a bounded operator.
  \end{lemenum}
\end{lemma}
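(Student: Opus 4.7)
The plan is to first pin down the meaning of $\mathbf{B}^{*}$ and then exhibit the $\be_{i,j}$ as eigenvectors of the relevant unbounded operator. Since $\mathbf{B}(x) = Bx + xB^{*} + \Gamma(x) = L(x) + \Gamma(x)$, and the preceding \cref{lem:prop-semigroup} shows that $(L,\dom(L))$ is self-adjoint on $\cH$, while part (b) of \cref{item:linear-operator-unbounded} gives $\Gamma \in \cL(\cH)$, the natural reading of $\mathbf{B}^{*}$ is the Hilbert-space adjoint of $\mathbf{B}$ considered as an unbounded operator on $\cH$. A bounded perturbation by $\Gamma$ does not alter the domain of the adjoint, so $\dom(\mathbf{B}^{*}) = \dom(L^{*}) = \dom(L)$, and the proof of (i) reduces to showing the inclusion $\cH_{d} \subseteq \dom(L)$.

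For this reduction I would invoke the identity~\eqref{eq:Lyapunov-eigenvalues}, which exhibits each basis element $\be_{i,j}$ with $i \leq j$ as an eigenvector of $L$ with eigenvalue $-\lambda_{i,j}$. Eigenvectors of a closed operator necessarily lie in its domain, hence $\be_{i,j} \in \dom(L)$ for every pair with $1 \leq i \leq j \leq d$. Since $\cH_{d}$ is the finite linear span of these vectors and $\dom(L)$ is a linear subspace of $\cH$, it follows that $\cH_{d} \subseteq \dom(L) = \dom(\mathbf{B}^{*})$, which proves (i).

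For part (ii) I would appeal directly to the spectral description of $B$ on its eigenbasis from part (a) of \cref{item:linear-operator-unbounded}: each $e_{i}$ lies in $\dom(B)$ with $B e_{i} = -\lambda_{i} e_{i} \in H_{d}$ for $1 \leq i \leq d$, so $H_{d} \subseteq \dom(B)$ and $B(H_{d}) \subseteq H_{d}$. As $H_{d}$ is finite-dimensional, the restriction $B\rvert_{H_{d}}$ is automatically continuous, with operator norm bounded by $\lambda_{d}$. No step of the argument is deep; the only point requiring any care is the identification of $\mathbf{B}^{*}$ together with its domain, which is settled by combining the self-adjointness of $L$ supplied by \cref{lem:prop-semigroup} with the boundedness of $\Gamma$ on $\cH$.
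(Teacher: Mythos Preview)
Your proposal is correct and follows essentially the same approach as the paper: for part (i) you reduce to showing $\cH_{d}\subseteq\dom(L)$ via the eigenvector identity~\eqref{eq:Lyapunov-eigenvalues} and then pass to $\dom(\mathbf{B}^{*})$, and for part (ii) you use that $H_{d}$ is the span of eigenvectors of $B$ so that $B(H_{d})\subseteq H_{d}$ and finite-dimensionality gives boundedness. If anything, your argument is slightly more explicit than the paper's, since you spell out why the bounded perturbation $\Gamma$ does not change the domain of the adjoint, whereas the paper simply asserts $\cH_{d}\subseteq\dom(L^{*})=\dom(L)$ ``and hence also $\cH_{d}\subseteq\dom(\mathbf{B}^{*})$''.
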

\begin{proof}
  The family $(\be_{i,j})_{i\le j}$ is the complete orthonormal basis of
  $\cH$ consisting of eigenvectors of $L$, with $L\be_{i,j}=-\lambda_{i,j}\be_{i,j}$
  (see~\eqref{eq:Lyapunov-eigenvalues}). Since
  $\cH_{d}=\lin\set{\be_{i,j}\colon 1\le i\le j\le d}$ is a finite linear
  combination of these eigenvectors, $\cH_{d}\subseteq\dom(L)$ for every
  $d\in\MN$. (We work with the spectral operator $L$ rather than with an
  unbounded Hilbert-space realization of the form-level map $\mathbf B$; only
  $\cH_d\subseteq\dom(L)$ and the boundedness of the finite-rank restriction are
  used below.)
   
   The restriction of operator $B$ on the finite-dimensional subspace $H_{d}$,
   i.e. $B\rvert_{H_{d}}$, is bounded
   as $H_{d}=\lin\set{e_{i}\colon i=1,\ldots,d}$ for
   $\set{e_{i}\colon i=1,\ldots,d}$ the set of eigenvectors of $B$ and hence
   $\mathrm{ran}(B\rvert_{H_{d}})\subseteq H_{d}$.
 \end{proof}

To ensure that the solution $\psi(t, u)$ remains within the cone $\cHplus$ for
all $t\geq 0$ and $u\in\cHplus$, we introduce the concept of
\emph{quasi-monotonicity}: 

\begin{lemma}\label{lem:quasi-cont}
   For any $k \in \mathbb{N}$, we define $\dmuk \coloneqq \one_{\{\|\xi\| > 1/k\}} \, \dmu$ and, for each $u \in \dom(L)$, we define the map
   \begin{align*}
     \hat{R}^{(k)}(u) \coloneqq \Gamma^{*}(u) - \int_{\cHpluso} \left( \E^{-\langle \xi, u \rangle} - 1 + \langle \chi(\xi), u \rangle \right) \frac{\dmuk}{\|\xi\|^2}.  
   \end{align*}
   Then, for every $k \in \mathbb{N}$, the map $\hat{R}^{(k)}$ is Lipschitz
   continuous and quasi-monotone with respect to $\cHplus$. Since
   $\hat R^{(k)}(0)=0$, its Lipschitz continuity also gives
   $\|\hat R^{(k)}(u)\|\le K_k\|u\|$ for some constant $K_k<\infty$ and all
   $u\in\cHplus$. That is, for all
   $v_1, v_2 \in \cH$ satisfying $v_1 \leq_{\cHplus} v_2$ and $u \in \cHplus$
   such that $ \langle v_2 - v_1, u \rangle = 0$, we have
   \begin{align*}
   \langle \hat{R}^{(k)}(v_2) - \hat{R}^{(k)}(v_1), u \rangle \geq 0.  
   \end{align*}
\end{lemma}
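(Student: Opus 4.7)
The plan is to treat the Lipschitz and quasi-monotonicity properties separately. Lipschitz continuity is a routine estimate enabled by the cutoff $\norm{\xi} > 1/k$, which renders $1/\norm{\xi}^{2}$ bounded; quasi-monotonicity is more delicate and reduces, after a suitable algebraic decomposition of the integrand, to the admissibility condition in \cref{item:linear-operator-unbounded}(b).

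For Lipschitz continuity on $\cHplus$, I would split $\hat{R}^{(k)}(u_{1}) - \hat{R}^{(k)}(u_{2})$ into the term $\Gamma^{*}(u_{1} - u_{2})$, controlled by $\norm{\Gamma}\norm{u_{1} - u_{2}}$ since $\Gamma \in \cL(\cH)$, and the integral against $\mu^{(k)}$. For the latter, the elementary bound $|\E^{-a} - \E^{-b}| \leq |a-b|$ valid for $a, b \geq 0$, combined with $\norm{\chi(\xi)} \leq \norm{\xi}$, yields a pointwise estimate of $2\norm{\xi}\norm{u_{1} - u_{2}}$ for the scalar integrand. Dividing by $\norm{\xi}^{2}$ and using $1/\norm{\xi} < k$ on the support of $\mu^{(k)}$ bounds the integrand uniformly; pairing against an arbitrary $h \in \cH$ and invoking the finite total variation of $\mu^{(k)}$ on $\{\norm{\xi} > 1/k\}$ (a consequence of admissibility and the truncation) yields a global Lipschitz constant depending on $k$.

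For quasi-monotonicity, set $y \df v_{2} - v_{1} \in \cHplus$, which by hypothesis satisfies $\langle y, u\rangle = 0$. The key algebraic step is the factorization $\E^{-\langle \xi, v_{2}\rangle} - \E^{-\langle \xi, v_{1}\rangle} = -\E^{-\langle \xi, v_{1}\rangle}\bigl(1 - \E^{-\langle \xi, y\rangle}\bigr)$, which splits $\langle \hat{R}^{(k)}(v_{2}) - \hat{R}^{(k)}(v_{1}), u\rangle$ into the sum of
\[
\mathrm{(A)} \;=\; \langle \Gamma^{*}(y), u\rangle \,-\, \int_{\cHpluso} \langle \chi(\xi), y\rangle \, \tfrac{\langle \mu^{(k)}(\D\xi), u\rangle}{\norm{\xi}^{2}}
\]
and $\mathrm{(B)} = \int_{\cHpluso} \E^{-\langle \xi, v_{1}\rangle}\bigl(1 - \E^{-\langle \xi, y\rangle}\bigr) \langle \mu^{(k)}(\D\xi), u\rangle/\norm{\xi}^{2}$. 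Term (B) is a product of four pointwise nonnegative factors (noting $\langle \mu^{(k)}(\D\xi), u\rangle \geq 0$ since $\mu^{(k)}$ is $\cHplus$-valued and $u \in \cHplus$), hence nonnegative. For term (A), I would use $\langle \Gamma^{*}(y), u\rangle = \langle \Gamma(u), y\rangle$ and invoke the admissibility condition in \cref{item:linear-operator-unbounded}(b) applied to the orthogonal pair $u, y \in \cHplus$, which gives $\langle \Gamma(u), y\rangle \geq \int_{\cHpluso} \langle \chi(\xi), y\rangle \langle \mu(\D\xi), u\rangle/\norm{\xi}^{2}$. Since the integrand is nonnegative and $\mu^{(k)} \leq \mu$ pointwise, the same inequality persists with $\mu^{(k)}$ on the right-hand side, so term (A) is nonnegative as well. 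Summing gives quasi-monotonicity.

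The main technical obstacle I anticipate lies in the careful handling of the $\cHplus$-valued measure $\mu^{(k)}$: establishing its finite total variation on $\{\norm{\xi} > 1/k\}$, needed both for the well-definedness of the Bochner integral defining $\hat{R}^{(k)}$ and for a finite Lipschitz constant, relies on the admissibility of $\mu$ and mirrors the second-moment truncation argument made explicit for $m$ in \cref{item:m-2moment}.
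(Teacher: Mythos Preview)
Your proposal is correct. The paper's own proof consists entirely of citations to \cite[Lemmas 3.2 and 3.3]{CKK22a} for quasi-monotonicity and Lipschitz continuity respectively, so you have supplied the explicit argument the paper defers to the literature; your factorization of the exponential difference and reduction of term~(A) to the admissibility condition in \cref{item:linear-operator-unbounded}(b) (with the roles of $x$ and $u$ swapped, which is legitimate since the condition is stated for arbitrary orthogonal pairs in $\cHplus$) is precisely the mechanism underlying those cited lemmas.
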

\begin{proof}
   The Lipschitz continuity of $\hat{R}^{(k)}$ on $\cHplus$ follows from
   \cite[Lemma 3.3]{CKK22a}, and the quasi-monotonicity is established by
   \cite[Lemma 3.2]{CKK22a}. 
\end{proof}

\begin{remark}\label{rem:cone-regularity}
We record here, for use in the proof below, that the cone $\cHplus$ is \emph{regular} in $\cH$: every sequence $(x_n)_{n\in\mathbb{N}}\subset\cHplus$ that is monotone decreasing ($x_{n+1}\leq_{\cHplus} x_n$) and bounded below by $0$ converges strongly in $\cH$. This follows from the general theory of ordered Hilbert spaces: the inner product $\langle\cdot,\cdot\rangle_{\cH}$ is an order-compatible inner product on $\cH$ (i.e.\ $\langle x,y\rangle_{\cH}\geq 0$ for $x,y\in\cHplus$), and the cone $\cHplus$ is self-dual and generating. In this setting, \cite[Theorem~2.1]{Ros91} (see also \cite[Chapter~V, \S7]{Sch74}) implies that bounded monotone sequences converge strongly. In particular, any monotone decreasing sequence in $\cHplus$ that is bounded below by $0\in\cHplus$ must be norm-convergent in $\cH$.
\end{remark}

With these two lemmas in hand, we can now proceed to prove
Proposition~\ref{prop:existence-mild-Riccati}.

\begin{proof}[Proof of Proposition~\ref{prop:existence-mild-Riccati}]
  Let $d \in \mathbb{N}$, $T > 0$, and $u \in \cHplus$.
  We split the proof into a finite-dimensional and an
  infinite-dimensional part. On each invariant subspace $\cH_d$, the
  restricted operator $L_d=\bP_dL\bP_d$ is bounded, so the Galerkin system can
  be treated by the same argument as in the regular case; compare
  \cite[Proposition~2.7]{karbach2023finiterank}. The genuinely new part in the
  present paper is the construction of the mild solution to
  \eqref{eq:Riccati-psi-mild} for the unbounded Lyapunov operator $L$ and the
  derivation of a convergence estimate that is uniform in $d$.
  The uniqueness of the
  solutions $\phi(\cdot, u)$ and $\psi(\cdot, u)$ to equations
  \eqref{eq:Riccati-phi-mild} and \eqref{eq:Riccati-psi-mild}, as well as the
  solutions $\phi_{d}(\cdot,\mathbf{P}_{d}(u))$ and
  $\psi_{d}(\cdot,\mathbf{P}_{d}(u))$ to equations
  \eqref{eq:mild-Riccati-Galerkin-phi} and \eqref{eq:mild-Riccati-Galerkin-psi},
  follows from the fact that the mappings $F$, $\hat{R}$, $F_{d}$, and
  $\hat{R}_{d}$ are Lipschitz continuous on bounded sets, which follows from~\cite[Lemma 4.1]{karbach2023finiterank}.

  Additionally, note that due to the continuity of $F$ and $F^{d}$, given
  solutions $\psi(\cdot, u)$ and $\psi_{d}(\cdot, \mathbf{P}_{d}(u))$ of
  \eqref{eq:Riccati-psi-mild} and \eqref{eq:mild-Riccati-Galerkin-psi},
  respectively, we can define:
   \begin{align*}
      \phi(t, u) = \int_0^t F(\psi(s, u)) \, \D s, \quad t \geq 0,
   \end{align*}
   and
   \begin{align*}
      \phi_{d}(t, \mathbf{P}_{d}(u)) = \int_0^t F^{d}(\psi^{d}(s, \mathbf{P}_{d}(u))) \, \D s, \quad t \geq 0,
   \end{align*}
   which are the unique continuously differentiable solutions to
   \eqref{eq:Riccati-phi-mild} and \eqref{eq:mild-Riccati-Galerkin-phi},
   respectively. Therefore, it remains to show the existence of a solution to
   equations \eqref{eq:Riccati-psi-mild} and
   \eqref{eq:mild-Riccati-Galerkin-psi}.

   By Lemma~\ref{lem:irregular-properties}, the operator $B\rvert_{H_d}$ is
   bounded, hence the Galerkin approximation of the generalized Riccati equations
along the eigenvectors are continuous in the sense that the linear part  $B\rvert_{H_d}$ is bounded. Consequently, the existence of a solution to the Galerkin
   approximation follows directly from the corresponding result in the regular
   affine case, established in \cite[Proposition
   2.7]{karbach2023finiterank}.\par{} 
   
   It remains to prove the existence of a mild solution $\psi(\cdot,u)$ to the
   semi-linear equation~\eqref{eq:Riccati-psi-mild} on $[0,T]$ with the initial
   condition $\psi(0,u)=u\in\cHplus$, satisfying the variation-of-constants
   formula~\eqref{eq:mild-solution-variation}.

   First, by Lemma~\ref{lem:prop-semigroup}, the operator semigroup
   $(\cT(t))_{t \geq 0}$ is strongly continuous with generator $(L, \dom(L))$,
   and satisfies $\cT(t)(\cHplus)\subseteq \cHplus$ for all $t \geq
   0$. Moreover, by Lemma~\ref{lem:quasi-cont}, the function
   $\hat{R}^{(k)}$ is locally Lipschitz continuous on $\cHplus$ and
   quasi-monotone with respect to $\cHplus$. The quasi-monotonicity of
   $\hat{R}^{(k)}$ is equivalent to the condition
   $\lim_{h \to 0+} \inf_{y \in \cHplus} \frac{\norm{x+h\hat{R}^{(k)}(x)-y}}{h}
   = 0$, as shown in \cite[Lemma 4.1 \& Example 4.1]{Dei77}. Therefore, we can
   apply the results of \cite[Chapter 8, Theorem 2.1 \& Remark 2.1]{Mar76},
   which ensures the existence of a $T_1 > 0$ and a mapping
   $\psi^{(k)}(\cdot,u) \colon [0,T_1) \to \cHplus$, which is the unique local
   solution to equation~\eqref{eq:Riccati-psi-mild} when replacing $\hat{R}$ with $\hat R^{(k)}$, satisfying 
   \begin{align}\label{eq:variation-of-constant-k}
  \psi^{(k)}(t,u)=\cT(t)u+\int_{0}^{t}\cT(t-s)\hat{R}^{(k)}(\psi^{(k)}(s,u))\,\D
  s,\quad t\in [0,T_{1}).   
   \end{align}

   Next, we show that this unique local solution $\psi^{(k)}(\cdot,u)$ can be
   extended to $[0,T]$ for any $T > T_1$. To do this, note that the map
   $u \mapsto \hat{R}^{(k)}(u)$ maps bounded sets of $\cHplus$ to bounded sets
   of $\cH$. By \cite[Chapter 8, Theorem 2.1]{Mar76}, it suffices to show that
   $\norm{\psi^{(k)}(t,u)} < \infty$ for every $t \in [0,T_1)$. This implies
   that the solution $\psi^{(k)}(\cdot,u)$ on $[0,T_1)$ can be extended to $[0,T]$,
   thus proving that $\psi^{(k)}(\cdot,u)$ exists on the entire interval
   $[0,T]$, as desired.

   We now check that $\norm{\psi^{(k)}(t,u)}<\infty$ for every $t\in[0,T_{1})$
   holds. Note first that by part i) of Lemma~\ref{lem:irregular-properties}, it
   holds $\be_{i,j}\in\dom(L)$ for every $i,j\in\MN$. The coordinate
   functional $t\mapsto\langle\psi^{(k)}(t,u),\be_{i,j}\rangle$ is differentiable
   on $[0,T_1)$: differentiating the variation-of-constants
   formula~\eqref{eq:variation-of-constant-k} against $\be_{i,j}\in\dom(L)$
   yields
   \[
     \langle\psi^{(k)}(t,u),\be_{i,j}\rangle
     =\E^{-\lambda_{i,j}t}\langle u,\be_{i,j}\rangle
     +\int_0^t \E^{-\lambda_{i,j}(t-s)}\langle\hat R^{(k)}(\psi^{(k)}(s,u)),\be_{i,j}\rangle\,\D s,
   \]
   so the coordinate is absolutely continuous in $t$ (the integrand
   $s\mapsto\langle\hat R^{(k)}(\psi^{(k)}(s,u)),\be_{i,j}\rangle$ is bounded
   by $\|\hat R^{(k)}\circ\psi^{(k)}\|_{C([0,T_1);\cH)}<\infty$ on $[0,T_1)$).
   Differentiating this identity directly gives the displayed ODE.
   Moreover, on $[0,T_{1})$ we have
  \begin{align*}
    \langle \frac{\partial}{\partial t} \psi^{(k)}(t,u), \be_{i,j}\rangle&=\langle
                                                                \psi^{(k)}(t,u),L(\be_{i,j})\rangle+\langle
                                                                 \hat{R}^{(k)}(\psi^{(k)}(t,u)),\be_{i,j}\rangle\\
    &=-\lambda_{i,j}\langle \psi^{(k)}(t,u),\be_{i,j}\rangle+\langle
      \hat{R}^{(k)}(\psi^{(k)}(t,u)),\be_{i,j}\rangle.
  \end{align*}
  Thus, we obtain
  \begin{align*}
    \frac{\partial}{\partial t}\langle \psi^{(k)}(t,u), \be_{i,j}\rangle^{2}&=2\langle
                                                                   \psi^{(k)}(t,u), \be_{i,j}\rangle \langle \frac{\partial}{\partial t}\psi^{(k)}(t,u),\be_{i,j}\rangle \\
                                                                            &=-2\lambda_{i,j}\langle \psi^{(k)}(t,u), \be_{i,j}\rangle^{2}\\
   &\quad + 2\langle \psi^{(k)}(t,u),
      \be_{i,j}\rangle \langle \hat{R}^{(k)}(\psi^{(k)}(t,u)),\be_{i,j}\rangle,
  \end{align*}
  and hence by the variation-of-constant formula we see that
  \begin{align*}
    \langle \psi^{(k)}(t,u), \be_{i,j}\rangle^{2}&=\E^{-2\lambda_{i,j}t}\langle u,
                                                   \be_{i,j}\rangle^{2}\\
    &\quad+2\int_{0}^{t}\E^{-2\lambda_{i,j}(t-s)}\langle \psi^{(k)}(s,u),
      \be_{i,j}\rangle \langle \hat{R}^{(k)}(\psi^{(k)}(s,u)),\be_{i,j}\rangle \D s. 
  \end{align*}
Given that 
\begin{align*}
\norm{\psi^{(k)}(t,u)}^{2} = \sum_{i,j\in\mathbb{N}, i \leq j} \langle \psi^{(k)}(t,u), \be_{i,j} \rangle^{2}  
\end{align*}
and
\begin{align*}
\langle \psi^{(k)}(t,u), \hat{R}^{(k)}(\psi^{(k)}(t,u)) \rangle = \sum_{i,j\in\mathbb{N}, i \leq j} \langle \psi^{(k)}(t,u), \be_{i,j} \rangle \langle \hat{R}^{(k)}(\psi^{(k)}(t,u)), \be_{i,j} \rangle,  
\end{align*}
it follows from the estimate $\norm{\hat{R}^{(k)}(u)} \leq K_k \norm{u}$ from
Lemma~\ref{lem:quasi-cont} that: 
\begin{align*}
\norm{\psi^{(k)}(t,u)}^{2} \leq \norm{u}^{2} + M_k \int_0^t \norm{\psi^{(k)}(s,u)}^{2} \, \D s,  
\end{align*}
for some constant $M_k > 0$ (depending on the fixed truncation level $k$). Thus,
by applying Gronwall's inequality, we conclude that for all $t \in [0, T_1)$:
\begin{align*}
  \norm{\psi^{(k)}(t,u)}^{2} \leq \norm{u}^{2}\E^{M_k t}<\infty.   
\end{align*}

Therefore, $\norm{\psi^{(k)}(t,u)} < \infty$ for every $t \in [0,T_1)$. Hence,
following \cite[Chapter 8, Proposition 4.1]{Mar76}, we conclude that there
exists a unique solution $\psi^{(k)}(\cdot,u)$ to
equation~\eqref{eq:Riccati-psi-mild} on the entire interval $[0,T]$ given
by~\eqref{eq:variation-of-constant-k}.

It follows by the same reasoning as in \cite[Proposition~3.7]{CKK22a} that
\[
  \psi^{(k+1)}(t,u)\;\le_{\cHplus}\; \psi^{(k)}(t,u),
  \qquad t\in[0,T],\ u\in\cHplus.
\]
Hence, for fixed $t\in[0,T]$ and $u\in\cHplus$, the sequence
$\bigl(\psi^{(k)}(t,u)\bigr)_{k\in\mathbb N}$ is monotone decreasing with respect
to the partial order induced by the cone $\cHplus$ and bounded from below by $0$.
Since the cone $\cHplus$ is regular, i.e.\ every monotone and order-bounded
sequence in $\cHplus$ converges strongly in $\cH$, there exists for every
$t\in[0,T]$ and $u\in\cHplus$ a limit
\[
  \psi(t,u) \df \lim_{k\to\infty} \psi^{(k)}(t,u)
  \quad \text{in } \cH.
\]
By the same reasoning as in \cite[Lemma~3.6]{CKK22a}, the truncated mappings
$\hat{R}^{(k)}$ converge to $\hat{R}$ uniformly on bounded subsets of $\cHplus$. Thus by the monotonicity of the sequence $ \psi^{(k)}(t,u)$ in
$k\in\MN$ and by inserting formula~\eqref{eq:variation-of-constant-k}, we obtain
  \begin{equation*}
  \begin{aligned}
   \psi(t,u) &= \lim_{k\rightarrow \infty} \psi^{(k)}(t,u)
   \\
   & = \cT(t)u+\lim_{k\rightarrow
     \infty}\int_{0}^{t}\cT(t-s)\hat{R}^{(k)}(\psi^{(k)}(s,u))\,\D s
   \\
   & = \cT(t)u + \lim_{k\rightarrow \infty} \int_{0}^{t} \cT(t-s)\bigl( \hat{R}^{(k)}(\psi^{(k)}(s,u)) - \hat{R}(\psi^{(k)}(s,u))\bigr)\D s
   \\ &
   \quad + \lim_{k\rightarrow \infty} \int_{0}^{t} \cT(t-s)\hat{R}(\psi^{(k)}(s,u))\D s
   \\
   & = \cT(t)u + \int_{0}^{t} \cT(t-s)\hat{R}(\psi(s,u))\D s.
   \end{aligned}
  \end{equation*}
  To justify the interchange of the limit and the integral in the last equality, we argue as follows. By monotonicity in the cone, $0\le_{\cHplus}\psi^{(k)}(s,u)\le_{\cHplus}\psi^{(1)}(s,u)$ for all $k$ and $s\in[0,T]$; since the cone is self-dual, each order interval is norm bounded, and the Gronwall estimate for the fixed truncation $k=1$ gives $\sup_{s\in[0,T]}\|\psi^{(1)}(s,u)\|<\infty$. Hence there exists $\tilde{M}_{T}>0$ such that $\sup_{k\in\mathbb{N}}\sup_{s\in[0,T]}\|\psi^{(k)}(s,u)\|\leq \tilde{M}_{T}$. On the ball $\{v\in\cHplus:\|v\|\le\tilde M_T\}$, local Lipschitz continuity and $\hat R(0)=0$ imply that $\hat R$ is bounded, and the uniform convergence $\hat R^{(k)}\to\hat R$ on bounded subsets of $\cHplus$ (from \cite[Lemma~3.6]{CKK22a}) gives a uniform bound for $\hat R^{(k)}$ on the same ball, after absorbing finitely many initial $k$. Thus $\|\cT(t-s)\hat{R}^{(k)}(\psi^{(k)}(s,u))\|\leq C_T$ uniformly in $k$ and $s\in[0,t]$. The dominated convergence theorem therefore justifies passing the limit through the integral, using pointwise convergence $\psi^{(k)}(s,u)\to\psi(s,u)$ for each $s$ and the same uniform convergence of $\hat R^{(k)}$ on bounded subsets.

  It remains to record that the limiting path is continuous. Set
  $g(s)\df\hat R(\psi(s,u))$. The preceding uniform bound and the linear growth
  of $\hat R$ imply $\sup_{s\le T}\|g(s)\|<\infty$. Let $t_n\to t$ in
  $[0,T]$. Strong continuity of $(\cT(r))_{r\ge0}$ gives
  $\cT(t_n)u\to\cT(t)u$. For the integral term, assume first $t_n\ge t$; the
  other case is identical. Then
  \begin{align*}
    &\left\|
      \int_0^{t_n}\cT(t_n-s)g(s)\,\D s
      -\int_0^t\cT(t-s)g(s)\,\D s
    \right\|\\
    &\quad\le
    \int_0^t\|(\cT(t_n-s)-\cT(t-s))g(s)\|\,\D s
    +\int_t^{t_n}\|g(s)\|\,\D s .
  \end{align*}
  The first term tends to zero by dominated convergence, using contraction of
  $\cT$ and pointwise strong continuity; the second tends to zero by boundedness
  of $g$. Thus $t\mapsto\psi(t,u)$ is continuous as an $\cH$-valued map. Since
  $\cHplus$ is closed, $\psi(\cdot,u)\in C([0,T],\cHplus)$.
  This proves the existence of a unique solution to
  equation~\eqref{eq:Riccati-psi-mild} satisfying the variation-of-constants
  formula~\eqref{eq:mild-solution-variation}.\par{}

This is the only place where the irregular setting genuinely differs
from the regular case: the bounded-generator ODE argument from
\cite{karbach2023finiterank} must here be replaced by a semilinear mild
equation argument for the unbounded Lyapunov operator $L$. Once the mild
solution is available, the convergence estimate follows by the same perturbation
strategy as in the regular case.

Lastly, we prove the uniform convergence stated
in~\eqref{eq:convergence-mild-Galerkin}.
First, note that the semigroup $(\cT(t))_{t \geq 0}$ commutes with the projection
$\bP_d$ for every $d \in \mathbb{N}$, since both operators are diagonal with
respect to the orthonormal eigenbasis $\{\be_{i,j}\}_{i\le j}$ of the generator
$L$, and $\bP_d$ projects onto the invariant subspace
$\cH_d=\mathrm{span}\{\be_{i,j}:1\le i\le j\le d\}$.

At this point one can follow the same perturbation argument as in the
proof of \cite[Proposition~2.7]{karbach2023finiterank}, with the bounded
semigroup $\mathrm e^{tB^*}$ there replaced here by the Lyapunov semigroup
$(\cT(t))_{t\geq 0}$ and using the commutation relation
$\bP_d\cT(t)=\cT(t)\bP_d$.
By setting $M_T \df \sup_{t\in[0,T]}\|\cT(t)\|_{\cL(\cH)}$, we obtain for
$t\in[0,T]$ and $u\in\cHplus$ the estimate
\begin{align*}
  \|\psi(t,u)-\psi_{d}(t,\bP_{d}u)\|
  &\le \|\cT(t)\,u-\cT(t)\bP_d u\|
  \\
  &\quad +\int_0^t \Big\|\cT(t-s)\Big(\hat R(\psi(s,u))-\hat R_d(\psi_d(s,\bP_d u))\Big)\Big\|\,\D s \\
  &\le M_T \| \bP_d^\perp u\|
      + M_T \int_0^t \Big\|\hat R(\psi(s,u))-\hat R_d(\psi_d(s,\bP_d u))\Big\|\,\D s.
\end{align*}
Next, we split the difference by adding and subtracting $\bP_d\hat R(\psi(s,u))$:
\begin{align*}
  \Big\|\hat R(\psi(s,u))-\hat R_d(\psi_d(s,\bP_d u))\Big\|
  &\le \Big\|\hat R(\psi(s,u))-\bP_d\hat R(\psi(s,u))\Big\| \\
  &\quad + \Big\|\bP_d\hat R(\psi(s,u))-\hat R_d(\psi_d(s,\bP_d u))\Big\|.
\end{align*}
Combining the two displays yields
\begin{align*}
  \|\psi(t,u)-\psi_{d}(t,\bP_{d}u)\|
  &\le M_T \|\bP_d^\perp u\|
   + M_T\int_0^t \Big\|\hat R(\psi(s,u))-\bP_d\hat R(\psi(s,u))\Big\|\,\D s \\
  &\quad + M_T\int_0^t \Big\|\bP_d\hat R(\psi(s,u))-\hat R_d(\psi_d(s,\bP_d u))\Big\|\,\D s .
\end{align*}

  We claim that there exists $M>0$, independent of $d\in\mathbb{N}$, such that $\|\psi(t,u)\|\leq M$ and $\|\psi_d(t,\bP_d(u))\|\leq M$ for all $d\in\mathbb{N}$ and $t\in[0,T]$. For $\psi(\cdot,u)$ this follows from the Gronwall bound established above. For $\psi_d(\cdot,\bP_d(u))$, the same Gronwall argument applies uniformly in $d$: the local Lipschitz constant of $\hat{R}_d=\bP_d\hat{R}\bP_d$ on bounded sets is bounded by the same constant as that of $\hat{R}$ (since $\|\bP_d\|\leq 1$), and the initial data $\|\bP_d(u)\|\leq\|u\|$ is uniformly bounded. Thus $M$ can be chosen independently of $d$. Now, let $M>0$ be as above, then by the local
  Lipschitz property of $\hat{R}$, we find a constant $K_{M}>0$ such that
  \begin{align*}
   \norm{\bP_{d}(\hat{R}(\psi(s,u)))-\hat{R}_{d}(\psi_{d}(s,\bP_{d}(u)))}\leq
    K_{M}\norm{\psi(s,u)-\psi_{d}(s,\bP_{d}(u))}, 
  \end{align*}
  and thus by setting
  \begin{align*}
   K^{d}_{t}=M_{T}\Big(\norm{\bP_{d}^{\perp}(u)}+\int_{0}^{t}\norm{\hat{R}(\psi(s,u))-\bP_{d}\hat{R}(\psi(s,u))}\,\D
    s\Big), 
  \end{align*}
  we obtain
  \begin{align*}
    \norm{\psi(t,u)-\psi_{d}(t,\bP_{d}(u))}\leq K_{t}^{d}+M_{T}K_{M}\int_{0}^{t}\norm{\psi(s,u)-\psi_{d}(s,\bP_{d}(u))}\,\D
    s.
  \end{align*}
  
  Again, by an application of Gronwall's inequality, we find that
  \begin{align}\label{eq:Riccati-convergence-3}
    \norm{\psi(\cdot,u)-\psi_{d}(\cdot,\bP_{d}(u))}\leq K_{t}^{d}\E^{M_{T} K_{M} t}. 
  \end{align}
  Hence taking the supremum over all $t\in [0,T]$ on both sides
  of~\eqref{eq:Riccati-convergence-3} yields
  \begin{align}\label{eq:Riccati-convergence-sup}
    \sup_{t\in [0,T]}\norm{\psi(t,u)-\psi_{d}(t,\bP_{d}(u))}\leq
    K_{T}^{d}\E^{M_{T} K_{M} T}.  
  \end{align}
  
  Similarly, to the proof~\cite[Proposition 3.1]{karbach2023finiterank} we can
  then also bound the Galerkin approximation error of the first
  equation~\eqref{eq:Riccati-phi-mild} as 
  \begin{align}\label{eq:phi-error}
    \sup_{t\in [0,T]}|\phi(t,u)-\phi_{d}(t,\bP_{d}(u))|&\leq K_{2,M}K_{T}^{d}\E^{M_{T} K_{M} T },
                                       \quad t\in [0,T], u\in\cHplus.
  \end{align}
Combining \eqref{eq:Riccati-convergence-sup} and \eqref{eq:phi-error}, and
using that $\norm{\bP_d^\perp z}\to0$ for every $z\in\cH$, we conclude that the
right-hand sides tend to zero as $d\to\infty$. This proves
\eqref{eq:convergence-mild-Galerkin}. Thus the approximation argument
has the same perturbative structure as in the regular case, but the
convergence is expressed here qualitatively at the level of the projected mild
Riccati solutions.
\end{proof}

\begin{lemma}[Continuity of the mild Riccati flow in the initial datum]
\label{lem:riccati-initial-continuity}
  Let $u_n,u\in\cHplus$ with $\|u_n-u\|\to0$. Then for every
  $T<\infty$,
  \[
    \sup_{t\in[0,T]}
    \left(
      |\phi(t,u_n)-\phi(t,u)|
      +\|\psi(t,u_n)-\psi(t,u)\|
    \right)
    \to0 .
  \]
\end{lemma}
\begin{proof}
  Since $u_n\to u$, the set $\{u\}\cup\{u_n:n\in\MN\}$ is bounded in $\cH$.
  The Gronwall estimate used in the proof of
  Proposition~\ref{prop:existence-mild-Riccati} therefore gives a constant
  $M_T<\infty$ such that
  \[
    \sup_n\sup_{t\le T}\|\psi(t,u_n)\|
    +\sup_{t\le T}\|\psi(t,u)\|\le M_T .
  \]
  Let $L_T$ be a Lipschitz constant of $\hat R$ on
  $\{z\in\cHplus:\|z\|\le M_T\}$. Subtracting the mild equations and using
  contractivity of $\cT$ gives
  \[
    \|\psi(t,u_n)-\psi(t,u)\|
    \le
    \|u_n-u\|
    +L_T\int_0^t\|\psi(s,u_n)-\psi(s,u)\|\,\D s .
  \]
  Gronwall's lemma yields
  \[
    \sup_{t\le T}\|\psi(t,u_n)-\psi(t,u)\|
    \le \|u_n-u\|\,\E^{L_TT}\to0 .
  \]
  The functions $F$ are locally Lipschitz on $\cHplus$ by
  \cite[Remark~3.4]{CKK22a}; with $L_T^F$ a Lipschitz constant on the same
  bounded ball,
  \[
    \sup_{t\le T}|\phi(t,u_n)-\phi(t,u)|
    \le
    L_T^F\int_0^T\|\psi(s,u_n)-\psi(s,u)\|\,\D s\to0 .
  \]
  This proves the claim.
\end{proof}

\subsection{Affine Finite-Rank Approximations and Their Weak Convergence}
\label{sec:finite-rank-convergence}

In this section, we provide the proofs for
Proposition~\ref{prop:embedding-affine-main} and
Theorem~\ref{thm:main-convergence}. We build on the finite-rank approximation
framework for \emph{regular affine processes} introduced in
\cite{karbach2023finiterank}, making the necessary adjustments to handle the
\emph{irregular case}.

\subsubsection{Finite-Rank Admissible Parameters}
First, we introduce finite-rank approximations for the irregular admissible
parameter set $(b, \mathbf{B}, m, \mu)$. Then, we demonstrate that a finite-rank
affine process can be associated with this approximate parameter set.

We use the following notation: for any two measurable spaces $(\cE_1, \cB_1)$
and $(\cE_2, \cB_2)$, and a measurable function $f: (\cE_1, \cB_1) \to (\cE_2,
\cB_2)$, we denote the push-forward of a measure $\mu_1: \cE_1 \to [0, \infty]$
with respect to $f$ as $f_* \mu_1$. Specifically, we have $f_*\mu_1(A) =
\mu_1(f^{-1}(A))$ for any $A \in \cB_2$. Note that $f_*\mu_1$ is a proper
measure on $(\cE_2, \cB_2)$. For every $d \in \mathbb{N}$, we define the Borel sets
$$E_d \coloneqq \{\xi \in \cHplus \colon 0 < \|\bP_d(\xi)\| \leq 1, \|\xi\| > 1\},$$
and $E_d^0 \coloneqq \cHplus \cap \{\xi \colon \bP_d(\xi) \neq 0\}$. Note that
$E_d \subseteq E_d^0 \subseteq \cHpluso$ for every $d \in \mathbb{N}$. For any
(vector-valued) measure $\lambda$ on $\cB(\cHpluso)$, the Borel $\sigma$-algebra
on $\cHpluso$, we denote the restriction of $\lambda$ to the trace of the Borel
$\sigma$-algebra generated by the open sets in $E_d^0$ as
$\lambda\vert_{E_d^0}$. We now introduce the following notion, which differs from the corresponding definition in~\cite[Definition
5.1]{karbach2023finiterank} only in the part iv):
 
\begin{definition}\label{def:admissible-Galerkin}
  For every $d\in\MN$, and admissible parameter set $(b, \mathbf{B}, m, \mu)$ we
  define the parameters $(b_{d},\mathbf{B}_{d},m_{d},\mu_{d})$ and $M_{d}$ as follows:
  \begin{defenum}
  \item\label{item:md} The measure $m_{d}\colon \cB(\cHplus_{d}\setminus\set{0})\to
    [0,\infty]$ is defined as the push-forward of $m\vert_{E_{d}^{0}}$ with
    respect to $\bP_{d}$, i.e.
    \begin{align*}
      m_{d}(\D\xi)\df (\bP_{d\,*}m\vert_{E_{d}^{0}})(\D\xi). 
    \end{align*}
  \item The vector \label{item:bd} $b_{d}\in \cH_{d}$ is given by
    \begin{align}\label{eq:bd}
      b_{d}\df \bP_{d}(b)+\int_{\bP_{d}(E_{d})}\xi\,m_{d}(\D\xi). 
    \end{align}
  \item\label{item:mud} The $\cHplus_{d}$-valued measure $\mu_{d}\colon
    \cB(\cHplus_{d}\setminus\set{0})\to \cHplus_{d}$ is defined as the
    $\bP_{d}$-projection of the push-forward of $\mu\vert_{E_{d}^{0}}$ with respect to
    $\bP_{d}$, i.e.
    \begin{align*}
      \mu_{d}(\D\xi)\df \bP_{d}(\bP_{d\,*}\mu\vert_{E_{d}^{0}})(\D\xi).
    \end{align*}
    Moreover, we define the $\cHplus_{d}$-valued measure $M_{d}$ on
    $\cHplus_{d}\setminus\set{0}$ as follows: For every
    $A\in\cB(\cHplus_{d}\setminus\set{0})$ we set
    \begin{align}\label{eq:Md} 
      M_{d}(A)\df
      \int_{E_{d}^{0}}\one_{A}(\bP_{d}(\xi))\frac{1}{\norm{\xi}^{2}}\bP_{d}(\mu\vert_{E_{d}^{0}}(\D\xi)).
    \end{align}
\item[iv)] The linear operator $\mathbf{B}_{d}\colon \cH_{d}\to\cH_{d}$ is given by
    \begin{align}\label{eq:Bd-irregular}
      \quad\mathbf{B}_{d}(y)&\df
      \bP_{d}(\mathbf{B}(y))+\int_{\bP_{d}(E_{d})}\xi\,
      \langle y,M_{d}(\D\xi)\rangle\nonumber\\
                              &=B y+yB^{*}+\Gamma_{d}(y)+
                              \int_{\bP_{d}(E_{d})}\xi\,
      \langle y,M_{d}(\D\xi)\rangle,\qquad y\in\cH_d,
    \end{align} 
    The displayed formula defines the linear extension on all of
    $\cH_d$; the admissibility checks below are applied on the cone
    $\cH_d^+$.
    Here $\Gamma_{d}\colon \cHplus_{d}\to\cH_{d}$ is given by
    \begin{align}\label{eq:kGamma-d}
     \Gamma_{d}(y)\df\bP_{d}(\Gamma(y)). 
    \end{align}
  \end{defenum}
\end{definition}

The idea behind the approximate parameter set is that we can use the parameters
$b_d$, $m_d$, $M_d$, and $\mathbf{B}_d$ to express the functions $F_d$ and
$\hat{R}_d$ as follows: 
\begin{align}
   F_d(u) &= \langle b_d, u \rangle - \int_{\cHplus_d \setminus \{0\}} \left( \E^{-\langle \xi, u \rangle} - 1 + \langle \chi(\xi), u \rangle \right) m_d(\D \xi), \label{eq:F-Galerkin}
\end{align}
and by denoting $K(\xi, u)\df\E^{-\langle
\xi,u\rangle}-1+\langle\chi(\xi),u\rangle$ and writing
$\Gamma_d^*(u_d)\df\bP_d(\Gamma^*(u_d))$ for the adjoint action on
$\cH_d$, we can express $\hat{R}_d$ for every $u_d \in
\cHplus_d$ as
\begin{align*}
   \hat{R}_d(u_d) &\coloneqq \bP_d(\Gamma^{*}(u_d)) - \int_{\cHpluso} \left( \E^{-\langle \xi, u_d \rangle} - 1 + \langle \chi(\xi), u_d \rangle \right) \bP_d(M(\D \xi)) \\
                  &= \Gamma_d^*(u_d) + \int_{\bP_d(E_d)} \langle \xi, u_d \rangle \, M_d(\D \xi) - \int_{\cHplus_d \setminus \{0\}} K(\xi, u_d) \, M_d(\D \xi).
\end{align*}

\subsubsection{Matrix-Valued Admissible Parameters}
Next, we relate the finite-rank approximate admissible parameters $b_d$, $m_d$,
$M_d$, and $\mathbf{B}_d$ to their matrix-valued counterparts. This is necessary
in order to apply the existence results for matrix-valued affine processes
from~\cite{CFMT11}. 
 
For $d \in \mathbb{N}$, we denote by $\{v_1, \ldots, v_d\}$ the standard basis
of $\mathbb{R}_d$, and define the coordinate system $\Xi_d \colon
\mathbb{R}_d \to H_d$ associated with the basis $\{e_1, \ldots, e_d\}$ of $H_d$ by
\begin{align}\label{eq:coordinate-system}
  \Xi_d(v_i) = e_i, \quad \text{for } i = 1, \ldots, d.
\end{align}
(We use the symbol $\Xi_d$ here rather than the more natural
$\Phi_d$ in order to avoid a clash with the joint Galerkin cumulant
$\Phi_d(t,u)$ from~\eqref{eq:stochastic-covariance-affine-formula}.)
The coordinate system $\Xi_d$ identifies the $d$-dimensional subspace $H_d$ with $\mathbb{R}_d$, allowing us to represent every linear operator $A \in \mathcal{L}(H_d)$ as a $d \times d$-matrix through the mapping $i_d \colon \mathcal{L}(H_d) \to \mathbb{S}_d$, defined by
\begin{align}\label{eq:canonical-identification-d}
  i_d(A) \coloneqq \Xi_d^{-1} \circ A \circ \Xi_d,
\end{align}
where, under the usual matrix identification, we interpret $i_d(A)$ as an element in $\mathbb{M}_d$. Note that when $A$ is self-adjoint, its matrix representation $i_d(A)$ is also self-adjoint. This can be verified by taking $x, y \in \mathbb{R}_d$ and performing the following computation:
\begin{align*}
  (i_d(A)x, y)_{\mathbb{R}^d} = (A \circ \Xi_d(x), \Xi_d(y))_H = (\Xi_d(x), A^* \circ \Xi_d(y))_H = (x, i_d(A)y)_{\mathbb{R}^d}.
\end{align*}
Under the mapping $i_d$ in~\eqref{eq:canonical-identification-d}, we identify
$\mathcal{H}_d\vert_{H_d} \subseteq \mathcal{L}(H_d)$ with $\mathbb{S}_d$ and
note that $i_d$ is an isometry between $\mathbb{S}_d$ and
$\mathcal{H}_d\vert_{H_d}$, meaning that it identifies the Frobenius norm with
the Hilbert-Schmidt norm. In the following, we may sometimes omit writing the
restriction $\rvert_{H_d}$ when it is clear from the context. Moreover, we
denote by $\mathbb{S}_d^+$ the convex cone of all symmetric positive
semi-definite $d \times d$-matrices, and observe that positivity is preserved
under $i_d$, i.e. $i_d(\mathcal{H}_d^+) = \mathbb{S}_d^+$.

\begin{definition}\label{def:admissible-matrix}
    Let $(b,\mathbf{B},m,\mu)$ be an admissible parameter set as in
  Definition~\ref{def:admissible-irregular} and for $d\in\MN$ let
  $(b_{d},\mathbf{B}_{d},m_{d},\mu_{d})$ and $M_{d}$ be as in
  Definition~\ref{def:admissible-Galerkin}. For every $d\in\MN$ we define the parameters 
  $(\tilde{b}_{d},\mathbf{\tilde{B}}_{d},\tilde{m}_{d},\tilde{\mu}_{d})$ and $\tilde{M}_{d}$ as follows:
  \begin{defenum}
  \item\label{item:b-d} The matrix $\tilde{b}_{d}\in\MS_{d}^{+}$ is
    defined as $\tilde{b}_{d}\df i_{d}(b_{d})$.
  \item\label{item:B-d} 
  Define the linear operator
  $\mathbf{\tilde{B}}_{d}\colon \MS_{d}\to\MS_{d}$ by
 \begin{align}\label{eq:B-d}
 \mathbf{\tilde{B}}_{d}\df i_{d}\circ\mathbf{B}_{d}\circ i^{-1}_{d}.
 \end{align}
  \item\label{item:m-d} The measure $\tilde{m}_{d}\colon\cB(\MS_{d}^{+}\setminus\set{0})\to
    [0,\infty]$ is defined as the push-forward of $m_{d}$ with respect to $i_{d}$, i.e.
    \begin{align*}
     \tilde{m}_{d}(\D\xi)\df (i_{d\,*}m_{d})(\D\xi). 
    \end{align*}
  \item\label{item:mu-d} The matrix-valued measure $\tilde{\mu}_{d}\colon
    \cB(\MS_{d}^{+}\setminus\set{0})\to \MS_{d}^{+}$ is defined as the
    composition of $i_{d}$ and the push-forward of $\mu_{d}$ with respect to
    $i_{d}$, i.e. 
    \begin{align*}
      \tilde{\mu}_{d}(\D\xi)=i_{d}((i_{d\,*}\mu_{d})(\D\xi)). 
    \end{align*}
    Moreover, we define the $\MS_{d}^{+}$-valued measure $\tilde{M}_{d}(\D\xi)$
    as follows: For every
    $A\in\cB(\MS_{d}^{+}\setminus\set{0})$ we set 
  \begin{align*} 
    \tilde{M}_{d}(A)=\int_{\cHpluso}\one_{A}(i_{d}(\bP_{d}(\xi)))\frac{1}{\norm{\xi}^{2}}i_{d}(\bP_{d}(\mu(\D\xi))),
  \end{align*}
  and for every $x\in\MS_{d}^{+}$ we write $\tilde{M}_{d}(x,\D\xi)\df\langle x,\tilde{M}_{d}(\D\xi)\rangle$.
\end{defenum}
\end{definition}

With the definition of the matrix-valued admissible parameters at hand we can
now prove Proposition~\ref{prop:embedding-affine-main}. 

\begin{proof}[Proof of Proposition~\ref{prop:embedding-affine-main}]

Throughout this proof we use the following notation, consistent with the
identification $i_d:\cL(H_d)\to\MS_d$ introduced in
\eqref{eq:canonical-identification-d}. For matrices $x,u\in\MS_d$, we write
$x_d\df i_d^{-1}(x)\in\cH_d\subseteq\cH$ and $u_d\df i_d^{-1}(u)\in\cH_d\subseteq\cH$
for their operator-valued counterparts on $H_d$; the Frobenius pairing on $\MS_d$
is denoted
$\langle x,u\rangle_d\df \mathrm{Tr}(xu)$. Since $i_d$ is an isometry between $\MS_d$
(with Frobenius inner product) and $\cH_d\rvert_{H_d}$ (with Hilbert--Schmidt inner
product inherited from $\cH$), we have $\langle x,u\rangle_d=\langle
x_d,u_d\rangle_{\cH}$ for all $x,u\in\MS_d$. We also write
$\chi_d(\zeta)\df\zeta\one_{\{\|\zeta\|_d\le1\}}$ for the matrix truncation.

  Let $d \in \mathbb{N}$ and $\tilde{b}_d$, $\mathbf{\tilde{B}}_{d}$, $\tilde{m}_d$,
  $\tilde{\mu}_d$, and $\tilde{M}_d$ be as defined in parts (i) to (iv) of
  Definition~\ref{def:admissible-matrix}. The positivity and
  integrability conditions for $\tilde b_d$, $\tilde m_d$, $\tilde\mu_d$, and
  $\tilde M_d$ are inherited from the projected parameters exactly as in
  \cite[Lemma~5.6 and Proposition~5.7]{karbach2023finiterank}. These checks
  involve only positivity of the cone $\cHplus$, the orthogonal projection
  $\bP_d$, the pushforward of $m$ and $\mu$ under $i_d\circ\bP_d$, and the
  moment and integrability assumptions from
  Definition~\ref{def:admissible-irregular}; the unboundedness of the full
  Lyapunov operator $L$ on $\cH$ plays no role here, since in the
  finite-rank problem the linear drift enters only through the restriction
  $B\rvert_{H_d}$, which is bounded on
  $H_d=\operatorname{span}\{e_1,\dots,e_d\}$ by
  Lemma~\ref{lem:irregular-properties}. The remaining
  point is the matrix admissibility condition for the linear drift. Namely, for
  all $x,u\in\MS_d^+$ such that $\langle x,u\rangle_d=0$, we have to prove
  \begin{align}\label{eq:admissible-matrix-check}
    \langle \mathbf{\tilde{B}}_{d}(x), u \rangle_d
    - \int_{\mathbb{S}_d^+ \setminus \{0\}}
      \langle \chi_d(\zeta), u \rangle_d\,\tilde{M}_d(x, \D \zeta) \geq 0.
  \end{align}

  To prove this, let $x, u \in \mathbb{S}_d^+$ such that
  $\langle x, u \rangle_d = 0$. The orthogonality
  $\langle x,u\rangle_d=0$ transfers to the operator level: since $i_d$
  identifies the Frobenius and Hilbert--Schmidt inner products,
  $\langle x_d,u_d\rangle_{\cH}=\langle x,u\rangle_d=0$ with
  $x_d,u_d\in\cH_d^+\subseteq\cHplus$. Since $x_d$ and $u_d$ are positive
  finite-rank operators and $\Tr(x_du_d)=0$, we have $x_du_d=u_dx_d=0$. As
  $H_d$ is invariant under $B$ by construction, the Lyapunov part therefore
  satisfies
  \begin{align*}
    \langle Lx_d,u_d\rangle_{\cH}
    =\Tr(Bx_du_d)+\Tr(B^*u_dx_d)=0.
  \end{align*}
  Moreover, by the isometry property of $i_d$,
  $\langle\Gamma_d(x_d),u_d\rangle_{\cH}
  =\langle\bP_d\Gamma(x_d),u_d\rangle_{\cH}
  =\langle\Gamma(x_d),u_d\rangle_{\cH}$, because $u_d\in\cH_d$ and
  $\bP_d$ is self-adjoint. For the jump integral we use the
  pushforward identity for $\tilde M_d$: for every bounded Borel
  $g\colon\MS_d^+\to\MR$,
  \begin{align*}
    \int_{\MS_d^+\setminus\{0\}} g(\zeta)\,\tilde M_d(x,\D\zeta)
    &= \int_{\MS_d^+\setminus\{0\}} g(\zeta)\bigl\langle x,\tilde
      M_d(\D\zeta)\bigr\rangle_d\\
    &= \int_{\cHpluso} g(i_d(\bP_d(\eta)))\,
      \frac{\bigl\langle
      x_d,\bP_d(\mu(\D\eta))\bigr\rangle_{\cH}}{\|\eta\|^{2}}
      \\
    &= \int_{\cHpluso} g(i_d(\bP_d(\eta)))\,
      \frac{\langle\mu(\D\eta),x_d\rangle_{\cH}}{\|\eta\|^{2}},
  \end{align*}
  where the second equality is the definition of $\tilde M_d$ in
  Definition~\ref{def:admissible-matrix}(iv) and the third uses
  $\bP_d x_d=x_d$ together with self-adjointness of $\bP_d$. Specialising
  this identity to $g(\zeta)=\langle\chi_d(\zeta),u\rangle_d$ and using the
  standard radial truncation $\chi_d(\zeta)=\zeta\,\mathbf 1_{\{\|\zeta\|_d\le 1\}}$
  together with the isometry identity
  $\langle i_d(\bP_d\eta),u\rangle_d=\langle\bP_d\eta,u_d\rangle_{\cH}=\langle\eta,u_d\rangle_{\cH}$
  (by $u_d\in\cH_d$ and self-adjointness of $\bP_d$), we obtain
  \begin{align*}
    \int_{\mathbb{S}_d^+ \setminus \{0\}} \langle \chi_d(\zeta), u \rangle_d
    \tilde{M}_d(x, \D \zeta)
    = \int_{\cHpluso}
      \langle\eta,u_d\rangle_{\cH}\,
      \one_{\{0<\|\bP_d\eta\|\le 1\}}\,
      \frac{\langle \mu(\D \eta), x_d \rangle_{\cH}}{\|\eta\|^2}.
  \end{align*}
  The compensation term built into $\mathbf B_d$ gives, again by the
  same pushforward calculation,
  \begin{align*}
    \left\langle
      \int_{\bP_d(E_d)}\zeta\,\langle x_d,M_d(\D\zeta)\rangle,
      u_d
    \right\rangle_{\cH}
    &=
    \int_{E_d}
      \langle\eta,u_d\rangle_{\cH}
      \frac{\langle\mu(\D\eta),x_d\rangle_{\cH}}{\|\eta\|^2}.
  \end{align*}
  Combining the preceding identities with the definition of
  $E_d=\{\eta\in\cHplus:0<\|\bP_d\eta\|\le1,\ \|\eta\|>1\}$ yields
  \begin{align*}
    &\langle \mathbf{\tilde{B}}_{d}(x), u \rangle_d
    - \int_{\mathbb{S}_d^+ \setminus \{0\}}
      \langle \chi_d(\zeta), u \rangle_d\,\tilde{M}_d(x, \D \zeta)
    \\
    &\quad =
    \langle \Gamma(x_d),u_d\rangle_{\cH}
    +\int_{\cHpluso}
      \langle\eta,u_d\rangle_{\cH}
      \bigl(\one_{E_d}(\eta)-\one_{\{0<\|\bP_d\eta\|\le1\}}(\eta)\bigr)
      \frac{\langle\mu(\D\eta),x_d\rangle_{\cH}}{\|\eta\|^2}
    \\
    &\quad =
    \langle \Gamma(x_d),u_d\rangle_{\cH}
    -\int_{\cHpluso}
      \langle\chi(\eta),u_d\rangle_{\cH}
      \frac{\langle\mu(\D\eta),x_d\rangle_{\cH}}{\|\eta\|^2}
    \ge 0.
  \end{align*}
  In the last equality we used
  $\langle\eta,u_d\rangle_{\cH}=\langle\bP_d\eta,u_d\rangle_{\cH}$, so the
  integrand vanishes on $\{\bP_d\eta=0\}$, and on
  $\{\bP_d\eta\ne0\}$ the identity
  $\one_{E_d}-\one_{\{0<\|\bP_d\eta\|\le1\}}
  =-\one_{\{0<\|\eta\|\le1\}}$ holds. The final inequality is exactly
  Definition~\ref{def:admissible-irregular}\cref{item:linear-operator-unbounded}(b)
  applied to $x_d,u_d\in\cHplus$ with
  $\langle x_d,u_d\rangle_{\cH}=0$. This proves
  \eqref{eq:admissible-matrix-check}.\par{}

  Together with the already verified finite-dimensional positivity and
  integrability conditions, this shows that the corresponding matrix-valued
  parameter tuple is admissible in the sense of \cite[Definition~3.1]{May12}.
  More explicitly, with
  \begin{align*}
    \tilde c_d
    &\df
    \tilde b_d-\int_{\MS_d^+\setminus\{0\}}\chi_d(\zeta)\,\tilde m_d(\D\zeta),
    \\
    \tilde D_d(v)
    &\df
    \mathbf{\tilde B}_d^*(v)
    -\int_{\MS_d^+\setminus\{0\}}
      \langle\chi_d(\zeta),v\rangle_d\,\tilde M_d(\D\zeta),
    \qquad v\in\MS_d,
  \end{align*}
  the seven-parameter tuple
  $(0,\tilde c_d,\tilde D_d,0,0,\tilde m_d,\tilde M_d)$ satisfies the
  admissibility conditions of \cite[Definition~3.1]{May12}; the inequality
  \eqref{eq:admissible-matrix-check} is precisely the quasi-monotonicity
  condition for $\tilde D_d$.
  Therefore, by \cite[Theorem
  2.4]{CFMT11} and \cite[Theorem 3.2]{May12}, there exists a unique affine
  process $(\tilde{X}^d)_{t \geq 0}$ with values in $\mathbb{S}_d^+$.

  Note further that the existence of a c\`adl\`ag version of
  $(\tilde{X}^{d})_{t \geq 0}$ follows from \cite{CT13}, and we will denote this
  version again by $(\tilde{X}^{d})_{t \geq 0}$. Moreover, we denote the law of
  $\tilde{X}^{d}$, given that $\tilde{X}^{d}_0 = x$, by $\tilde{\MP}^{d}_x$. By
  \cite[Remark 2.5]{CFMT11}, combined with the second moment assumption in
  \cref{item:m-d} and \cref{item:mu-d}, the
  $\mathbb{S}_d^+$-valued affine process $(\tilde{X}^{d})_{t \geq 0}$ satisfies
  \begin{align*}
    \tilde{\MP}^{d}_x\left(\left\{ \tilde{X}^{d}_t \in \mathbb{S}_d^+ \colon t \geq 0 \right\}\right) = 1.    
  \end{align*}

  For every $d \in \mathbb{N}$ and $x \in \mathbb{S}_d^+$, the law
  $\tilde{\MP}^{d}_x$ is thus defined on $\cB(D(\mathbb{R}^+,
  \mathbb{S}_d^+))$. Furthermore, since the diffusion part is zero, it follows
  from \cite{May12} that $(\tilde{X}^{d})_{t \geq 0}$ is a finite-variation
  process.

  Additionally, from \cite[Theorem 2.6]{CFMT11}, we know that the process
  $\tilde{X}^{d}$ is a semimartingale with characteristics given by: 
\begin{align*}
  \tilde{C}^{d}_t = 0, \quad \tilde{\nu}^{d}(\D t, \D \xi) = \big(\tilde{m}_d(\D \xi) + \tilde{M}_d(\tilde{X}^{d}_t, \D \xi)\big) \D t,  
\end{align*}
and
\begin{align*}
\tilde{A}^{d}_t = \int_0^t \left( \tilde{b}_d + \mathbf{\tilde{B}}_d(\tilde{X}^{d}_s) \right) \D s.  
\end{align*}

Finally, since the process $(\tilde{X}^{d}_t)_{t \geq 0}$ is of finite
variation, it is locally bounded and by the admissibility in the matrix sense,
we conclude that
\begin{align*}
\int_0^t \int_{\mathbb{S}_d^+ \setminus \{0\}} \|\xi\|_d^2 \, \tilde{\nu}^{d}(\D
  s, \D \xi) < \infty,
\end{align*}
for all $t \geq 0$. By \cite[Chapter II, Proposition 2.29(b)]{JS03},
this finiteness of the predictable compensator of the square-jump measure
ensures that the compensated jump integral $\bar J^{d}$ in
\eqref{eq:decomp-martingale} is a purely discontinuous locally
square-integrable martingale on $\MS_d^+$. Combined with the admissibility
moment bound in~\cref{item:m-d}, we obtain the second-moment estimate
\begin{align*}
\EXspec{\tilde{\MP}^{d}_x}{\|\tilde{X}^{d}_t\|_d^2} < \infty,
\end{align*}
for all $t \geq 0$.

Now, for every $d \in \mathbb{N}$, let
$(\tilde{X}^d, (\tilde{\MP}^d_x)_{x \in \mathbb{S}_d^+})$ be the
$\mathbb{S}_d^+$-valued affine process given as above. More precisely, let
$\tilde{X}^d$ be a version with paths in
$\Omega = D(\mathbb{R}^+, \mathbb{S}_d^+)$, and denote by $\tilde{\MP}^d_x$ the
law of $\tilde{X}^d$, defined on $\cB(\Omega)$, given that
$\tilde{X}_0^d = x \in \mathbb{S}_d^+$. Moreover, let
$(\tilde{\cF}^d_t)_{t \geq 0}$ denote the natural filtration of the process
$\tilde{X}^d$.

By identifying the cones $\mathbb{S}_d^+$ and $\cH_d^+$ under the mapping
$i_d^{-1}$, we define the process $X^d = (X^d_t)_{t \geq 0}$ as
\begin{align*}
  X^d_t \coloneqq i_d^{-1}(\tilde{X}^d_t) = \Xi_d \circ \tilde{X}^d_t \circ \Xi_d^{-1}, \quad t \geq 0.
\end{align*}

Note that the process $(X^d_t)_{t \geq 0}$ has paths in
$D(\mathbb{R}^+, \cH_d^+)$, and the law of $X^d$ is given by the push-forward
measure $(i_d^{-1})_* \tilde{\MP}^d_x$ for $x \in \mathbb{S}_d^+$. Here,
$i_d^{-1}$ acts pointwise on the functions in
$D(\mathbb{R}^+, \mathbb{S}_d^+)$, such that
\begin{align*}
i_d^{-1}(D(\mathbb{R}^+, \mathbb{S}_d^+)) = D(\mathbb{R}^+, \cH_d^+).  
\end{align*}

Moreover, we observe that
$D(\mathbb{R}^+, \cH_d^+) \subseteq D(\mathbb{R}^+, \cHplus)$ for all
$d \in \mathbb{N}$, see \cite[Remark 4.5]{Jak86}. For every $x \in \cHplus$, we
define the measure $\MP^d_x$ on $D(\mathbb{R}^+, \cHplus)$ as
\begin{align*}
  \MP^d_x(A) = (i_d^{-1})_* \tilde{\MP}^d_{i_d(\bP_d(x))}(A \cap D(\mathbb{R}^+, \cH_d^+)), \quad A \in \cB(D(\mathbb{R}^+, \cHplus)).  
\end{align*}

Note that
\begin{align*}
  \MP^d_x\big(X^d_0 = \bP_d(x)\big) = \tilde{\MP}^d_{i_d(\bP_d(x))}\big(\tilde{X}^d_0 = i_d(\bP_d(x))\big) = 1,  
\end{align*}
and the process $(X^d, (\MP^d_x)_{x \in \cHplus})$ is a Markov process on the
ambient space $\cHplus$ with respect to its natural filtration
$\MF^d = (\cF^d_t)_{t \geq 0}$. We set $\cF^d = \cF^d_\infty$. Moreover, for
every $x \in \cHplus$, denoting by $\mathcal{N}^d_x$ the collection of all
$\MP^d_x$-null sets of $\cF^d$, we define
\begin{align*}
\bar{\cF}_t \coloneqq \cF^d_t \vee \mathcal{N}^d_x \quad \text{for every } t \geq 0,  
\end{align*}

and set $\bar{\MF}^d \coloneqq (\bar{\cF}_t)_{t \geq 0}$. In other words,
$\bar{\MF}^d$ is the usual augmented filtration of $X^d$, and $X^d$ remains a
Markov process with respect to $\bar{\MF}^d$.

It remains to verify that the Galerkin approximation
$(X^d,(\MP_x^d)_{x\in\cHplus})$ is an affine process on $\cHplus$ satisfying the
affine transform formula~\eqref{eq:affine-Galerkin} associated with the
finite-rank Riccati
equations~\eqref{eq:mild-Riccati-Galerkin-phi}-\eqref{eq:mild-Riccati-Galerkin-psi},
and that the canonical process of $\MP^d_x$ on $D(\MR^+,\MS_d^+)$ is a
semimartingale with respect to $(\Omega,\bar\cF^d,\bar\MF^d,\MP^d_x)$.

For the affine transform formula, we apply the matrix-valued affine
transform theorem~\cite[Theorem~3.2]{May12} to
$(\tilde X^d,(\tilde\MP^d_{\tilde x})_{\tilde x\in\MS_d^+})$: for every
$\tilde u\in\MS_d^+$,
\[
  \EXspec{\tilde\MP_{\tilde x}^{d}}{\E^{-\langle\tilde X_t^d,\tilde u\rangle_d}}
  =\E^{-\tilde\phi_d(t,\tilde u)-\langle\tilde x,\tilde\psi_d(t,\tilde u)\rangle_d},
  \qquad t\ge 0,
\]
where $(\tilde\phi_d,\tilde\psi_d)$ is the unique solution of the matrix-valued
Riccati system associated with
$(\tilde b_d,\mathbf{\tilde B}_d,\tilde m_d,\tilde\mu_d)$ in the sense
of~\cite{CFMT11}. Pulling the exponent back under $i_d$ and using the identities
$\langle\tilde X_t^d,\tilde u\rangle_d=\langle X_t^d,u\rangle_{\cH}$ and
$\langle\tilde x,\tilde\psi_d(t,\tilde u)\rangle_d=\langle
\bP_d(x),i_d^{-1}\tilde\psi_d(t,i_d(u))\rangle_{\cH}$ for
$\tilde u=i_d(u)$, one obtains~\eqref{eq:affine-Galerkin}; the identification
of $i_d^{-1}\tilde\psi_d(t,i_d(\cdot))$ with the mild solution
$\psi^{(d)}(t,\cdot)$ of the finite-rank operator Riccati
equation~\eqref{eq:mild-Riccati-Galerkin-psi} is uniqueness in the
ODE-system~\cite[Theorem 2.4]{CFMT11}.

For the semimartingale and martingale claims, we reuse the computation
already done above: the characteristics of $X^d$ under $\MP_x^d$ are obtained
by pulling back those of $\tilde X^d$ via $i_d^{-1}$, which preserves the
finite-variation and predictable-compensator structure. The compensated jump
integral in~\eqref{eq:decomp-martingale} is then a locally square-integrable
martingale: the predictable compensator of the square-jump measure is bounded
by $C(1+\|y\|_{\cH}^{2})$ by
Lemma~\ref{lem:jump-bracket-finite-rank}; the local boundedness of
$\tilde X^d$ (from finite variation of $\tilde X^d$) then implies the
square-integrability of $\bar J^d$ on compact intervals via
\cite[Chapter~II, Proposition~2.29(b)]{JS03}. Only the admissibility of the
Galerkin parameters and the second-moment bound from~\cref{item:m-d} are
used; both are preserved under the unbounded generator $B$ considered
here. Hence
$\MP_x^{d}(\{X_t^d\in\cHplus_d\colon t\ge0\})=1$
and~\eqref{eq:affine-Galerkin} holds. In particular, defining the compensated
jump process $(\bar J^d_t)_{t\ge 0}$ by
\begin{align}\label{eq:decomp-martingale}
  \bar{J}^d_t \coloneqq X^d_t \!-\! \bP_d(x) \!-\! \!\int_0^t \!\left(b_d + \mathbf{B}_d(X^d_s) + \int_{\cHplus_d \cap \{\|\xi\| > 1\}} \xi \, (m_d(\D\xi) + M_d(X^d_s, \D\xi)) \right) \D s,
\end{align}
the process $\bar J^d$ is a locally square-integrable martingale on
$\cH_d$ by the argument above.
 \end{proof}

 \subsubsection{Weak Convergence of the Affine Finite-Rank Processes}
 
 Let $(b, B, m, \mu)$ be an admissible parameter set, and for each
 $d \in \mathbb{N}$, let $X^d$ denote the associated affine finite-rank
 operator-valued process given by
 Proposition~\ref{prop:embedding-affine-main}. In this section, we study the
 tightness and weak convergence of the sequence $(X^d)_{d \in \mathbb{N}}$ in
 the space $D(\mathbb{R}^+, \cHplus)$ equipped with the Skorohod topology. More
 precisely, for each $x \in \cHplus$, we consider the sequence
 $(\MP^d_x)_{d \in \mathbb{N}}$, which represents the laws of $X^d$ given that
 $X^d_0 = \bP_d(x)$, defined on the Borel $\sigma$-algebra
 $\cB(D(\mathbb{R}^+, \cHplus))$, and examine its weak convergence as
 $d \to \infty$.

	 To make the weak-convergence argument rigorous, we now separate the
	 three ingredients that are needed in the irregular setting: a uniform
	 square-moment bound for the projected semimartingales, a positive-time
	 smoothing estimate in $\cV$, and the identification of the weak limit via the
	 martingale problem. Once these ingredients are established, the proof follows
	 the same template as in \cite[Section~6]{karbach2023finiterank}.

\begin{lemma}[Maximal inequality for jump stochastic convolutions]
\label{lem:jump-convolution-maximal}
  Let $E$ be a Hilbert space and let $(S(t))_{t\ge0}$ be a contraction
  $C_0$-semigroup on $E$. Let $\mu$ be an integer-valued random measure on
  $[a,b]\times E$ with predictable compensator $\nu$, and assume
  \[
    \mathbb E\int_a^b\int_E \|\zeta\|_E^2\,\nu(\D s,\D\zeta)<\infty .
  \]
  Then the stochastic convolution
  \[
    Z_r=\int_a^r\int_E S(r-s)\zeta\,(\mu-\nu)(\D s,\D\zeta),
    \qquad a\le r\le b,
  \]
  satisfies
  \[
    \mathbb E\sup_{a\le r\le b}\|Z_r\|_E^2
    \le
    C\,
    \mathbb E\int_a^b\int_E \|\zeta\|_E^2\,\nu(\D s,\D\zeta),
  \]
  with a universal constant $C$ independent of $E$, $S$, and the length of the
  interval.
\end{lemma}
\begin{proof}
  This is the Kotelenez maximal inequality for Hilbert-space stochastic
  convolutions with contraction semigroups; see, for instance,
  \cite[Theorem~9.15]{PZ07}. Applied to the square-integrable purely
  discontinuous martingale
  $\int_a^\cdot\int_E\zeta\,(\mu-\nu)(\D s,\D\zeta)$, the predictable quadratic
  variation is exactly
  $\int_a^\cdot\int_E\|\zeta\|_E^2\,\nu(\D s,\D\zeta)$, giving the stated
  estimate.
\end{proof}

\begin{corollary}[Endpoint estimate for stopped jump stochastic convolutions]
\label{cor:stopped-jump-convolution-maximal}
  Let $E$ be a Hilbert space and let $(S(t))_{t\ge0}$ be a contraction
  $C_0$-semigroup on $E$. Let $\mu$ be an integer-valued random measure on
  $[0,T+1]\times E$ with predictable compensator $\nu$, let $\tau\le T$ be a
  stopping time, and let $0\le\theta\le1$ be deterministic. Assume
  \[
    \mathbb E\int_\tau^{\tau+\theta}\int_E
    \|\zeta\|_E^2\,\nu(\D s,\D\zeta)<\infty .
  \]
  Then
  \[
    N_{\tau,\theta}
    \df
    \int_\tau^{\tau+\theta}\int_E
    S(\tau+\theta-s)\zeta\,(\mu-\nu)(\D s,\D\zeta)
  \]
  satisfies
  \[
    \mathbb E\|N_{\tau,\theta}\|_E^2
    \le
    C\,
    \mathbb E\int_\tau^{\tau+\theta}\int_E
    \|\zeta\|_E^2\,\nu(\D s,\D\zeta),
  \]
  with the same universal constant convention as in
  Lemma~\ref{lem:jump-convolution-maximal}.
\end{corollary}
\begin{proof}
  The process
  $(\omega,s,\zeta)\mapsto
  \mathbf 1_{\{\tau(\omega)<s\le\tau(\omega)+\theta\}}
  S(\tau(\omega)+\theta-s)\zeta$
  is predictable; this follows first for finitely valued stopping times and
  then by the standard decreasing approximation of bounded stopping times. The
  Hilbert-space isometry for compensated jump integrals gives
  \[
    \mathbb E\|N_{\tau,\theta}\|_E^2
    =
    \mathbb E\int_\tau^{\tau+\theta}\int_E
    \|S(\tau+\theta-s)\zeta\|_E^2\,\nu(\D s,\D\zeta).
  \]
  Since $S$ is a contraction, the right-hand side is bounded by the displayed
  expression.
\end{proof}

\begin{lemma}\label{lem:jump-bracket-finite-rank}
  For every $d\in\MN$, let $\mu^{X^d}(\D s,\D\xi)$ denote the jump
  measure of $X^d$ and define
  \begin{align*}
    \nu^d(\omega;\D s,\D\xi)
    \df
    \bigl(m_d(\D\xi)+M_d(X_{s-}^d(\omega),\D\xi)\bigr)\,\D s.
  \end{align*}
  Then $\nu^d$ is the compensator of $\mu^{X^d}$, the compensated jump
  martingale from \eqref{eq:decomp-martingale} admits the representation
  \begin{align*}
    \bar J_t^d
    =
    \int_0^t\int_{\cH_d^+\setminus\{0\}}
    \xi\,\bigl(\mu^{X^d}-\nu^d\bigr)(\D s,\D\xi),\qquad t\ge0,
  \end{align*}
  and its predictable quadratic variation is given by
  \begin{align*}
    \langle \bar J^d\rangle_t
    =
    \int_0^t\int_{\cH_d^+\setminus\{0\}}
    \|\xi\|_{\cH}^{2}\,
    \bigl(m_d(\D\xi)+M_d(X_{s-}^d,\D\xi)\bigr)\,\D s.
  \end{align*}
	  Moreover, there exists a constant $C>0$, independent of
	  $d\in\MN$, such that
	  \begin{align*}
	    \int_{\cH_d^+\setminus\{0\}}\|\xi\|_{\cH}^{2}\,
	    \bigl(m_d(\D\xi)+M_d(y,\D\xi)\bigr)
	    \le C\bigl(1+\|y\|_{\cH}^{2}\bigr),\qquad y\in\cH_d^+.
	  \end{align*}
	  The state-dependent part satisfies the sharper homogeneous estimate
	  \begin{align*}
	    \int_{\cH_d^+\setminus\{0\}}\|\xi\|_{\cH}^{2}\,
	    M_d(y,\D\xi)
	    \le C\|y\|_{\cH},\qquad y\in\cH_d^+ .
	  \end{align*}
	\end{lemma}
\begin{proof}
  The form of the compensator follows directly from the semimartingale
  characteristics in Proposition~\ref{prop:embedding-affine-main}. The
  representation of $\bar J^d$ is therefore the standard compensated jump
  integral representation of the purely discontinuous martingale part in
  \eqref{eq:decomp-martingale}. The formula for the predictable quadratic
  variation is the corresponding Hilbert-space bracket formula for square-
  integrable purely discontinuous martingales.

  It remains to prove the uniform bound. Since $\bP_d$ is an
  orthogonal projection on $\cH$, it is contractive, and the projected kernels
  inherit the second-moment bounds from Definition~\ref{def:admissible-irregular}.
  In particular,
  \begin{align*}
    \int_{\cH_d^+\setminus\{0\}}\|\xi\|_{\cH}^{2}\,m_d(\D\xi)
    \le
    \int_{\cHpluso}\|\xi\|_{\cH}^{2}\,m(\D\xi)\df C_0,
  \end{align*}
  while the state-dependent part is first order in $y$. Indeed, using the
  definition of the projected kernel and contractivity of $\bP_d$,
  \begin{align*}
    \int_{\cH_d^+\setminus\{0\}}\|\xi\|_{\cH}^{2}\,M_d(y,\D\xi)
    &\le
    \int_{E_d^0}\|\bP_d\eta\|_{\cH}^{2}
      \frac{\langle y,\bP_d\mu(\D\eta)\rangle}{\|\eta\|_{\cH}^{2}}\\
    &\le
	    \int_{E_d^0}\langle y,\mu(\D\eta)\rangle
	    \le \|y\|_{\cH}\,\|\mu(\cHpluso)\|_{\cH}\\
	    &\le C_\mu\bigl(1+\|y\|_{\cH}^{2}\bigr),
	    \qquad y\in\cH_d^+,
	  \end{align*}
	  for a constant $C_\mu$ independent of $d$. The penultimate bound is the
	  sharper homogeneous estimate for the state-dependent part, and combining it
	  with the $m_d$ estimate proves the displayed total bound.
	\end{proof}

\begin{lemma}\label{lem:irregular-tightness-square-bound}
  Let $x\in\cHplus$ and $T>0$. Then there exists a constant
  $C_T(x)>0$, independent of $d\in\MN$, such that
  \begin{align*}
    \EXspec{\MP_x^d}{\sup_{0\le t\le T}\|X_t^d\|_{\cH}^2}
    +\EXspec{\MP_x^d}{\sup_{0\le t\le T}\|\bar J_t^d\|_{\cH}^2}
    \le C_T(x).
  \end{align*}
\end{lemma}
\begin{proof}
  We keep the decomposition strategy of
  \cite[Lemma~6.1]{karbach2023finiterank}, but spell out the estimates in the
  present irregular setting. Define
  \begin{align*}
    \beta_d(y)
    \df
    b_d+\Gamma_d(y)
    +\int_{\cH_d^+\cap\{\|\xi\|>1\}}
    \xi\,\bigl(m_d(\D\xi)+M_d(y,\D\xi)\bigr),
    \qquad y\in\cH_d^+.
  \end{align*}
  By contractivity of $\bP_d$, boundedness of $\Gamma$, the second-moment bound
  for $m_d$, and the preceding state-dependent estimate, the large-jump drift
  \[
    a_d(y)\df
    \int_{\cH_d^+\cap\{\|\xi\|>1\}}\xi\,
    \bigl(m_d(\D\xi)+M_d(y,\D\xi)\bigr)
  \]
  satisfies
  \[
    \|a_d(y)\|_{\cH}
    \le C(1+\|y\|_{\cH}),\qquad y\in\cH_d^+.
  \]
  Indeed, the $m_d$-part is bounded by
  $\int_{\{\|\xi\|>1\}}\|\xi\|\,m_d(\D\xi)
  \le \int\|\xi\|^2\,m_d(\D\xi)\le C$. For the state-dependent part,
  $\|\xi\|>1$ implies $\|\xi\|\le\|\xi\|^2$, and
  Lemma~\ref{lem:jump-bracket-finite-rank} gives the sharper homogeneous bound
  \[
    \int_{\cH_d^+\setminus\{0\}}\|\xi\|_{\cH}^{2}M_d(y,\D\xi)
    \le
    \int_{E_d^0}\langle y,\mu(\D\eta)\rangle
    \le \|y\|_{\cH}\,\|\mu(\cHpluso)\|_{\cH},
  \]
  so this part is bounded by $C\|y\|_{\cH}$.
  Hence there exists a constant $C_\beta>0$, independent of $d$, such that
  \begin{align}\label{eq:beta-growth-bound}
    \|\beta_d(y)\|_{\cH}^{2}\le C_\beta(1+\|y\|_{\cH}^{2}),
    \qquad y\in\cH_d^+.
  \end{align}

  \smallskip
  \noindent\textbf{Step 1: One-time second moments.}
  Let $h(y)\df\|y\|_{\cH}^{2}$. Dynkin's formula gives
  \begin{align*}
    \EXspec{\MP_x^d}{\|X_t^d\|_{\cH}^{2}}
    =
    \|\bP_d x\|_{\cH}^{2}
    +\EXspec{\MP_x^d}{\int_0^t I_d(X_s^d)\,\D s},
  \end{align*}
  where
  \begin{align*}
    I_d(y)
    &=
    2\langle y,b_d+L_d y+\Gamma_d(y)\rangle_{\cH}\\
    &\quad
    +2\Big\langle y,\int_{\cH_d^+\cap\{\|\xi\|>1\}}\xi\,
      \bigl(m_d(\D\xi)+M_d(y,\D\xi)\bigr)\Big\rangle_{\cH}\\
    &\quad
    +\int_{\cH_d^+\setminus\{0\}}\|\xi\|_{\cH}^{2}\,
      \bigl(m_d(\D\xi)+M_d(y,\D\xi)\bigr).
  \end{align*}
  The individual terms are estimated as follows. First,
  \begin{align*}
    2\langle y,b_d\rangle_{\cH}
    \le \|b_d\|_{\cH}^{2}+\|y\|_{\cH}^{2}
    \le C(1+\|y\|_{\cH}^{2}),
  \end{align*}
  and the Lyapunov part is dissipative:
  \begin{align*}
    \langle y,L_d y\rangle_{\cH}
    =\langle y,L y\rangle_{\cH}\le0.
  \end{align*}
  Next,
  \begin{align*}
    2\langle y,\Gamma_d(y)\rangle_{\cH}
    \le 2\|\Gamma\|_{\cL(\cH)}\|y\|_{\cH}^{2}.
  \end{align*}
  For the large-jump drift term, Young's inequality and the large-jump drift
  bound above yield
  \begin{align*}
    2\langle y,a_d(y)\rangle_{\cH}
    \le \|y\|_{\cH}^{2}
    +\|a_d(y)\|_{\cH}^{2}
    \le C(1+\|y\|_{\cH}^{2}).
  \end{align*}
  Together with Lemma~\ref{lem:jump-bracket-finite-rank}, this gives
  \begin{align*}
    I_d(y)\le C_0+C_1\|y\|_{\cH}^{2},\qquad y\in\cH_d^+,
  \end{align*}
  with constants independent of $d$. Hence
  \begin{align*}
    \EXspec{\MP_x^d}{\|X_t^d\|_{\cH}^{2}}
    \le \|x\|_{\cH}^{2}
    +C_0 t
    +C_1\int_0^t\EXspec{\MP_x^d}{\|X_s^d\|_{\cH}^{2}}\,\D s,
  \end{align*}
  and Gronwall's lemma yields
  \begin{align}\label{eq:uniform-second-moment-Xd}
    \sup_{d\in\MN}\sup_{t\in[0,T]}
    \EXspec{\MP_x^d}{\|X_t^d\|_{\cH}^{2}}<\infty.
  \end{align}

  \smallskip
  \noindent\textbf{Step 2: Maximal estimate.}
  Let $\cT_d(t)\df \bP_d\cT(t)\bP_d$. Since $(\cT(t))_{t\ge0}$ is a
  contraction semigroup on $\cH$, the same holds for $(\cT_d(t))_{t\ge0}$.
  Writing the mild form of \eqref{eq:decomp-martingale}, we obtain
  \begin{align*}
    X_t^d
    =
    \cT_d(t)\bP_d x
    +\int_0^t \cT_d(t-s)\beta_d(X_s^d)\,\D s
    +M_t^{d,\cT},
  \end{align*}
  where
  \begin{align*}
    M_t^{d,\cT}
    \df
    \int_0^t\int_{\cH_d^+\setminus\{0\}}
    \cT_d(t-s)\xi\,\bigl(\mu^{X^d}-\nu^d\bigr)(\D s,\D\xi).
  \end{align*}
  Hence, for every $t\in[0,T]$,
  \begin{align*}
    \sup_{0\le r\le t}\|X_r^d\|_{\cH}^{2}
    \le
    3\|x\|_{\cH}^{2}
    +3t\int_0^t \|\beta_d(X_s^d)\|_{\cH}^{2}\,\D s
    +3\sup_{0\le r\le t}\|M_r^{d,\cT}\|_{\cH}^{2}.
  \end{align*}
  Taking expectations and using \eqref{eq:beta-growth-bound} gives
  \begin{align*}
    \EXspec{\MP_x^d}{\sup_{0\le r\le t}\|X_r^d\|_{\cH}^{2}}
    \le
    C(1+\|x\|_{\cH}^{2})
    +C\int_0^t\EXspec{\MP_x^d}{\|X_s^d\|_{\cH}^{2}}\,\D s
    +3\EXspec{\MP_x^d}{\sup_{0\le r\le t}\|M_r^{d,\cT}\|_{\cH}^{2}}.
  \end{align*}
  By Lemma~\ref{lem:jump-convolution-maximal} and the
  contractivity of $\cT_d$,
  \begin{align*}
    \EXspec{\MP_x^d}{\sup_{0\le r\le t}\|M_r^{d,\cT}\|_{\cH}^{2}}
    \le
    C_{\mathrm{BDG}}
    \EXspec{\MP_x^d}{\int_0^t\int_{\cH_d^+\setminus\{0\}}
      \|\cT_d(t-s)\xi\|_{\cH}^{2}\,\nu^d(\D s,\D\xi)}
  \end{align*}
  \begin{align*}
    \le
    C_{\mathrm{BDG}}
    \EXspec{\MP_x^d}{\int_0^t\int_{\cH_d^+\setminus\{0\}}
      \|\xi\|_{\cH}^{2}\,\nu^d(\D s,\D\xi)}.
  \end{align*}
  Using Lemma~\ref{lem:jump-bracket-finite-rank} and
  \eqref{eq:uniform-second-moment-Xd}, we conclude that
  \begin{align*}
    \EXspec{\MP_x^d}{\sup_{0\le r\le t}\|M_r^{d,\cT}\|_{\cH}^{2}}
    \le C_t(x),
  \end{align*}
  with a constant independent of $d$. Therefore
  \begin{align*}
    \EXspec{\MP_x^d}{\sup_{0\le r\le t}\|X_r^d\|_{\cH}^{2}}
    \le C_t(x),
  \end{align*}
  uniformly in $d$. Finally, applying the same BDG estimate directly to
  $\bar J^d$ and using Lemma~\ref{lem:jump-bracket-finite-rank} once more gives
  \begin{align*}
    \EXspec{\MP_x^d}{\sup_{0\le r\le t}\|\bar J_r^d\|_{\cH}^{2}}
    \le C_t(x),
  \end{align*}
  uniformly in $d$. This proves the claim.
\end{proof}

\begin{lemma}\label{lem:trace-bound-finite-rank}
  Assume that $x\in\cL_1(H)\cap\cHplus$, that $m(\D\xi)$ and
  $\mu(\D\xi)$ are supported on $\cL_1(H)\setminus\{0\}$, that
  $\Gamma(\cL_1(H))\subseteq \cL_1(H)$, that
  $b,\mu(A)\in\cL_1(H)$ for all $A\in\cB(\cHplus\setminus\{0\})$, and that the
  trace first-moment bounds
  \[
    \int_{\cHpluso}\Tr(\xi)\,m(\D\xi)<\infty,\qquad
    \int_{\cHpluso}\Tr(\xi)\,
      \frac{\langle y,\mu(\D\xi)\rangle}{\|\xi\|^2}
    \le C_\mu \Tr(y),\quad y\in\cL_1(H)\cap\cHplus,
  \]
  hold for some $C_\mu<\infty$. Then for every $T>0$ there exists a constant
  $C_T^{\Tr}(x)>0$, independent of $d\in\MN$, such that
  \begin{align*}
    \sup_{d\in\MN}\sup_{t\in[0,T]}
    \EXspec{\MP_x^d}{\Tr(X_t^d)}\le C_T^{\Tr}(x).
  \end{align*}
\end{lemma}
\begin{proof}
  For $n\in\MN$, define
  \begin{align*}
    h_n(y)\df \sum_{k=1}^n\langle y e_k,e_k\rangle_H,\qquad y\in\cHplus,
  \end{align*}
  where $(e_k)_{k\in\MN}$ is the eigenbasis of $B$. Fix $d\in\MN$ and
  choose $n\ge d$. Since $X_t^d\in\cH_d^+$, we have
  $h_n(X_t^d)=\Tr(X_t^d)$ for all $t\ge0$. Applying Dynkin's formula to the
  linear functional $h_n$ yields
  \begin{align*}
    \EXspec{\MP_x^d}{\Tr(X_t^d)}
    =
    \Tr(\bP_d x)
    +\EXspec{\MP_x^d}{\int_0^t K_d(X_s^d)\,\D s},
  \end{align*}
  where
  \begin{align*}
    K_d(y)
    &=
    \Tr(b_d)+h_n(L_d y)+h_n(\Gamma_d(y))\\
    &\quad
    +\int_{\cH_d^+\setminus\{0\}}\Tr(\xi)\,
      \bigl(m_d(\D\xi)+M_d(y,\D\xi)\bigr).
  \end{align*}

  We estimate the individual terms. Since $T(t)e_k=e^{-\lambda_k t}e_k$,
  one has
  \begin{align*}
    h_n(\cT_d(t)y)
    =
    \sum_{k=1}^d e^{-2\lambda_k t}\langle y e_k,e_k\rangle_H
    \le
    \sum_{k=1}^d \langle y e_k,e_k\rangle_H
    = h_n(y),
  \end{align*}
  for every $y\in\cH_d^+$. Differentiating at $t=0$ therefore gives
  \begin{align*}
    h_n(L_d y)\le0,\qquad y\in\cH_d^+.
  \end{align*}

  Next, the restriction of $\Gamma$ to $\cL_1(H)$ is bounded. To see
  this, recall that $(\cL_1(H),\|\cdot\|_{\cL_1(H)})$ is a Banach space and
  that the embedding $\cL_1(H)\hookrightarrow\cH$ is continuous, with
  $\|\cdot\|_{\cH}\leq\|\cdot\|_{\cL_1(H)}$. By hypothesis,
  $\Gamma(\cL_1(H))\subseteq\cL_1(H)$ and $\Gamma\colon\cH\to\cH$ is bounded.
  Suppose $y_n\to y$ in $\cL_1(H)$ and $\Gamma(y_n)\to z$ in $\cL_1(H)$. Then
  $y_n\to y$ and $\Gamma(y_n)\to z$ also in $\cH$ by continuity of the
  embedding, and the boundedness of $\Gamma$ on $\cH$ gives
  $\Gamma(y_n)\to\Gamma(y)$ in $\cH$. Hence $z=\Gamma(y)$, so the graph of
  $\Gamma\rvert_{\cL_1(H)}$ is closed in $\cL_1(H)\times\cL_1(H)$. The closed
  graph theorem now yields $C_{\Gamma,1}>0$ such that
  \begin{align*}
    h_n(\Gamma_d(y))
    \le \Tr(\Gamma_d(y))
    \le C_{\Gamma,1}\Tr(y)
    = C_{\Gamma,1}h_n(y),
  \end{align*}
  for all $y\in\cH_d^+$. Furthermore, the definition of $b_d$ includes the
  projected compensation drift, and the trace first-moment assumption controls
  it uniformly. The positivity of $b$ used below follows from admissibility:
  for $v\in\cHplus$ the integrand $\langle\chi(\xi),v\rangle$ is non-negative,
  so the drift condition in Definition~\ref{def:admissible-irregular}(ii)
  implies $\langle b,v\rangle\ge0$ for all $v\in\cHplus$; by self-duality of
  $\cHplus$, $b\in\cHplus$. Hence
  \begin{align*}
    \Tr(b_d)
    &=\Tr(\bP_d b)
      +\int_{\bP_d(E_d)}\Tr(\xi)\,m_d(\D\xi)\\
    &=\Tr(\bP_d b)
      +\int_{E_d}\Tr(\bP_d\eta)\,m(\D\eta)\\
    &\le \Tr(b)+\int_{\cHpluso}\Tr(\eta)\,m(\D\eta)
    \df C_b^{\Tr}<\infty .
  \end{align*}
  Here we used positivity of $b$ and $\eta$, the compression inequality
  $\Tr(\bP_d z)\le\Tr(z)$ for positive trace-class $z$, and the push-forward
  definition of $m_d$. The trace first-moment assumptions in the lemma imply
  \begin{align*}
    \int_{\cH_d^+\setminus\{0\}}\Tr(\xi)\,m_d(\D\xi)\le C_m,
  \end{align*}
  \begin{align*}
    \int_{\cH_d^+\setminus\{0\}}\Tr(\xi)\,M_d(y,\D\xi)
    \le C_\mu\,\Tr(y)
    = C_\mu\,h_n(y),
  \end{align*}
  with constants independent of $d$. Altogether,
  \begin{align*}
    K_d(y)\le C_0+C_1\Tr(y),\qquad y\in\cH_d^+,
  \end{align*}
  for constants $C_0,C_1$ independent of $d$. Since
  $\Tr(\bP_d x)\le \Tr(x)$, we obtain
  \begin{align*}
    \EXspec{\MP_x^d}{\Tr(X_t^d)}
    \le
    \Tr(x)+C_0 t
    +C_1\int_0^t\EXspec{\MP_x^d}{\Tr(X_s^d)}\,\D s.
  \end{align*}
  Gronwall's lemma now yields
  \begin{align*}
    \sup_{d\in\MN}\sup_{t\in[0,T]}
    \EXspec{\MP_x^d}{\Tr(X_t^d)}\le C_T^{\Tr}(x),
  \end{align*}
  which proves the claim.
\end{proof}

\begin{lemma}\label{lem:irregular-smoothing-cV}
  Assume Assumption~\ref{assump:cV-compact-containment}. Let
  $x\in\cHplus$, $T>0$ and $\delta\in(0,T]$. Then there exists a constant
  $C_{T,\delta}(x)>0$, independent of $d\in\MN$, such that
  \begin{align*}
    \sup_{d\in\MN}\EXspec{\MP_x^d}{\sup_{\delta\le t\le T}
    \|X_t^d\|_{\cV}^2}\le C_{T,\delta}(x).
  \end{align*}
\end{lemma}
\begin{proof}
	  We first prove the estimate for initial data already in
	  $\cV\cap\cHplus$. Work in the equivalent adapted norm
	  $\|\cdot\|_{\cV,B}$ introduced in
	  Lemma~\ref{lem:Lyapunov-semigroup-smoothing}. By norm equivalence,
	  contractivity of $\bP_d$ in $\|\cdot\|_{\cV,B}$, and
	  Assumption~\ref{assump:cV-compact-containment}, the projected kernels satisfy
	  \[
	    \int_{\cH_d^+\setminus\{0\}}\|\xi\|_{\cV,B}^{2}\,m_d(\D\xi)\le C
	  \]
	  and
	  \[
	    \int_{\cH_d^+\setminus\{0\}}\|\xi\|_{\cV,B}^{2}\,M_d(y,\D\xi)
	    \le C(1+\|y\|_{\cV,B}^{2}),\qquad y\in\cH_d^+ .
	  \]
	  Since the left-hand side in the second display is homogeneous and linear in
	  $y$, applying the estimate to $y/\|y\|_{\cV,B}$ when $y\ne0$ gives the
	  sharper homogeneous bound
	  \begin{align}\label{eq:cV-kernel-homogeneous}
	    \int_{\cH_d^+\setminus\{0\}}\|\xi\|_{\cV,B}^{2}\,M_d(y,\D\xi)
	    \le C\|y\|_{\cV,B},\qquad y\in\cH_d^+ ,
	  \end{align}
	  with the convention that the left-hand side is zero at $y=0$.
	  Consequently, the large-jump drift
	  \[
	    a_d^{\cV}(y)\df
	    \int_{\cH_d^+\cap\{\|\xi\|_{\cH}>1\}}\xi\,
	      \bigl(m_d(\D\xi)+M_d(y,\D\xi)\bigr)
	  \]
	  has linear growth in the adapted norm:
	  \begin{align}\label{eq:cV-large-jump-drift}
	    \|a_d^{\cV}(y)\|_{\cV,B}\le C(1+\|y\|_{\cV,B}),
	    \qquad y\in\cH_d^+ .
	  \end{align}
	  Indeed, on $\{\|\xi\|_{\cH}>1\}$ the continuous embedding
	  $\cV\hookrightarrow\cH$ implies
	  $\|\xi\|_{\cV,B}\le C\|\xi\|_{\cV,B}^{2}$ after changing constants; the
	  $m_d$-part is therefore bounded by the $\cV$ second moment, and the
	  $M_d$-part by~\eqref{eq:cV-kernel-homogeneous}. For
	  $q_d(y)\df\|y\|_{\cV,B}^{2}$, $y\in\cH_d^+$, the finite-dimensional generator
	  satisfies the uniform growth estimate
	  \begin{align}\label{eq:cV-generator-growth}
    \cG^d q_d(y)
    +
    \int_{\cH_d^+\setminus\{0\}}\|\xi\|_{\cV,B}^{2}
      \bigl(m_d(\D\xi)+M_d(y,\D\xi)\bigr)
    \le C\bigl(1+\|y\|_{\cV,B}^{2}\bigr),
  \end{align}
  with a constant independent of $d$. To see this, expand
  $y=\sum y_{i,j}\be_{i,j}$ on $\cH_d$. The Lyapunov part is dissipative:
  \[
    2\langle L_dy,y\rangle_{\cV,B}
    =
    -2\sum_{1\le i\le j\le d}
      \lambda_{i,j}(1+\lambda_{i,j})|y_{i,j}|^{2}
    \le0.
  \]
  The projection $\bP_d$ is contractive in $\|\cdot\|_{\cV,B}$, so
  $\sup_d\|b_d\|_{\cV,B}<\infty$. Boundedness of
  $\Gamma:\cV\cap\cH\to\cV\cap\cH$ gives
  $\|\Gamma_d(y)\|_{\cV,B}\le C\|y\|_{\cV,B}$. Hence Cauchy--Schwarz yields
  \[
    2\langle y,b_d+\Gamma_d(y)\rangle_{\cV,B}
    \le C(1+\|y\|_{\cV,B}^{2}).
  \]
	  For the large-jump drift term, Young's inequality and
	  \eqref{eq:cV-large-jump-drift} give
	  \[
	    2\langle y,a_d^{\cV}(y)\rangle_{\cV,B}
	    \le \|y\|_{\cV,B}^{2}+\|a_d^{\cV}(y)\|_{\cV,B}^{2}
	    \le C(1+\|y\|_{\cV,B}^{2}).
	  \]
	  The quadratic jump term is controlled by Assumption
	  \ref{assump:cV-compact-containment}, transferred to
	  $\|\cdot\|_{\cV,B}$ by norm equivalence, together with the weaker form of
	  \eqref{eq:cV-kernel-homogeneous}. Thus
	  \[
	    \int_{\cH_d^+\setminus\{0\}}\|\xi\|_{\cV,B}^{2}
	      (m_d(\D\xi)+M_d(y,\D\xi))
    \le C(1+\|y\|_{\cV,B}^{2}),
  \]
  which proves \eqref{eq:cV-generator-growth}. Dynkin's formula and
  Gronwall's lemma give the corresponding pointwise estimate
  \[
    \sup_d\sup_{t\le T}
    \EXspec{\MP_y^d}{\|X_t^d\|_{\cV,B}^{2}}
    \le C_T(1+\|y\|_{\cV,B}^{2}).
  \]

  We now prove the maximal estimate. Write the finite-variation drift as
  \[
    \beta_d(z)\df b_d+\Gamma_d(z)
    +\int_{\cH_d^+\cap\{\|\xi\|_{\cH}>1\}}\xi\,
      \bigl(m_d(\D\xi)+M_d(z,\D\xi)\bigr).
  \]
	  The large-jump drift estimate~\eqref{eq:cV-large-jump-drift}, together with
	  boundedness of $b_d$ and $\Gamma_d$ in the adapted norm, gives
	  \[
	    \|\beta_d(z)\|_{\cV,B}^{2}
    \le C\bigl(1+\|z\|_{\cV,B}^{2}\bigr),
    \qquad z\in\cH_d^+,
  \]
  with $C$ independent of $d$. For $y\in\cV\cap\cHplus$ define
  \[
    F_d(t)\df
    \EXspec{\MP_y^d}{\sup_{0\le r\le t}\|X_r^d\|_{\cV,B}^{2}},
    \qquad 0\le t\le T.
  \]
  The mild representation in the finite-dimensional Hilbert space
  $(\cH_d,\|\cdot\|_{\cV,B})$ is
  \[
    X_r^d=\cT_d(r)y+\int_0^r\cT_d(r-s)\beta_d(X_s^d)\,\D s
    +M_r^{d,\cT},
  \]
  where
  \[
    M_r^{d,\cT}\df
    \int_0^r\int_{\cH_d^+\setminus\{0\}}
    \cT_d(r-s)\xi\,(\mu^{X^d}-\nu^d)(\D s,\D\xi).
  \]
  Since $\cT_d$ is contractive in $\|\cdot\|_{\cV,B}$,
  \[
    \sup_{r\le t}\|\cT_d(r)y\|_{\cV,B}^{2}\le\|y\|_{\cV,B}^{2}.
  \]
  For the finite-variation term, Jensen's inequality and the drift-growth bound
  yield
  \begin{align*}
    &\EXspec{\MP_y^d}{\sup_{r\le t}
      \left\|\int_0^r\cT_d(r-s)\beta_d(X_s^d)\,\D s\right\|_{\cV,B}^{2}}\\
    &\qquad\le
    t\int_0^t
      \EXspec{\MP_y^d}{\|\beta_d(X_s^d)\|_{\cV,B}^{2}}\,\D s
    \le
    C\int_0^t\bigl(1+F_d(s)\bigr)\,\D s.
  \end{align*}
  For the martingale convolution, Lemma~\ref{lem:jump-convolution-maximal}
  gives
  \begin{align*}
    \EXspec{\MP_y^d}{\sup_{r\le t}\|M_r^{d,\cT}\|_{\cV,B}^{2}}
    &\le
    C
    \EXspec{\MP_y^d}{\int_0^t\int_{\cH_d^+\setminus\{0\}}
      \|\xi\|_{\cV,B}^{2}
      \bigl(m_d(\D\xi)+M_d(X_s^d,\D\xi)\bigr)\,\D s}\\
    &\le
    C\int_0^t\bigl(1+F_d(s)\bigr)\,\D s,
  \end{align*}
  by \eqref{eq:cV-generator-growth}. Combining the three estimates gives
  \[
    F_d(t)\le
    C\bigl(1+\|y\|_{\cV,B}^{2}\bigr)
    +C\int_0^t F_d(s)\,\D s,\qquad 0\le t\le T,
  \]
  with constants independent of $d$. Gronwall's lemma yields, whenever the
  initial datum is in $\cV$,
  \begin{align}\label{eq:cV-uniform-initial}
    \sup_{d\in\MN}
    \EXspec{\MP_y^d}{\sup_{0\le t\le T}\|X_t^d\|_{\cV,B}^{2}}
    \le C_T\bigl(1+\|y\|_{\cV,B}^{2}\bigr),
    \qquad y\in\cV\cap\cHplus .
  \end{align}
  By equivalence of $\|\cdot\|_{\cV}$ and $\|\cdot\|_{\cV,B}$, the same
  estimate holds with $\|\cdot\|_{\cV}$, after changing $C_T$.

  For a general $x\in\cHplus$ we do not bound
  $\EXspec{\MP_x^d}{\|X_\delta^d\|_{\cV}^{2}}$ pointwise directly; we first
  establish a time-averaged $\cV$ estimate and then restart the process at a
  positive time. We claim that for every $\delta\in(0,T]$,
  \begin{align}\label{eq:cV-integrated}
    \sup_{d\in\MN}\int_0^\delta
      \EXspec{\MP_x^d}{\|X_s^d\|_{\cV}^{2}}\,\D s
    \le C_\delta(x)<\infty .
  \end{align}
  The mechanism is the elementary spectral identity: for every Lyapunov mode
  $\lambda_{i,j}\ge0$ and every $\tau\in[0,\delta]$,
  \begin{align}\label{eq:spectral-time-integral}
    (1+\lambda_{i,j})\int_0^{\tau}\E^{-2\lambda_{i,j}\sigma}\,\D\sigma
    =(1+\lambda_{i,j})\,\frac{1-\E^{-2\lambda_{i,j}\tau}}{2\lambda_{i,j}}
    \le \tau+\tfrac12 \le \delta+\tfrac12 ,
  \end{align}
  where the left-hand side is read as $\tau$ when $\lambda_{i,j}=0$. The factor
  $1+\lambda_{i,j}$ is exactly the $\cV$-weight of~\eqref{eq:cV-weight-bound},
  and \eqref{eq:spectral-time-integral} shows that integrating in time trades it
  for a bounded constant; this is the precise sense in which the heat smoothing
  absorbs the time singularity.

  We use the finite-rank mild representation, with $\cT_d(t)\df\bP_d\cT(t)\bP_d$,
  \[
    X_t^d=\cT_d(t)\bP_d x
          +\int_0^t\cT_d(t-s)\beta_d(X_s^d)\,\D s
          +M_t^{d,\cT},\qquad
    M_t^{d,\cT}=\int_0^t\!\!\int_{\cH_d^{+}\setminus\{0\}}
       \cT_d(t-s)\xi\,(\mu^{X^d}-\nu^d)(\D s,\D\xi).
  \]

  \emph{Deterministic term.} By \eqref{eq:cV-weight-bound} and
  \eqref{eq:spectral-time-integral},
  \[
    \int_0^\delta\|\cT_d(s)\bP_d x\|_{\cV}^{2}\,\D s
    \le C_3\sum_{i\le j}(1+\lambda_{i,j})
       |\langle\bP_d x,\be_{i,j}\rangle|^{2}
       \int_0^\delta \E^{-2\lambda_{i,j}s}\,\D s
    \le C_3\bigl(\delta+\tfrac12\bigr)\|x\|_{\cH}^{2}.
  \]

  \emph{Martingale term.} By the It\^o isometry for the compensated jump
  integral, Tonelli's theorem, \eqref{eq:cV-weight-bound} and
  \eqref{eq:spectral-time-integral},
  \begin{align*}
    \int_0^\delta\EXspec{\MP_x^d}{\|M_s^{d,\cT}\|_{\cV}^{2}}\,\D s
    &=\EXspec{\MP_x^d}{\int_0^\delta\!\!\int_{\cH_d^{+}\setminus\{0\}}
       \Bigl(\int_r^\delta\|\cT_d(s-r)\xi\|_{\cV}^{2}\,\D s\Bigr)
       \nu^d(\D r,\D\xi)}\\
    &\le C_3\bigl(\delta+\tfrac12\bigr)\,
       \EXspec{\MP_x^d}{\int_0^\delta\!\!\int_{\cH_d^{+}\setminus\{0\}}
          \|\xi\|_{\cH}^{2}\,\nu^d(\D r,\D\xi)}\\
    &\le C_3\bigl(\delta+\tfrac12\bigr)
       \int_0^\delta\bigl(C_0+C_1\EXspec{\MP_x^d}{\|X_r^d\|_{\cH}^{2}}\bigr)\D r ,
  \end{align*}
  which is finite uniformly in $d$ by
  Lemma~\ref{lem:jump-bracket-finite-rank} and
  Lemma~\ref{lem:irregular-tightness-square-bound}. This step uses only the
  $\cH$ jump second moment, hence is valid in the state-dependent case
  $\mu\neq0$ as well.

  \emph{Drift term.} Set $A_\delta\df\int_0^\delta C(1+\sigma^{-1/2})\,\D\sigma
  =C(\delta+2\sqrt\delta)$. Using $\|\cT(r)\|_{\cL(\cH,\cV)}\le C(1+r^{-1/2})$,
  the growth bound~\eqref{eq:beta-growth-bound}, Cauchy--Schwarz in $r$, and the
  $\cH$-moment bound,
  \[
    \int_0^\delta\EXspec{\MP_x^d}{\Bigl\|\int_0^s\cT_d(s-r)\beta_d(X_r^d)\,\D r
      \Bigr\|_{\cV}^{2}}\D s
    \le A_\delta^{2}\,\delta\,
       \sup_{r\le\delta}\EXspec{\MP_x^d}{\|\beta_d(X_r^d)\|_{\cH}^{2}}<\infty
  \]
  uniformly in $d$. Summing the three contributions
  proves~\eqref{eq:cV-integrated}.

  It follows, applying \eqref{eq:cV-integrated} on $[\delta/2,\delta]$, that for
  every $d$ there is a deterministic time $s_d\in[\delta/2,\delta]$ with
  \[
    \EXspec{\MP_x^d}{\|X_{s_d}^d\|_{\cV}^{2}}
    \le \frac{2}{\delta}\,C_{\delta/2}(x)\df K_\delta(x),
  \]
  uniformly in $d$; in particular $X_{s_d}^d\in\cV\cap\cHplus$
  $\MP_x^d$-almost surely. Restarting the finite-rank Markov process at the
  deterministic time $s_d\le\delta$ and using the $\cV$-initial-datum maximal
  estimate~\eqref{eq:cV-uniform-initial} together with the Markov property,
  \[
    \EXspec{\MP_x^d}{\sup_{\delta\le t\le T}\|X_t^d\|_{\cV}^{2}}
    \le \EXspec{\MP_x^d}{\sup_{s_d\le t\le T}\|X_t^d\|_{\cV}^{2}}
    = \EXspec{\MP_x^d}{\Psi_d(X_{s_d}^d)}
    \le C_T\bigl(1+K_\delta(x)\bigr)\df C_{T,\delta}(x),
  \]
  where $\Psi_d(y)\df\EXspec{\MP_y^d}{\sup_{0\le t\le T}\|X_t^d\|_{\cV}^{2}}
  \le C_T(1+\|y\|_{\cV}^{2})$ by~\eqref{eq:cV-uniform-initial}. This is the
  assertion of the lemma.
\end{proof}

\begin{lemma}\label{lem:tightness-irregular}
Assume Assumption~\ref{assump:cV-compact-containment}. For every
$x\in\cHplus$ the sequence $(\MP^{d}_{x})_{d\in\MN}$ of laws of
$(X^{d})_{d\in\MN}$ is a tight sequence of measures on
$\cB(D(\MRplus,\cHplus))$.
\end{lemma}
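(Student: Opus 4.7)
The plan is to verify the standard two-part criterion for tightness on $D(\MRplus,\cHplus)$ equipped with the Skorohod topology: (i) a compact containment condition for the laws $(\MP_{x}^{d})_{d\in\MN}$, and (ii) an Aldous-type modulus-of-continuity estimate for the stopped increments $\norm{X_{\tau+\delta}^{d}-X_{\tau}^{d}}$, where $\tau$ is a bounded stopping time and $\delta\downarrow 0$. The overall strategy parallels the regular-case tightness argument in~\cite{karbach2023finiterank}, with the essential modifications being those needed to handle the unbounded drift operator $L$.

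For the compact containment, I would first establish uniform second-moment bounds
\begin{align*}
  \sup_{d\in\MN}\EXspec{\MP_{x}^{d}}{\sup_{t\in[0,T]}\norm{X_{t}^{d}}^{2}}<\infty,
  \qquad
  \sup_{d\in\MN}\EXspec{\MP_{x}^{d}}{\int_{0}^{T}\norm{X_{s}^{d}}_{\cV}^{2}\D s}<\infty.
\end{align*}
The first bound follows by applying It\^o's formula to $u\mapsto\norm{u}^{2}$ along the semimartingale decomposition of $X^{d}$ from Proposition~\ref{prop:embedding-affine-main}, using the second-moment admissibility condition~\ref{item:m-2moment} to bound the jump part, Doob's maximal inequality for the martingale component $\bar{J}^{d}$, and Gronwall's lemma. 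The refined $\cV$-energy bound is obtained by absorbing the term $\langle L(X_{s}^{d}),X_{s}^{d}\rangle$ (which appears from the drift $Bu+uB^{*}$) via the coercivity inequality in part~(c) of \cref{item:linear-operator-unbounded}. All constants depend only on the original admissible parameters and not on $d$, since $\bP_{d}$ is a contraction on $\cH$ and the finite-rank parameters $(b_{d},\mathbf{B}_{d},m_{d},\mu_{d})$ from Definition~\ref{def:admissible-Galerkin} are dominated by $(b,\mathbf{B},m,\mu)$ in the relevant norms. Because the embedding $\cV\hookrightarrow\cH$ is compact, the closed $\cV$-ball $\{u\in\cHplus\colon\norm{u}_{\cV}\leq R\}$ is compact in $\cH$, and the energy bound combined with a Markov-type estimate on the exit time yields, for any prescribed $\eta>0$, a compact $K_{\eta}\subset\cHplus$ with $\inf_{d}\MP_{x}^{d}(X_{t}^{d}\in K_{\eta}\text{ for all }t\in[0,T])\geq 1-\eta$.

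For Aldous' condition, I would decompose $X_{\tau+\delta}^{d}-X_{\tau}^{d}$ into its finite-variation and martingale parts using the characteristics recorded in Proposition~\ref{prop:embedding-affine-main}. The finite-variation contribution is controlled in $L^{1}$ by $\delta$ times a constant multiple of $1+\sup_{s\leq T}\EXspec{\MP_{x}^{d}}{\norm{X_{s}^{d}}}$, which is uniformly bounded in $d$ by the previous step. The martingale increment $\bar{J}^{d}_{\tau+\delta}-\bar{J}^{d}_{\tau}$ is estimated via Doob's inequality and its predictable quadratic variation, which by admissibility is majorised by $C\int_{\tau}^{\tau+\delta}(1+\norm{X_{s}^{d}}^{2})\D s$. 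Both contributions vanish uniformly in $d$ in probability as $\delta\downarrow 0$, which together with (i) gives the desired tightness.

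The main obstacle is the irregularity of $L$: a direct It\^o expansion of $\norm{\cdot}^{2}$ produces the term $\langle L(X_{s}^{d}),X_{s}^{d}\rangle$, for which no naive $\cH$-bound is available. The coercivity assumption in part~(c) of \cref{item:linear-operator-unbounded} is tailored precisely to convert this term into the $\cV$-energy estimate above, from which both the compact containment in $\cH$ and the uniform-in-$d$ drift control inside Aldous' criterion follow. Once this energy estimate is in place, the remaining steps proceed along the regular-case blueprint in~\cite{karbach2023finiterank}.
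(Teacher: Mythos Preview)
Your proposal differs substantively from the paper's route and has a genuine gap in the Aldous step. The paper does \emph{not} run Aldous on the full $\cH$-valued increments; instead it invokes Jakubowski's weak tightness criterion and verifies tightness only for the one-dimensional projections $\langle X^{d},\be_{k}\rangle$ with $\be_{k}$ an eigenvector of $L$. The point is that testing against $\be_{k}\in\dom(L)$ converts the unbounded drift into the bounded multiplier $-\lambda_{k}$: the drift of $\langle X^{d}_{t},\be_{k}\rangle$ is $-\lambda_{k}\langle X^{d}_{s},\be_{k}\rangle+\langle\hat b_{d},\be_{k}\rangle+\langle X^{d}_{s},\hat\Gamma(\be_{k})\rangle$, which is dominated by $C_{k}(1+\norm{X^{d}_{s}})$ with a constant depending on $k$ but not on $d$. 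The variation-of-constants formula with the scalar semigroup $\E^{-\lambda_{k}t}$ then yields both the uniform bound $\sup_{d}\EX{\sup_{t\leq T}\langle X^{d}_{t},\be_{k}\rangle^{2}}<\infty$ and the Aldous estimate.

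Your direct Aldous argument in $\cH$ breaks precisely at the unbounded drift. You claim the finite-variation part is ``controlled in $L^{1}$ by $\delta$ times a constant multiple of $1+\sup_{s\leq T}\EX{\norm{X^{d}_{s}}}$'', but this increment contains $\int_{\tau}^{\tau+\delta}L(X^{d}_{s})\D s$, and $\norm{L(u)}_{\cH}$ is \emph{not} dominated by $\norm{u}_{\cH}$. Nor does your $\cV$-energy bound rescue this: coercivity gives $L\in\cL(\cV,\cV^{*})$, so the energy estimate controls $\norm{\int_{\tau}^{\tau+\delta}L(X^{d}_{s})\D s}_{\cV^{*}}$, not its $\cH$-norm; for the Laplacian example one has $\Delta\in\cL(H^{1},H^{-1})$ but not $\cL(H^{1},L^{2})$. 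A related issue affects your compact containment: the bound $\sup_{d}\EX{\int_{0}^{T}\norm{X^{d}_{s}}_{\cV}^{2}\D s}<\infty$ controls only time-averaged $\cV$-norms, whereas Jakubowski's condition requires $X^{d}_{t}$ to lie in a fixed $\cH$-compact set for \emph{all} $t\leq T$ with high probability, which an $L^{2}$-in-time estimate does not deliver on its own. Both difficulties disappear once you pass to the eigenbasis projections, which is the device the paper uses.
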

\begin{proof}
We use Aldous' criterion on $D([0,T],\cHplus)$ for each fixed $T>0$. The state
space is the closed cone of the separable Hilbert space $\cH$. We write
$\cT_d(t)=\bP_d\cT(t)\bP_d$ and use the mild form
\[
  X_t^d=\cT_d(t)\bP_d x
        +\int_0^t\cT_d(t-s)\beta_d(X_s^d)\,\D s
        +M_t^{d,\cT},
\]
where $\beta_d$ is defined in~\eqref{eq:beta-growth-bound} and
\[
  M_t^{d,\cT}
  =\int_0^t\int_{\cH_d^+\setminus\{0\}}
    \cT_d(t-s)\xi\,(\mu^{X^d}-\nu^d)(\D s,\D\xi).
\]

\smallskip
\noindent\textbf{Step 1: Compact containment.}
We use the standard compactness criterion in Hilbert spaces: a family is
relatively compact if it is bounded and its tails with respect to the finite-rank
orthogonal projections $\bP_N$ vanish uniformly. The maximal
$\cH$-moment estimate in Lemma~\ref{lem:irregular-tightness-square-bound}
provides the boundedness. It remains to control the spectral tails.

Fix $\alpha>0$. On every interval $[\delta,T]$ with $\delta>0$,
Lemma~\ref{lem:irregular-smoothing-cV} and compactness of
$\cV\hookrightarrow\cH$ imply
\begin{align}\label{eq:tightness-tail-positive-time}
  \lim_{N\to\infty}\sup_{d\in\MN}
  \MP_x^d\!\left(
    \sup_{\delta\le s\le T}\|\bP_N^\perp X_s^d\|_{\cH}>\alpha
  \right)=0 .
\end{align}
Indeed
$\|\bP_N^\perp z\|_{\cH}\le
\|\bP_N^\perp\|_{\cL(\cV,\cH)}\|z\|_{\cV}$ and
$\|\bP_N^\perp\|_{\cL(\cV,\cH)}\to0$. This compactness follows from the
spectral Hilbert-scale compatibility in
Definition~\ref{def:admissible-irregular}(iv)(c), since
$\lambda_i\to\infty$ makes omitted spectral tails vanish uniformly on bounded
subsets of $\cV$.

On the short interval $[0,\delta]$ we work in $\cH$ and keep the Lyapunov part
inside the semigroup. The deterministic term satisfies
\[
  \sup_{0\le s\le\delta}
  \|\bP_N^\perp\cT_d(s)\bP_d x\|_{\cH}
  \le \|\bP_N^\perp x\|_{\cH},
\]
because the spectral projections commute with $\cT$ and $\cT$ is contractive.
For the drift convolution, contraction, Jensen's inequality,
\eqref{eq:beta-growth-bound}, and
Lemma~\ref{lem:irregular-tightness-square-bound} give
\[
  \sup_d
  \EXspec{\MP_x^d}{\sup_{0\le s\le\delta}
  \left\|\int_0^s\cT_d(s-r)\beta_d(X_r^d)\,\D r\right\|_{\cH}^{2}}
  \le C\delta^2 .
\]
For the martingale convolution, Lemma~\ref{lem:jump-convolution-maximal},
Lemma~\ref{lem:jump-bracket-finite-rank}, and the same moment estimate give
\[
  \sup_d
  \EXspec{\MP_x^d}{\sup_{0\le s\le\delta}\|M_s^{d,\cT}\|_{\cH}^{2}}
  \le C\delta .
\]
Consequently,
\begin{align}\label{eq:tightness-tail-short-time}
  \lim_{\delta\downarrow0}\limsup_{N\to\infty}\sup_{d\in\MN}
  \MP_x^d\!\left(
    \sup_{0\le s\le\delta}\|\bP_N^\perp X_s^d\|_{\cH}>\alpha
  \right)=0 .
\end{align}
We now make the diagonal compact set explicit. Fix $\varepsilon>0$. By
Lemma~\ref{lem:irregular-tightness-square-bound}, choose $R<\infty$ such that
\[
  \sup_d\MP_x^d\!\left(\sup_{s\le T}\|X_s^d\|_{\cH}>R\right)
  \le \varepsilon/2 .
\]
Let $\alpha_k=2^{-k}$, $k\ge1$. For each $k$, first choose
$\delta_k\in(0,T]$ small enough and then choose $N_k$ recursively increasing so
that \eqref{eq:tightness-tail-short-time} and
\eqref{eq:tightness-tail-positive-time} give
\[
  \sup_d\MP_x^d\!\left(
    \sup_{0\le s\le T}\|\bP_{N_k}^{\perp}X_s^d\|_{\cH}>\alpha_k
  \right)
  \le \varepsilon\,2^{-k-1}.
\]
Indeed, the probability on $[0,T]$ is bounded by the sum of the probabilities
on $[0,\delta_k]$ and $[\delta_k,T]$, and the two preceding tail estimates make
these terms arbitrarily small by first choosing $\delta_k$ and then $N_k$.
Define
\[
  K\df
  \left\{z\in\cHplus:\|z\|_{\cH}\le R,\ 
  \|\bP_{N_k}^{\perp}z\|_{\cH}\le\alpha_k
  \text{ for all }k\ge1\right\}.
\]
The set $K$ is closed. It is compact in $\cH$ by the Hilbert compactness
criterion: it is bounded, and for every $\eta>0$ one can choose $k$ with
$\alpha_k<\eta$, after which all elements of $K$ have spectral tail beyond the
finite-dimensional space $\cH_{N_k}$ bounded by $\eta$. Consequently,
\[
  \sup_{d\in\MN}\MP_x^d\bigl(X_s^d\notin K
  \text{ for some }s\in[0,T]\bigr)
  \le \varepsilon/2+\sum_{k\ge1}\varepsilon\,2^{-k-1}
  \le\varepsilon .
\]
This proves compact containment.

\smallskip
\noindent\textbf{Step 2: Aldous' increment condition.}
Let $(\tau_d)_{d\in\MN}$ be stopping times bounded by $T$ and let
$\theta_d\downarrow0$. The mild representation gives
\begin{align*}
  X_{\tau_d+\theta_d}^d-X_{\tau_d}^d
  &=(\cT_d(\theta_d)-I)X_{\tau_d}^d\\
  &\quad+\int_{\tau_d}^{\tau_d+\theta_d}
    \cT_d(\tau_d+\theta_d-s)\beta_d(X_s^d)\,\D s\\
  &\quad+N_d(\tau_d,\theta_d),
\end{align*}
where
\[
  N_d(\tau,\theta)
  \df
  \int_\tau^{\tau+\theta}\int_{\cH_d^+\setminus\{0\}}
    \cT_d(\tau+\theta-s)\xi\,(\mu^{X^d}-\nu^d)(\D s,\D\xi).
\]
For the semigroup term, compact containment from Step~1 and uniform strong
continuity of $\cT(h)$ on compact subsets of $\cH$ imply
\[
  \lim_{\theta\downarrow0}\sup_d\sup_{\tau_d\le T}
  \MP_x^d\!\left(
    \|(\cT_d(\theta)-I)X_{\tau_d}^d\|_{\cH}>\alpha
  \right)=0 .
\]
For the drift term, using contraction and the supremum moment estimate,
\[
  \EXspec{\MP_x^d}{
  \left\|\int_{\tau_d}^{\tau_d+\theta_d}
    \cT_d(\tau_d+\theta_d-s)\beta_d(X_s^d)\,\D s
  \right\|_{\cH}^{2}}
  \le
  C\theta_d^2\,
  \EXspec{\MP_x^d}{1+\sup_{r\le T+1}\|X_r^d\|_{\cH}^{2}}
  \le C\theta_d^2 .
\]
For the compensated jump convolution,
Corollary~\ref{cor:stopped-jump-convolution-maximal} gives
\begin{align*}
  \EXspec{\MP_x^d}{\|N_d(\tau_d,\theta_d)\|_{\cH}^{2}}
  &\le
  C\EXspec{\MP_x^d}{\int_{\tau_d}^{\tau_d+\theta_d}
    \int_{\cH_d^+\setminus\{0\}}\|\xi\|_{\cH}^{2}\,\nu^d(\D s,\D\xi)}\\
  &\le
  C\theta_d\,
  \EXspec{\MP_x^d}{1+\sup_{r\le T+1}\|X_r^d\|_{\cH}^{2}}
  \le C\theta_d .
\end{align*}
Chebyshev's inequality yields Aldous' increment condition. Since $T>0$ was
arbitrary, the sequence is tight on $D(\MRplus,\cHplus)$.
\end{proof}

\begin{proposition}\label{prop:irregular-weak-limit-mp}
  Let
  \begin{align*}
    \cD_0\df \lin\{f_u\colon f_u(y)\df \E^{-\langle y,u\rangle},
    \ u\in\dom(L)\cap\cHplus\}.
  \end{align*}
  Let $x\in\cHplus$ and let $\MP$ be a weak limit point of the sequence
  $(\MP_x^d)_{d\in\MN}$. Then $\MP$ solves the martingale problem for
  $(\cG,\cD_0)$ on $D(\MRplus,\cHplus)$, where on the exponential generators
  $f_u(y)=\E^{-\langle y,u\rangle}$ with $u\in\dom(L)\cap\cHplus$ the
  operator $\cG$ is given pointwise by
  \begin{align*}
    \cG \E^{-\langle \cdot,u\rangle}(y)
    \df \bigl(-F(u)-\langle y,R(u)\rangle\bigr)\E^{-\langle y,u\rangle},
    \qquad y\in\cHplus,\ u\in\dom(L)\cap\cHplus.
  \end{align*}
  The restriction $u\in\dom(L)\cap\cHplus$ is essential: $R(u)=L(u)+\hat R(u)$
  involves the unbounded Lyapunov operator $L$ and is therefore not defined for
  arbitrary $u\in\cHplus$. The class $\dom(L)\cap\cHplus$ is dense in $\cHplus$:
  for every $u\in\cHplus$, the spectral truncation $\bP_d(u)$ lies in
  $\cH_d^{+}\subseteq\dom(L)\cap\cHplus$ and converges to $u$ in $\cH$.
\end{proposition}
\begin{proof}
  This is the irregular analogue of
  \cite[Proposition~6.3]{karbach2023finiterank}, restricted to test loadings
  on which the limiting generator $\cG$ is well defined. By linearity it
  suffices to verify the martingale property on the generators
  $f_u(y)=\E^{-\langle y,u\rangle}$ for fixed $u\in\dom(L)\cap\cHplus$. Set
  $u_d\df \bP_d(u)\in\cH_d^{+}\subseteq\dom(L)\cap\cHplus$ and
  $f_u^d(y)\df\E^{-\langle y,u_d\rangle}$. By
  Proposition~\ref{prop:embedding-affine-main}, the process
  \begin{align*}
    M_t^{d,u}\df f_u^d(X_t^d)-f_u^d(\bP_d(x))
    -\int_0^t \cG^d f_u^d(X_s^d)\,\D s
  \end{align*}
  is a martingale under $\MP_x^d$. We pass to the limit only on this
  restricted test domain, where both $L(u)$ and $L(u_d)\to L(u)$ in $\cH$
  are defined (the latter by spectral truncation: if
  $u=\sum_{i\le j}u_{i,j}\be_{i,j}\in\dom(L)$, then
  $L(u_d)=-\sum_{i\le j\le d}\lambda_{i,j}u_{i,j}\be_{i,j}\to L(u)$ in
  $\cH$). Combined with the continuity of $F$ and $\hat R$ on $\cHplus$
  (\cite[Remark 3.4]{CKK22a}), this gives
  \begin{align*}
    f_u^d\longrightarrow f_u,\qquad
    \cG^d f_u^d\longrightarrow \cG f_u,
  \end{align*}
  locally uniformly on $\cHplus$. The generator terms have linear growth, so we
  add the uniform-integrability step explicitly. Since
  $F_d(u_d)=F(u_d)\to F(u)$ and
  $R_d(u_d)=L(u_d)+\bP_d\hat R(u_d)\to L(u)+\hat R(u)=R(u)$ in $\cH$, there is
  a constant $C_u<\infty$ such that, for all $d$ and all $y\in\cHplus$,
  \[
    |\cG^d f_u^d(y)|+|\cG f_u(y)|
    \le C_u(1+\|y\|_{\cH}).
  \]
  Indeed, $0\le f_u^d,f_u\le1$ on $\cHplus$, and the only unbounded term is the
  linear pairing with $y$. Lemma~\ref{lem:irregular-tightness-square-bound}
  implies, in fact,
  \[
    \sup_d
    \EXspec{\MP_x^d}{\int_0^T(1+\|X_s^d\|_{\cH}^{2})\,\D s}<\infty,
    \qquad T<\infty .
  \]
  The same estimate is inherited by any weak limit by Fatou's lemma, first for
  bounded continuous truncations and then by monotone convergence.

  Let $\rho_R\colon\cHplus\to[0,1]$ be continuous with $\rho_R(y)=1$ for
  $\|y\|\le R$ and $\rho_R(y)=0$ for $\|y\|\ge R+1$. The locally uniform
  convergence above gives convergence of the bounded continuous functions
  $\rho_R\cG^d f_u^d$ to $\rho_R\cG f_u$. The corresponding time-integral
  functionals are continuous at paths except possibly at their jump times, which
  have zero Lebesgue measure for every c\`adl\`ag path. Hence they pass to the
  weak limit. The error on $\{\|y\|>R\}$ is bounded uniformly by the displayed
  linear-growth estimate and
  $(1+\|y\|)\mathbf 1_{\{\|y\|>R\}}\le R^{-1}(1+\|y\|^2)$, and therefore tends
  to zero as $R\to\infty$. Thus the hypotheses of
  \cite[Chapter~4, Lemma~5.1]{EK86} are satisfied, and we conclude that
  \begin{align*}
    f_u(X_t)-f_u(x)-\int_0^t \cG f_u(X_s)\,\D s
  \end{align*}
is a martingale under $\MP$. Since $\cD_0$ is the linear span of these
  generators, the same holds for every test function in $\cD_0$.
\end{proof}

\begin{lemma}[No fixed-time jumps of weak limits]\label{lem:no-fixed-time-jumps}
  Assume Assumption~\ref{assump:cV-compact-containment}. Let $\MP_x$
  be any weak limit point of $(\MP_x^d)_{d\in\MN}$ on
  $D(\MRplus,\cHplus)$. Then, for every deterministic $t\ge0$,
  \[
    \MP_x(\Delta X_t=0)=1,
  \]
  where $\Delta X_t=X_t-X_{t-}$ for $t>0$ and $\Delta X_0=0$.
  Consequently, the coordinate map $\omega\mapsto\omega(t)$ is
  $\MP_x$-a.s.\ continuous on $D(\MRplus,\cHplus)$.
\end{lemma}
\begin{proof}
  The assertion is trivial at $t=0$ by convention. Fix a deterministic
  $t>0$, choose $T>t+1$, and let $\eta>0$. It is enough to prove the
  deterministic-time left and right oscillation estimates
  \begin{align}\label{eq:no-fixed-jump-forward}
    \lim_{h\downarrow0}\limsup_{d\to\infty}
    \MP_x^d\!\left(
      \sup_{0\le\theta\le h}\|X_{t+\theta}^d-X_t^d\|_{\cH}>\eta
    \right)=0,
  \end{align}
  and
  \begin{align}\label{eq:no-fixed-jump-backward}
    \lim_{h\downarrow0}\limsup_{d\to\infty}
    \MP_x^d\!\left(
      \sup_{0\le\theta\le h}\|X_t^d-X_{t-\theta}^d\|_{\cH}>\eta
    \right)=0.
  \end{align}
  We prove the right estimate; the left estimate is identical with
  $[t,t+h]$ replaced by $[t-h,t]$. Let $h<t/2$. The mild representation gives,
  for $0\le\theta\le h$,
  \[
    X_{t+\theta}^d-X_t^d
    =
    (\cT_d(\theta)-I)X_t^d
    +\int_t^{t+\theta}\cT_d(t+\theta-s)\beta_d(X_s^d)\,\D s
    +N_\theta^{d,h},
  \]
  where
  \[
    N_\theta^{d,h}\df
    \int_t^{t+\theta}\int_{\cH_d^+\setminus\{0\}}
    \cT_d(t+\theta-s)\xi\,(\mu^{X^d}-\nu^d)(\D s,\D\xi).
  \]

  First consider the semigroup term. By Lemma~\ref{lem:irregular-smoothing-cV},
  \[
    \sup_d
    \EXspec{\MP_x^d}{\sup_{t/2\le r\le T}\|X_r^d\|_{\cV}^{2}}<\infty .
  \]
  Hence, for every $\varepsilon>0$, we can choose $R$ such that
  \[
    \sup_d\MP_x^d\!\left(
      \sup_{t/2\le r\le T}\|X_r^d\|_{\cV}>R
    \right)<\varepsilon .
  \]
  The set $K_R=\{z\in\cV\cap\cHplus:\|z\|_{\cV}\le R\}$ is compact in
  $\cH$, and strong continuity of $(\cT(\theta))_{\theta\ge0}$ is uniform on
  compact subsets. Since $\cT_d(\theta)z=\cT(\theta)z$ for
  $z\in\cH_d$, it follows that
  \[
    \lim_{h\downarrow0}\sup_d
    \MP_x^d\!\left(
      \sup_{0\le\theta\le h}\|(\cT_d(\theta)-I)X_t^d\|_{\cH}>\eta
    \right)=0.
  \]

  For the drift contribution, \eqref{eq:beta-growth-bound} and
  Lemma~\ref{lem:irregular-tightness-square-bound} yield
  \[
    \EXspec{\MP_x^d}{
      \sup_{0\le\theta\le h}
      \left\|\int_t^{t+\theta}\cT_d(t+\theta-s)\beta_d(X_s^d)\,\D s
      \right\|_{\cH}^{2}}
    \le
    h\int_t^{t+h}
      \EXspec{\MP_x^d}{\|\beta_d(X_s^d)\|_{\cH}^{2}}\,\D s
    \le C h^2 ,
  \]
  uniformly in $d$. For the compensated jump convolution,
  Lemma~\ref{lem:jump-convolution-maximal} gives
  \[
    \EXspec{\MP_x^d}{\sup_{0\le\theta\le h}\|N_\theta^{d,h}\|_{\cH}^{2}}
    \le
    C
    \EXspec{\MP_x^d}{\int_t^{t+h}\int_{\cH_d^+\setminus\{0\}}
      \|\xi\|_{\cH}^{2}\,\nu^d(\D s,\D\xi)}
    \le C h,
  \]
  where the last inequality follows from
  Lemma~\ref{lem:jump-bracket-finite-rank} and
  Lemma~\ref{lem:irregular-tightness-square-bound}. Chebyshev's inequality gives
  \eqref{eq:no-fixed-jump-forward}; the same argument on $[t-h,t]$ gives
  \eqref{eq:no-fixed-jump-backward}.

  The deterministic-time oscillation criterion for weak convergence in the
  Skorohod topology (Ethier--Kurtz \cite[Chapter~3, Section~7]{EK86}) now
  implies that every weak limit has zero jump at the fixed time $t$. Since
  $t>0$ was arbitrary, the assertion holds for all deterministic times. The last
  statement is the usual characterization of continuity points of the coordinate
  projection on Skorohod space.
\end{proof}

\begin{lemma}[Riccati stability for moving Galerkin loadings]
  \label{lem:riccati-moving-loading-stability}
  Let $w_d\in\cH_d^+$ and $w\in\cHplus$ with
  $\|w_d-w\|_{\cH}\to0$. Then, for every $T<\infty$,
  \[
    \sup_{t\in[0,T]}
    \left(
      |\phi_d(t,w_d)-\phi(t,w)|
      +\|\psi_d(t,w_d)-\psi(t,w)\|
    \right)\to0.
  \]
\end{lemma}
\begin{proof}
  The proof is the same Gronwall argument as in
  Proposition~\ref{prop:existence-mild-Riccati}, with the initial loadings
  allowed to vary. Since $w_d\to w$, the family $\{w_d\}_{d\in\MN}\cup\{w\}$ is
  bounded in $\cH$. The local growth estimates for the mild Riccati equations
  therefore give a constant $M_T$ such that
  \[
    \sup_{d\in\MN}\sup_{t\le T}\|\psi_d(t,w_d)\|
    +\sup_{t\le T}\|\psi(t,w)\|\le M_T.
  \]
  Let $L_T$ be a Lipschitz constant of $\hat R$ on the ball
  $\{z\in\cHplus:\|z\|\le M_T\}$. Subtracting the mild equations and using
  $\bP_d\cT(t)=\cT(t)\bP_d$ gives
  \begin{align*}
    \|\psi_d(t,w_d)-\psi(t,w)\|
    &\le \|\cT(t)(w_d-w)\|\\
    &\quad+\int_0^t
      \|\bP_d\cT(t-s)
        [\hat R(\psi_d(s,w_d))-\hat R(\psi(s,w))]\|\,\D s\\
    &\quad+\int_0^t
      \|\bP_d^\perp\cT(t-s)\hat R(\psi(s,w))\|\,\D s\\
    &\le \|w_d-w\|
      +L_T\int_0^t\|\psi_d(s,w_d)-\psi(s,w)\|\,\D s+a_d(t),
  \end{align*}
  where
  \[
    a_d(t)\df
    \int_0^t\|\bP_d^\perp\cT(t-s)\hat R(\psi(s,w))\|\,\D s.
  \]
  The map $(t,s)\mapsto\cT(t-s)\hat R(\psi(s,w))$ has compact range in
  $\cH$ on the triangle $0\le s\le t\le T$, and
  $\bP_d^\perp\to0$ strongly; hence $\sup_{t\le T}a_d(t)\to0$. Gronwall's lemma
  yields $\sup_{t\le T}\|\psi_d(t,w_d)-\psi(t,w)\|\to0$. Finally,
  \[
    |\phi_d(t,w_d)-\phi(t,w)|
    \le\int_0^t|F(\psi_d(s,w_d))-F(\psi(s,w))|\,\D s,
  \]
  and the continuity of $F$ on bounded subsets of $\cHplus$ gives the
  claimed uniform convergence of $\phi_d$ as well.
\end{proof}

\begin{lemma}[Riccati semiflow]\label{lem:riccati-semiflow}
  For all $s,t\ge0$ and $u\in\cHplus$,
  \[
    \psi(t+s,u)=\psi(t,\psi(s,u)),\qquad
    \phi(t+s,u)=\phi(s,u)+\phi(t,\psi(s,u)).
  \]
  Equivalently, since the equation is autonomous,
  \[
    \psi(t+s,u)=\psi(s,\psi(t,u)),\qquad
    \phi(t+s,u)=\phi(t,u)+\phi(s,\psi(t,u)).
  \]
\end{lemma}
\begin{proof}
  Fix $s\ge0$ and set
  $\tilde\psi(t)\df\psi(t,\psi(s,u))$ and
  $\tilde\phi(t)\df\phi(s,u)+\phi(t,\psi(s,u))$. By the
  variation-of-constants formula and the semigroup property of $\cT$,
  $(\tilde\phi,\tilde\psi)$ solves the same mild Riccati system on
  $[0,\infty)$ as $(\phi(t+s,u),\psi(t+s,u))$, with the same initial value at
  $t=0$. Uniqueness in Proposition~\ref{prop:existence-mild-Riccati} gives the
  identities.
\end{proof}

\begin{lemma}[Laplace transforms determine laws on the cone]
\label{lem:laplace-measure-determining}
  Let $\nu_1$ and $\nu_2$ be Borel probability measures on $\cHplus$. If
  \[
    \int_{\cHplus}\E^{-\langle y,u\rangle}\,\nu_1(\D y)
    =
    \int_{\cHplus}\E^{-\langle y,u\rangle}\,\nu_2(\D y),
    \qquad u\in\cHplus,
  \]
  then $\nu_1=\nu_2$. The same conclusion holds on product cones
  $(\cHplus)^n$: the functions
  \[
    (y_1,\ldots,y_n)\mapsto
    \exp\!\left(-\sum_{k=1}^n\langle y_k,u_k\rangle\right),
    \qquad u_1,\ldots,u_n\in\cHplus,
  \]
  are measure determining.
\end{lemma}
\begin{proof}
  Since $\cH$ is a separable Hilbert space and $\cHplus$ is closed,
  $\cHplus$ is Polish. The cone $\cHplus$ is self-dual and generating in the
  Hilbert space of self-adjoint Hilbert--Schmidt operators: if
  $z=z^+-z^-$ is the spectral decomposition into positive and negative parts,
  then $z^\pm\in\cHplus$ and $z^+z^-=0$. Hence positive loadings separate
  points of $\cHplus$: if $\langle x-y,u\rangle=0$ for all $u\in\cHplus$, then
  testing with the positive and negative parts of $x-y$ gives $x=y$.

  Choose a countable dense set $(u_m)_{m\ge1}$ in $\cHplus$. By continuity of
  the pairings, this countable family still separates points. Moreover, the
  Borel $\sigma$-field of $\cHplus$ is generated by the coordinates
  $y\mapsto\langle y,u_m\rangle$: rational linear combinations of the
  $u_m$ generate a countable weakly separating family, and on a separable
  Hilbert space the weak and norm Borel $\sigma$-fields coincide.

  For any finite set $m_1,\ldots,m_k$ and any
  $\lambda_1,\ldots,\lambda_k\ge0$, equality of the cone Laplace transforms gives
  equality of the finite-dimensional Laplace transforms
  \[
    \int \exp\!\left(-\sum_{j=1}^k\lambda_j
      \langle y,u_{m_j}\rangle\right)\nu_1(\D y)
    =
    \int \exp\!\left(-\sum_{j=1}^k\lambda_j
      \langle y,u_{m_j}\rangle\right)\nu_2(\D y),
  \]
  because $\sum_j\lambda_j u_{m_j}\in\cHplus$. Classical uniqueness of
  Laplace transforms on $\MR_+^k$ therefore identifies all finite-dimensional
  distributions of the coordinate map
  $y\mapsto(\langle y,u_m\rangle)_{m\ge1}$. Since these coordinates generate the
  Borel $\sigma$-field, $\nu_1=\nu_2$. Applying the same argument to the product
  cone $(\cHplus)^n$ proves the final assertion.
\end{proof}

\begin{lemma}[Finite-dimensional Laplace identification]\label{lem:fdd-laplace-convergence}
  Assume the hypotheses of Theorem~\ref{thm:main-convergence}. Let
  $0<t_1<\cdots<t_n$, $u_1,\ldots,u_n\in\cHplus$, and $x\in\cHplus$. For
  each $d\in\MN$ define recursively
  \[
    v_n^d\df \bP_d u_n,\qquad a_n^d\df0,
  \]
  and, for $k=n-1,\ldots,1$,
  \[
    v_k^d\df \bP_d u_k+\psi_d(t_{k+1}-t_k,v_{k+1}^d),
    \qquad
    a_k^d\df a_{k+1}^d+\phi_d(t_{k+1}-t_k,v_{k+1}^d).
  \]
  Define $(v_k,a_k)$ by the same recursion with $\bP_d u_k$,
  $\phi_d$, and $\psi_d$ replaced by $u_k$, $\phi$, and $\psi$. Then
  \begin{align}\label{eq:fdd-laplace-galerkin}
    \EXspec{\MP_x^d}{\exp\!\left(-\sum_{k=1}^n
      \langle X_{t_k}^d,\bP_d u_k\rangle\right)}
    =
    \exp\!\left(
      -a_1^d-\phi_d(t_1,v_1^d)
      -\langle\bP_d x,\psi_d(t_1,v_1^d)\rangle
    \right),
  \end{align}
  and the right-hand side converges to
  \begin{align}\label{eq:fdd-laplace-limit}
    \exp\!\left(
      -a_1-\phi(t_1,v_1)-\langle x,\psi(t_1,v_1)\rangle
    \right).
  \end{align}
\end{lemma}
\begin{proof}
  The identity \eqref{eq:fdd-laplace-galerkin} follows by iterating
  the finite-rank affine Markov property from
  Proposition~\ref{prop:embedding-affine-main}(i), starting at time $t_n$ and
  moving backwards. Since $\cHplus$ is a cone and the Riccati flow preserves
  $\cHplus$, all recursively defined loadings stay in $\cHplus$. The convergence
  $v_k^d\to v_k$ and $a_k^d\to a_k$ follows by backward induction. The claim is
  immediate for $k=n$ because $\bP_d u_n\to u_n$. If
  $v_{k+1}^d\to v_{k+1}$ and $a_{k+1}^d\to a_{k+1}$, then
  Lemma~\ref{lem:riccati-moving-loading-stability}, applied at the time
  $t_{k+1}-t_k$, gives
  $\psi_d(t_{k+1}-t_k,v_{k+1}^d)\to
  \psi(t_{k+1}-t_k,v_{k+1})$ and
  $\phi_d(t_{k+1}-t_k,v_{k+1}^d)\to
  \phi(t_{k+1}-t_k,v_{k+1})$. Since $\bP_d u_k\to u_k$, the recursive
  definitions imply $v_k^d\to v_k$ and $a_k^d\to a_k$. Substituting these
  convergences in \eqref{eq:fdd-laplace-galerkin} gives
  \eqref{eq:fdd-laplace-limit}.
\end{proof}

    \begin{proof}[Proof of Theorem~\ref{thm:main-convergence}]
Fix $x\in\cHplus$. By Lemma~\ref{lem:tightness-irregular}, the family
$(\MP_x^d)_{d\in\MN}$ is tight on $D(\MRplus,\cHplus)$. Let $\MP$ be any weak
limit point. We identify $\MP$ directly through the finite-rank affine
transform formula; this avoids applying the limiting unbounded generator
$\cG$ to time-dependent test functions $g_{T,u}(t,\cdot)$ whose loadings
$\psi(T-t,u)$ need not lie in $\dom(L)$.

\emph{Step 1: bounded convergence on the marginal transform.}
Fix $t\ge0$ and $u\in\cHplus$. Set $u_d\df \bP_d(u)\in\cH_d^{+}$; then
$u_d\to u$ in $\cH$ as $d\to\infty$ and
$|\E^{-\langle X_t^d,u_d\rangle}|\le 1$ uniformly in $d$. By
Proposition~\ref{prop:embedding-affine-main}(i) the finite-rank affine
transform formula reads
\begin{align}\label{eq:affine-Galerkin-recall}
  \EXspec{\MP_x^d}{\E^{-\langle X_t^d,u_d\rangle}}
  =\E^{-\phi_d(t,u_d)-\langle \bP_d(x),\psi_d(t,u_d)\rangle}.
\end{align}
Let $\MP_x$ denote any weak limit point of $(\MP_x^d)_{d\in\MN}$
(along a subsequence). By Lemma~\ref{lem:tightness-irregular} such a limit
exists; the goal of this step is to compute its marginal Laplace transform.
For $t=0$ the assertion follows directly from $\bP_d x\to x$, so assume
$t>0$. By Lemma~\ref{lem:no-fixed-time-jumps}, the coordinate map
$\omega\mapsto\omega(t)$ is $\MP_x$-a.s.\ continuous. Hence the
continuous-mapping theorem gives
\[
  X_t^d\Rightarrow X_t
  \quad\text{under }\MP_x^d\Rightarrow\MP_x .
\]
The remaining issue is that the loading also varies with $d$. Since
$a,b\ge0$ imply $|\E^{-a}-\E^{-b}|\le |a-b|$, we have
\begin{align*}
&\left|
\EXspec{\MP_x^d}{\E^{-\langle X_t^d,u_d\rangle}}
-\EXspec{\MP_x}{\E^{-\langle X_t,u\rangle}}
\right|\\
&\quad\le
\EXspec{\MP_x^d}{\left|
\E^{-\langle X_t^d,u_d\rangle}
-\E^{-\langle X_t^d,u\rangle}
\right|}
+
\left|
\EXspec{\MP_x^d}{\E^{-\langle X_t^d,u\rangle}}
-\EXspec{\MP_x}{\E^{-\langle X_t,u\rangle}}
\right|\\
&\quad\le
\|u_d-u\|_{\cH}\,
\EXspec{\MP_x^d}{\|X_t^d\|_{\cH}}
+
\left|
\EXspec{\MP_x^d}{\E^{-\langle X_t^d,u\rangle}}
-\EXspec{\MP_x}{\E^{-\langle X_t,u\rangle}}
\right|.
\end{align*}
The first term tends to zero by $u_d\to u$ and
Lemma~\ref{lem:irregular-tightness-square-bound}; the second tends to zero by
the continuous-mapping theorem applied to the bounded continuous function
$y\mapsto\E^{-\langle y,u\rangle}$ on $\cHplus$. Thus
$\EXspec{\MP_x^d}{\E^{-\langle X_t^d,u_d\rangle}}
\to \EXspec{\MP_x}{\E^{-\langle X_t,u\rangle}}$.

\emph{Step 2: Riccati limit on the restricted class.}
For $u\in\dom(L)\cap\cHplus$, Proposition~\ref{prop:existence-mild-Riccati}
gives
\[
  \sup_{s\in[0,T]}
  \Big(
    |\phi_d(s,\bP_d u)-\phi(s,u)|
    +\|\psi_d(s,\bP_d u)-\psi(s,u)\|
  \Big)\to0 .
\]
Hence the right-hand side of~\eqref{eq:affine-Galerkin-recall}
converges to $\E^{-\phi(t,u)-\langle x,\psi(t,u)\rangle}$. Combining
Steps~1 and~2,
\begin{align}\label{eq:transform-on-domL}
  \EXspec{\MP_x}{\E^{-\langle X_t,u\rangle}}
  =\E^{-\phi(t,u)-\langle x,\psi(t,u)\rangle},
  \qquad u\in\dom(L)\cap\cHplus.
\end{align}

\emph{Step 3: extension by density.}
For any $u\in\cHplus$, the spectral truncations $u_n\df \bP_n(u)$ satisfy
$u_n\in\dom(L)\cap\cHplus$ and $u_n\to u$ in $\cH$. By
Lemma~\ref{lem:riccati-initial-continuity}, $\phi(t,u_n)\to\phi(t,u)$ and
$\psi(t,u_n)\to\psi(t,u)$ in $\cH$; hence both sides
of~\eqref{eq:transform-on-domL} are continuous in $u\in\cHplus$ (the
left side by bounded convergence, $|\E^{-\langle X_t,u_n\rangle}|\le 1$
and $\E^{-\langle y,u_n\rangle}\to\E^{-\langle y,u\rangle}$ pointwise).
The identity~\eqref{eq:transform-on-domL} therefore extends to every
$u\in\cHplus$, proving the affine transform
formula~\eqref{eq:affine-formula-main}.

\emph{Step 4: finite-dimensional distributions and uniqueness.}
The one-time identity is not by itself enough to determine a path law. We
therefore use Lemma~\ref{lem:fdd-laplace-convergence}. Let
$0<t_1<\cdots<t_n$ and $u_1,\ldots,u_n\in\cHplus$. Along any weakly convergent
subsequence, Lemma~\ref{lem:no-fixed-time-jumps} makes the coordinate map
$\omega\mapsto(\omega(t_1),\ldots,\omega(t_n))$ $\MP_x$-a.s.\ continuous.
Together with the uniform second-moment bound and the replacement
$\bP_d u_k\to u_k$, the continuous-mapping theorem gives
\[
  \EXspec{\MP_x^d}{\exp\!\left(-\sum_{k=1}^n
      \langle X_{t_k}^d,\bP_d u_k\rangle\right)}
  \longrightarrow
  \EXspec{\MP_x}{\exp\!\left(-\sum_{k=1}^n
      \langle X_{t_k},u_k\rangle\right)}
\]
for arbitrary deterministic times $0<t_1<\cdots<t_n$. The right-hand
side is then identified by \eqref{eq:fdd-laplace-limit}. The class
\[
  \left\{\exp\!\left(-\sum_{k=1}^n\langle y_k,u_k\rangle\right):
  u_1,\ldots,u_n\in\cHplus\right\}
  \]
  is multiplicatively closed and separates points on
  $(\cHplus)^n$. By Lemma~\ref{lem:laplace-measure-determining}, the
  finite-dimensional distributions of every weak limit point are uniquely
  determined.

It remains to make the Markov-affine structure explicit. For $t\ge0$ and
$x\in\cHplus$, let $p_t(x,\cdot)$ denote the unique probability measure on
$\cHplus$ determined by
\[
  \int_{\cHplus}\E^{-\langle y,u\rangle}\,p_t(x,\D y)
  =
  \E^{-\phi(t,u)-\langle x,\psi(t,u)\rangle},
  \qquad u\in\cHplus.
  \]
  Existence follows from the one-time weak limit above (with initial
  state $x$), and uniqueness follows from
  Lemma~\ref{lem:laplace-measure-determining}. This family is a Borel
  transition kernel. Indeed, $\cHplus$ is a closed subset of the separable
  Hilbert space $\cH$, hence is Polish. Choose a countable dense subset
  $D\subseteq\cHplus$ and set $e_u(y)=\E^{-\langle y,u\rangle}$ for
  $u\in D$. The map
  \[
    \Theta:\mathcal P(\cHplus)\to[0,1]^D,\qquad
    \Theta(\nu)=\left(\int e_u(y)\,\nu(\D y)\right)_{u\in D},
  \]
  is Borel and injective: density of $D$ and continuity in $u$ recover the
  Laplace transform for all $u\in\cHplus$, and
  Lemma~\ref{lem:laplace-measure-determining} then identifies the measure. By
  the Lusin--Souslin theorem, $\Theta^{-1}$ is Borel on $\Theta(\mathcal
  P(\cHplus))$. Since
  \[
    x\mapsto\Theta(p_t(x,\cdot))
    =
    \left(\E^{-\phi(t,u)-\langle x,\psi(t,u)\rangle}\right)_{u\in D}
  \]
  is Borel, the map $x\mapsto p_t(x,\cdot)$ is Borel as a map into
  $\mathcal P(\cHplus)$.
  By
Lemma~\ref{lem:riccati-semiflow},
\begin{align*}
  &\int_{\cHplus}\int_{\cHplus}
    \E^{-\langle z,u\rangle}\,p_t(y,\D z)\,p_s(x,\D y)\\
  &\qquad=
    \E^{-\phi(t,u)}
    \int_{\cHplus}\E^{-\langle y,\psi(t,u)\rangle}\,p_s(x,\D y)\\
  &\qquad=
    \E^{-\phi(t,u)-\phi(s,\psi(t,u))
      -\langle x,\psi(s,\psi(t,u))\rangle}
    =
    \E^{-\phi(t+s,u)-\langle x,\psi(t+s,u)\rangle}.
\end{align*}
The Laplace class therefore gives the Chapman--Kolmogorov identity
$p_{t+s}(x,\cdot)=\int p_t(y,\cdot)\,p_s(x,\D y)$. Iterating this transition
kernel yields exactly the finite-dimensional Laplace transforms identified
above, so the canonical process under $\MP_x$ is Markov and has affine
transition transform $p_t$. Hence all weak limit points coincide, and tightness
implies that the whole sequence $(\MP_x^d)$ converges weakly to the single
Markov affine law $\MP_x$ on $D(\MRplus,\cHplus)$. This proves the existence and
uniqueness of the affine process $(X,(\MP_x)_{x\in\cHplus})$. The martingale
problem of Proposition~\ref{prop:irregular-weak-limit-mp} (on the restricted
class $\cD_0$) is satisfied by every weak limit point; the law itself is
identified by the Markov affine finite-dimensional distributions just
constructed.

It remains to verify the trace-class assertion. Assume in addition
that $x\in\cL_1(H)\cap\cHplus$, that $m$ and $\mu$ are supported on
$\cL_1(H)\setminus\{0\}$, that $\Gamma(\cL_1(H))\subseteq\cL_1(H)$, and that
$b,\mu(A)\in\cL_1(H)$ for all $A\in\cB(\cHplus\setminus\{0\})$, and that the
trace first-moment bounds stated in
Theorem~\ref{thm:main-convergence}(a) hold. By
Lemma~\ref{lem:no-fixed-time-jumps}, the coordinate map at each deterministic
$t\ge0$ is $\MP_x$-a.s.\ continuous. Therefore the continuous-mapping theorem
gives $X_t^d\Rightarrow X_t$ for every fixed $t\ge0$. Moreover,
Lemma~\ref{lem:trace-bound-finite-rank} gives
\begin{align*}
  \sup_{d\in\MN}\sup_{t\in[0,T]}
  \EXspec{\MP_x^d}{\Tr(X_t^d)}<\infty,\qquad T>0.
\end{align*}
For each $n\in\MN$, define
\begin{align*}
  h_n(y)\df \sum_{k=1}^n\langle ye_k,e_k\rangle_H,\qquad y\in\cHplus.
\end{align*}
The map $h_n$ is continuous and non-negative on $\cHplus$, so
Portmanteau's theorem implies
\begin{align*}
  \EXspec{\MP_x}{h_n(X_t)}
  \le \liminf_{d\to\infty}\EXspec{\MP_x^d}{h_n(X_t^d)}
  \le \liminf_{d\to\infty}\EXspec{\MP_x^d}{\Tr(X_t^d)}
  <\infty,\qquad t\ge0.
\end{align*}
Letting $n\to\infty$ and using monotone convergence yields
\[
  \EXspec{\MP_x}{\Tr(X_t)}
  =\EXspec{\MP_x}{\sum_{k=1}^{\infty}\langle X_t e_k,e_k\rangle_H}<\infty,
  \qquad t\ge0.
\]
Hence $X_t\in\cL_1(H)$ almost surely for every $t\ge0$, which proves the
claim.
\end{proof}

\subsection{Finite-Rank Approximation of Heat Modulated Affine Stochastic Covariance
  Models}\label{sec:proof-section-2} 
 
\begin{lemma}\label{lem:joint-Riccati-convergence}
  Let $u=(u_1,u_2)\in \I H\times\cHplus$ and $T>0$. Then
  \begin{align*}
    \sup_{t\in[0,T]}
    \Big(
      |\Phi_d(t,u)-\Phi(t,u)|
      +\|\psi_{2,d}(t,u)-\psi_2(t,u)\|
    \Big)\longrightarrow 0,\qquad d\to\infty.
  \end{align*}
\end{lemma}
\begin{proof}
  Fix $u=(u_1,u_2)\in \I H\times\cHplus$ and $T>0$. Since
  $\psi_1(\cdot,u)$ solves the linear equation
  \eqref{eq:Riccati-phi-psi-1-2}, it is continuous on $[0,T]$ and therefore
  bounded. The map $v\mapsto \hat{\cR}(\psi_1(t,u),v)$ is locally Lipschitz on
  $\cHplus$, with Lipschitz constants independent of $t\in[0,T]$, because the
  dependence on the first argument enters only through the explicit additive
  term $-\frac12\psi_1(t,u)^{\otimes2}$. Arguing exactly as in the proof of
  Proposition~\ref{prop:existence-mild-Riccati}, one therefore obtains a
  constant $M_{T,u}>0$, independent of $d$, such that
  \begin{align*}
    \sup_{t\in[0,T]}
    \bigl(\|\psi_2(t,u)\|\vee \|\psi_{2,d}(t,u)\|\bigr)\le M_{T,u}.
  \end{align*}
  Subtracting the mild equations for $\psi_2$ and $\psi_{2,d}$, and
  using that $\cT(t)$ commutes with $\bP_d$, yields
  \begin{align*}
    \|\psi_2(t,u)-\psi_{2,d}(t,u)\|
    \le a_d(t)
    +C_{T,u}\int_0^t \|\psi_2(s,u)-\psi_{2,d}(s,u)\|\,\D s,
  \end{align*}
  where $C_{T,u}$ is a Lipschitz constant on the ball of radius
  $M_{T,u}$ and
  \begin{align*}
    a_d(t)\df
    \|\bP_d^\perp\cT(t)u_2\|
    +\int_0^t
    \|\bP_d^\perp\cT(t-s)\hat{\cR}(\psi_1(s,u),\psi_2(s,u))\|\,\D s.
  \end{align*}
  Both $t\mapsto\cT(t)u_2$ and
  $(t,s)\mapsto\cT(t-s)\hat{\cR}(\psi_1(s,u),\psi_2(s,u))$ are continuous on the
  compact sets $[0,T]$ and $\Delta_T\df\{(t,s):0\le s\le t\le T\}$ respectively,
  so their ranges are compact subsets $K_1,K_2$ of $\cH$. Since
  $(\bP_d^\perp)_{d\in\MN}$ is a decreasing sequence of orthogonal projections
  converging strongly to $0$ on $\cH$, the convergence
  $\bP_d^\perp z\to 0$ is uniform on every compact subset of $\cH$. Applied to
  $K_1$ and $K_2$ this yields
  $\sup_{t\in[0,T]}\|\bP_d^\perp\cT(t)u_2\|\to 0$ and
  $\sup_{(t,s)\in\Delta_T}\|\bP_d^\perp\cT(t-s)
  \hat{\cR}(\psi_1(s,u),\psi_2(s,u))\|\to 0$,
  hence $\sup_{t\in[0,T]}a_d(t)\to 0$. Gronwall's lemma therefore gives
  \begin{align*}
    \sup_{t\in[0,T]}
    \|\psi_{2,d}(t,u)-\psi_2(t,u)\|\longrightarrow 0.
  \end{align*}
  Finally, since
  $\Phi(t,u)=\int_0^t F(\psi_2(s,u))\,\D s$ and
  $\Phi_d(t,u)=\int_0^t F(\psi_{2,d}(s,u))\,\D s$, continuity of $F$ on bounded
  subsets of $\cHplus$ yields
  \begin{align*}
    \sup_{t\in[0,T]}|\Phi_d(t,u)-\Phi(t,u)|\longrightarrow 0,
  \end{align*}
  which proves the claim.
\end{proof}

\begin{lemma}[Finite-rank Feynman--Kac martingale]\label{lem:fk-finite-rank}
Let $(b,\mathbf B,m,\mu)$ be an admissible parameter set, fix $t>0$ and
$u=(u_1,u_2)\in\I H\times\cHplus$, and let $X^d$ be the finite-rank affine
process of Proposition~\ref{prop:embedding-affine-main}. With
$(\Phi_d(\cdot,u),\psi_1(\cdot,u),\psi_{2,d}(\cdot,u))$ the Galerkin joint
mild Riccati solution from Theorem~\ref{thm:finite-dim-approx-ScoV}(i), define
\[
  M^d_s
  \df
  \exp\!\Big(\tfrac12\!\int_0^s\!\langle X^d_r,\psi_1(t-r,u)^{\otimes2}\rangle\,\D r\Big)\,
  \exp\!\big(-\Phi_d(t-s,u)-\langle X^d_s,\psi_{2,d}(t-s,u)\rangle\big),
  \quad 0\le s\le t.
\]
Then $(M^d_s)_{0\le s\le t}$ is a true martingale under $\MP^d_x$, and there is
a constant $C_{t,u}>0$, independent of $d$, with $|M^d_s|\le C_{t,u}$ for all
$s\in[0,t]$.
\end{lemma}

\begin{proof}
On $\cH_d$ the process $X^d$ is a finite-dimensional affine semimartingale with
bounded generator $\cG^d$ (Proposition~\ref{prop:embedding-affine-main}(ii) and
eq.~\eqref{eq:G-d-operator}). The map
$(s,y)\mapsto g_d(s,y)\df\E^{-\Phi_d(t-s,u)-\langle y,\psi_{2,d}(t-s,u)\rangle}$
is $C^{1}$ in $s$ on $[0,t]$ and exponential-affine in $y$: indeed
$\Phi_d(\cdot,u)\in C^1$ and $\psi_{2,d}(\cdot,u)\in C^1$, the latter solving the
finite-rank Riccati system
\eqref{eq:Riccati-phi-psi-1-1}--\eqref{eq:Riccati-phi-psi-1-3} with $L$ replaced
by the bounded $\bP_dL\bP_d$. The potential
$V(s,y)\df\tfrac12\langle y,\psi_1(t-s,u)^{\otimes2}\rangle$ is continuous in $s$,
since $\psi_1(t-s,u)=S^*(t-s)u_1$ is.

Write $M^d_s=E_s\,g_d(s,X^d_s)$ with the finite-variation factor
$E_s\df\E^{\int_0^s V(r,X^d_r)\,\D r}$, so $\D E_s=E_s\,V(s,X^d_s)\,\D s$. Since
$g_d(s,\cdot)$ is a scalar multiple of $\E^{-\langle\cdot,\psi_{2,d}(t-s,u)\rangle}$
with $\psi_{2,d}(t-s,u)\in\cHplus_d$, it lies in the domain of $\cG^d$ for every
$s$. Applying the finite-dimensional It\^o formula for jump semimartingales to
the $C^1$ time-dependent exponential-affine map $(s,y)\mapsto g_d(s,y)$ gives
$\D g_d(s,X^d_s)=(\partial_s g_d+\cG^d g_d)(s,X^d_s)\,\D s+\D N_s$, with $N$ a
local martingale; here the operator $\cG^d$ is the finite-dimensional
integro-differential generator associated with the characteristics of $X^d$ from
Proposition~\ref{prop:embedding-affine-main}. The product rule for the
finite-variation $E_s$ then yields
\[
  \D M^d_s
  = E_s\Big[(\partial_s g_d+\cG^d g_d+V(s,\cdot)g_d)(s,X^d_s)\Big]\D s
  + E_s\,\D N_s.
\]
Using $\partial_s[-\Phi_d(t-s,u)]=F_d(\psi_{2,d}(t-s,u))$,
$\partial_s[-\langle y,\psi_{2,d}(t-s,u)\rangle]
=\langle y,\,R_d(\psi_{2,d}(t-s,u))-\tfrac12\bP_d(\psi_1(t-s,u)^{\otimes2})\rangle$
(the Galerkin form of \eqref{eq:Riccati-phi-psi-1-3} with the forcing
$-\tfrac12\psi_1^{\otimes2}$ from \eqref{eq:R-intro}), and
$\cG^d g_d(s,y)=\big(-F_d(\psi_{2,d}(t-s,u))-\langle y,R_d(\psi_{2,d}(t-s,u))\rangle\big)g_d(s,y)$,
the drift coefficient collapses to
\[
  \partial_s g_d+\cG^d g_d+V g_d
  = \tfrac12\langle y,\,\psi_1^{\otimes2}-\bP_d(\psi_1^{\otimes2})\rangle\,g_d
  = \tfrac12\langle y,\,\bP_d^\perp(\psi_1^{\otimes2})\rangle\,g_d
  = 0,\qquad y\in\cH_d,
\]
because $\langle y,\bP_d^\perp(\psi_1^{\otimes2})\rangle
=\langle\bP_d^\perp y,\psi_1^{\otimes2}\rangle=0$ for $y\in\cH_d$. Hence $M^d$ is
a local martingale.

For the uniform bound, write $u_1=\I h$. Then
$\psi_1(t-r,u)^{\otimes2}=(\I S^*(t-r)h)^{\otimes2}=-(S^*(t-r)h)^{\otimes2}$, so
for $X^d_r\in\cHplus$,
$\tfrac12\langle X^d_r,\psi_1(t-r,u)^{\otimes2}\rangle
=-\tfrac12\langle X^d_r,(S^*(t-r)h)^{\otimes2}\rangle\le0$, whence
$E_s\le1$. By Lemma~\ref{lem:joint-Riccati-convergence} there is $M_{t,u}$,
independent of $d$, with $\sup_{\tau\le t}\|\psi_{2,d}(\tau,u)\|\le M_{t,u}$;
since $F$ is continuous, $\sup_{\tau\le t}|\Phi_d(\tau,u)|\le C'_{t,u}$ uniformly
in $d$. As $\psi_{2,d}(t-s,u)\in\cHplus_d$ and $X^d_s\in\cHplus$ give
$\langle X^d_s,\psi_{2,d}(t-s,u)\rangle\ge0$, we obtain
$g_d(s,X^d_s)\le \E^{C'_{t,u}}$ and therefore
$|M^d_s|\le \E^{C'_{t,u}}\df C_{t,u}$ uniformly in $d$ and $s\in[0,t]$. A bounded
local martingale is a true martingale.
\end{proof}

\begin{proof}[Proof of Proposition~\ref{prop:affine-fk}]
Define the $d$-independent functional $\Lambda_s\colon D([0,t],\cHplus)\to\MR$,
\[
  \Lambda_s(\omega)
  \df
  \exp\!\Big(\tfrac12\!\int_0^s\!\langle\omega(r),\psi_1(t-r,u)^{\otimes2}\rangle\D r\Big)
  \E^{-\Phi(t-s,u)-\langle\omega(s),\psi_2(t-s,u)\rangle},
\]
so that $M_s=\Lambda_s(X)$, while $M^d_s=\Lambda_s^d(X^d)$ with $\Lambda^d_s$
the same functional carrying the Galerkin loadings $(\Phi_d,\psi_{2,d})$. As in
Lemma~\ref{lem:fk-finite-rank}, $0\le\Lambda_s,\Lambda^d_s\le C_{t,u}$.

\emph{Step 1 (loadings).} By Lemma~\ref{lem:joint-Riccati-convergence},
$\sup_{\tau\le t}\big(|\Phi_d(\tau,u)-\Phi(\tau,u)|
+\|\psi_{2,d}(\tau,u)-\psi_2(\tau,u)\|\big)\to0$. Hence, uniformly over paths in
$\{\omega:\sup_{r\le t}\|\omega(r)\|\le R\}$,
$\sup_{s\le t}|\Lambda^d_s(\omega)-\Lambda_s(\omega)|\to0$ as $d\to\infty$.

\emph{Step 2 (continuity).} For fixed $s$, the evaluation
$\omega\mapsto\omega(s)$ is continuous in the Skorohod topology at every $\omega$
continuous at $s$, and $\omega\mapsto\int_0^s\langle\omega(r),\cdot\rangle\D r$ is
Skorohod-continuous. Thus $\Lambda_s$ is bounded and continuous at every path
continuous at $s$. By Lemma~\ref{lem:no-fixed-time-jumps},
$\MP_x(\Delta X_s=0)=1$ for every deterministic $s$, hence $\Lambda_s$ is
$\MP_x$-a.s.\ continuous for every fixed $s$.

\emph{Step 3 (martingale property).} Fix deterministic times
$0\le s_1<\dots<s_k\le s'_1<s'_2\le t$ and a bounded continuous
$\varphi\colon(\cHplus)^k\to\MR$. By Lemma~\ref{lem:fk-finite-rank},
$\EXspec{\MP^d_x}{(M^d_{s'_2}-M^d_{s'_1})\,\varphi(X^d_{s_1},\dots,X^d_{s_k})}=0$
for every $d$. Writing
$M^d_{s'}=\Lambda_{s'}(X^d)+(\Lambda^d_{s'}-\Lambda_{s'})(X^d)$, the error term
vanishes in the limit: by Step~1 and the uniform bound
$\sup_d\EXspec{\MP^d_x}{\sup_{r\le t}\|X^d_r\|}<\infty$
(Lemma~\ref{lem:irregular-tightness-square-bound}), for any $\eps>0$ choose $R$
with $\sup_d\MP^d_x(\sup_r\|X^d_r\|>R)<\eps$; on the complementary event
$|\Lambda^d_{s'}-\Lambda_{s'}|\to0$ uniformly, while on the exceptional event the
integrand is bounded by $2C_{t,u}\|\varphi\|_\infty$. The remaining terms involve
only the $d$-independent, bounded, $\MP_x$-a.s.\ continuous functionals
$\Lambda_{s'_1},\Lambda_{s'_2},\varphi$ of the path; since $X^d\Rightarrow X$
(Theorem~\ref{thm:main-convergence}(2)) and Lemma~\ref{lem:no-fixed-time-jumps}
makes all fixed coordinate evaluations $\MP_x$-a.s.\ continuous, the
continuous-mapping theorem with the uniform bound
gives
\[
  \EXspec{\MP_x}{(M_{s'_2}-M_{s'_1})\,
    \varphi(X_{s_1},\dots,X_{s_k})}=0 .
\]
Cylinder functionals at deterministic times generate $\cF_{s'_1}$ by a
monotone-class argument, and these coordinate maps are $\MP_x$-a.s.\ continuous
by Lemma~\ref{lem:no-fixed-time-jumps}. Hence $M$ is a martingale with respect to the
$\MP_x$-augmented natural filtration of $X$. Evaluating
$\EXspec{\MP_x}{M_t}=M_0$ with $\psi_2(0,u)=u_2$ and $X_0=x$ yields
eq.~\eqref{eq:affine-additive-functional}.
\end{proof}

\begin{proof}[Proof of Theorem~\ref{thm:heat-affine-model}]
Fix $(f_0,x)\in H\times\cHplus$, let $u=(u_1,u_2)\in \I H\times\cHplus$,
and write $\sigma_t\df X_t^{1/2}$. By Assumption~\ref{assump:H-valued-joint},
the stochastic convolution in~\eqref{eq:HJMM-SDE-H} is well defined. Moreover,
the first Riccati equation~\eqref{eq:Riccati-phi-psi-1-2} is precisely the mild
dual shift equation, hence
\begin{align}\label{eq:psi1-dual-shift}
  \psi_1(t,u)=S^*(t)u_1,\qquad t\ge0.
\end{align}
Conditioning on $\cF_t^{(1)}=\sigma(X_s\colon 0\le s\le t)$, the random
variable $\langle f_t,u_1\rangle_H$ is Gaussian with mean
$\langle f_0,\psi_1(t,u)\rangle_H$ and covariance determined by
$\sigma_s^*S^*(t-s)u_1$. Therefore, using~\eqref{eq:psi1-dual-shift},
\begin{align}\label{eq:conditional-gaussian-heat}
  \EX{\E^{\langle f_t,u_1\rangle_H}\,\big|\,X}
  =
  \E^{\langle f_0,\psi_1(t,u)\rangle_H
  +\frac12\int_0^t\langle X_s,\psi_1(t-s,u)^{\otimes2}\rangle\,\D s}.
\end{align}
Hence
\begin{align}\label{eq:heat-affine-reduction}
  \EX{\E^{\langle f_t,u_1\rangle_H-\langle X_t,u_2\rangle}}
  =
  \E^{\langle f_0,\psi_1(t,u)\rangle_H}
  \EX{\E^{-\langle X_t,u_2\rangle
  +\frac12\int_0^t\langle X_s,\psi_1(t-s,u)^{\otimes2}\rangle\,\D s}}.
\end{align}

	To identify the remaining expectation, we use
	Proposition~\ref{prop:affine-fk}. Evaluating the martingale $M_s$ of that
	proposition at $s=t$ and $s=0$, and using $\psi_2(0,u)=u_2$ and $X_0=x$,
	gives
	\begin{align}\label{eq:affine-additive-functional}
	  \EX{\E^{-\langle X_t,u_2\rangle
	  +\frac12\int_0^t\langle X_s,\psi_1(t-s,u)^{\otimes2}\rangle\,\D s}}
  =
  \E^{-\Phi(t,u)-\langle x,\psi_2(t,u)\rangle}.
\end{align}
Combining~\eqref{eq:heat-affine-reduction} and
\eqref{eq:affine-additive-functional} yields
\begin{align*}
  \EX{\E^{\langle f_t,u_1\rangle_H-\langle X_t,u_2\rangle}}
  =
  \E^{-\Phi(t,u)+\langle f_0,\psi_1(t,u)\rangle_H-\langle x,\psi_2(t,u)\rangle},
  \qquad t\ge0,
\end{align*}
which is exactly~\eqref{eq:extended-affine-formula}.
\end{proof}

\begin{proof}[Proof of Theorem~\ref{thm:finite-dim-approx-ScoV}]
We begin with the proof of~\cref{item:finite-dim-approx-ScoV-1}. Fix
$d\in\MN$ and consider the finite-dimensional approximation
$(f_t^d,X_t^d)_{t\ge0}$, where $X^d$ is as in
Proposition~\ref{prop:embedding-affine-main} and, in particular,
$X_0^d=\bP_d(x)$. By part~(ii) of Proposition~\ref{prop:embedding-affine-main},
$X^d$ is a square-integrable affine Markovian semimartingale on $\cH_d^+$ on
some stochastic basis $(\Omega^d,\bar\cF^d,\bar\MF^d,\MP_x^d)$. Moreover, using
the semimartingale representation~\eqref{eq:decomp-martingale}, we see that
$(f^d,X^d)$ solves the stochastic covariance system
\begin{align}\label{eq:stochastic-covariance-model-approximation}
  \D \begin{bmatrix}
    f^{d}_{t}\\
    X^{d}_{t}
  \end{bmatrix}
  &=
  \left(
  \begin{bmatrix}
    \cA f^{d}_{t} \\
    \mathbf{\hat B}_{d}(X^{d}_{t})
  \end{bmatrix}
  +
  \begin{bmatrix}
    g_t \\
    \hat b_{d}
  \end{bmatrix}
  \right)\D t
  +
  \begin{bmatrix}
    (X^{d}_{t})^{1/2} & 0 \\
    0 & 0
  \end{bmatrix}
  \D \begin{bmatrix}
    W_{t} \\
    0
  \end{bmatrix}
  +
  \D \begin{bmatrix}
    0 \\
    \bar J^{d}_{t}
\end{bmatrix},
\end{align}
with $(f^{d}_{0},X^{d}_{0})=(f_0,\bP_{d}(x))\in H\times \cHplus$, drift
\[
  \hat b_{d}
  \df b_{d}+\int_{\cH_{d}^{+}\cap\{\|\xi\|>1\}}\xi\,m_{d}(\D\xi),
\]
and linear drift operator
\[
  \mathbf{\hat B}_{d}(u)
  \df \mathbf{B}_{d}(\bP_{d}u)
  +\int_{\cH_{d}^{+}\cap\{\|\xi\|>1\}}\xi\,\langle u,M_{d}(\D\xi)\rangle,
  \qquad u\in\cH.
\]
(Here $g_t$ denotes the HJM drift of the forward curve component; throughout
this subsection we work with the driftless specification $g_t\equiv0$.) Hence, the
finite-dimensional model is of the form in~\cite[Definition~2.9]{CKK22b}, and
\cref{item:finite-dim-approx-ScoV-1} follows from~\cite[Theorem~3.3]{CKK22b}.
Since $X_t^d\in\cH_d^+$ for every $t\ge0$, the operator
$(X_t^d)^{1/2}$ is finite-rank and therefore Hilbert--Schmidt on $H$.
Consequently the stochastic integral in the first component is well defined on
the $H$-level. In this cylindrical-noise formulation no additional background
covariance-operator commutativity condition is being imposed; the finite rank of
$(X_t^d)^{1/2}$ is the relevant Hilbert--Schmidt condition. The finite-rank
affine transform can also be recovered from the Feynman--Kac identity in
Lemma~\ref{lem:fk-finite-rank}.
Let us verify that the hypotheses of \cite[Theorem~3.3]{CKK22b} are
met. That theorem requires: (i) the state space is a finite-dimensional convex
cone with non-empty interior (here $\cH_d^+$, which is a closed convex cone in
the finite-dimensional space $\cH_d\cong\mathbb{R}^{d^2}$ with non-empty
relative interior consisting of strictly positive operators); (ii) the
parameters $(\hat{b}_d, \mathbf{\hat{B}}_d, m_d, M_d)$ form an admissible set
in the sense of \cite[Definition~2.6]{CKK22b}, which holds because $\hat{b}_d\in\cH_d^+$,
the map $\mathbf{\hat{B}}_d$ is a bounded linear operator on $\cH_d$ (since
$\mathbf{B}_d = \bP_d \mathbf{B}\bP_d$ is bounded in finite dimensions), and
$m_d$, $M_d$ satisfy the integrability conditions inherited from the
original admissible parameters $(b, B, m, \mu)$ via the projections
$\bP_d$; (iii) there exists a unique square-integrable affine Markovian
semimartingale with these parameters, which is guaranteed by
Proposition~\ref{prop:embedding-affine-main}(ii). All three conditions are
therefore satisfied, and the affine transform
formula~\eqref{eq:stochastic-covariance-affine-formula} follows.

We turn to~\cref{item:finite-dim-approx-ScoV-2}. Throughout this
proof we work in the finite-horizon canonical weighted forward-curve setting of
Example~\ref{ex:Laplacian_Forward}: $\Theta_{\max}<\infty$,
$V=V_{0,\beta}(0,\Theta_{\max})\oplus\MR$,
$H=L^2(0,\Theta_{\max},\mathrm e^{\beta x}\D x)\oplus\MR$, and
$B=\Delta\oplus 0$, with the corresponding finite-horizon adjoint shift
semigroup $S^*(s)$. Two structural facts of this canonical setting are used
below: the invariance $S^*(s)(V)\subseteq V$ with the growth bound recalled at
Step~3, and the eigenvalue asymptotics $\lambda_n\sim
(n\pi/\Theta_{\max})^2$ used to control the Lyapunov semigroup $\cT(t-s)$ on
$\cV$. We do not claim the rate
in~\eqref{eq:stochastic-covariance-convergence-rate} for the abstract
$V\hookrightarrow H\hookrightarrow V^*$ framework outside this canonical
setting.
Assume from now on that the
limiting $H$-valued stochastic covariance model $(f,X)$ from
Theorem~\ref{thm:heat-affine-model} is well defined under
Assumptions~\ref{assump:cV-compact-containment} and
\ref{assump:H-valued-joint}, as in the theorem statement. Assumption
\ref{assump:cV-compact-containment} is used here to invoke the limiting affine
covariance process and Proposition~\ref{prop:affine-fk}; the quantitative
estimate below uses the structural hypotheses displayed in item~(ii). We now
establish the convergence rate in \eqref{eq:stochastic-covariance-convergence-rate}.
Fix $T>0$ and $\tilde u\in V$, and set $u\df (\I\tilde u,0)$.
Throughout the argument, constants may depend on $T$ and $\|\tilde u\|_V$, but
are independent of $d\in\mathbb N$.\newline{}
The argument is the inhomogeneous analogue of the convergence estimate
in Proposition~\ref{prop:existence-mild-Riccati}; compare also the regular-case
proof of \cite[Proposition~2.7]{karbach2023finiterank}. The only new feature is
the forcing term $-\frac12(\psi_1\otimes\psi_1)$ in the equation for $\psi_2$,
which is explicit and therefore leaves the Gronwall structure unchanged.
\noindent\textbf{Step 1: Uniform bounds for the Riccati solutions.}
Let $M\ge0$ and consider the set
$\{v\in\cHplus:\|v\|\le M\}$.
For fixed $u=(\I\tilde u,0)$ and $s\in[0,T]$, set
\[
  G_s(v)\df \hat{\cR}(\psi_1(s,u),v)
  =\hat R(v)-\frac12(\psi_1(s,u))^{\otimes2},\qquad v\in\cHplus.
\]
The additive forcing is independent of $v$, hence the local
Lipschitz constant of $G_s$ on this set is the local Lipschitz constant of
$\hat R$, uniformly in $s\in[0,T]$ and independently of $d$. The same
Lipschitz bound applies to the projected map
$\bP_d\circ G_s\circ\bP_d$.

As a consequence, the mild solutions $\psi_2(\cdot,u)$ and
$\psi_{2,d}(\cdot,u)$ of the second Riccati equation satisfy the uniform bound
$$
  \|\psi_2(t,u)\|\vee\|\psi_{2,d}(t,u)\|
  \leq \tilde H_M,
  \qquad t\in[0,T],
$$
for some constant $\tilde H_M>0$ independent of $d$.
Note that, in contrast to Proposition~\ref{prop:existence-mild-Riccati},
the equation for $\psi_2$ contains the additional inhomogeneous term
$-\frac12(\psi_1\otimes\psi_1)$, where $\psi_1(t,u)=\I S^{*}(t)\tilde
u$ by Lemma~\ref{lem:joint-Riccati-wellposed}.
Consequently, $\tilde H_M$ depends on $\tilde u$ so the bound may differ from the bound used
earlier.

\noindent\textbf{Step 2: Error representation via affine transform formulas.}
Using the affine transform representations
\eqref{eq:extended-affine-formula} and
\eqref{eq:stochastic-covariance-affine-formula} for $(f,X)$ and $(f^d,X^d)$,
respectively, we note that the right-hand sides are of the form
$\E^{z(t)}$ and $\E^{z_d(t)}$ with
$z(t)=-\Phi(t,u)+\langle f_0,\psi_1(t,u)\rangle_H-\langle x,\psi_2(t,u)\rangle$
and $z_d(t)$ defined analogously; since $\psi_1(t,u)=\I S^{*}(t)\tilde u\in\I H$
by Lemma~\ref{lem:joint-Riccati-wellposed}, the pairing
$\langle f_0,\psi_1(t,u)\rangle_H$ is already purely imaginary, and no extra
factor of $\I$ is needed. Since $\Phi(t,u)\ge 0$ and $\langle
x,\psi_2(t,u)\rangle\ge 0$ (because $x\in\cHplus$, $\psi_2\in
C([0,T];\cHplus)$), and likewise $\Phi_d(t,u),\langle x,\psi_{2,d}(t,u)\rangle\ge
0$, the real parts of $z(t)$ and $z_d(t)$ are non-positive. The elementary bound
$|\E^{a+\I b}-\E^{a'+\I b'}|\le|a-a'|+|b-b'|$ for $a,a'\le 0$ and $b,b'\in\MR$
(which follows from $\E^{a}\le 1$, $\E^{a'}\le 1$ and $|\E^{\I b}-\E^{\I b'}|\le
|b-b'|$) therefore yields, without any extra constant,
\begin{align}\label{eq:proof-convergence-rate-scov}
  \sup_{t\in[0,T]}
  \Big|
    \EX{\E^{\I\langle f_t,\tilde u\rangle_H}}
    -
    \EX{\E^{\I\langle f_t^d,\tilde u\rangle_H}}
  \Big|
  &\le
  \sup_{t\in[0,T]}
  \Big(
    |\Phi(t,u)-\Phi_d(t,u)|
  \nonumber\\
  &\quad +\|x\|\,
      \|\psi_2(t,u)-\psi_{2,d}(t,u)\|
  \Big).
\end{align}
Thus, it remains to control the difference
$\psi_2-\psi_{2,d}$ uniformly on $[0,T]$; the corresponding scalar estimate for
$\Phi-\Phi_d$ then follows from local Lipschitz continuity of $F$.\newline{}
\noindent\textbf{Step 3: Stability estimate for $\psi_2-\psi_{2,d}$.}
Using the mild formulations of $\psi_2$ and $\psi_{2,d}$, the
commutation relation $\bP_d\cT(t)=\cT(t)\bP_d$, and the fact that both equations
start from $u_2=0$ and share the same projected explicit forcing
$-\tfrac12\bP_d((\psi_1)^{\otimes2})$, we obtain
\begin{align*}
  \|\bP_d\psi_2(t,u)-\psi_{2,d}(t,u)\|
  &\le
  \int_0^t
    \|G_s(\psi_2(s,u))-G_s(\psi_{2,d}(s,u))\|
    \,\D s \\
  &\le
  \tilde L_M^{(1)}
    \int_0^t
      \|\psi_2(s,u)-\psi_{2,d}(s,u)\|
    \,\D s,
\end{align*}
where $\tilde L_M^{(1)}$ is a Lipschitz constant of $\hat R$, and hence
of $G_s$, on $\{v\in\cHplus:\|v\|\le\tilde H_M\}$. Therefore, by the triangle
inequality,
\begin{align*}
  \|\psi_2(t,u)-\psi_{2,d}(t,u)\|
  &\le
  \|\bP_d^\perp\psi_2(t,u)\|
  +\|\bP_d\psi_2(t,u)-\psi_{2,d}(t,u)\| \\
  &\le
  \|\bP_d^\perp\psi_2(t,u)\|
  +\tilde L_M^{(1)}
    \int_0^t
      \|\psi_2(s,u)-\psi_{2,d}(s,u)\|
    \,\D s.
\end{align*}

\noindent\textbf{Step 4: Control of the projection error.}
Denote the integrand appearing in~\eqref{eq:R-hat} by
\[
  K(\xi,u)\df \E^{-\langle\xi,u\rangle}-1+\langle\chi(\xi),u\rangle,
\]
so that $\hat R$ admits the representation
\[
  \hat R(v)=\Gamma^{*}(v)-\int_{\cHpluso}K(\xi,v)\,\frac{\mu(\D\xi)}{\|\xi\|^{2}}.
\]

\emph{Jump-integrand bound (small/large jump separation).} For
$v\in\cHplus$ with $\|v\|\le M$, the integrand
$K(\xi,v)/\|\xi\|^{2}$ admits the bound $|K(\xi,v)|/\|\xi\|^{2}\le C_M$
uniformly in $\xi\in\cHpluso$, where $C_M$ depends only on $M$. On the
small-jump set $\{\|\xi\|\le 1\}$, a second-order Taylor expansion of
$\E^{-\langle\xi,v\rangle}$ around $0$ gives $|K(\xi,v)|\le
\tfrac12\langle\xi,v\rangle^{2}\le\tfrac12 M^{2}\|\xi\|^{2}$, so
$|K(\xi,v)|/\|\xi\|^{2}\le M^{2}/2$. On the large-jump set
$\{\|\xi\|>1\}$, the truncation $\chi(\xi)=0$ and the bound
$|\E^{-\langle\xi,v\rangle}-1|\le 2$ give
$|K(\xi,v)|/\|\xi\|^{2}\le 2/\|\xi\|^{2}\le 2$. Combining the two,
$|K(\xi,v)|/\|\xi\|^{2}\le \max(M^{2}/2,2)\df C_M$ uniformly in $\xi$.

Since $u_2=0$, the $\cT(t)u_2$-term in the mild variation-of-constants formula
vanishes, the unbounded generator contribution is absorbed into the semigroup
$\cT(t-s)$, and we obtain
\[
  \psi_2(t,u)
  =
  \int_0^t
  \cT(t-s)\!\left(
    \Gamma^{*}(\psi_2(s,u))
    -\!\int_{\cHpluso}\!\!K(\xi,\psi_2(s,u))\,\frac{\mu(\D\xi)}{\|\xi\|^2}
    -\frac12(\psi_1(s,u))^{\otimes2}
  \right)\!\D s.
\]
For the jump term, the strengthened hypothesis on $\mu$ is used exactly here.
If
\[
  J(v)\df\int_{\cHpluso}K(\xi,v)\,\frac{\mu(\D\xi)}{\|\xi\|^2},
  \qquad \|v\|\le M,
\]
then the Bartle total-variation estimate for $\cV$-valued vector measures gives
\[
  \|J(v)\|_{\cV}
  \le
  \sup_{\xi\in\cHpluso}\frac{|K(\xi,v)|}{\|\xi\|^2}\,
  |\mu|_{\cV}(\cHpluso)
  \le
  C_M\,|\mu|_{\cV}(\cHpluso).
\]
Thus the proof uses finite $\cV$-total variation of $\mu$, not merely the
$\cV$-norm of its total mass.
Consequently,
\begin{align}\label{eq:scov-convergence-rate-1}
  \|\bP_d^\perp\psi_2(t,u)\|
  &\le
  \int_{0}^{t}
    \|\bP_d^\perp\cT(t-s)\Gamma^{*}(\psi_2(s,u))\|\,\D s
  \nonumber\\
  &\quad+
  C_{\tilde H_M}\,|\mu|_{\cV}(\cHpluso)
  \int_0^t
    \|\bP_d^\perp\cT(t-s)\|_{\cL(\cV,\cH)}\,\D s
  \nonumber\\
  &\quad+\frac{t}{2}
    \sup_{s\in[0,t]}
    \|\bP_d^\perp\cT(t-s)(S^{*}(s)\tilde u)^{\otimes2}\|.
\end{align}
Here the three contributions correspond to the three forcing terms in
the mild formula: the drift-type term $\Gamma^{*}(\psi_2)$, the compensated
jump-integral term $K(\xi,\psi_2)\mu(\D\xi)/\|\xi\|^{2}$ (controlled by
the total-variation bound just established), and the inhomogeneous forcing
$-\tfrac12 \psi_1^{\otimes 2}$. For the last, we have kept
$\psi_1(s,u)=\I S^{*}(s)\tilde u$ from
Lemma~\ref{lem:joint-Riccati-wellposed}, and used the identity
$(\I v)^{\otimes 2}=-v^{\otimes 2}$ which only contributes a sign, absorbed
into the modulus on the right-hand side.
For the drift-type term, we use that $\cT(r)$ maps $\cH$ into
$\cV$ for every $r>0$: using the $B$-eigenbasis expansion established below,
for $x=\sum_{i,j}x_{i,j}\be_{i,j}\in\cH$ we have
$\|\cT(r)x\|_{\cV}^{2}\le \sum_{i,j}(\lambda_{i,j}+2\lambda)\mathrm e^{-2\lambda_{i,j}r}|x_{i,j}|^{2}
\le \kappa(r)^{2}\|x\|_{\cH}^{2}$, where
$\kappa(r)\df\sup_{i,j}\sqrt{\lambda_{i,j}+2\lambda}\,\mathrm e^{-\lambda_{i,j}r}<\infty$
for every $r>0$ and is integrable on $[0,T]$ (by Weyl asymptotics
$\lambda_{n}\simeq cn^{2}$, cf.~Remark~\ref{rem:lyapunov-regularity}(ii)).
Since $\Gamma^{*}\in\cL(\cH)$ and $\|\psi_2(s,u)\|\le\tilde H_M$ by
Step~1, we obtain
$\|\bP_d^\perp\cT(t-s)\Gamma^{*}(\psi_2(s,u))\|
\le\|\bP_d^\perp\|_{\cL(\cV,\cH)}\,\kappa(t-s)\,
\|\Gamma^{*}\|_{\cL(\cH)}\tilde H_M$, and the $s$-integral is finite.
Since $\tilde u\in V$ and $S^{*}(s)(V)\subseteq V$ with the growth
bound $\|S^{*}(s)\|_{\cL(V)}\le M_1\mathrm e^{w s}$
(by Lemma~\ref{lem:adjoint-shift-V}, with $M_1=1$ and $w=0$, in the
finite-horizon canonical specialization
$V=V_{0,\beta}(0,\Theta_{\max})\oplus\MR$ of
Example~\ref{ex:Laplacian_Forward}), we have $(S^{*}(s)\tilde u)^{\otimes 2}\in\cV$.
Moreover, $\cT(t-s)$ maps $\cV$ into $\cV$: by the spectral Hilbert-scale
compatibility in Definition~\ref{def:admissible-irregular}(iv)(c), the
eigenbasis $(e_n)_{n\in\MN}$ of $B$ lies in $V$, is dense in $V$, and satisfies
$\|v\|_V^2\simeq\sum_n(1+\lambda_n)|\langle v,e_n\rangle_H|^2$. Consequently,
if $u=\sum_{n}u_n e_n\in V$ then
$\sum_n(1+\lambda_n)|u_n|^{2}<\infty$ and
\[
  T(t)u=\sum_{n\in\MN}\E^{-\lambda_n t}u_n e_n
\]
still lies in $V$, because
$\sum_n(1+\lambda_n)\E^{-2\lambda_n t}|u_n|^{2}
\le\sum_n(1+\lambda_n)|u_n|^{2}$;
by norm equivalence this yields
$\|T(t)u\|_V\le C\|u\|_V$, hence $T(t)(V)\subseteq V$ uniformly in $t\geq0$.

For the dual component, by self-adjointness $T^{*}(t)=T(t)$ and duality,
\[
  \|T(t)\eta\|_{V^{*}}
  =
  \sup_{\|v\|_V\le 1}|\langle T(t)\eta,v\rangle_{V^{*},V}|
  =
  \sup_{\|v\|_V\le 1}|\langle\eta,T(t)v\rangle_{V^{*},V}|
  \le C\,\|\eta\|_{V^{*}},
\]
for $\eta\in V^{*}$. Thus
$T(t)(V^{*})\subseteq V^{*}$ uniformly in $t\ge 0$ as well.
Consequently, for $x\in\cV=\cL_2(V^{*},H)\cap\cL_2(H,V)$ both components
are preserved: $T(t)xT^{*}(t)\in\cL_2(H,V)$ (using $T(t)V\subseteq V$ on
the range) and $T(t)xT^{*}(t)\in\cL_2(V^{*},H)$ (using $T^{*}(t)V^{*}\subseteq V^{*}$
on the input). Hence $\cT(t-s)(\cV)\subseteq\cV$ uniformly in $t-s\ge 0$.
Set
\[
  M_{T,\cV}\df\sup_{0\le r\le T}\|\cT(r)\|_{\cL(\cV,\cV)}<\infty .
\]
The jump contribution in~\eqref{eq:scov-convergence-rate-1} is therefore
bounded by
\[
  T\,C_{\tilde H_M}\,M_{T,\cV}\,
  |\mu|_{\cV}(\cHpluso)\,\|\bP_d^\perp\|_{\cL(\cV,\cH)}.
\]
Moreover, $\cT(t-s)(S^{*}(s)\tilde u)^{\otimes2}\in\cV$.
Using $\|\bP_d^\perp\|_{\cL(\cV,\cH)}\to0$ and the growth bound
$\|S^{*}(s)\|_{\cL(V)}\le M_1\e^{ws}$, we obtain
\[
  \sup_{s\in[0,t]}
  \|\bP_d^\perp\cT(t-s)(S^{*}(s)\tilde u)^{\otimes2}\|
  \le
  \|\bP_d^\perp\|_{\cL(\cV,\cH)}
  M_T M_1^2\e^{2tw}\|\tilde u\|_V^2.
\]

\noindent\textbf{Step 5: Gronwall argument and rate.}
Define the error function
$e_d(t)\df \sup_{r\in[0,t]}\|\psi_2(r,u)-\psi_{2,d}(r,u)\|$. Combining the
bound on $\|\bP_d^\perp\psi_2(t,u)\|$ from Step~4 (each of the three terms
on the right-hand side of \eqref{eq:scov-convergence-rate-1} is bounded by a
$d$-uniform constant times $\|\bP_d^\perp\|_{\cL(\cV,\cH)}$) with the
Lipschitz stability bound from Step~3, we obtain
\[
  e_d(t)
  \le
  A_d
  +
  \tilde L_M^{(1)}\int_0^t e_d(s)\,\D s,
\]
where the inhomogeneous term collects the three projection
contributions:
\[
  A_d
  \df
  \big(
    T\,\kappa_{\Gamma}\,\|\Gamma^{*}\|_{\cL(\cH)}\,\tilde H_M
    +
    T\,C_{\tilde H_M}\,M_{T,\cV}\,|\mu|_{\cV}(\cHpluso)
    +
    \tfrac{T}{2}\,M_T M_1^{2}\E^{2Tw}\|\tilde u\|_V^{2}
  \big)\,
  \|\bP_d^\perp\|_{\cL(\cV,\cH)},
\]
with $\kappa_{\Gamma}=\int_0^T\kappa(r)\D r<\infty$ and $\tilde L_M^{(1)}$
the local Lipschitz constant of $\hat R$ on $\{\|v\|\le\tilde H_M\}$.
Gronwall's inequality therefore gives
\[
  \sup_{t\in[0,T]}\|\psi_2(t,u)-\psi_{2,d}(t,u)\|
  \le
  A_T\,\E^{\tilde L_M^{(1)}T}
  \df
  \tilde C_T\,\|\bP_d^\perp\|_{\cL(\cV,\cH)},
\]
with $\tilde C_T$ independent of $d$.
The scalar Riccati component has the same rate. Let $L_F$ be a Lipschitz
constant of $F$ on the ball containing
$\{\psi_2(s,u),\psi_{2,d}(s,u):0\le s\le T,\ d\in\MN\}$. Since
\[
  \Phi(t,u)=\int_0^t F(\psi_2(s,u))\,\D s,
  \qquad
  \Phi_d(t,u)=\int_0^t F(\psi_{2,d}(s,u))\,\D s,
\]
we obtain
\[
  \sup_{t\in[0,T]}|\Phi(t,u)-\Phi_d(t,u)|
  \le
  L_F T
  \sup_{s\in[0,T]}\|\psi_2(s,u)-\psi_{2,d}(s,u)\|
  \le
  C_T^\Phi\,\|\bP_d^\perp\|_{\cL(\cV,\cH)}
\]
for a constant $C_T^\Phi$ independent of $d$.
Inserting the two bounds into \eqref{eq:proof-convergence-rate-scov} and
increasing $\tilde C_T$ if necessary proves
\eqref{eq:stochastic-covariance-convergence-rate}, with the final
$(1+\|x\|)$ factor from \eqref{eq:proof-convergence-rate-scov} understood
in the $\cH$ norm (i.e.\ $\|x\|=\|x\|_{\cH}$, consistent with the
statement of Theorem~\ref{thm:finite-dim-approx-ScoV}(ii)).

\end{proof}

\section*{Acknowledgments}
The author thanks an anonymous referee for a careful reading of the manuscript and
for numerous constructive comments and suggestions that significantly improved
the clarity and quality of this paper.

\bibliographystyle{abbrv}

\end{document}